\NeedsTeXFormat{LaTeX2e}
\documentclass[a4paper,notitlepage,fleqn,11pt]{article}
\usepackage[usenames,dvipsnames]{xcolor}

\usepackage[tbtags]{amsmath}
\usepackage{amsfonts}
\usepackage{amsthm}
\usepackage{xcolor}
\usepackage{amsfonts}
\usepackage{graphicx,amssymb}
\usepackage{multirow,url}
\usepackage{rotating}
\usepackage{mathrsfs}
\usepackage{epstopdf}
\usepackage{afterpage}
\usepackage{natbib}
\usepackage{tabularx}
\usepackage{booktabs}
\usepackage{threeparttable}
\usepackage{xr}
\usepackage{subfigure}
\usepackage{caption}
\usepackage{cases}
\usepackage{enumerate}
\usepackage{setspace}
\usepackage[title]{appendix}


\bibliographystyle{chicago}

\newtheorem{lemma}{Lemma}[section]
\newtheorem{propo}{Proposition}[section]

\newtheorem{assum}{Assumption}
\newtheorem{thm}{Theorem}[section]
\newtheoremstyle{remark}
{2ex}
{2ex}
{}
{}
{\bfseries}
{.}
{.5em}
{}
\theoremstyle{remark}
\newtheorem{remark}{Remark}[section]

\DeclareMathOperator*{\argmin}{argmin}

\DeclareMathOperator*{\diag}{diag}
\DeclareMathOperator*{\Diag}{Diag}

\DeclareMathOperator{\tr}{tr}

\DeclareMathOperator{\rank}{rank}
\DeclareMathOperator{\ovec}{vec}
\DeclareMathOperator{\ovech}{vech}
\DeclareMathOperator{\ovechsec}{vech^{-}}
\newcommand{\bm}[1]{\mbox{\boldmath{$#1$}}}

\setlength{\topmargin}{-30pt}
\setlength{\headheight}{0in}
\setlength{\headsep}{18pt}
\setlength{\voffset}{0in}
\setlength{\oddsidemargin}{0in}
\setlength{\textwidth}{450pt}
\setlength{\textheight}{720pt}
\renewcommand{\arraystretch}{0.85}

\numberwithin{equation}{section}
\allowdisplaybreaks[4]

\title{An efficient multivariate volatility model for many assets}
\author{Wenyu Li$^{a}$, Yuchang Lin$^{b}$, Qianqian Zhu$^{b}$ and Guodong Li$^{c}$ \\
\emph{$^{a}$Nankai University, China}\\
\emph{$^{b}$Shanghai University of Finance and Economics, China}\\
\emph{$^{c}$University of Hong Kong, Hong Kong, China}\\
\emph{}
}
\date{}


\begin{document}
\begin{spacing}{1.5}
	\maketitle

	 \begin{abstract}
		This paper develops a flexible and computationally efficient multivariate volatility model, which allows for dynamic conditional correlations and volatility spillover effects among financial assets. 
	    The new model has desirable properties such as identifiability and computational tractability for many assets. 
	    A sufficient condition of the strict stationarity is derived for the new process. 
	    Two quasi-maximum likelihood estimation methods are proposed for the new model with and without low-rank constraints on the coefficient matrices, respectively, and the asymptotic properties for both estimators are established. 
	    Moreover, a Bayesian information criterion with selection consistency is developed for order selection, and the testing for volatility spillover effects is carefully discussed.  
	    The finite sample performance of the proposed methodology is evaluated in simulation studies for small and moderate dimensions. Its usefulness and inference tools are illustrated by two empirical examples for 5 stock markets and 17 industry portfolios, respectively. 
		\end{abstract}
		{\it Keywords:} DCC, identifiability, multivariate GARCH, order selection, quasi-maximum likelihood estimation, stationarity.

\section{Introduction} \label{section Introduction}

Multivariate time series are usually characterized with temporal and cross-sectional dependencies. For example, it is well known that financial volatilities move together over time across assets and markets. 
Multivariate modelling framework takes account of both dependencies, and opens the door to better decision tools in various areas such as asset pricing, portfolio selection, option pricing, hedging and risk management \citep{bauwens2006multivariate,caporin2014robust,francq2019garch}.
Since the appearance of multivariate generalized autoregressive conditional heteroscedastic (GARCH) models, multivariate volatility models have attracted considerable interest owing to their popularity and usefulness in modeling conditional covariances and correlations. 
Particularly, the conditional covariances are usually used to forecast Value-at-Risk (VaR) thresholds for risk management, and the conditional correlations are used to determine portfolios \citep{mcaleer2005automated,caporin2012we}. 
There are diverse specifications for multivariate GARCH models, such as the vector GARCH model \citep{bollerslev1988capital}, constant conditional correlation (CCC) model \citep{bollerslev1990ModellingTC}, Baba, Engle, Kraft and Kroner's (BEKK) model \citep{engle1995MultivariateSG}, dynamic conditional correlation (DCC) model \citep{engle2002dynamic,tse2002multivariate} as well as its variants including the generalized DCC \citep{cappiello2006asymmetric} and corrected DCC \citep{aielli2013dynamic} models, structural multivariate GARCH model \citep{hafner2022identification} and so on.  
Among these specifications, BEKK and DCC models are the two most widely used ones for conditional covariances and correlations. 

For BEKK models, the conditional covariance matrix is directly modeled in a specific form such that the positivity can be easily imposed, whereas the coefficients are difficult to interpret. 
The DCC model separately models the volatilities and conditional correlation dynamics, which enhances its interpretability. 
However, the fully parameterized BEKK and DCC models with general coefficient matrices are seldom used in practice, primarily for two reasons: (a) the parameters of these general models are non-identifiable \citep{francq2019garch}; (b) these general models are numerically infeasible in estimation for many assets; see \cite{caporin2014robust} for numerical evidences. Specifically, they involve a large number of parameters, and their model structures result in high-order power calculations of coefficient matrices in estimation. 
For feasible application of BEKK and DCC models, a common solution is to constrain the coefficient matrices to be scalar or diagonal \citep{ding2001large,engle2002dynamic}. Although these restricted specifications enable feasible estimation owing to parsimony, they may fail to capture possibly sophisticated dynamics and cross-sectional dependence.
Clearly, there is a dilemma between model flexibility and parsimony, and the trade-off between these two aspects is particularly essential for multivariate GARCH models. 

In terms of estimation methods, 
the Gaussian quasi-maximum likelihood estimator (QMLE) is well developed for BEKK models with theoretical guarantees. Particularly, the strong consistency is proved by \cite{jeantheau1998strong} under the multivariate log-moment condition which requires the existence of sixth-order moments, and the asymptotic normality is established by \cite{comte2003asymptotic} and \cite{hafner2009asymptotic} under the existence of eighth-order and sixth-order moments, respectively.  
Based on these asymptotic properties, \cite{caporin2012we} demonstrated that BEKK model could be used to obtain consistent estimates of the conditional correlation matrices. 
On the other hand, a two-step estimation procedure which is computationally feasible for many assets is proposed for DCC models, where the volatility parameters are estimated using the QMLE of univariate GARCH models in the first step, and the QMLE is employed for other parameters of the DCC part in the second step.
However, there is no well-defined asymptotic properties for DCC models estimated by this two-step estimation \citep{caporin2012we,caporin2013ten}, and thus the statistical inference remains partially open. 

Despite the popularity of BEKK and DCC models and their variants, a new model which can not only inherit their merits but also reasonably balance the generality and computational efficiency, is still needed.  
This paper takes a different tact for an effective compromise between flexibility and parsimony, and proposes a new identifiable multivariate volatility model to handle many assets.
Particularly, owing to a suitable decomposition on coefficient matrices \citep{huang2024sarma,zheng2022interpretable}, the new model not only preserves the generality of fully parameterized conditional variance models in capturing temporal and cross-sectional dependence among series, but also is interpretable and parsimonious as diagonal DCC models. 
Our main contributions are summarized as follows: 
\begin{itemize}
  \item[(i)] A new multivariate volatility model is proposed to capture temporal and cross-sectional dependence for many assets in Section \ref{section SGARCH(r,s)}. 
  Firstly, the new model achieves a better balance between flexibility and parsimony than traditional multivariate GARCH models. Specifically, it is general enough to allow for dynamic conditional correlations and volatility spillovers among assets, but also parsimonious enough for feasible estimation and efficient computation for many assets. 
  Secondly, the proposed model is identifiable without imposing any sophisticated identification constraints (see Proposition \ref{propo Identification}). 
  Finally, a sufficient condition for stationarity of the proposed model is provided in Theorem \ref{thm Stationarity}. 
  The proving technique can also be used to derive the stationarity conditions of other DCC models. 
  \item[(ii)] The quasi-maximum likelihood estimations are proposed for the new model without and with low-rank constraints on the coefficient matrices respectively in Section \ref{section estimation}. Asymptotic properties including consistency and asymptotic normality are established for the two estimators in Theorems \ref{thm Consistency}--\ref{thm Asymptotic normality under low rank}, which make the statistical inference available for the new model. 
  It is worth to mention that, 
  (a) the full QMLE of the proposed model under low-rank restrictions is still feasible in estimation when the dimension $m\geq 10$ or $20$, whereas the full QMLE of existing multivariate GARCH models such as the diagonal CCC and DCC models fail for $m\geq 10$ \citep{francq2019garch}. 
  This further confirms the computational efficiency of our model; 
  (b) to the best of our knowledge, Theorems \ref{thm Consistency}--\ref{thm Asymptotic normality under low rank} provide the first rigorous theoretical results on estimation within the framework of DCC models. The proving technique can be extended to establish asymptotic properties for other DCC models. 
  \item[(iii)] A Bayesian information criterion (BIC) based on both QMLEs is considered for order selection in Section \ref{subsection model selection}, and its selection consistency is proved in Theorem \ref{thm BIC}. 
  As far as we know, the existing literature scarcely consider order selection for multivariate GARCH models, and it is even less to investigate the selection consistency. 
  \item[(iv)] The hypothesis test for volatility spillover effects is carefully studied in Section \ref{subsection test}.
  This test is employed to detect volatility spillovers among five major stock market indices including the French Cotation Automatique Continue Index (CAC), Deutsche Aktien Index (DAX), Financial Times 100 Stock Index (FTSE), Hang Seng Index (HS) and Standard \& Poor 500 Index (SP). It is found that there are significant volatility transmissions from the SP market to the other four stock markets.
  \item[(v)] The new framework can successfully model the conditional covariances and correlations for many assets. The empirical results in Section \ref{section real data} indicate that the proposed framework outperforms the commonly used DCC and BEKK models in terms of out-of-sample forecasting. This may be attributed to the ability of the proposed model to better capture the cross-sectional dependence with less estimation efficiency loss. 
\end{itemize}

The rest of this paper is organized as follows. Section \ref{section SGARCH(r,s)} introduces the new multivariate volatility model and studies its strict stationarity. Section \ref{section estimation} gives two quasi-maximum likelihood estimations, BIC for order selection, and hypothesis test for volatility spillovers, together with theoretical properties. Simulation studies and two empirical examples are provided in Sections \ref{section simulation}
and \ref{section real data}, respectively. Conclusion and discussion are given in Section \ref{section conclusion}. 
All technical details are relegated to the Appendix. 
Throughout this paper, $|\cdot|$ denotes either the absolute value of a scalar/vector or the determinant of a matrix, and $\|\cdot\|_{r}$ denotes either the $\ell_{r}$ vector norm or the matrix norm induced by $\ell_{r}$ vector norm. 
For positive integer $m$, 
$I_{m}$ (or $0_{m}$) denotes the $m \times m$ identity matrix (or matrix of zeros), 
and $\bm{1}_{m}$ (or $\bm{0}_{m}$) is an $m \times 1$ vector of ones (or zeros). 
For a matrix $A$, $\rho(A)$, $\rank(A)$ and $A^{g}$ denote its spectral radius, rank and generalized inverse, respectively. 
Denote $(A)_{ij}$ as the $(i,j)$-th element of matrix $A$, and 
denote $A = [a_{ij}]$ when $(A)_{ij} = a_{ij}$. 
For a matrix $A = [a_{ij}]$,  
$\ovec(A) = (a_{11}, \ldots, a_{m1}, \ldots, a_{1m}, \ldots, a_{mm})^{\prime}$, 
$\ovech(A) = (a_{11}, \ldots, a_{m1}, a_{22}, \ldots, a_{m2}, \ldots, a_{mm})^{\prime}$ 
and $\ovechsec(A) = (a_{21}, \ldots, a_{m1}, a_{32}, \ldots, a_{m2}, \ldots, a_{m,m-1})^{\prime}$. 
For square matrices $A = [a_{ij}]$ and $A_{j}$'s, 
$\diag(A) = (a_{11}, a_{22}, \ldots, a_{mm})^{\prime}$, 
$\Diag(A)$ is the diagonal matrix whose main diagonal is $\diag(A)$, 
and $\Diag\{A_{1}, \ldots, A_{m}\}$ is the block diagonal matrix whose main diagonal consists of $A_{1}, \ldots, A_{m}$. 
For a positive semidefinite matrix $A$, 
$A^{1/2}$ is the square root of $A$ which satisfies that $A^{1/2} (A^{1/2})^{\prime} = A$. 
The function 
$\ln^{+}(x)$ is defined as $\max\{\ln(x), 0\}$ for $x > 0$. 
Moreover, $\to_{d}$ denotes convergence in distribution. 
The dataset in Section \ref{section real data} and computer programs for the analysis are available at \url{https://github.com/wyLI2020/MGARCH}.

\section{An efficient multivariate GARCH model} \label{section SGARCH(r,s)}

\subsection{ Motivation }
  
Consider an $m$-dimensional multivariate time series $\{\mathbf{y}_{t}\}$, which follows a multivariate GARCH model \citep{francq2019garch} with dynamic conditional correlation (DCC): 
\begin{align}
  &\mathbf{y}_{t} = H_{t}^{1/2} \bm{\eta}_{t}, \;\;
  H_{t} = D_{t} R_{t} D_{t}, \;\;
  R_{t} = R(\mathbf{y}_{t-1}, \mathbf{y}_{t-2}, \ldots; \bm{\beta}), \label{eq yt, Ht, Rt in DCC}\\
  &\bm{h}_{t} = \bm{\omega} + \sum_{i=1}^{q} A_{i} \mathbf{y}^{\odot 2}_{t-i} + \sum_{j=1}^{p} B_{j} \bm{h}_{t-j}, \label{eq volatility in DCC}
\end{align} 
where $\mathbf{y}_{t} = (y_{1t}, \ldots, y_{mt})^{\prime}$, $\{\bm{\eta}_{t}\}$ is an $m$-dimensional sequence of independent and identically distributed ($i.i.d.$) variables with zero mean and identity covariance matrix, 
$H_{t}$ is the conditional covariance matrix of $\mathbf{y}_{t}$ given the $\sigma$-field $\mathcal{F}_{t-1} = \sigma\{\mathbf{y}_{t-1}, \mathbf{y}_{t-2}, \ldots\}$, 
and $D_{t} = [\Diag(H_{t})]^{1/2}$ with $H_{t} = [h_{ij,t}]$. Here $R_{t}$ is the conditional correlation matrix of $\mathbf{y}_{t}$ given $\mathcal{F}_{t-1}$, which depends on a parameter vector $\bm{\beta}$ and can take the form of DCC model in \cite{tse2002multivariate} or \cite{engle2002dynamic}. 
For model \eqref{eq volatility in DCC} of order $(p,q)$, $\bm{h}_{t} = \diag(D_{t}^{2}) = (h_{11,t}, \ldots, h_{mm,t})^{\prime}$, 
$\mathbf{y}^{\odot 2}_{t} = (y_{1t}^{2}, \ldots, y_{mt}^{2})^{\prime}$, 
$\bm{\omega}$ is an $m$-dimensional vector with positive coefficients, 
and $A_{i}$'s and $B_{j}$'s are $m \times m$ matrices with non-negative coefficients. 
Note that model \eqref{eq volatility in DCC} extends the univariate GARCH($p,q$) specification to allow for volatility spillover effects, 
and it will reduce to $m$ univariate GARCH($p,q$) equations if $A_{i}$'s and $B_{j}$'s are all diagonal. 
However, model \eqref{eq volatility in DCC} in general has three major drawbacks: 
\begin{itemize}
  \item[(i)] Non-identifiability. Consider model \eqref{eq volatility in DCC} of order $(1,1)$ for illustration, and it is possible to find a unimodular matrix $M$ such that model \eqref{eq volatility in DCC} formulated with matrices $(A_1,B_1)$ corresponds to the same process with matrices $(MA_1,MB_1)$; see \cite{jeantheau1998strong}. 
  \item[(ii)] Computational intractability due to high-order matrix polynomials in optimization. 
  Consider the QMLE for model \eqref{eq volatility in DCC} of order $(1,1)$ as an example. The equivalent form of this model in \eqref{eq ARCHinfinity for volatility in DCC when pqOne} indicates that $j$-th order matrix polynomials are involved in the likelihood function, which makes the optimization intractable even for moderate $m$. 
  \item[(iii)] Lack of flexibility due to the non-negative restrictions on coefficient parameters.
\end{itemize}
To avoid the drawbacks (i) and (ii), the common practice is to restrict the matrices $A_{i}$'s and $B_{j}$'s to be diagonal. 
But the price to pay for this parsimony is to ignore the volatility spillover effects among assets. 
Alternatively, we will explore the root cause of drawbacks (i) and (ii), and then work out a compromise between the model generality and computation efficiency.
For easy illustration, consider model \eqref{eq volatility in DCC} of order $(1,1)$ and rewrite it into an equivalent multivariate ARCH($\infty$) form:  
  \begin{equation} \label{eq ARCHinfinity for volatility in DCC when pqOne}
    \bm{h}_{t} = (I_{m} - B_{1})^{-1} \bm{\omega} + \sum_{i=0}^{\infty} B_{1}^{i} A_{1} \mathbf{y}^{\odot 2}_{t-i-1}, 
  \end{equation}
where $B_{1}$ is assumed to satisfy that $\rho(B_{1}) < 1$. 
Note that the interaction between matrices $A_{1}$ and $B_{1}$ will result in the non-identifiability of model \eqref{eq volatility in DCC} or \eqref{eq ARCHinfinity for volatility in DCC when pqOne}. 
Moreover, the optimization in parameter estimation needs to calculate the high powers of matrix $B_{1}$, which will make the computation intractable even for a moderate size $m$. 
It is clear that both drawbacks (i) and (ii) are due to the matrix multiplications in $B_{1}^{i} A_{1}$, and thus the key to overcome these drawbacks is to get rid of the matrix multiplications.   

Note that diagonalizable matrices are dense in the space of all complex matrix as shown by Theorem 9.34 in \cite{andreescu2016essential}.  
Thus, with only a little loss of generality, we assume that the coefficient matrix $B_{1}$ is diagonalizable.
Accordingly, we can block-diagonalize $B_{1}$ using the Jordan decomposition $B_{1} = P J P^{-1}$, and then merge $P^{-1}$ with $A_{1}$ to reparameterize model \eqref{eq ARCHinfinity for volatility in DCC when pqOne}, where $P$ is an $m \times m$ invertible matrix and $J$ is the real Jordan form containing eigenvalues of $B_{1}$.  
Particularly, we further suppose that $B_{1}$ has $r$ nonzero real eigenvalues and $s$ conjugate pairs of nonzero complex eigenvalues with $r+2s \leq m$, then $J = \Diag\{\lambda_{1}, \ldots, \lambda_{r}, C_{1}, \ldots, C_{s}, \bm{0}_{m-r-2s}\}$ is an $m \times m$ real block diagonal matrix with 
$0 < |\lambda_{j}| < 1$ for $1 \leq j \leq r$, and 
  $$
  C_{k} = \gamma_{k} 
  \left(
  \begin{matrix}
    \cos\varphi_{k} & \sin\varphi_{k} \\
    -\sin\varphi_{k} & \cos\varphi_{k}
  \end{matrix}
  \right), 
  \;\; 0 < \gamma_{k} < 1 \;\; \text{and} \;\; 0 < \varphi_{k} < \pi 
  \;\; \text{for} \;\; 1 \leq k \leq s. 
  $$
Clearly, $B_{1}^{i} = P J^{i} P^{-1}$ will be computationally efficient. 
Substituting the Jordan decomposition $B_{1} = P J P^{-1}$ into model \eqref{eq ARCHinfinity for volatility in DCC when pqOne}, we have 
\begin{equation}\label{eq decomposition of ht in DCC}
  \bm{h}_{t} = (I_{m} - B_{1})^{-1} \bm{\omega} + \sum_{i=0}^{\infty} \left\{
  \sum_{k=1}^{r} \lambda_{k}^{i} \overline{G}_{0,k} 
  + \sum_{k=1}^{s} \gamma_{k}^{i} \left[\cos(i \varphi_{k}) \overline{G}_{1,k} + \sin(i \varphi_{k}) \overline{G}_{2,k}\right]
  \right\} \mathbf{y}^{\odot 2}_{t-i-1}, 
\end{equation}
where $\overline{G}_{0,k} = \underline{\bm{b}}_{k} \underline{\bm{a}}_{k}^{\prime}$, $\overline{G}_{1,k} = \underline{\bm{b}}_{r+k} \underline{\bm{a}}_{r+k}^{\prime} + \underline{\bar{\bm{b}}}_{r+k} \underline{\bar{\bm{a}}}_{r+k}^{\prime}$ and $\overline{G}_{2,k} = \underline{\bm{b}}_{r+k} \underline{\bar{\bm{a}}}_{r+k}^{\prime} - \underline{\bar{\bm{b}}}_{r+k} \underline{\bm{a}}_{r+k}^{\prime}$ 
with $\underline{\bm{a}}_{k}$'s and $\underline{\bar{\bm{a}}}_{k}$'s being the column vectors of $(P^{-1} A_{1})^{\prime}$, and $\underline{\bm{b}}_{k}$'s and $\underline{\bar{\bm{b}}}_{k}$'s being the column vectors of $P$; 
see Section \ref{section connection between DCC and SGARCH} of the Appendix for details. 
Owing to the Jordan decomposition, we successfully replace the matrix multiplications in $B_{1}^{i} A_{1}$ by scalar multiplications of $\lambda_k$'s, $\gamma_k$'s, $\varphi_k$'s and  first-order polynomials of matrices $\overline{G}_{0,k}$'s, $\overline{G}_{1,k}$'s and $\overline{G}_{2,k}$'s. As a result, the drawbacks (i) and (ii) will be surmounted by \eqref{eq decomposition of ht in DCC}. 
Similar conclusions can be drawn for model \eqref{eq volatility in DCC} with general orders $p$ and $q$; see Remark \ref{remark general SGARCH} for details.

\subsection{The proposed model}

Based on previous discussions, to overcome the aforementioned drawbacks (i)--(iii) of DCC models, 
we propose a new multivariate GARCH model with DCC as follows: 
\begin{align}
  &\mathbf{y}_{t} = H_{t}^{1/2} \bm{\eta}_{t}, \;\;
  H_{t} = D_{t} R_{t} D_{t}, \;\;
  R_{t} = (1 - \beta_{1} - \beta_{2}) \underline{R} + \beta_{1} \Psi_{t-1} + \beta_{2} R_{t-1}, \label{model Rt SGARCH(r,s)}\\
  &\ln\bm{h}_{t} = \underline{\bm{\omega}} + \sum_{i=1}^{\infty} \Phi_{i} \ln\mathbf{y}^{\odot 2}_{t-i}, \label{model Dt SGARCH(r,s)}
\end{align}
where $\mathbf{y}_{t}$, $H_{t}$ and $D_{t}$ are defined as in models \eqref{eq yt, Ht, Rt in DCC}--\eqref{eq volatility in DCC}, and $\{\bm{\eta}_{t}\}$ is the $m$-dimensional sequence of $i.i.d.$ variables with zero mean and identity covariance matrix,
which are assumed to be nonzero with probability one. 
We employ the DCC specification in \cite{tse2002multivariate} for the conditional correlation matrix $R_{t}$, where $\underline{R} = [\underline{r}_{ij}]$ is an $m \times m$ positive definite parameter matrix with diagonal elements $\underline{r}_{ii} = 1$ and off-diagonal elements in $(-1, 1)$, 
$\Psi_{t-1}$ is the sample correlation matrix of $\{\bm{\varepsilon}_{t-1}, \ldots, \bm{\varepsilon}_{t-\Bbbk}\}$ with $\bm{\varepsilon}_{t} = D_{t}^{-1} \mathbf{y}_{t}$ and $\Bbbk \geq m$ to ensure positive definiteness of $\Psi_{t-1}$, 
and $\beta_{1}$ and $\beta_{2}$ are non-negative parameters satisfying $\beta_{1} + \beta_{2} < 1$. 
We may specify $R_{t}$ as other models such as the scalar or diagonal DCC model in \cite{engle2002dynamic}, corrected DCC model in \cite{aielli2013dynamic} and generalized DCC model in \cite{hafner2009generalized}. 
Here we model $R_{t}$ as in \cite{tse2002multivariate} for succinctness and parsimony, while the theoretical results that we establish below can be verified similarly for other DCC models. 
Moreover, the numerical evidences in Section \ref{section real data} indicate that \citeauthor{tse2002multivariate}'s specification has better forecasting performance than the other specifications. 
For model \eqref{model Dt SGARCH(r,s)}, $\ln\bm{h}_{t} = (\ln h_{11,t}, \ldots, \ln h_{mm,t})^{\prime}$, $\ln \mathbf{y}^{\odot 2}_{t} = (\ln y_{1t}^{2}, \ldots, \ln y_{mt}^{2})^{\prime}$, 
$\underline{\bm{\omega}} = (\underline{\omega}_{1}, \ldots, \underline{\omega}_{m})^{\prime}$, 
and the $m \times m$ coefficient matrices $\Phi_{i}$'s are defined as 
\begin{align}\label{model Phii in Dt}
\Phi_{i} = \sum_{k=1}^{r} \lambda_{k}^{i-1} G_{0,k} 
+ \sum_{k=1}^{s} \gamma_{k}^{i-1} \left[\cos((i-1) \varphi_{k}) G_{1,k} + \sin((i-1) \varphi_{k}) G_{2,k}\right], 
\end{align}   
where the orders $r$ and $s$ are non-negative integers such that $r+2s \leq m$, 
$\lambda_{k}$'s, $\gamma_{k}$'s and $\varphi_{k}$'s are scalar parameters that satisfy $0 < |\lambda_{k}| < 1$, $0 < \gamma_{k} < 1$ and $0 < \varphi_{k} < \pi$, 
and $G_{0,k}$'s, $G_{1,k}$'s and $G_{2,k}$'s are $m \times m$ parameter matrices.    
Moreover, for model identification we assume that $\lambda_{k}$'s and $\gamma_{k}$'s are distinct, respectively. Without loss of generality, we impose the ordering $\lambda_1 > \cdots > \lambda_r$ and $\gamma_1 > \cdots > \gamma_s$. 
This identification condition is general since the matrices with distinct nonzero eigenvalues are dense \citep{hartfiel1995dense}. 

Clearly, the proposed model in \eqref{model Rt SGARCH(r,s)}--\eqref{model Phii in Dt} belongs to the class of variance-correlation type models, and successfully inherits the parsimony of diagonal DCC models. Meanwhile, it preserves the generality of fully parameterized covariance type models in capturing temporal and cross-sectional dependence among series, without causing a lot of computational cost.
Particularly, model \eqref{model Dt SGARCH(r,s)} with $\Phi_{i}$'s defined in \eqref{model Phii in Dt} circumvents the matrix multiplications $B_{1}^{i} A_{1}$ in \eqref{eq ARCHinfinity for volatility in DCC when pqOne}, and it only involves the scalar multiplications and first-order polynomials of the matrices $G_{0,k}$'s, $G_{1,k}$'s and $G_{2,k}$'s. This makes the proposed model identifiable without imposing any sophisticated identification constraints (see Proposition \ref{propo Identification}), and more importantly makes its computation efficient without calculating any high-order matrix polynomials.  
As a result, the proposed model naturally overcomes the drawbacks (i)--(ii) of model \eqref{eq volatility in DCC}.
Moreover, by modeling the logarithms of conditional variances $h_{ii,t}$'s as in \cite{nelson1991conditional} and \cite{francq2017equation}, $h_{ii,t}$'s are guaranteed to be positive without any non-negative restrictions on parameters in $\underline{\bm{\omega}}$ and $\Phi_{i}$'s, and thus the drawback (iii) is overcome as well.

Next we derive the stationarity condition for models \eqref{model Rt SGARCH(r,s)}--\eqref{model Phii in Dt}. It is challenging to establish the stationarity of $\{\mathbf{y}_{t}\}$ due to the ARCH($\infty$) form at \eqref{model Dt SGARCH(r,s)}, which makes it infeasible to rewrite $\{\mathbf{y}_{t}\}$ into a Markov chain as ARCH($\infty$) processes are not Markovian in general \citep{Fryzlewicz_SubbaRao2011}. 
Note that $\bm{\varepsilon}_{t} = D_{t}^{-1} \mathbf{y}_{t} = D_{t}^{-1} H_{t}^{1/2} \bm{\eta}_{t} = R_{t}^{1/2} \bm{\eta}_{t}$, and hence models \eqref{model Rt SGARCH(r,s)}--\eqref{model Dt SGARCH(r,s)} can be rewritten as follows 
\begin{align}
	&\bm{\varepsilon}_{t} = R_{t}^{1/2} \bm{\eta}_{t}, \;\;
	R_{t} = (1 - \beta_{1} - \beta_{2}) \underline{R} + \beta_{1} \Psi_{t-1} + \beta_{2} R_{t-1}, \tag{5$*$} \label{rewritten model Rt SGARCH(r,s)}\\
	&\mathbf{y}_{t} = H_{t}^{1/2} \bm{\eta}_{t}, \;\;
	H_{t} = D_{t} R_{t} D_{t}, \;\;
	\ln\bm{h}_{t} = \underline{\bm{\omega}} + \sum_{i=1}^{\infty} \Phi_{i} \ln\mathbf{y}^{\odot 2}_{t-i}. \tag{6$*$} \label{rewritten model Dt SGARCH(r,s)}
\end{align}
We can rewrite \eqref{rewritten model Rt SGARCH(r,s)} into a Markov chain since $\Psi_{t-1}$ is the sample correlation matrix of $\{\bm{\varepsilon}_{t-1}, \ldots, \bm{\varepsilon}_{t-\Bbbk}\}$, and this makes it possible to first establish the stationarity of $\{(R_{t}, \bm{\varepsilon}_{t})\}$.
For $1 \leq k \leq \Bbbk$ and $t \geq 1$, let $\varpi_{t,k}(c) = {2m \beta_{1}}{[c (1 - \beta_{1} - \beta_{2}) \lambda_{\text{min}}(\underline{R})]^{-1/2}} \|\bm{\eta}_{t-k}\|_{2} + I(k=1)\beta_{2}$ with $c$ being a constant. 
Define the $\Bbbk \times \Bbbk$ matrix 
$W_{t}(c) = (\bm{\varpi}_{t}(c), \bar{I}_{\Bbbk-1}^{\prime})^{\prime}$ with $\bm{\varpi}_{t}(c) = (\varpi_{t,1}(c), \ldots, \varpi_{t,\Bbbk}(c))^{\prime}$ and $\bar{I}_{\Bbbk-1} = (I_{\Bbbk-1}, \mathbf{0}_{\Bbbk-1})$, and denote its. 
Denote the top Lyapunov exponent of $\{W_{t}(c)\}$ as $\varsigma(c) \triangleq \lim_{t \to +\infty} t^{-1} E[\ln(\|W_{1}(c)W_{2}(c) \cdots W_{t}(c)\|)]$, where $\|\cdot\|$ is some matrix norm. 

\begin{lemma}[Stationarity for $\{(R_{t}, \bm{\varepsilon}_{t})\}$] \label{lemma Stationarity of epsilont}
	Suppose that $E\|\bm{\eta}_{t}\|_{2}^{\epsilon} < \infty$ holds for some $\epsilon \geq 1$. 
	Then the process $\{(R_{t}, \bm{\varepsilon}_{t})\}$ 
	defined in model \eqref{rewritten model Rt SGARCH(r,s)} admits a strictly stationary and non-anticipative solution. 
	Moreover, suppose that $\sum_{k=1}^{\Bbbk} \varepsilon_{i, t-\Bbbk}^{2} \geq \underline{c}$ almost surely for some positive constant $\underline{c}$ and $1 \leq i \leq m$, $E(\ln^{+}\|W_{t}(\underline{c})\|) < \infty$ holds for some matrix norm $\|\cdot\|$, and $\varsigma(\underline{c})$ is strictly negative. Then this strictly stationary and non-anticipative solution is unique and ergodic. 
\end{lemma}

Lemma \ref{lemma Stationarity of epsilont} provides a sufficient condition for the strict stationarity of $\{(R_{t}, \bm{\varepsilon}_{t}^{\prime})\}$. The conditions on matrices $\{W_{t}\}$ in Lemma \ref{lemma Stationarity of epsilont} are standard in the GARCH literature \citep{francq2019garch}; see also Assumption U2 of \cite{fermanian2017stationarity}. 

Note that $(\mathbf{y}_{t}, H_{t})$ is a measurable function of $(R_{t}, \bm{\varepsilon}_{t})$ and then, by Theorem 36.4 of \citet{billingsley1995}, the strict stationarity of $(\mathbf{y}_{t}, H_{t})$ follows if $(R_{t}, \bm{\varepsilon}_{t})$ is. Hence, from Lemma \ref{lemma Stationarity of epsilont}, Theorem \ref{thm Stationarity} establishes a sufficient condition for the strict stationarity of the process $\{\mathbf{y}_{t}\}$.
Denote $\ln\bm{\varepsilon}^{\odot 2}_{t} = (\ln \varepsilon_{1t}^{2}, \ldots, \ln \varepsilon_{mt}^{2})^{\prime}$ with $\bm{\varepsilon}_{t} = (\varepsilon_{1t}, \ldots, \varepsilon_{mt})^{\prime}$. 

\begin{thm}[Stationarity for $\mathbf{y}_{t}$] \label{thm Stationarity}
	Under the conditions of Lemma \ref{lemma Stationarity of epsilont}, suppose that $E\|\ln\bm{\varepsilon}^{\odot 2}_{t}\|_{2} < \infty$ and $\sum_{k=1}^{r} (1 - |\lambda_{k}|)^{-1} \|G_{0,k}\| + \sum_{k=1}^{s} (1 - |\gamma_{k}|)^{-1} (\|G_{1,k}\| + \|G_{2,k}\|) < 1$ holds for some matrix norm $\|\cdot\|$ induced by a vector norm. 
	Then the process $\{\mathbf{y}_{t}\}$ 
	defined in models \eqref{model Rt SGARCH(r,s)}--\eqref{model Phii in Dt} admits a unique, non-anticipative, strictly stationary and ergodic solution. 
\end{thm}

In comparison with the proofs for corrected and standard DCC models in \cite{aielli2013dynamic} and \cite{fermanian2017stationarity}, the proof of stationarity for $\{\mathbf{y}_{t}\}$ is nontrivial due to the ARCH($\infty$) form in model \eqref{model Dt SGARCH(r,s)} or \eqref{rewritten model Dt SGARCH(r,s)}. 
To the best of our knowledge, Theorem \ref{thm Stationarity} provides the first rigorous results on stationarity conditions for the specification $R_{t}$ defined in \eqref{model Rt SGARCH(r,s)}.  
It is noteworthy that the two-step method in establishing the stationarity for $\{\mathbf{y}_{t}\}$ can also be used to derive the stationarity for the DCC model in \cite{tse2002multivariate}. 

Although the model in \eqref{model Rt SGARCH(r,s)}--\eqref{model Phii in Dt} improves the computation efficiency of the fully parameterized multivariate GARCH model, it is still not parsimonious for moderate size $m$ since each of the fully parameterized matrices $G_{0,k}$'s, $G_{1,k}$'s and $G_{2,k}$'s has $m^2$ parameters. 
To make the proposed model parsimonious and computationally efficient for moderate dimensions, we may further assume 
\begin{equation} \label{eq low rank restrictions}
  \rank(G_{0,k}) = 1 \;\; \text{and} \;\; 1 \leq \rank(G_{\ell,k}) \leq 2 \;\; \text{for} \;\; \ell =1,2. 
\end{equation}
Note that model \eqref{model Dt SGARCH(r,s)} under these low-rank restrictions only has a little loss of generality in comparison with a fully parameterized multivariate GARCH$(1,1)$ model with $\ln\bm{h}_{t} = \bm{\omega} + A_{1} \ln\mathbf{y}^{\odot 2}_{t-1} + B_{1} \ln\bm{h}_{t-1}$; see Remark \ref{remark connection with DCC} for details. 

\begin{remark}[The proposed model under the low-rank restrictions] \label{remark connection with DCC}
  For the model $\ln\bm{h}_{t} = \bm{\omega} + A_{1} \ln\mathbf{y}^{\odot 2}_{t-1} + B_{1} \ln\bm{h}_{t-1}$, if $B_{1}$ is diagonalizable, then it is equivalent to model \eqref{model Dt SGARCH(r,s)} with the matrices $G_{0,k}$, $G_{1,k}$ and $G_{2,k}$ in \eqref{model Phii in Dt} satisfying the following forms:
  \begin{equation} \label{eq equivalence condition}
    G_{0,k} = \bm{b}_{k} \bm{a}_{k}^{\prime}, \;
    G_{1,k} = \bm{b}_{r+k} \bm{a}_{r+k}^{\prime} + \bar{\bm{b}}_{r+k} \bar{\bm{a}}_{r+k}^{\prime}, \; \text{and} \; 
    G_{2,k} = \bm{b}_{r+k} \bar{\bm{a}}_{r+k}^{\prime} - \bar{\bm{b}}_{r+k} \bm{a}_{r+k}^{\prime}, 
  \end{equation}
  where $\bm{a}_{k}$'s, $\bar{\bm{a}}_{k}$'s, $\bm{b}_{k}$'s and $\bar{\bm{b}}_{k}$'s are $m \times 1$ vectors, and further notation details are relegated to Section \ref{section connection between DCC and SGARCH} of the Appendix. 
  Note that \eqref{eq equivalence condition} implies that the low-rank restrictions in \eqref{eq low rank restrictions} holds. 
  As a result, the proposed model in \eqref{model Rt SGARCH(r,s)}--\eqref{model Phii in Dt} under the low-rank restrictions in \eqref{eq low rank restrictions} only has a little loss of generality in comparison with a fully parameterized multivariate GARCH(1,1) model with $\ln\bm{h}_{t} = \bm{\omega} + A_{1} \ln\mathbf{y}^{\odot 2}_{t-1} + B_{1} \ln\bm{h}_{t-1}$. 
\end{remark}

The following remark extends the proposed multivariate GARCH model to a general form and discusses its stationarity condition.  
\begin{remark}[Extension to a general multivariate GARCH model] \label{remark general SGARCH}  
  Corresponding to the fully parameterized multivariate GARCH($p,q$) model with DCC, the proposed multivariate GARCH model in \eqref{model Rt SGARCH(r,s)}--\eqref{model Phii in Dt} can be extended to the model of order $(q,r,s)$ with the $m \times m$ coefficient matrices $\Phi_{i}$ defined as follows
  \begin{footnotesize}
  \begin{align*}
    \Phi_{i} =  
    \sum_{k=1}^{q-1} I(i=k) G_{k} 
    + \sum_{k=1}^{r} I(i\geq q) \lambda_{k}^{i-q} G_{0,k}  
    + \sum_{k=1}^{s} I(i\geq q) \gamma_{k}^{i-q} 
    \left[\cos((i-q) \varphi_{k}) G_{1,k} 
    + \sin((i-q) \varphi_{k}) G_{2,k}\right], 
  \end{align*}
  \end{footnotesize}
  where $r$ and $s$ are integers such that $r+2s \leq mp$, 
  $G_{k}$'s are $m \times m$ parameter matrices, 
  and parameters $\lambda_{k}$'s, $\gamma_{k}$'s, $\varphi_{k}$'s, $G_{0,k}$'s, $G_{1,k}$'s and $G_{2,k}$'s are defined as in \eqref{model Phii in Dt}.
  A sufficient condition for strict stationarity of this general model can be established as in Theorem \ref{thm Stationarity}, only with the condition on matrix norm replaced by 
  $\sum_{k=1}^{q-1} \|G_{k}\|_{2} + \sum_{k=1}^{r} (1 - |\lambda_{k}|)^{-1} \|G_{0,k}\| + \sum_{k=1}^{s} (1 - |\gamma_{k}|)^{-1} \left(\|G_{1,k}\| + \|G_{2,k}\|\right) < 1$. 
  In addition, the general model under some low-rank restrictions on $G_{0,k}$'s, $G_{1,k}$'s and $G_{2,k}$'s is equivalent to a multivariate GARCH$(p,q)$ model with $\ln\bm{h}_{t} = \bm{\omega} + \sum_{i=1}^q A_{i} \ln\mathbf{y}^{\odot 2}_{t-i} + \sum_{j=1}^p B_{j} \ln\bm{h}_{t-j}$ under some diagonalizable conditions related to matrices $B_{j}$'s; see details in Section \ref{section SGARCH(q,r*,s*)} of the Appendix.   
\end{remark}

\section{Statistical inference}\label{section estimation}

This section introduces the one-step Gaussian quasi-maximum likelihood estimators (QMLEs) for the proposed model in \eqref{model Rt SGARCH(r,s)}--\eqref{model Phii in Dt} without and with the low-rank restrictions, respectively.
The order selection using BIC is also investigated based on the two QMLEs, and hypothesis tests are constructed to detect volatility spillover effects.

\subsection{QMLE without low-rank restrictions} \label{subsection QMLE}
        
Denote $\Phi_{i}=\Phi_{i}(\bm{\kappa})$, where $\bm{\kappa} = (\bm{\lambda}^{\prime}, \bm{\gamma}^{\prime}, \bm{\varphi}^{\prime}, \bm{g}_{0}^{\prime}, \bm{g}_{1}^{\prime}, \bm{g}_{2}^{\prime})^{\prime}$ with 
$\bm{\lambda} = (\lambda_{1}, \ldots, \lambda_{r})^{\prime}$, 
$\bm{\gamma} = (\gamma_{1}, \ldots, \gamma_{s})^{\prime}$, 
$\bm{\varphi} = (\varphi_{1}, \ldots, \varphi_{s})^{\prime}$, 
$\bm{g}_{0} = (\bm{g}_{0,1}^{\prime}, \ldots, \bm{g}_{0,r}^{\prime})^{\prime}$ with $\bm{g}_{0,k} = \ovec(G_{0,k})$ for $1 \leq k \leq r$, and 
$\bm{g}_{\ell} = (\bm{g}_{\ell,1}^{\prime}, \ldots, \bm{g}_{\ell,s}^{\prime})^{\prime}$ with $\bm{g}_{\ell,k} = \ovec(G_{\ell,k})$ for $\ell = 1, 2$ and $1 \leq k \leq s$. 
Let $\bm{\theta} = (\bm{\delta}^{\prime}, \bm{\beta}^{\prime})^{\prime}$ be the parameter vector, where 
$\bm{\delta} = (\underline{\bm{\omega}}^{\prime}, \bm{\kappa}^{\prime})^{\prime}$ and
$\bm{\beta} = (\beta_{1}, \beta_{2}, \underline{\bm{r}}^{\prime})^{\prime}$ with 
$\underline{\bm{r}} = \ovechsec(\underline{R})$. 
Denote the true parameter vector by $\bm{\theta}_{0} = (\bm{\delta}_{0}^{\prime}, \bm{\beta}_{0}^{\prime})^{\prime}$ with $\bm{\delta}_{0} = (\underline{\bm{\omega}}_{0}^{\prime}, \bm{\kappa}_{0}^{\prime})^{\prime}$, 
$\bm{\beta}_{0} = (\beta_{10}, \beta_{20}, \underline{\bm{r}}_{0}^{\prime})^{\prime}$ and 
$\underline{\bm{r}}_{0} = \ovechsec(\underline{R}_{0})$. 
Denote $\Theta$ as the parameter space of $\bm{\theta}$, which is a compact subset of $\mathbb{R}^{m} \times \{(-1,0) \cup (0,1)\}^{r} \times (0,1)^{s} \times (0,\pi)^{s} \times \mathbb{R}^{(r+2s)m^{2}} \times (0,1)^{2} \times (-1,1)^{m(m-1)/2}$. 

Let $\{\mathbf{y}_{1}, \ldots, \mathbf{y}_{n}\}$ be observations of $\{\mathbf{y}_{t}\}$ generated by the proposed model in \eqref{model Rt SGARCH(r,s)}--\eqref{model Phii in Dt}.
Below we use notations $\Phi_{i}(\bm{\kappa})$, $D_{t}(\bm{\delta})$, $R_{t}(\bm{\theta})$, $H_{t}(\bm{\theta})$, $\ln\bm{h}_{t}(\bm{\delta})$ and $\bm{\varepsilon}_{t}(\bm{\delta}) = D_{t}^{-1}(\bm{\delta}) \mathbf{y}_{t}$ to emphasize their dependence on parameters. 
Since the functions $D_{t}(\bm{\delta})$, $R_{t}(\bm{\theta})$, $H_{t}(\bm{\theta})$, $\ln\bm{h}_{t}(\bm{\delta})$ and $\bm{\varepsilon}_{t}(\bm{\delta})$ depend on observations in the infinite past, initial values are required in practice. Without loss of generality, we simply set the initial values of $\{\mathbf{y}_{s}, s \leq 0\}$ as $\widetilde{\mathbf{y}}_{s} = \bm{1}_{m}$, and denote the resulting functions by $\widetilde{D}_{t}(\bm{\delta})$, $\widetilde{R}_{t}(\bm{\theta})$, $\widetilde{H}_{t}(\bm{\theta})$, $\ln\widetilde{\bm{h}}_{t}(\bm{\delta})$ and $\widetilde{\bm{\varepsilon}}_{t}(\bm{\delta})$, respectively.   
Given $\{\mathbf{y}_{1}, \ldots, \mathbf{y}_{n}\}$ and initial values of $\{\mathbf{y}_{s}, s \leq 0\}$, the negative conditional Gaussian log-likelihood function (ignoring a constant) can be written as $\widetilde{L}_{n}(\bm{\theta})=\sum_{t=1}^{n} \widetilde{\ell}_{t}(\bm{\theta})$ with $\widetilde{\ell}_{t}(\bm{\theta}) = \mathbf{y}_{t}^{\prime} \widetilde{H}_{t}^{-1}(\bm{\theta}) \mathbf{y}_{t}/2 + \ln |\widetilde{H}_{t}(\bm{\theta})|/2$. 
Then the QMLE of $\bm{\theta}_{0}$ without any low-rank restrictions can be defined as 
\begin{equation} \label{est general QMLE} 
  \widehat{\bm{\theta}}_{\text{G}} = \argmin_{\bm{\theta} \in \Theta} \widetilde{L}_{n}(\bm{\theta}).
\end{equation}
We will prove that the effect of the initial values on $\widehat{\bm{\theta}}_{\text{G}}$ is asymptotically negligible.

To establish the consistency of $\widehat{\bm{\theta}}_{\text{G}}$, we need to prove that the true parameter vector $\bm{\theta}_{0}$ is the unique minimizer of the population loss $E[\ell_{t}(\bm{\theta})]$. 
Proposition \ref{propo Identification} states that the model in \eqref{model Rt SGARCH(r,s)}--\eqref{model Phii in Dt} is identifiable, which guarantees that $\bm{\theta}_{0}=\argmin_{\bm{\theta} \in \Theta}E[\ell_{t}(\bm{\theta})]$ is unique. 

\begin{propo}[Identification] \label{propo Identification}
  Suppose that $G_{\ell,k0} \neq 0_{m}$ for $1 \leq k \leq r$ when $\ell = 0$ and for $1 \leq k \leq s$ when $\ell = 1,2$, where $G_{\ell,k0}$ is the true value of $G_{\ell,k}$. 
  Then 
  (\romannumeral1) the order $(r,s)$ is identifiable, i.e. there is no other order $(r',s')$ for the model in \eqref{model Rt SGARCH(r,s)}--\eqref{model Phii in Dt} to give an equivalent representation; 
  (\romannumeral2) the true value $\bm{\theta}_{0}$ is identifiable, i.e. the model in \eqref{model Rt SGARCH(r,s)}--\eqref{model Phii in Dt} does not hold true when $\bm{\theta}_{0}$ is replaced by $\bm{\theta} \neq \bm{\theta}_{0}$. 
\end{propo}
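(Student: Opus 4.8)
The plan is to split the identification of $\bm{\theta}=(\bm{\delta}^{\prime},\bm{\beta}^{\prime})^{\prime}$ into two pieces: recover $\bm{\delta}$ (i.e.\ $\underline{\bm{\omega}}$ together with the orders and coefficient matrices in \eqref{model Phii in Dt}) from the conditional variances, and recover $\bm{\beta}$ from the correlation recursion \eqref{model Rt SGARCH(r,s)}. Suppose $\bm{\theta}$ and $\bm{\theta}_{0}$ both make \eqref{model Rt SGARCH(r,s)}--\eqref{model Phii in Dt} hold for the same process $\{\mathbf{y}_{t}\}$, and write $H_{t}(\bm{\theta})$ for the conditional covariance the model builds from $\bm{\theta}$ and $\mathcal{F}_{t-1}$. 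Since $H_{t}(\bm{\theta})$ is $\mathcal{F}_{t-1}$-measurable while $\bm{\eta}_{t}=H_{t}(\bm{\theta})^{-1/2}\mathbf{y}_{t}$ is independent of $\mathcal{F}_{t-1}$ with identity covariance, $E[\mathbf{y}_{t}\mathbf{y}_{t}^{\prime}\mid\mathcal{F}_{t-1}]=H_{t}(\bm{\theta})$; the left-hand side does not depend on $\bm{\theta}$, so $H_{t}(\bm{\theta})=H_{t}(\bm{\theta}_{0})$ a.s. Comparing diagonals gives $\bm{h}_{t}(\bm{\delta})=\bm{h}_{t}(\bm{\delta}_{0})$ (hence $\ln\bm{h}_{t}$, $D_{t}$, $\bm{\varepsilon}_{t}=D_{t}^{-1}\mathbf{y}_{t}$ and $\Psi_{t-1}$ all agree across the two parameterizations), and normalizing $H_{t}$ by $D_{t}$ gives $R_{t}(\bm{\theta})=R_{t}(\bm{\theta}_{0})$ a.s.

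For the variance part, \eqref{model Dt SGARCH(r,s)} and $\ln\bm{h}_{t}(\bm{\delta})=\ln\bm{h}_{t}(\bm{\delta}_{0})$ yield
\[
(\underline{\bm{\omega}}-\underline{\bm{\omega}}_{0})+\sum_{i=1}^{\infty}\bigl(\Phi_{i}-\Phi_{i0}\bigr)\ln\underline{\mathbf{y}}_{t-i}=\bm{0}_{m}\quad\text{a.s.},
\]
the series converging absolutely because $\|\Phi_{i}\|$ decays geometrically (from $0<|\lambda_{k}|<1$ and $0<\gamma_{k}<1$) and $\ln\underline{\mathbf{y}}_{t}$ is integrable by the assumed moments. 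I would then peel off the terms one at a time: conditionally on $\mathcal{F}_{t-2}$ every term with index $i\geq2$ and the $\ln\bm{h}_{t-1}$ piece of $\ln\underline{\mathbf{y}}_{t-1}=\ln\bm{h}_{t-1}+\ln\underline{\bm{\varepsilon}}_{t-1}$ are fixed, while $\bm{\varepsilon}_{t-1}=R_{t-1}^{1/2}\bm{\eta}_{t-1}$ carries the fresh innovation $\bm{\eta}_{t-1}$; thus $(\Phi_{1}-\Phi_{10})\ln\underline{\bm{\varepsilon}}_{t-1}$ is a.s.\ constant in $\bm{\eta}_{t-1}$, and a non-degeneracy/regularity condition on the law of $\bm{\eta}_{t}$ forces $\Phi_{1}=\Phi_{10}$. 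Iterating over $i=2,3,\dots$ gives $\Phi_{i}=\Phi_{i0}$ for all $i$, hence also $\underline{\bm{\omega}}=\underline{\bm{\omega}}_{0}$.

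It remains to read $(r,s)$ and $\bm{\kappa}$ off $\{\Phi_{i}\}$. Fixing an entry $(u,v)$, \eqref{model Phii in Dt} writes $\{(\Phi_{i})_{uv}\}_{i\geq1}$ as a linear combination of the geometric sequences $\{\mu^{i-1}\}_{i}$ over $\mu$ in $\{\lambda_{1},\dots,\lambda_{r}\}\cup\{\gamma_{k}e^{\mathrm{i}\varphi_{k}},\gamma_{k}e^{-\mathrm{i}\varphi_{k}}\}_{k=1}^{s}$ once $\cos$ and $\sin$ are put in complex form. Under the imposed constraints---$\lambda_{k}$'s distinct and real with $0<|\lambda_{k}|<1$, $\gamma_{k}$'s distinct in $(0,1)$, $\varphi_{k}\in(0,\pi)$---these bases are pairwise distinct and closed under conjugation, so by the classical linear independence of geometric sequences with distinct ratios, both the set of $\mu$'s appearing with nonzero coefficient (the ``active'' ones) and those coefficients are determined by $\{(\Phi_{i})_{uv}\}$. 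Taking the union over all $(u,v)$ and using $G_{0,k0}\neq0_{m}$ (so $\lambda_{k}$ is active in some entry) and $G_{1,k0}\neq0_{m}$ (so $\gamma_{k}e^{\pm\mathrm{i}\varphi_{k}}$ is active somewhere), the real active bases are exactly $\{\lambda_{k}\}$ and the active bases with positive imaginary part are exactly $\{\gamma_{k}e^{\mathrm{i}\varphi_{k}}\}$. This pins down $r,s$---which is part (\romannumeral1)---then $\{\lambda_{k}\}$ and $\{(\gamma_{k},\varphi_{k})\}$, and finally, by matching entrywise coefficients, the matrices $G_{0,k},G_{1,k},G_{2,k}$, so all of $\bm{\delta}$ is identified. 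For $\bm{\beta}$, since $D_{t}$, $\bm{\varepsilon}_{t}$ and $\Psi_{t-1}$ already coincide, $R_{t}(\bm{\theta})=R_{t}(\bm{\theta}_{0})$ and \eqref{model Rt SGARCH(r,s)} give, for each off-diagonal $(u,v)$,
\[
(\beta_{1}-\beta_{10})(\Psi_{t-1})_{uv}+(\beta_{2}-\beta_{20})(R_{t-1})_{uv}+\bigl[(1-\beta_{1}-\beta_{2})(\underline{R})_{uv}-(1-\beta_{10}-\beta_{20})(\underline{R}_{0})_{uv}\bigr]=0\quad\text{a.s.}
\]
Conditioning on $\mathcal{F}_{t-2}$ fixes $R_{t-1}$ but leaves $(\Psi_{t-1})_{uv}$ depending on the fresh innovation inside $\bm{\varepsilon}_{t-1}$, so non-degeneracy of $\bm{\eta}_{t}$ forces $\beta_{1}=\beta_{10}$; since $\beta_{1},\beta_{2}\in(0,1)$, $(R_{t-1})_{uv}$ is genuinely non-constant (a constant value would propagate to a constant $(\Psi_{t-1})_{uv}$), so $\beta_{2}=\beta_{20}$ and then $(1-\beta_{1}-\beta_{2})[(\underline{R})_{uv}-(\underline{R}_{0})_{uv}]=0$; as $1-\beta_{1}-\beta_{2}>0$ and both matrices have unit diagonal, $\underline{R}=\underline{R}_{0}$. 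Hence $\bm{\theta}=\bm{\theta}_{0}$, which is part (\romannumeral2).

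The main obstacle is the step in which a matrix annihilating $\ln\underline{\bm{\varepsilon}}_{t-1}$ conditionally on $\mathcal{F}_{t-2}$ (and, in the $\bm{\beta}$ argument, the analogous statement for $(\Psi_{t-1})_{uv}$) must be shown to vanish: this extracts linear-algebraic information through the nonlinear map $\bm{x}\mapsto(\ln x_{1}^{2},\dots,\ln x_{m}^{2})^{\prime}$, and mere non-degeneracy of $\cov(\bm{\eta}_{t})=I_{m}$ does not suffice---one needs the law of $\bm{\eta}_{t}$ to be regular enough (e.g.\ absolutely continuous, or not concentrated on a finite union of hyperplanes), and this is the hypothesis doing the real work. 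The surrounding bookkeeping with geometric sequences of distinct complex ratios is classical, but must be organized so that the conjugate-pair structure of the $\gamma_{k}e^{\pm\mathrm{i}\varphi_{k}}$ and the non-vanishing conditions $G_{\ell,k0}\neq0_{m}$ are invoked exactly where needed.
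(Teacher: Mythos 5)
Your proof is correct and follows essentially the same route as the paper's: reduce to $H_{t}(\bm{\theta})=H_{t}(\bm{\theta}_{0})$ a.s., split this into the variance part $\ln\bm{h}_{t}$ and the correlation part $R_{t}$, identify $(r,s)$ and the parameters in $\Phi_{i}$ via linear independence of geometric/trigonometric sequences with distinct ratios (the paper's Lemma A.1, which you invoke in complexified form), and then match coefficients in the expanded $R_{t}$ recursion to recover $\bm{\beta}$. The one place you go beyond the paper is the conditioning-on-$\mathcal{F}_{t-2}$ argument that upgrades a.s.\ equality of $\ln\bm{h}_{t}$ (and of $R_{t}$) to equality of the coefficient matrices; the paper simply asserts this step ``by the proof of (i),'' and your observation that it tacitly requires a non-degeneracy condition on the law of $\bm{\eta}_{t}$ beyond $\cov(\bm{\eta}_{t})=I_{m}$ is a fair and accurate reading of what the published argument leaves implicit.
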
     

To prove the asymptotic properties for $\widehat{\bm{\theta}}_{\text{G}}$, we further introduce the assumptions below. 
\begin{assum} \label{assum stationarity}
  The process $\{\mathbf{y}_{t}\}$ is strictly stationary and ergodic. 
\end{assum}
\begin{assum} \label{assum parameters}
  ($\romannumeral1$) $\Theta$ is compact, and $\bm{\theta}_{0}$ lies in the interior of $\Theta$; 
  ($\romannumeral2$) for some $0 < \varrho < 1$, $0 < \sup_{\bm{\theta} \in \Theta} \max\{|\lambda_{1}|, \ldots, |\lambda_{r}|, \gamma_{1}, \ldots, \gamma_{s}\} \leq \varrho$ holds; 
  ($\romannumeral3$) $\underline{R}$ is positive definite for $\bm{\theta} \in \Theta$. 
\end{assum}

Assumptions \ref{assum stationarity}--\ref{assum parameters} impose basic conditions on the process $\{\mathbf{y}_{t}\}$ and the parameters.
For the strict stationarity of $\{\mathbf{y}_{t}\}$ in Assumption \ref{assum stationarity}, a sufficient condition is provided in Theorem \ref{thm Stationarity}. 
The compactness of parameter space in Assumption \ref{assum parameters}($\romannumeral1$) is standard in proving the consistency, and the interior condition on the true value $\bm{\theta}_{0}$ is general and necessary for the asymptotic normality. 
Moreover, Assumption \ref{assum parameters}($\romannumeral2$) ensures that the moment conditions on $\ln\bm{h}_{t}(\bm{\delta})$ and its derivatives hold uniformly on $\Theta$; see details in Lemma \ref{lemma moments of derivatives of lnhtunderline} of the Appendix. 
In addition, Assumption \ref{assum parameters}($\romannumeral3$) guarantees that $R_{t}(\bm{\theta})$ is positive definite for all $\bm{\theta} \in \Theta$.

Let $d = m + (r + 2s) (1 + m^2) + m(m-1)/2 + 2$ be the dimension of the parameter vector $\bm{\theta}$. Define the $d \times d$ matrices 
$\Sigma = E({\partial\ell_{t}(\bm{\theta}_{0})}/{\partial\bm{\theta}} {\partial\ell_{t}(\bm{\theta}_{0})}/{\partial\bm{\theta}^{\prime}})$ and $\Sigma_{*} = E({\partial^{2}\ell_{t}(\bm{\theta}_{0})}/{\partial\bm{\theta} \partial\bm{\theta}^{\prime}})$, 
where the derivatives in $\Sigma$ and $\Sigma_{*}$ are relegated to Section \ref{section Derivatives} of the Appendix. 

\begin{thm} \label{thm Consistency}
  Suppose that Assumptions \ref{assum stationarity}--\ref{assum parameters} hold. 
  If $E\|\ln\mathbf{y}^{\odot 2}_{t}\|_{2} < \infty$, then $\widehat{\bm{\theta}}_{\text{G}} \to \bm{\theta}_{0}$ almost surely as $n \to \infty$. 
\end{thm}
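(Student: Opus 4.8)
The plan is to establish the strong consistency of $\widehat{\bm{\theta}}_{\text{G}}$ by the standard three-pronged argument for M-estimators in the time series context, adapting the classical treatment of QMLE consistency for GARCH-type models (cf.\ Francq and Zako\"ian). The three ingredients are: (i) the initial-value effect is asymptotically negligible, i.e.\ $\sup_{\bm{\theta}\in\Theta} |\widetilde{L}_n(\bm{\theta}) - L_n(\bm{\theta})| \to 0$ almost surely, where $L_n(\bm{\theta}) = \sum_{t=1}^n \ell_t(\bm{\theta})$ is the stationary log-likelihood built from the infinite past; (ii) a uniform strong law of large numbers, $\sup_{\bm{\theta}\in\Theta}|n^{-1}L_n(\bm{\theta}) - E\ell_t(\bm{\theta})| \to 0$ almost surely, which follows from Assumption~\ref{assum stationarity} (stationarity and ergodicity) together with a uniform integrability/dominance bound $E\sup_{\bm{\theta}\in\Theta}|\ell_t(\bm{\theta})| < \infty$ via the ergodic theorem for separable processes; and (iii) the identification of $\bm{\theta}_0$ as the \emph{unique} minimizer of $E\ell_t(\bm{\theta})$, so that a standard compactness argument (cover $\Theta\setminus V$ for any neighborhood $V$ of $\bm{\theta}_0$ by finitely many small balls, use (i)--(ii) on each) forces $\widehat{\bm{\theta}}_{\text{G}}$ into every neighborhood of $\bm{\theta}_0$ eventually.

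For step~(iii) I would combine Proposition~\ref{propo Identification}, which gives that the map $\bm{\theta}\mapsto (H_t(\bm{\theta}))_{t}$ separates $\bm{\theta}_0$ from any other $\bm{\theta}$, with the classical inequality that for the Gaussian quasi-likelihood contrast one has $E\ell_t(\bm{\theta}) - E\ell_t(\bm{\theta}_0) \ge 0$ with equality iff $H_t(\bm{\theta}) = H_t(\bm{\theta}_0)$ almost surely; this uses that $x \mapsto \tr(H_t(\bm{\theta}_0)H_t(\bm{\theta})^{-1}) + \ln|H_t(\bm{\theta})|$ is minimized pointwise at $H_t(\bm{\theta}) = H_t(\bm{\theta}_0)$, together with $E[\mathbf{y}_t\mathbf{y}_t' \mid \mathcal{F}_{t-1}] = H_t(\bm{\theta}_0)$. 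Assumption~\ref{assum parameters}(iii) ensures $R_t(\bm{\theta})$, hence $H_t(\bm{\theta})$, is positive definite uniformly, so these quantities are well defined; the condition $\bm{\eta}_t \neq \bm{0}$ a.s.\ and the finiteness of $E\|\ln\underline{\mathbf{y}}_t\|_2$ guarantee the relevant expectations exist and that $\ln\underline{\mathbf{y}}_t$ is genuinely integrable so the ARCH$(\infty)$ series $\ln\bm{h}_t(\bm{\delta}) = \underline{\bm{\omega}} + \sum_{i\ge 1}\Phi_i(\bm{\kappa})\ln\underline{\mathbf{y}}_{t-i}$ converges absolutely a.s.\ and in $L^1$ uniformly on $\Theta$ — here Assumption~\ref{assum parameters}(ii) is what makes $\|\Phi_i(\bm{\kappa})\|$ decay geometrically at a rate uniform over $\Theta$, so that the tail of the series, and its effect from truncating the infinite past, are controlled; this is precisely the mechanism behind step~(i) as well, where the difference between $\ln\bm{h}_t$ and $\ln\widetilde{\bm{h}}_t$ is bounded by $\sum_{i\ge t}\|\Phi_i(\bm{\kappa})\|\,\|\ln\underline{\mathbf{y}}_{t-i} - \bm{1}_m\ln 1\|$-type terms decaying at rate $\varrho^t$.

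The main obstacle I anticipate is twofold. First, the recursion for $R_t(\bm{\theta})$ in \eqref{model Rt SGARCH(r,s)} couples the correlation dynamics to $D_t(\bm{\delta})$ through $\bm{\varepsilon}_t(\bm{\delta}) = D_t^{-1}(\bm{\delta})\mathbf{y}_t$ and the sample correlation matrix $\Psi_{t-1}$ of a window of $\Bbbk$ past residuals; one must verify that this feedback is uniformly contractive (the factor $\beta_2 < 1$, uniformly by compactness), that $\widetilde{R}_t(\bm{\theta}) - R_t(\bm{\theta}) \to 0$ a.s.\ uniformly, and that $\inf_{\bm{\theta}\in\Theta}\lambda_{\min}(R_t(\bm{\theta}))$ is bounded away from zero — the last point needing $\Psi_{t-1}$ to be well-conditioned, which is where the design choice $\Bbbk \ge m$ and the convexity of the update (it is a convex combination of $\underline{R}$, $\Psi_{t-1}$, and $R_{t-1}$, all correlation matrices) enter. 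Second, obtaining the dominance bound $E\sup_{\bm{\theta}}|\ell_t(\bm{\theta})| < \infty$ requires control of $E\sup_{\bm{\theta}}\|D_t^{-1}(\bm{\delta})\mathbf{y}_t\|^2 = E\sup_{\bm{\theta}}\|\bm{\varepsilon}_t(\bm{\delta})\|^2$, i.e.\ of ratios $h_{ii,t}(\bm{\delta}_0)/h_{ii,t}(\bm{\delta})$; in the log-volatility parametrization this amounts to bounding $E\exp\{(\ln\bm{h}_t(\bm{\delta}_0) - \ln\bm{h}_t(\bm{\delta}))_i\}$, and one exploits that the difference $\ln\bm{h}_t(\bm{\delta}_0) - \ln\bm{h}_t(\bm{\delta})$ is a \emph{bounded} affine functional of the past $\ln\underline{\mathbf{y}}$ only when the leading coefficients match, so the argument has to be set up carefully — typically by first showing $E\ell_t(\bm{\theta}_0) < \infty$ using only $E\|\ln\underline{\mathbf{y}}_t\|_2 < \infty$ and the moment structure of $\bm{\eta}_t$, then bounding $\ell_t(\bm{\theta}) \ge -C - \tfrac12\ln|\widetilde{H}_t(\bm{\theta})|$ from below uniformly (so that $E\ell_t(\bm{\theta})$ is well-defined in $(-\infty,+\infty]$, which is all that is needed for the contrast-minimization step). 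I would lay these technical lemmas out in the Appendix and cite them here.
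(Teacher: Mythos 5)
Your proposal follows essentially the same route as the paper: negligibility of the initial-value effect, identification plus the Gaussian-contrast inequality $E\ell_{t}(\bm{\theta})>E\ell_{t}(\bm{\theta}_{0})$ for $\bm{\theta}\neq\bm{\theta}_{0}$, and a Wald-type compactness argument that applies the ergodic theorem to infima of $\ell_{t}$ over small balls around each $\bm{\theta}\neq\bm{\theta}_{0}$. Your closing remark is the decisive one — the paper never establishes the dominance bound $E\sup_{\bm{\theta}\in\Theta}|\ell_{t}(\bm{\theta})|<\infty$ you first propose (which is doubtful under only $E\|\ln\underline{\mathbf{y}}_{t}\|_{2}<\infty$, since $\sup_{\bm{\theta}}\mathbf{y}_{t}^{\prime}H_{t}^{-1}(\bm{\theta})\mathbf{y}_{t}$ involves exponentials of $\ln\bm{h}_{t}(\bm{\delta}_{0})-\ln\bm{h}_{t}(\bm{\delta})$), but instead uses exactly what you fall back on: $E|\ell_{t}(\bm{\theta}_{0})|<\infty$, the one-sided bound $E\ell_{t}^{-}(\bm{\theta})<\infty$, the ergodic theorem for means valued in $\mathbb{R}\cup\{\infty\}$, and the Beppo--Levi theorem.
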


\begin{thm} \label{thm Asymptotic normality}
  Suppose that Assumptions \ref{assum stationarity}--\ref{assum parameters} hold. 
  If $E\|\ln\mathbf{y}^{\odot 2}_{t}\|_{2}^{2+\epsilon} < \infty$ for some $\epsilon > 0$ and $E \|\bm{\eta}_{t} \bm{\eta}_{t}^{\prime}\|_{2}^{2} < \infty$, then $\sqrt{n} (\widehat{\bm{\theta}}_{\text{G}} - \bm{\theta}_{0}) \to_{d} N(\bm{0}, \Sigma_{\text{G}})$ as $n \to \infty$, where $\Sigma_{\text{G}}=\Sigma_{*}^{-1} \Sigma \Sigma_{*}^{-1}$. 
\end{thm}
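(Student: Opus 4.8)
The plan is to follow the classical three-step route for asymptotic normality of an M-estimator: (a) a Taylor expansion of the score around $\bm{\theta}_0$, (b) a central limit theorem for the (infeasible) score $n^{-1/2}\sum_{t=1}^n \partial\ell_t(\bm{\theta}_0)/\partial\bm{\theta}$, and (c) a uniform law of large numbers for the Hessian $n^{-1}\sum_{t=1}^n \partial^2\widetilde\ell_t(\bm{\theta})/\partial\bm{\theta}\partial\bm{\theta}'$ in a neighbourhood of $\bm{\theta}_0$, together with the usual asymptotic-negligibility argument for the initial values. Concretely, since $\widehat{\bm{\theta}}_{\text{G}}$ lies in the interior of $\Theta$ eventually (by Assumption \ref{assum parameters}(i) and Theorem \ref{thm Consistency}), the first-order condition $\partial\widetilde L_n(\widehat{\bm{\theta}}_{\text{G}})/\partial\bm{\theta}=\bm{0}$ holds, and a mean-value expansion gives
$$
\sqrt{n}\,(\widehat{\bm{\theta}}_{\text{G}}-\bm{\theta}_0)
= -\left(\frac{1}{n}\sum_{t=1}^n\frac{\partial^2\widetilde\ell_t(\bm{\theta}^{*})}{\partial\bm{\theta}\partial\bm{\theta}'}\right)^{-1}
\frac{1}{\sqrt{n}}\sum_{t=1}^n\frac{\partial\widetilde\ell_t(\bm{\theta}_0)}{\partial\bm{\theta}},
$$
where $\bm{\theta}^{*}$ is between $\widehat{\bm{\theta}}_{\text{G}}$ and $\bm{\theta}_0$ (row-wise). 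It then remains to show (i) the initial-value effect is negligible, i.e. $n^{-1/2}\sum_t[\partial\widetilde\ell_t(\bm{\theta}_0)-\partial\ell_t(\bm{\theta}_0)]/\partial\bm{\theta}=o_p(1)$ and the analogous Hessian difference is $o_p(1)$ uniformly; (ii) $n^{-1/2}\sum_t\partial\ell_t(\bm{\theta}_0)/\partial\bm{\theta}\to_d N(\bm{0},\Sigma)$; (iii) $n^{-1}\sum_t\partial^2\ell_t(\bm{\theta})/\partial\bm{\theta}\partial\bm{\theta}'\to \Sigma_*$ uniformly on a neighbourhood of $\bm{\theta}_0$, and $\Sigma_*$ is nonsingular.

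For step (ii) I would verify that $\{\partial\ell_t(\bm{\theta}_0)/\partial\bm{\theta},\mathcal F_t\}$ is a stationary, ergodic, square-integrable martingale difference sequence: the conditional-mean-zero property follows from the Gaussian-likelihood score identity using $E[\bm\eta_t\mid\mathcal F_{t-1}]=\bm 0$ and $E[\bm\eta_t\bm\eta_t'\mid\mathcal F_{t-1}]=I_m$ together with the fact that $H_t$ and its derivatives are $\mathcal F_{t-1}$-measurable; the finiteness of $\Sigma$ is the moment bound to be checked. The CLT for stationary ergodic MDS (Billingsley) then delivers the limit. The moment conditions $E\|\ln\underline{\mathbf y}_t\|_2^{2+\epsilon}<\infty$ and $E\|\bm\eta_t\bm\eta_t'\|_2^2<\infty$ are exactly what is needed: the derivatives of $\ln\bm h_t$ in $\bm\delta$ are (by \eqref{model Dt SGARCH(r,s)}--\eqref{model Phii in Dt} and the geometric decay from Assumption \ref{assum parameters}(ii)) convergent series in $\ln\underline{\mathbf y}_{t-i}$ with exponentially small weights, hence have finite $(2+\epsilon)$-th moments by a Minkowski/Jensen argument; combining with $E\|\bm\eta_t\bm\eta_t'\|_2^2<\infty$ and Hölder yields $E\|\partial\ell_t(\bm{\theta}_0)/\partial\bm{\theta}\|_2^2<\infty$ (the term $\mathbf y_t'\widetilde H_t^{-1}\mathbf y_t$ contributes $\bm\eta_t\bm\eta_t'$ through $R_t^{-1}$, which is bounded on $\Theta$ by Assumption \ref{assum parameters}(iii)). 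This reuses Lemma \ref{lemma moments of derivatives of lnhtunderline} and the derivative formulas in Section \ref{section Derivatives} of the Appendix.

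For step (iii), a uniform LLN for $\partial^2\ell_t(\bm\theta)/\partial\bm\theta\partial\bm\theta'$ over a compact neighbourhood of $\bm\theta_0$ follows from stationarity, ergodicity, and a dominating integrable envelope, which is obtained from the same geometric-weighting bounds on $\ln\bm h_t$ and its first two derivatives (finite first moments suffice here, available under $E\|\ln\underline{\mathbf y}_t\|_2^{2}<\infty$, which is implied by the hypothesis) together with boundedness of $R_t^{-1}$, $\partial R_t$, $\partial^2 R_t$ on $\Theta$; the correlation-recursion \eqref{model Rt SGARCH(r,s)} has derivatives that are themselves geometrically decaying ARMA-type recursions in $\beta_2$, uniformly bounded since $\sup_\Theta(\beta_1+\beta_2)<1$. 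Continuity of $\bm\theta\mapsto E[\partial^2\ell_t(\bm\theta)/\partial\bm\theta\partial\bm\theta']$ plus consistency of $\bm\theta^{*}$ then gives $n^{-1}\sum_t\partial^2\widetilde\ell_t(\bm\theta^{*})/\partial\bm\theta\partial\bm\theta'\to_p\Sigma_*$. Nonsingularity of $\Sigma_*$ is the one genuinely structural point: I would argue it via the identification result (Proposition \ref{propo Identification}), showing that $\bm c'\partial\ell_t(\bm\theta_0)/\partial\bm\theta=0$ a.s. forces $\bm c=\bm 0$ by exploiting the distinctness of the $\lambda_k$'s and $\gamma_k$'s and the non-degeneracy of $\bm\eta_t$, which pins down the information matrix as positive definite (as in Francq and Zakoïan's treatment of multivariate GARCH QMLE). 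Combining (i)--(iii) in the displayed expansion yields $\sqrt n(\widehat{\bm\theta}_{\text{G}}-\bm\theta_0)\to_d N(\bm 0,\Sigma_*^{-1}\Sigma\Sigma_*^{-1})$.

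The main obstacle is the interplay of the two layers of infinite-memory dynamics — the ARCH($\infty$) form of $\ln\bm h_t$ and the recursive DCC specification of $R_t$ — when bounding the derivatives uniformly over $\Theta$ and when controlling the initial-value discrepancies $\widetilde\ell_t-\ell_t$. The ARCH($\infty$) weights decay geometrically (thanks to Assumption \ref{assum parameters}(ii)), so the log-volatility part and its derivatives are manageable; the subtle part is that the initial values enter $\Psi_{t-1}$ through $\bm\varepsilon_t=D_t^{-1}\mathbf y_t$, so the error propagates through both the $D_t$ recursion and the $R_t$ recursion, and one must show this combined error decays fast enough (geometrically, up to the moment exponents) to be $o_p(1)$ after multiplication by $\sqrt n$. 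This is where the bulk of the careful work lies; the martingale CLT and the uniform LLN are then comparatively routine given the moment bounds.
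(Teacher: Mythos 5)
Your proposal is correct and follows essentially the same route as the paper: a mean-value expansion of the score, Billingsley's martingale CLT for the stationary ergodic martingale-difference score at $\bm{\theta}_{0}$, convergence of the Hessian using geometric decay of the $\Phi_i$ weights and of the initial-value discrepancies, and positive definiteness of $\Sigma_{*}$ via the linear independence of the derivatives supplied by the identification structure. The only cosmetic difference is that the paper controls the Hessian at the intermediate point by a further Taylor expansion with third-derivative moment bounds rather than your uniform LLN plus continuity of $\bm{\theta}\mapsto E[\partial^{2}\ell_{t}(\bm{\theta})/\partial\bm{\theta}\partial\bm{\theta}^{\prime}]$, which is an interchangeable device.
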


Theorems \ref{thm Consistency} and \ref{thm Asymptotic normality} establish the strong consistency and asymptotic normality for the QMLE $\widehat{\bm{\theta}}_{\text{G}}$ in \eqref{est general QMLE}, respectively. 
If $\bm{\eta}_{t}$ is multivariate normal, then the QMLE reduces to the MLE, and the asymptotic normality in Theorem \ref{thm Asymptotic normality} can be simplified to $\sqrt{n} (\widehat{\bm{\theta}}_{\text{G}} - \bm{\theta}_{0}) \to_{d} N(\bm{0}, \Sigma_{*}^{-1})$ as $n \to \infty$. 
Moreover, to calculate the asymptotic variances of  $\widehat{\bm{\theta}}_{\text{G}}$, we can approximate the matrices $\Sigma$ and $\Sigma_{*}$ by using sample averages with $\bm{\theta}_{0}$ replaced by $\widehat{\bm{\theta}}_{\text{G}}$.

\subsection{QMLE with low-rank restrictions} \label{subsection QMLE under low rank}

For moderate dimension $m$, we consider the model in \eqref{model Rt SGARCH(r,s)}--\eqref{model Phii in Dt} with the low-rank constraints in \eqref{eq low rank restrictions}, i.e., $\rank(G_{0,k}) = 1$ and $1 \leq \rank(G_{\ell,k}) \leq 2$ for $\ell = 1,2$.  
To impose these low-rank restrictions, we decompose the $m \times m$ parameter matrices $G_{\ell, k}$'s as follows: 
\begin{equation}\label{G-decomposition}
  G_{0,k} = \bm{g}_{0,k,1} \bm{g}_{0,k,2}^{\prime}, \;
  G_{1,k} = \bm{g}_{1,k,1} \bm{g}_{1,k,2}^{\prime} + \bm{g}_{1,k,3} \bm{g}_{1,k,4}^{\prime} \; \text{and} \; 
  G_{2,k} = \bm{g}_{2,k,1} \bm{g}_{2,k,2}^{\prime} + \bm{g}_{2,k,3} \bm{g}_{2,k,4}^{\prime},
\end{equation}
where $\bm{g}_{0,k,i}$, $\bm{g}_{1,k,j}$ and $\bm{g}_{2,k,j}$ for $i = 1,2$ and $j = 1,2,3,4$ are $m \times 1$ parameter vectors. 
Note that the previous decomposition \eqref{G-decomposition} includes \eqref{eq equivalence condition} in Remark \ref{remark connection with DCC} as a special case. 

Denote 
$\bm{g}_{0,k}^{*} = (\bm{g}_{0,k,1}^{\prime}, \bm{g}_{0,k,2}^{\prime})^{\prime}$ for $1 \leq k \leq r$, and 
$\bm{g}_{\ell,k}^{*} = (\bm{g}_{\ell,k,1}^{\prime}, \bm{g}_{\ell,k,2}^{\prime}, \bm{g}_{\ell,k,3}^{\prime}, \bm{g}_{\ell,k,4}^{\prime})^{\prime}$ for $\ell = 1,2$ and $1 \leq k \leq s$. 
Let $\bm{\vartheta} = (\bm{\delta}^{* \prime}, \bm{\beta}^{\prime})^{\prime}$ be the parameter vector of the model in \eqref{model Rt SGARCH(r,s)}--\eqref{model Phii in Dt} under the low-rank decomposition \eqref{G-decomposition}, where 
$\bm{\delta}^{*} = (\underline{\bm{\omega}}^{\prime}, \bm{\kappa}^{* \prime})^{\prime}$ and 
$\bm{\kappa}^{*} = (\bm{\lambda}^{\prime}, \bm{\gamma}^{\prime}, \bm{\varphi}^{\prime}, \bm{g}_{0}^{* \prime}, \bm{g}_{1}^{* \prime}, \bm{g}_{2}^{* \prime})^{\prime}$ with 
$\bm{g}_{0}^{*} = (\bm{g}_{0,1}^{* \prime}, \ldots, \bm{g}_{0,r}^{* \prime})^{\prime}$ and $\bm{g}_{\ell}^{*} = (\bm{g}_{\ell,1}^{* \prime}, \ldots, \bm{g}_{\ell,s}^{* \prime})^{\prime}$ for $\ell = 1, 2$. 
Denote $\Theta^{*}$ as the parameter space of $\bm{\vartheta}$, which is a compact subset of $\mathbb{R}^{m} \times \{(-1,0) \cup (0,1)\}^{r} \times (0,1)^{s} \times (0,\pi)^{s} \times \mathbb{R}^{(2r+8s)m} \times (0,1)^{2} \times (-1,1)^{m(m-1)/2}$.
Let $\bm{\theta}(\bm{\vartheta})$ be the mapping from $\bm{\vartheta}$ to $\bm{\theta}$. 
Then the QMLE of $\bm{\theta}_{0}$ under the low-rank decomposition \eqref{G-decomposition} can be defined as 
\begin{equation} \label{est lowrank QMLE} 
  \widehat{\bm{\theta}}_{\text{LR}} = \argmin_{\bm{\theta} = \bm{\theta}(\bm{\vartheta}), \bm{\vartheta} \in \Theta^{*}} \widetilde{L}_{n}(\bm{\theta}). 
\end{equation}
Recall that the dimensions of $\bm{\vartheta}$ and $\bm{\theta}$ are $d^{*}=m + r + 2s  + 2m(r + 4s)  + m(m-1)/2 + 2$ and $d = m + (r + 2s) (1 + m^2) + m(m-1)/2 + 2$, respectively. 
Note that $d-d^{*}=(r+2s)m^{2} - 2(r+4s)m$ is non-negative for $m\geq 4$ under the order constraint that $r+2s \leq m$, and is non-negative for $m\geq 2$ if the order $s=0$. 
As a result, for moderate dimensions the model in \eqref{model Rt SGARCH(r,s)}--\eqref{model Phii in Dt} under the low-rank decomposition \eqref{G-decomposition} usually has fewer parameters and thus is more parsimony than the model without any low-rank restrictions. 

Denote $\Delta = {\partial \bm{\theta}(\bm{\vartheta})} / {\partial \bm{\vartheta}^{\prime}}$ as the Jacobian matrix of $\bm{\theta}$ with respect to $\bm{\vartheta}$, which is known under low-rank decomposition \eqref{G-decomposition}. 
Let $P_{\vartheta} = \Delta (\Delta^{\prime} \Sigma_{*} \Delta)^{g} \Delta^{\prime} \Sigma_{*}$ be the projection matrix. 

\begin{thm} \label{thm Asymptotic normality under low rank}
  Suppose that the conditions in Theorem \ref{thm Asymptotic normality} hold, then $\sqrt{n} (\widehat{\bm{\theta}}_{\text{LR}} - \bm{\theta}_{0}) \to_{d} N(\bm{0}, \Sigma_{\text{LR}})$ as $n \to \infty$, 
  where $\Sigma_{\text{LR}} = P_{\vartheta} \Sigma_{\text{G}} P_{\vartheta}^{\prime}$. 
\end{thm}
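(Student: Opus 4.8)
The plan is to realise $\widehat{\bm{\theta}}_{\text{LR}}$ as the image under the smooth map $\bm{\theta}(\cdot)$ of a minimizer $\widehat{\bm{\vartheta}}$ of the reparametrized criterion $\bm{\vartheta}\mapsto\widetilde{L}_{n}(\bm{\theta}(\bm{\vartheta}))$ over $\Theta^{*}$, and then to tie its asymptotic expansion to that of the unconstrained QMLE $\widehat{\bm{\theta}}_{\text{G}}$ obtained in the proof of Theorem \ref{thm Asymptotic normality}. First I would note that, since $\bm{\theta}_{0}$ satisfies the low-rank restrictions \eqref{eq low rank restrictions}, there exists $\bm{\vartheta}_{0}\in\Theta^{*}$ with $\bm{\theta}(\bm{\vartheta}_{0})=\bm{\theta}_{0}$, and that the constrained set $\{\bm{\theta}(\bm{\vartheta}):\bm{\vartheta}\in\Theta^{*}\}$ is compact and contains $\bm{\theta}_{0}$; hence the consistency argument of Theorem \ref{thm Consistency}, run verbatim over this smaller set, gives $\widehat{\bm{\theta}}_{\text{LR}}\to\bm{\theta}_{0}$ a.s., and after fixing a normalization of the factor vectors in \eqref{G-decomposition} (so that $\bm{\vartheta}_{0}$ becomes locally identified) a $\sqrt{n}$-consistent sequence of minimizers $\widehat{\bm{\vartheta}}$.

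The core step is a Taylor expansion of the first-order condition $\Delta(\widehat{\bm{\vartheta}})'\,\partial\widetilde{L}_{n}(\bm{\theta}(\widehat{\bm{\vartheta}}))/\partial\bm{\theta}=\bm{0}$. Because $\bm{\theta}(\bm{\vartheta})$ is polynomial (bilinear in the factor vectors of \eqref{G-decomposition}), it is $C^{\infty}$ with derivatives bounded on the compact $\Theta^{*}$, so every moment and uniform-convergence ingredient behind Theorem \ref{thm Asymptotic normality} carries over to the composition $\widetilde{L}_{n}\circ\bm{\theta}$: the initial-value effect remains asymptotically negligible, $n^{-1}\,\partial^{2}(\widetilde{L}_{n}\circ\bm{\theta})(\bm{\vartheta}_{0})/\partial\bm{\vartheta}\,\partial\bm{\vartheta}'\to_{p}\Delta_{0}'\Sigma_{*}\Delta_{0}$ with $\Delta_{0}=\Delta(\bm{\vartheta}_{0})$, and $n^{-1/2}\,\partial(\widetilde{L}_{n}\circ\bm{\theta})(\bm{\vartheta}_{0})/\partial\bm{\vartheta}=\Delta_{0}'\,n^{-1/2}\,\partial\widetilde{L}_{n}(\bm{\theta}_{0})/\partial\bm{\theta}+o_{p}(1)$ by the chain rule. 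Inserting $\bm{\theta}(\widehat{\bm{\vartheta}})-\bm{\theta}_{0}=\Delta_{0}(\widehat{\bm{\vartheta}}-\bm{\vartheta}_{0})+o_{p}(\|\widehat{\bm{\vartheta}}-\bm{\vartheta}_{0}\|)$ and using the $\sqrt{n}$-consistency of $\widehat{\bm{\vartheta}}$ yields
\begin{equation*}
	\Delta_{0}'\Sigma_{*}\Delta_{0}\,\sqrt{n}(\widehat{\bm{\vartheta}}-\bm{\vartheta}_{0})=-\Delta_{0}'\,n^{-1/2}\,\partial\widetilde{L}_{n}(\bm{\theta}_{0})/\partial\bm{\theta}+o_{p}(1).
\end{equation*}

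The delicate point is that $\Delta_{0}'\Sigma_{*}\Delta_{0}$ is generically singular, because the decomposition \eqref{G-decomposition} is itself not identified (the factor pairs can be rescaled), which is precisely why the generalized inverse enters $P_{\vartheta}$. Since $\Sigma_{*}$ is positive definite (as used in the proof of Theorem \ref{thm Asymptotic normality}), any $\bm{v}$ with $\Delta_{0}'\Sigma_{*}\Delta_{0}\bm{v}=\bm{0}$ satisfies $\Delta_{0}\bm{v}=\bm{0}$, so the object of interest $\sqrt{n}(\widehat{\bm{\theta}}_{\text{LR}}-\bm{\theta}_{0})=\Delta_{0}\sqrt{n}(\widehat{\bm{\vartheta}}-\bm{\vartheta}_{0})+o_{p}(1)$ is insensitive to the null-space ambiguity and, multiplying the displayed identity on the left by $\Delta_{0}(\Delta_{0}'\Sigma_{*}\Delta_{0})^{g}$, equals $-\Delta_{0}(\Delta_{0}'\Sigma_{*}\Delta_{0})^{g}\Delta_{0}'\,n^{-1/2}\,\partial\widetilde{L}_{n}(\bm{\theta}_{0})/\partial\bm{\theta}+o_{p}(1)$. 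Now $\Delta_{0}(\Delta_{0}'\Sigma_{*}\Delta_{0})^{g}\Delta_{0}'=P_{\vartheta}\Sigma_{*}^{-1}$, and the proof of Theorem \ref{thm Asymptotic normality} gives $\sqrt{n}(\widehat{\bm{\theta}}_{\text{G}}-\bm{\theta}_{0})=-\Sigma_{*}^{-1}\,n^{-1/2}\,\partial\widetilde{L}_{n}(\bm{\theta}_{0})/\partial\bm{\theta}+o_{p}(1)$, so $\sqrt{n}(\widehat{\bm{\theta}}_{\text{LR}}-\bm{\theta}_{0})=P_{\vartheta}\sqrt{n}(\widehat{\bm{\theta}}_{\text{G}}-\bm{\theta}_{0})+o_{p}(1)$. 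The conclusion $\sqrt{n}(\widehat{\bm{\theta}}_{\text{LR}}-\bm{\theta}_{0})\to_{d}N(\bm{0},P_{\vartheta}\Sigma_{\text{G}}P_{\vartheta}')=N(\bm{0},\Sigma_{\text{LR}})$ then follows from Theorem \ref{thm Asymptotic normality} together with Slutsky's theorem and the continuous mapping theorem.

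I expect the main obstacle to lie in the non-identifiability of the low-rank decomposition \eqref{G-decomposition}: one must either choose a normalization that makes $\bm{\vartheta}_{0}$ locally identified, so that a $\sqrt{n}$-consistent $\widehat{\bm{\vartheta}}$ exists and the quadratic remainder is genuinely $o_{p}(\|\widehat{\bm{\vartheta}}-\bm{\vartheta}_{0}\|)$, or else argue throughout in terms of $\Delta_{0}(\widehat{\bm{\vartheta}}-\bm{\vartheta}_{0})$ and establish directly that this object is $O_{p}(n^{-1/2})$; in either route one still has to verify the uniform convergence of the Hessian of $\widetilde{L}_{n}\circ\bm{\theta}$ and the asymptotic negligibility of the initial values for the composed criterion, which are routine consequences of the corresponding statements for $\widetilde{L}_{n}$ but need to be recorded. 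The remaining pieces — the chain-rule bookkeeping, the identity $\Delta(\Delta'\Sigma_{*}\Delta)^{g}\Delta'=P_{\vartheta}\Sigma_{*}^{-1}$, and the final Slutsky step — are mechanical.
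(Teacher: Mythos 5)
Your proposal is mathematically sound, but it takes a much more explicit route than the paper: the paper's entire proof is a one-line appeal to Proposition 4.1 of \cite{shapiro1986asymptotic} for overparameterized models, combined with Theorem \ref{thm Asymptotic normality}, whereas you essentially re-derive that proposition in this setting. Your projection algebra is correct -- in particular the identity $\Delta(\Delta'\Sigma_{*}\Delta)^{g}\Delta'=P_{\vartheta}\Sigma_{*}^{-1}$, the observation that positive definiteness of $\Sigma_{*}$ forces $\mathrm{null}(\Delta_{0}'\Sigma_{*}\Delta_{0})=\mathrm{null}(\Delta_{0})$ so that $\Delta_{0}(\Delta_{0}'\Sigma_{*}\Delta_{0})^{g}(\Delta_{0}'\Sigma_{*}\Delta_{0})v=\Delta_{0}v$ for all $v$, and the resulting representation $\sqrt{n}(\widehat{\bm{\theta}}_{\text{LR}}-\bm{\theta}_{0})=P_{\vartheta}\sqrt{n}(\widehat{\bm{\theta}}_{\text{G}}-\bm{\theta}_{0})+o_{p}(1)$. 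The one loose end you flag yourself -- that $\bm{\vartheta}$ is not identified, so a Taylor expansion around a fixed $\bm{\vartheta}_{0}$ and the replacement $\Delta(\widehat{\bm{\vartheta}})\approx\Delta_{0}$ need justification -- is precisely the content of Shapiro's proposition, which the paper explicitly invokes to avoid "requir[ing] the identification of $\bm{\vartheta}$." Your second suggested fix is the right one and is cheap here: by compactness of $\Theta^{*}$ one can pass to a subsequence along which $\widehat{\bm{\vartheta}}$ converges to some point of the fiber $\{\bm{\vartheta}:\bm{\theta}(\bm{\vartheta})=\bm{\theta}_{0}\}$, and since for the bilinear parametrization \eqref{G-decomposition} the range of $\Delta(\bm{\vartheta})$ (hence $P_{\vartheta}$) is constant along that fiber, the limit law does not depend on which representative is chosen. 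With that remark recorded, your argument is a complete, self-contained substitute for the citation; what the paper's approach buys is brevity, and what yours buys is transparency about where the generalized inverse and the projection actually come from.
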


Theorem \ref{thm Asymptotic normality under low rank} establishes the asymptotic normality for the QMLE $\widehat{\bm{\theta}}_{\text{LR}}$ in \eqref{est lowrank QMLE}, which is derived using the technique for overparameterized models in \cite{shapiro1986asymptotic} and does not require the identification of $\bm{\vartheta}$. 
If $\{\bm{\eta}_{t}\}$ are multivariate normal, then the asymptotic normality in Theorem \ref{thm Asymptotic normality under low rank} can be simplified to $\sqrt{n} (\widehat{\bm{\theta}}_{\text{LR}} - \bm{\theta}_{0}) \to_{d} N(\bm{0}, \Delta (\Delta^{\prime} \Sigma_{*} \Delta)^{g} \Delta^{\prime})$ as $n \to \infty$. 
In addition, the asymptotic variances of $\widehat{\bm{\theta}}_{\text{LR}}$ can be calculated based on the estimates for matrices $\Sigma_{\text{G}}$ and $\Sigma_{*}$, together with the known Jacobian matrix $\Delta$ under the low-rank decomposition \eqref{G-decomposition}.  

\subsection{Model selection} \label{subsection model selection}

To select the order $(r,s)$ for the proposed model in \eqref{model Rt SGARCH(r,s)}--\eqref{model Phii in Dt}, we introduce the following Bayesian information criterion (BIC): 
\begin{equation} \label{BIC}
  \text{BIC}(r,s) =  2\widetilde{L}_{n}\left(\widehat{\bm{\theta}}^{(r,s)}\right) + \bar{d} \ln(n), 
\end{equation}
where the QMLE $\widehat{\bm{\theta}}^{(r,s)}$ is  $\widehat{\bm{\theta}}_{\text{G}}$ defined in \eqref{est general QMLE} without the low-rank restrictions or $\widehat{\bm{\theta}}_{\text{LR}}$ in \eqref{est lowrank QMLE} with the low-rank restrictions when the order is set to $(r,s)$, $\bar{d}$ is the corresponding dimension of $\widehat{\bm{\theta}}^{(r,s)}$ (i.e. $d$ for $\widehat{\bm{\theta}}_{\text{G}}$ or $d^{*}$ for $\widehat{\bm{\theta}}_{\text{LR}}$), and $\widetilde{L}_{n}(\widehat{\bm{\theta}}^{(r,s)})$ is the negative Gaussian log-likelihood  evaluated at $\widehat{\bm{\theta}}^{(r,s)}$. 
Denote $\Pi = \{(r,s): 1 \leq r+2s \leq o_{\text{max}}\}$, where $o_{\text{max}}$ is a predetermined positive integer satisfying $o_{\text{max}} \leq m$ to guarantee the order constraint that $r+2s \leq m$. 
Let $(\widehat{r}, \widehat{s}) = \argmin_{(r,s) \in \Pi} \text{BIC}(r,s)$. 

\begin{thm} \label{thm BIC}
  Suppose that the conditions in Theorem \ref{thm Asymptotic normality} hold. 
  If the true order $(r_{0}, s_{0}) \in \Pi$, then $P((\widehat{r}, \widehat{s}) = (r_{0}, s_{0})) \to 1$ as $n \to \infty$. 
\end{thm}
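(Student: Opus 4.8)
The plan is to prove, for every fixed $(r,s)\in\Pi\setminus\{(r_{0},s_{0})\}$, that $\text{BIC}(r,s)-\text{BIC}(r_{0},s_{0})\to+\infty$ in probability; since $\Pi$ is finite a union bound then gives $P((\widehat{r},\widehat{s})=(r_{0},s_{0}))\to1$. Writing $\text{BIC}(r,s)-\text{BIC}(r_{0},s_{0})=2[\widetilde{L}_{n}(\widehat{\bm{\theta}}^{(r,s)})-\widetilde{L}_{n}(\widehat{\bm{\theta}}^{(r_{0},s_{0})})]+(\bar{d}_{(r,s)}-\bar{d}_{(r_{0},s_{0})})\ln n$, I would split the competitors into the \emph{underfitting set} $\Pi_{-}=\{r<r_{0}\ \text{or}\ s<s_{0}\}$ and the \emph{overfitting set} $\Pi_{+}=\{r\ge r_{0},\,s\ge s_{0}\}\setminus\{(r_{0},s_{0})\}$. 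The structural fact underlying the split is that the parameter space $\Theta_{(r,s)}$ at order $(r,s)$ contains a point equivalent to $\bm{\theta}_{0}$ precisely when $(r,s)\in\Pi_{+}\cup\{(r_{0},s_{0})\}$: this follows from Proposition \ref{propo Identification}, or directly from noting that the minimal linear recursion satisfied by $\{\Phi_{i}\}$ under $\bm{\theta}_{0}$ has characteristic roots equal to the $r_{0}$ distinct reals $\lambda_{k0}$ and the $s_{0}$ distinct non-real conjugate pairs of modulus $\gamma_{k0}$ and argument $\pm\varphi_{k0}$ (all genuinely present because $G_{\ell,k0}\neq0_{m}$), which an order-$(r,s)$ parameter can reproduce only if $r\ge r_{0}$ and $s\ge s_{0}$.

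For $(r,s)\in\Pi_{-}$ I would invoke the uniform law of large numbers from the proof of Theorem \ref{thm Consistency} (which already absorbs the initial-value effect): $n^{-1}\widetilde{L}_{n}(\widehat{\bm{\theta}}^{(r,s)})\to\inf_{\bm{\theta}\in\Theta_{(r,s)}}E[\ell_{t}(\bm{\theta})]$ and $n^{-1}\widetilde{L}_{n}(\widehat{\bm{\theta}}^{(r_{0},s_{0})})\to E[\ell_{t}(\bm{\theta}_{0})]$ a.s. Because $E[\ell_{t}(\bm{\theta})]-E[\ell_{t}(\bm{\theta}_{0})]=\tfrac12E\{\ln|H_{t}(\bm{\theta})H_{t}^{-1}(\bm{\theta}_{0})|+\tr(H_{t}^{-1}(\bm{\theta})H_{t}(\bm{\theta}_{0}))-m\}\ge0$ with equality iff $H_{t}(\bm{\theta})=H_{t}(\bm{\theta}_{0})$ a.s., and because $\Theta_{(r,s)}$ contains no such $\bm{\theta}$, compactness of $\Theta_{(r,s)}$ and continuity of $\bm{\theta}\mapsto E[\ell_{t}(\bm{\theta})]$ force $\inf_{\Theta_{(r,s)}}E[\ell_{t}]-E[\ell_{t}(\bm{\theta}_{0})]=:c_{(r,s)}>0$. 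Hence $2[\widetilde{L}_{n}(\widehat{\bm{\theta}}^{(r,s)})-\widetilde{L}_{n}(\widehat{\bm{\theta}}^{(r_{0},s_{0})})]\ge c_{(r,s)}\,n$ eventually, which dominates the $O(\ln n)$ penalty term, and the claim follows.

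For $(r,s)\in\Pi_{+}$ the order-$(r_{0},s_{0})$ model is nested in the order-$(r,s)$ model (embed $\bm{\theta}_{0}$ by zeroing the surplus $G_{\ell,k}$'s and choosing admissible distinct surplus eigenvalue parameters), so the log-likelihood bracket is nonpositive, while the penalty difference $\bar{d}_{(r,s)}-\bar{d}_{(r_{0},s_{0})}$ is a strictly positive constant (extra modes strictly increase both $d$ and $d^{*}$); it therefore suffices to show $\Lambda_{n}:=2[\widetilde{L}_{n}(\widehat{\bm{\theta}}^{(r_{0},s_{0})})-\widetilde{L}_{n}(\widehat{\bm{\theta}}^{(r,s)})]=o_{p}(\ln n)$. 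Bounding $\Lambda_{n}\le 2[\widetilde{L}_{n}(\bm{\theta}_{0})-\min_{\bm{\theta}\in\Theta_{(r,s)}}\widetilde{L}_{n}(\bm{\theta})]$, I would localize: on the complement of a small neighborhood $U$ of the equivalence class of $\bm{\theta}_{0}$ the population contrast is bounded away from its minimum, so w.p.$\to1$ the minimizer lies in $U$; inside $U$ reparametrize $\bm{\theta}=(\bm{\psi},\bm{\mu})$ with $\bm{\mu}$ the surplus eigenvalue parameters (confined to a compact set on which the submodel with $\bm{\mu}$ fixed is identified) and $\bm{\psi}$ the remaining parameters, whose representing value $\bm{\psi}_{0}$ is independent of $\bm{\mu}$ because zeroing the surplus $G$'s makes $\widetilde{L}_{n}$ independent of $\bm{\mu}$; hence $\widetilde{L}_{n}(\bm{\psi}_{0},\bm{\mu})\equiv\widetilde{L}_{n}(\bm{\theta}_{0})$ and the bound equals $\sup_{\bm{\mu}}2[\widetilde{L}_{n}(\bm{\psi}_{0},\bm{\mu})-\min_{\bm{\psi}}\widetilde{L}_{n}(\bm{\psi},\bm{\mu})]$. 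For each fixed $\bm{\mu}$ this is an ordinary quasi-likelihood ratio for a regular identified submodel, expandable as $\bm{S}_{n}(\bm{\mu})'\mathcal{I}(\bm{\mu})^{-1}\bm{S}_{n}(\bm{\mu})+o_{p}(1)$ with $\bm{S}_{n}(\bm{\mu})=n^{-1/2}\partial\widetilde{L}_{n}(\bm{\psi}_{0},\bm{\mu})/\partial\bm{\psi}$ and $\mathcal{I}(\bm{\mu})$ the submodel information matrix; a functional central limit theorem for the $\bm{\mu}$-indexed score process (using Assumption \ref{assum stationarity}, the $2+\epsilon$ moment conditions of Theorem \ref{thm Asymptotic normality}, and smoothness of $\bm{\mu}\mapsto\bm{S}_{n}(\bm{\mu})$) together with uniform-in-$\bm{\mu}$ convergence of the Hessian yields $\sup_{\bm{\mu}}=O_{p}(1)$, so $\Lambda_{n}=O_{p}(1)$ and $\text{BIC}(r,s)-\text{BIC}(r_{0},s_{0})=-O_{p}(1)+(\text{positive})\ln n\to+\infty$.

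The low-rank version of the criterion is handled identically, with $\bar{d}=d^{*}$ and under the additional premise that the data-generating process satisfies \eqref{eq low rank restrictions}: the whole argument is run in the $\bm{\theta}$-coordinates (in which $\sqrt{n}(\widehat{\bm{\theta}}_{\text{LR}}-\bm{\theta}_{0})=O_{p}(1)$ by Theorem \ref{thm Asymptotic normality under low rank}), so the extra over-parameterization of \eqref{G-decomposition} plays no role. The main obstacle is the overfitting step, specifically the uniform-in-$\bm{\mu}$ control of the profiled quasi-likelihood ratio over the non-identified surplus eigenvalue parameters: a pointwise-in-$\bm{\mu}$ argument is not enough and one needs stochastic equicontinuity (a functional CLT) for the score process indexed by $\bm{\mu}$. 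I note that even a crude $O_{p}(\ln\ln n)$ bound on that supremum would already be $o_{p}(\ln n)$ and hence suffice, which provides some slack should the full functional CLT be awkward to establish.
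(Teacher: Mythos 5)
Your skeleton is the same as the paper's: fix each competing order, split into underfitting and overfitting, show the likelihood term grows linearly in $n$ in the first case and is dominated by the $\ln n$ penalty in the second, and union-bound over the finite set $\Pi$. The underfitting case is essentially identical to the paper's Case II (its steps (c)--(e)); the only cosmetic difference is that you extract the positive population gap from compactness and continuity of $\bm{\theta}\mapsto E[\ell_t(\bm{\theta})]$, whereas the paper evaluates the gap at the pseudo-true minimizer $\mathring{\bm{\theta}}$ after embedding both orders into the enlarged order $(\max\{r,r_0\},\max\{s,s_0\})$. Where you genuinely diverge is the overfitting case. The paper reparametrizes the overfitted model in terms of the \emph{identifiable} quantities $\underline{\bm{\theta}}=(\underline{\bm{\omega}}^{\prime},\ovec(\Phi_1)^{\prime},\ovec(\Phi_2)^{\prime},\ldots,\bm{\beta}^{\prime})^{\prime}$, argues that the corresponding functionals of the overfitted QMLE remain $\sqrt{n}$-consistent even though the surplus $\lambda_k,\gamma_k,\varphi_k$ are not identified (since the surplus $G$'s converge to zero at rate $\sqrt{n}$ and the $\Phi_i$ are identifiable), and then gets the $O_p(1)$ bound on the likelihood-ratio term from a two-term Taylor expansion in $\underline{\bm{\theta}}$. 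You instead profile out the non-identified surplus eigenvalue parameters $\bm{\mu}$ and bound $\sup_{\bm{\mu}}$ of a quadratic form in the $\bm{\mu}$-indexed score process via a functional CLT --- the Davies/Andrews-style treatment of loss of identifiability. Both routes are sound and land on the same $O_p(1)$ (you only need $o_p(\ln n)$); yours confronts the non-identifiability head-on at the price of a stochastic-equicontinuity argument you flag but do not carry out, while the paper's stays within the ordinary Taylor-expansion toolkit at the price of asserting, rather than deriving from an empirical-process bound, the uniform good behavior of the overfitted estimator in the identifiable coordinates. Since that equicontinuity step is standard under the paper's $2+\epsilon$ moment conditions and you correctly identify it as the one thing left to prove, I would not call this a gap, just a longer road to the same place.
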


Theorem \ref{thm BIC} verifies the selection consistency of the BIC in \eqref{BIC} for the proposed model without and with the low-rank restrictions. 
Simulation results in Section \ref{section simulation} indicate that the BIC performs well in finite samples for both situations.

\subsection{Testing for volatility spillover effects}\label{subsection test}

In this subsection, we construct significance tests to detect volatility spillover effects based on the proposed model in \eqref{model Rt SGARCH(r,s)}--\eqref{model Phii in Dt}. From model \eqref{model Dt SGARCH(r,s)} we can see that the volatility spillover effects exist whenever the off-diagonal elements of the coefficient matrices $\Phi_{i}$ are nonzero. 
This motivates us to detect volatility spillovers by testing the significance of $\Phi_{i}$'s off-diagonal elements. 
Note that the elements of $\Phi_{i}$ in \eqref{model Phii in Dt} decay exponentially as the lag $i$ increases, and thus $\Phi_{1}$ is the leading term to investigate the volatility spillover effects.  
As a result, it is natural to focus on the significance of $\Phi_{1}$'s off-diagonal elements $\Phi_{1,ij}$ for $i\neq j$. 

Consider the hypothesis testing problem $H_0: \Phi_{10,ij}=0$ versus $H_1: \Phi_{10,ij}\neq 0$, where $\Phi_{10,ij}$ is the true value of $\Phi_{1,ij}$. 
Note that the coefficient matrix $\Phi_{1} = \sum_{k=1}^{r} G_{0,k} + \sum_{k=1}^{s} G_{1,k}$ is a linear combination of matrices $G_{0,k}$ and $G_{1,k}$ by \eqref{model Phii in Dt}. It follows that $\Phi_{1,ij}=\bm{c}_{ij}^{\prime} \bm{\theta}$, where $\bm{c}_{ij}$ is a constant vector with ones on the $[m+r+2s + (k_{1}-1)m^{2} + (j-1)m+i]$-th and $[m+r+2s+ rm^2 + (k_{2}-1)m^{2} + (j-1)m+i]$-th locations for $1 \leq k_{1} \leq r$ and $1 \leq k_{2} \leq s$, and zeroes elsewhere. 
Then based on Theorems \ref{thm Asymptotic normality}--\ref{thm Asymptotic normality under low rank}, we can obtain that $\sqrt{n} (\widehat{\Phi}_{1,ij} - \Phi_{10,ij}) \to_{d} N(0, \bm{c}_{ij}^{\prime}\Sigma_{\theta}\bm{c}_{ij})$ as $n \to \infty$, where $\widehat{\Phi}_{1,ij}=\bm{c}_{ij}^{\prime}\widehat{\bm{\theta}}$ with $\widehat{\bm{\theta}}$ being the QMLE $\widehat{\bm{\theta}}_{\text{G}}$ or $\widehat{\bm{\theta}}_{\text{LR}}$, and $\Sigma_{\theta}$ is the asymptotic covariance matrix of $\widehat{\bm{\theta}}$. 
To test the null hypothesis $H_0$, we construct the test statistic as $z_{ij}=\widehat{\Phi}_{1,ij}/\text{s.e.}(\widehat{\Phi}_{1,ij})$, where $\text{s.e.}(\widehat{\Phi}_{1,ij})=(\bm{c}_{ij}^{\prime} \widehat{\Sigma}_{\theta} \bm{c}_{ij}/n)^{1/2}$ is the standard error of $\widehat{\Phi}_{1,ij}$ with $\widehat{\Sigma}_{\theta}$ being the estimate of $\Sigma_{\theta}$.  

Under the null hypothesis $H_0$, we have $z_{ij}\to_d N(0,1)$ as $n \to \infty$. Thus we can reject $H_0$ at the significance level $\alpha$ if $|z_{ij}|>z_{1-\alpha/2}$, where $z_{\alpha}$ is the $\alpha$-th quantile of the standard normal distribution. 
If $H_0$ is rejected, then we can conclude that there is significant volatility transmission from the $j$-th asset to the $i$-th asset.

\section{Simulation studies} \label{section simulation}

\subsection{Data generating processes} \label{subsection DGPs}

This section conducts two simulation experiments to illustrate the finite sample performance of the proposed QMLEs without and with low-rank restrictions on the coefficient matrices, as well as the order selection method using BIC. 
We consider five data generating processes (DGPs) from the proposed model in \eqref{model Rt SGARCH(r,s)}--\eqref{model Phii in Dt} with the following settings: 
\begin{align*}
  \text{DGP}1: \; & m=2,(r_0,s_0) = (1,0), \underline{\bm{\omega}} = \underline{\bm{\omega}}_{m}, \underline{R} = \underline{R}_{m}, \beta_1=0.1, \beta_2=0.8, \\
  & \Phi_{i} = 0.8^{i-1} G_{0,1} \;\;\text{with}\;\; 
  G_{0,1} = (1, 1)^{\prime} (0.045, 0.045); \\
  \text{DGP}2: \; & m=2,(r_0,s_0) = (2,0), \underline{\bm{\omega}} = \underline{\bm{\omega}}_{m}, \underline{R} = \underline{R}_{m}, \beta_1=0.1, \beta_2=0.8, \\
  & \Phi_{i} = 0.8^{i-1} G_{0,1} + (-0.8)^{i-1} G_{0,2} \;\;\text{with}\;\; \\
  & G_{0,1} = (1, 1)^{\prime} (0.045, 0.045), G_{0,2} = (1, -1)^{\prime} (0.045, -0.045); \\
  \text{DGP}3: \; & m=2,(r_0,s_0) = (0,1), \underline{\bm{\omega}} = \underline{\bm{\omega}}_{m}, \underline{R} = \underline{R}_{m}, \beta_1=0.1, \beta_2=0.8, \\
  & \Phi_{i} = 0.8^{i-1} \left[\cos(0.7(i-1)) G_{1,1} + \sin(0.7(i-1)) G_{2,1}\right] \;\;\text{with}\;\; \\
  & G_{1,1} = (0.8, 0.6)^{\prime} (0.064, 0.062) + (-0.6, 0.8)^{\prime} (0.002, 0.016), \\
  & G_{2,1} = (0.8, 0.6)^{\prime} (0.002, 0.016) + (0.6, -0.8)^{\prime} (0.064, 0.062); \\
  \text{DGP4}: \; & m=5, (r_0,s_0) = (1,0), \underline{\bm{\omega}} = \underline{\bm{\omega}}_{m}, \underline{R} = \underline{R}_{m}, \beta_1=0.1, \beta_2=0.8, \\
  & \Phi_{i} = 0.8^{i-1} G_{0,1} \;\;\text{with}\;\; G_{0,1} = \bm{g}_{0,1,1} \bm{g}_{0,1,2}^{\prime}, \\
  & \bm{g}_{0,1,1}=(1.00, 0.96, 0.92, 0.88, 0.86)^{\prime}, \bm{g}_{0,1,2}= (0.025, 0.0255, 0.0265, 0.028, 0.03)^{\prime}; \\
  \text{DGP5}: \; & m=20, (r_0,s_0) = (1,0), \underline{\bm{\omega}} = \underline{\bm{\omega}}_{m}, \underline{R} = \underline{R}_{m}, \beta_1=0.1, \beta_2=0.8, \\
  & \Phi_{i} = 0.5^{i-1} G_{0,1} \;\;\text{with}\;\; G_{0,1} = \bm{g}_{0,1,1} \bm{g}_{0,1,2}^{\prime}, \\
  & \bm{g}_{0,1,1} \;\text{and}\; \bm{g}_{0,1,2} \;\text{are randomly generated from}\; U(0.5, 0.6) \;\text{and}\; U(0.03, 0.05), 
\end{align*}
where $\underline{\bm{\omega}}_{m}$ is an $m$-dimensional vector with all elements being 1.45 for $m=2$ and $5$, and 1.3 for $m=20$, 
$\underline{R}_{m}$ is an $m \times m$ matrix with ones on the main diagonal and $0.5$ elsewhere, 
and $U(a,b)$ is the uniform distribution on $(a,b)$. 
Moreover, the innovations $\{\bm{\eta}_{t}\}$ are $i.i.d.$ random vectors following the multivariate normal or Student's $t_5$ distribution with zero mean and identity covariance matrix. 
It can be verified that DGP1--DGP5 satisfy the strict stationarity condition in Theorem \ref{thm Stationarity}. 
Here DGP1--DGP4 with $m=2, 5$, and DGP5 with $m=20$ represent the data processes of small and moderate dimensions, respectively.
Two sample sizes $n = 1000$ and 2000 are considered for DGP1--DGP4 with small dimensions, and three sample sizes $n = 1000, 2000$ and 3000 are considered for DGP5 with a moderate dimension. For each sample size, we generate 1000 replications.

Note that DGP1--DGP3 include all possible cases of the order $(r,s)$ for $m=2$ under the order constraint that $r+2s \leq m$. Moreover, DGP4--DGP5 correspond to the models with low-rank coefficient matrices. 
In addition, we use the R function ``constrOptim'' with quasi-Newton method to calculate the QMLEs in \eqref{est general QMLE} and \eqref{est lowrank QMLE}.  
Particularly, an exact algorithm based on the fast Fourier transform is employed to efficiently handle the infinite sums involved in the computation of the QMLEs; see \cite{Nielsen_Noel2021} for details.

\subsection{Model estimation} \label{subsection model estimation}

The first experiment aims to examine the finite-sample performance of the proposed QMLEs $\widehat{\bm{\theta}}_{\text{G}}$ in \eqref{est general QMLE} and $\widehat{\bm{\theta}}_{\text{LR}}$ in \eqref{est lowrank QMLE}.  
The data are generated from DGP1--DGP5. 
Particularly, $\widehat{\bm{\theta}}_{\text{G}}$ is used for DGP1--DGP4 with small dimensions $m=2$ and 5, and $\widehat{\bm{\theta}}_{\text{LR}}$ is used for DGP4--DGP5 with low-rank coefficient matrices. 

Tables \ref{table general QMLE m2r1s0Bkm}--\ref{table general QMLE m5r1s0Bkm} report the biases, empirical standard deviations (ESDs), and asymptotic standard deviations (ASDs) of $\widehat{\bm{\theta}}_{\text{G}}$ under DGP1--DGP4, and Table \ref{table lowrank QMLE m5r1s0Bkm} reports these results of $\widehat{\bm{\theta}}_{\text{LR}}$ under DGP4. 
Here the bias is the average of absolute differences between the true value and estimates in 1000 replications, and the ASD is calculated using Theorem \ref{thm Asymptotic normality}--\ref{thm Asymptotic normality under low rank}. 
Moreover, Figure \ref{figure simulation_m20} shows the box plots of  $\widehat{\bm{\theta}}_{\text{LR}}-\bm{\theta}_{0}$ in 1000 replications under DGP5. 
Note that $G_{0,1}$ and $\underline{R}$ are $5 \times 5$ (or $20 \times 20$) matrices in DGP4 (or DGP5). 
To save space, Tables \ref{table general QMLE m5r1s0Bkm}--\ref{table lowrank QMLE m5r1s0Bkm} 
only report the results of the elements in the first row of $G_{0,1}$ and the off-diagonal elements in the first column of $\underline{R}$, and Figure \ref{figure simulation_m20} only presents the first six elements in the first row of $G_{0,1}$ and the first six off-diagonal elements in the first column of $\underline{R}$.

We have the following findings from Table \ref{table general QMLE m2r1s0Bkm} and Tables \ref{table general QMLE m2r2s0Bkm}--\ref{table lowrank QMLE m5r1s0Bkm}. 
First, as the sample size increases, most of the biases, ESDs and ASDs become smaller, and the ESDs approach their corresponding ASDs. 
Second, when the distribution of $\{\bm{\eta}_{t}\}$ gets more heavy-tailed, the ESDs and ASDs increase. This is as expected because the heavier tail of innovations will make the Gaussian QMLE less efficient.
Finally, comparing the results for DGP4 fitted by the QMLE $\widehat{\bm{\theta}}_{\text{G}}$ in Table \ref{table general QMLE m5r1s0Bkm} and those by the QMLE $\widehat{\bm{\theta}}_{\text{LR}}$ in Table \ref{table lowrank QMLE m5r1s0Bkm}, $\widehat{\bm{\theta}}_{\text{LR}}$ have smaller biases, ESDs and ASDs than $\widehat{\bm{\theta}}_{\text{G}}$. This indicates that the QMLE $\widehat{\bm{\theta}}_{\text{LR}}$ is preferred for higher efficiency when the model with a moderate dimension has low-rank structures.  

From Figure \ref{figure simulation_m20}, it can be seen that the medians are almost in the middle of the boxes, indicating that the distribution of QMLE $\widehat{\bm{\theta}}_{\text{LR}}$ is symmetric. 
As the sample size $n$ increases, the medians get closer to zero and the inter-quartile ranges (IQRs) become smaller, which imply that $\widehat{\bm{\theta}}_{\text{LR}}$ is consistent and more efficient for larger $n$. 
Moreover, IQRs get larger when the distribution of $\{\bm{\eta}_{t}\}$ gets more heavy-tailed, which is due to the efficiency loss of the Gaussian QMLE $\widehat{\bm{\theta}}_{\text{LR}}$ for heavy-tailed innovations. 
These findings for the moderate dimension $m=20$ are consistent to those in Table \ref{table lowrank QMLE m5r1s0Bkm} for the small dimension $m=5$.

In sum, the finite-sample performance of the proposed QMLEs $\widehat{\bm{\theta}}_{\text{G}}$ and $\widehat{\bm{\theta}}_{\text{LR}}$ is reasonable, which confirms the asymptotic properties in Sections \ref{subsection QMLE}--\ref{subsection QMLE under low rank}.

\subsection{Model selection} \label{subsection model selection in simulation}

In the second experiment, we evaluate the performance of the proposed model selection method in Section \ref{subsection model selection}. 
The data are generated from DGP1--DGP4, and both QMLEs $\widehat{\bm{\theta}}_{\text{G}}$ and $\widehat{\bm{\theta}}_{\text{LR}}$ are considered for estimation. 
The BIC in \eqref{BIC} is employed to select the order $(r,s)$ with $o_{\text{max}} = 2$, since $r_{0}+2s_{0}\leq 2$ holds for DGP1--DGP4. 
Then the underfitted, correctly selected, and overfitted models by the BIC correspond to $(\widehat{r}, \widehat{s}) \in \Pi_{\text{under}} = \{(r,s) \in \Pi: r < r_{0} \; \text{or} \; s < s_{0}\}$, 
$(\widehat{r}, \widehat{s}) \in \Pi_{\text{true}} = \{(r,s) \in \Pi: r = r_{0} \; \text{and} \; s = s_{0}\}$, and
$(\widehat{r}, \widehat{s}) \in \Pi_{\text{over}} = \{(r,s) \in \Pi: r \geq r_{0} \; \text{and} \; s \geq s_{0}\} \setminus \Pi_{\text{true}}$, respectively.
Note that under the constraint $r+2s \leq m$, there is no overfitting case for DGP2--DGP3 with $m=2$ and $(r_0,s_0) = (2,0)$ or $(0,1)$. 

Table \ref{table BIC} provides the percentages of underfitted, correct selected, and overfitted cases by the BIC using the QMLEs $\widehat{\bm{\theta}}_{\text{G}}$ and $\widehat{\bm{\theta}}_{\text{LR}}$. 
It can be seen that the BIC performs better as the sample size increases, which confirms the selection consistency of BIC. The BIC performs slightly worse when the distribution of $\{\bm{\eta}_{t}\}$ gets more heavy-tailed, and this tendency is consistent to the findings for the QMLEs $\widehat{\bm{\theta}}_{\text{G}}$ and $\widehat{\bm{\theta}}_{\text{LR}}$ due to the efficiency loss of Gaussian QMLEs.   
Moreover, the overfitted percentage of BIC using $\widehat{\bm{\theta}}_{\text{LR}}$ is slightly greater than that using $\widehat{\bm{\theta}}_{\text{G}}$ for DGP4 with $m=5$. This is as expected because there are less parameters to optimize for the QMLE $\widehat{\bm{\theta}}_{\text{LR}}$ than for the QMLE $\widehat{\bm{\theta}}_{\text{G}}$ when $m=5$. 
In addition, it can be observed that BIC using $\widehat{\bm{\theta}}_{\text{LR}}$ performs poorly when $m=2$ and $(r,s) = (0,1)$.  
This is because the QMLE $\widehat{\bm{\theta}}_{\text{LR}}$ under the low-rank decomposition \eqref{G-decomposition} has more redundant parameters to optimize than those in the QMLE $\widehat{\bm{\theta}}_{\text{G}}$ without low-rank restrictions, which makes the BIC using $\widehat{\bm{\theta}}_{\text{LR}}$ tend to select underfitted models. 
Thus we suggest to use the QMLE $\widehat{\bm{\theta}}_{\text{G}}$ when the dimension $m$ is small, especially when $m \leq 3$ and $s > 0$. 


\section{Empirical examples} \label{section real data}

In this section, we illustrate the proposed model and its inference tools using two real examples. 
In the first example we analyze five major stock market indices using the proposed model without low-rank constraints, and compare its out-of-sample performance with other counterparts in forecasting VaR of portfolios.  
The second example focuses on the forecasting comparison of VaRs for seventeen industry portfolios, using the proposed model under low-rank constraints on the coefficient matrices and other competitive methods.

\subsection{Five major stock market indices} \label{subsection example 1}

We analyze the daily log returns of five major stock market indices from January 2, 2013, to December 30, 2022, which are downloaded from the website of Yahoo Finance (\url{https://finance.yahoo.com/}). 
We focus on the centered log returns in percentage for French Cotation Automatique Continue Index (CAC), Deutsche Aktien Index (DAX), Financial Times 100 Stock Index (FTSE), Hang Seng Index (HS) and Standard \& Poor 500 Index (SP). 
Since holidays are not common over the stock markets, we remove all the common missing values and replace the non-common missing values with zeroes \citep{billio2005multivariate}. 
As a result, a dataset of the time series $\{\mathbf{y}_{t}\}_{t=1}^n$ is recorded with $m=5$ and $n=2581$ for $\mathbf{y}_{t} = (y_{1t}, \ldots, y_{mt})^{\prime}$.
The time plot of $\{\mathbf{y}_{t}\}$ in Figure \ref{figure time plot} shows very similar volatility clustering pattern among each component. Moreover, the summary statistics of $\{\mathbf{y}_{t}\}$ in Table \ref{table major stocks summary statistics} indicate that each component is skewed and heavy-tailed. 
The above findings motivate us to study $\{\mathbf{y}_{t}\}$ using the proposed multivariate GARCH model in \eqref{model Rt SGARCH(r,s)}--\eqref{model Phii in Dt} and its inference tools in Section \ref{section estimation}.

We first employ the QMLE without any low-rank restrictions in Section \ref{subsection QMLE} to fit the entire dateset. To calculate the QMLE $\widehat{\bm{\theta}}_{\text{G}}$ in \eqref{est general QMLE}, we randomly generate initial values and choose the one with the smallest negative log-likelihood. 
Based on $o_{\text{max}} = 2$ and the QMLE $\widehat{\bm{\theta}}_{\text{G}}$, the proposed BIC in \eqref{BIC} selects $(r,s) = (2,0)$, and the fitted model is 
\begin{equation} \label{fitted model general}
\left\{
\begin{aligned}
\mathbf{y}_{t} &= H_{t}^{1/2} \bm{\eta}_{t}, \;\;
  H_{t} = D_{t} R_{t} D_{t}, \;\;
  R_{t} = 0.014 \widehat{\underline{R}} + 0.006 \Psi_{t-1} + 0.980 R_{t-1}, \\
  \ln\bm{h}_{t} &= \widehat{\underline{\bm{\omega}}} + \sum_{i=1}^{\infty} \widehat{\Phi}_{i} \ln\mathbf{y}^{\odot 2}_{t-i} 
  \;\; \text{with} \;\; 
  \widehat{\Phi}_{i} = 0.979^{i-1} \widehat{G}_{0,1} + 0.752^{i-1} \widehat{G}_{0,2}, 
\end{aligned}
\right. 
\end{equation}
where the summary information of the fitted intercepts and scalar coefficients is provided in Table \ref{table major stocks fitted coefficients}, and the QMLE of matrices $G_{0,1}$, $G_{0,2}$ and $\underline{R}$ as well as their standard errors are illustrated in Figures \ref{figure major stocks Ghat}--\ref{figure major stocks Rulinehat}. 
Table \ref{table major stocks fitted coefficients} indicates that parameters $\beta_{1}$ and $\beta_{2}$ in $R_{t}$ are significant at 5\% level, and we can conclude that the conditional correlation matrix of $\mathbf{y}_{t}$ given $\mathcal{F}_{t-1}$ is dynamic. 
Furthermore, to investigate whether the volatility spillover effects exist among the five stock markets, the significance tests in Section \ref{subsection test} are employed for the off-diagonal elements of the coefficient matrix $\Phi_{1}$. 
Figure \ref{figure major stocks Phihat} shows the estimates of elements in $\Phi_{1}$ with standard errors and the $p$-values of significance tests. It can be found that some off-diagonal elements of $\Phi_{1}$ are significant at 5\% significance level, implying that volatility spillovers exist among these stock markets. Particularly, we can see that there are significant volatility transmissions from the SP market to the other four stock markets. 

Next we examine the forecasting performance of the fitted multivariate GARCH model by building portfolios and evaluating their VaRs. 
Particularly, we consider the minimum variance (MV) portfolio $z_t=\bm{\iota}^\prime \mathbf{y}_t$ with short selling that minimizes the conditional variance of $z_t$ denoted by $\sigma_t^2=\bm{\iota}^\prime H_t \bm{\iota}$ with $H_t$ being the conditional covariance matrix of $\mathbf{y}_{t}$ given $\mathcal{F}_{t-1}$, where the MV weight has the form of 
$\bm{\iota} = (\bm{1}_{m}^\prime H_{t}^{-1} \bm{1}_{m})^{-1} H_{t}^{-1} \bm{1}_{m}$. 
Note that the $\tau$-th VaR of $z_t$ is its negative $\tau$-th conditional quantile. We assume a volatility model for $z_t$ such that the $\tau$-th negative VaR of $z_t$ is $Q_{\tau}(z_{t}\mid \mathcal{F}_{t-1})=\sigma_{t}b_{\tau}$, where $b_{\tau}$ is the $\tau$-th quantile of the $i.i.d.$ innovations $\{e_t\}$ with $e_t=z_t/\sigma_{t}$. 
Moreover, based on the estimate $\widehat{H}_{t}=\widehat{D}_{t} \widehat{R}_{t} \widehat{D}_{t}$ from the fitted multivariate GARCH model, we can calculate the feasible MV weight as $\widehat{\bm{\iota}} = (\bm{1}_{m}^\prime \widehat{H}_{t}^{-1} \bm{1}_{m})^{-1} \widehat{H}_{t}^{-1} \bm{1}_{m}$, the feasible MV portfolio as $\widehat{z}_t=\widehat{\bm{\iota}}^\prime \mathbf{y}_t$, and the minimum conditional variance as $\widehat{\sigma}_{t}^2=\widehat{\bm{\iota}}^\prime \widehat{H}_t \widehat{\bm{\iota}}$.  
To examine the VaR forecasts of the portfolio $\widehat{z}_t$, we conduct one-step-ahead predictions using a rolling forecasting procedure with a fixed moving window. 
Specifically, we begin with the forecast origin $t_{0} = n_{0} + 1 = 2065$, and fit the multivariate GARCH model with order $(r,s) = (2,0)$ by the QMLE $\widehat{\bm{\theta}}_{\text{G}}$ using the data from the beginning to $n_{0} = 2064$ which covers 8 years' data observations. 
We obtain the one-day-ahead forecast $\widehat{H}_{t_{0}}$ to calculate the feasible portfolio weight $\widehat{\bm{\iota}}$, following with the predicted portfolio $\widehat{z}_{t_{0}}$ and the minimum conditional variance $\widehat{\sigma}_{t_{0}}^2$. 
Thus the $\tau$-th negative VaR of $\widehat{z}_{t_{0}}$ can be calculated by $\widehat{Q}_{\tau}( \widehat{z}_{t_{0}}\mid \mathcal{F}_{t_{0}-1})=\widehat{\sigma}_{t_{0}}\widehat{b}_{\tau}$, where $\widehat{b}_{\tau}$ is the $\tau$-th sample quantile of fitted residuals $\{\widehat{e}_{t}\}_{t=1}^{n_{0}}$ with $\widehat{e}_{t} = \widehat{z}_t/\widehat{\sigma}_t$. 
Next we move the window forward, advance the forecast origin by one, and repeat the above procedure until all data are utilized. 
Finally, we obtain $n - n_{0} = 517$ one-day-ahead negative $\tau$-th VaRs for the MV portfolio. 

We also compare the forecasting performance of the fitted multivariate GARCH model with other counterparts. 
Table \ref{table major stocks comparison methods} summarizes all the models used to build portfolios for comparison, 
where we only consider the scalar or diagonal CCC, DCC and BEKK models for comparison, since their fully parameterized models are numerically infeasible in estimation for many assets. 
Specifically, the CCC and DCC models are fitted by two-step method, i.e., the univariate GARCH$(1,1)$ model is fitted by QMLE to each component in the first step and the conditional correlation matrix is estimated by QMLE in the second step.   
The BEKK models are fitted by full QMLE or variance targeting (VT) method \citep{engle1996garch}.
The negative VaR forecasts of portfolios built by other models are computed in the same way as for the proposed model. 
We consider the lower and upper $1\%, 2.5\%$ and $5\%$ quantile levels, that is the $1\%$, $2.5\%$ and $5\%$ VaRs for long and short positions, and conduct the above rolling forecast procedure for all competing methods.

To evaluate the forecasting performance, we calculate the empirical coverage rate (ECR) and prediction error (PE), and conduct two VaR backtests for VaR forecasts. 
The ECR is calculated as the proportion of observations that fall below the corresponding conditional quantile forecast for the last $517$ data points. 
The PE is calculated by $\text{PE}=[\tau(1-\tau)/(n-n_0)]^{-1/2}|(n-n_0)^{-1} \sum_{t=n_0+1}^{n} I \{\widehat{z}_t<\widehat{Q}_{\tau}(\widehat{z}_{t}\mid \mathcal{F}_{t-1})\}-\tau|$, 
where $n - n_{0} = 517$, and $\widehat{Q}_{\tau}(\widehat{z}_{t}\mid \mathcal{F}_{t-1})$ is the one-step-ahead conditional quantile forecast based on each estimation method. 
Note that smaller PE implies better forecasting performance. 
Moreover, the likelihood ratio test for correct conditional coverage (CC) in \cite{Christoffersen_1998} and the dynamic quantile (DQ) test in \cite{Engle2004}, are considered as the VaR backtests. 
Let $\hbar_{t} = I(\widehat{z}_t<\widehat{Q}_{\tau}(\widehat{z}_{t} \mid \mathcal{F}_{t-1}))$ be the hit series. 
The null hypothesis of the CC test is that conditional on $\mathcal{F}_{t-1}$, $\{\hbar_{t}\}$ are $i.i.d.$ Bernoulli random variables with success probability being $\tau$. 
For the DQ test, its null hypothesis is that all coefficients in the regression of $\hbar_{t}$ are zero and the intercept equals $\tau$, where the regressors includes four lagged hits $\hbar_{t-i}$ with $1\le i \le 4$ and the contemporaneous VaR forecast \citep{Engle2004}. 
If the null hypothesis of VaR backtests cannot be rejected, then the VaR forecasts are satisfactory. 

Table \ref{table major stocks VaR} reports ECRs, PEs and $p$-values of CC and DQ tests for the one-step-ahead VaR forecasts of all methods at the lower and upper $1\%$, $2.5\%$ and $5\%$ conditional quantiles. 
It can be found that the variance-correlation type models including the proposed model generally outperform the covariance type models in terms of backtests and ECRs/PEs at the lower and upper $1\%$ and $2.5\%$ levels. 
In terms of backtests, the proposed model performs the best since it passes both backtests under the $5\%$ significance level at all quantiles except for $\tau = 5\%$, 
followed by DCC-T and D-DCC-T passing two backtests at $\tau = 1\%, 95\%, 97.5\%$ and $99\%$. 
Moreover, the proposed model has the smallest PEs with the ECRs closest to the corresponding nominal levels. 
The superiority of the proposed model over the other variance-correlation models is possibly because our model allows for spillover effects whereas the others ignore the possible cross-sectional dependence in volatilities, and our model fitted by the full QMLE further avoids efficiency loss due to the two-step estimation method for the other variance-correlation models.
In addition, the better performance of (D-)DCC-T than (D-)DCC-E and (D-)DCC-A suggests that, the specification of $R_{t}$ in \cite{tse2002multivariate} has advantages over that in \cite{engle2002dynamic} and \cite{aielli2013dynamic}. 
In summary, it can be concluded that our proposed multivariate GARCH model fitted by the full QMLE dominates the other competitors in forecasting VaRs of the MV portfolio based on the five major stock market indices.

\subsection{Seventeen industry portfolios} \label{subsection example 2}

This section analyzes the 17 centered daily Fama-French industry portfolios, from January 2, 2009, to December 30, 2022, 
which is downloaded from Kenneth French's database (\url{http://mba.tuck.dartmouth.edu/pages/faculty/ken.french/data_library.html}). 
The 17 industries include food, mines, oil, clothes, consumer durables, chemicals, consumer goods, construction, steel, fabricated products, machine, cars, transportation, utilities, retail stores, financial, and other. Moreover, the 17 industry portfolios are built based on NYSE, AMEX, and NASDAQ stocks using equal weights, and they are widely used for portfolio research in the literature \citep{fama1997industry,behr2012using,hirshleifer2020mood}. 
Finally, a dataset of the centered time series $\{\mathbf{y}_{t}\}_{t=1}^n$ is recorded with $m=17$ and $n=3524$ for $\mathbf{y}_{t} = (y_{1t}, \ldots, y_{mt})^{\prime}$.

We employ the QMLE with low-rank restrictions on coefficient matrices in Section \ref{subsection QMLE under low rank} to fit the entire dateset of moderate dimension,
and the BIC in \eqref{BIC} selects $(r,s) = (0,1)$ based on $o_{\text{max}} = 2$. 
As in Section \ref{subsection example 1}, we evaluate the forecasting performance of the fitted multivariate GARCH model and other competitive methods by building MV portfolios and evaluating their VaRs.  
We conduct one-step-ahead predictions using a rolling forecasting procedure with a fixed moving window $n_{0} = 3021$ that covers 12 years' data observations. 
Particularly, we adopt the common convention that 5 consecutive trading days constitute one ``week'', and the MV weight $\bm{\iota}$ 
is updated ``weekly'' in the rolling forecasting procedure to alleviate computational costs for all methods \citep{de2018mgarch}. 

Table \ref{table portfolio VaR} reports ECRs, PEs and $p$-values of CC and DQ tests for the one-step-ahead VaR forecasts by the proposed multivariate GARCH model as well as the other models in Table \ref{table major stocks comparison methods} at the lower and upper $1\%$, $2.5\%$ and $5\%$ conditional quantiles. 
It can be seen that the proposed multivariate GARCH model performs the best in terms of backtests and PEs/ECRs. 
Specifically, only the proposed model passes both backtests under the $5\%$ significance level at all quantiles, and it has the smallest PEs with the ECRs closest to the corresponding nominal levels at all quantiles except for $\tau = 1\%$. 
The performance of S-BEKK-VT is the second best as it passes both backtests at five quantiles, 
followed by D-BEKK-VT which passes both backtests at four quantiles. 
In contrast, the overall performance of CCC and DCC models is poor for this dataset.
The superiority of the proposed model over the BEKK, CCC and DCC models is probably  benefits from modeling the cross-sectional dependence among asset volatilities.  
Moreover, the better performance of S/D-BEKK-VT than S/D-BEKK-F indicates that the VT method is more applicable than the full QMLE for the BEKK model when $m$ is moderate. 
In addition, the poor performance of the other variance-correlation models is possibly because the cross-sectional dependence in volatilities is neglected and the two-step estimation method also leads to efficiency loss.  
In conclusion, our proposed multivariate GARCH model fitted by the QMLE under low-rank restrictions dominates the other competitors in forecasting VaRs of the MV portfolio based on the 17 Fama-French industry portfolios.

\section{Conclusion and discussion}\label{section conclusion}

This paper proposes the efficient multivariate GARCH model, a new multivariate volatility model which flexibly captures dynamic conditional correlations and spillover effects among financial assets.
The new model avoids the non-identification issue confused by other multivariate volatility models, and more importantly it is computational tractable for many assets of moderate size. 
A sufficient condition is derived for the strict stationarity of the new model. 
To facilitate the application of the new model, we propose inference tools with theoretical guarantees, including two QMLEs for model estimation with and without low-rank constraints, a BIC for order selection as well as tests for volatility spillovers. 
Finally, we employ the proposed methodology to two empirical examples on 5 major stock markets and 17 industry portfolios, and our method provides more accurate VaR forecasts than existing competing methods.

The proposed approach can be improved and extended in the following directions.
Firstly, although this paper focuses on full QMLEs with theoretical guarantees for the new model, some computationally efficient algorithms can be further studied to calculate full QMLEs.
Secondly, numerically feasible estimation methods, such as equation-by-equation estimation and variance targeting approaches, can be further considered for the proposed model to handle large dimensional assets. It is worth to investigate these computationally feasible approaches at the smallest cost of theoretical properties. 
Thirdly, the Jordan decomposition on coefficient matrices can be adapted to other multivariate volatility models such as the BEKK model. To complement the commonly used scalar and diagonal BEKK models, it is interesting to consider a much general but computational tractable BEKK model for many assets. 
Lastly, we may extend the proposed multivariate GARCH model to handle high-dimensional assets by reasonably imposing sparse and low-rank assumptions on coefficient matrices. Then the inference tools with non-asymptotic properties need to be established for the high-dimensional situation.
We leave these topics for future research.


\bibliography{references}

\begin{thebibliography}{}

\bibitem[\protect\citeauthoryear{Aielli}{Aielli}{2013}]{aielli2013dynamic}
Aielli, G.~P. (2013).
\newblock Dynamic conditional correlation: on properties and estimation.
\newblock {\em Journal of Business \& Economic Statistics\/}~{\em 31},
  282--299.

\bibitem[\protect\citeauthoryear{Andreescu}{Andreescu}{2014}]{andreescu2016essential}
Andreescu, T. (2014).
\newblock {\em Essential linear algebra with applications}.
\newblock Springer.

\bibitem[\protect\citeauthoryear{Bauwens, Laurent, and Rombouts}{Bauwens
  et~al.}{2006}]{bauwens2006multivariate}
Bauwens, L., S.~Laurent, and J.~V. Rombouts (2006).
\newblock {Multivariate GARCH models: a survey}.
\newblock {\em Journal of Applied Econometrics\/}~{\em 21}, 79--109.

\bibitem[\protect\citeauthoryear{Behr, Guettler, and Truebenbach}{Behr
  et~al.}{2012}]{behr2012using}
Behr, P., A.~Guettler, and F.~Truebenbach (2012).
\newblock Using industry momentum to improve portfolio performance.
\newblock {\em Journal of Banking \& Finance\/}~{\em 36}, 1414--1423.

\bibitem[\protect\citeauthoryear{Bernstein}{Bernstein}{2009}]{Bernstein2009}
Bernstein, D.~S. (2009).
\newblock {\em Matrix mathematics: {T}heory, facts, and formulas}.
\newblock Princeton University Press.

\bibitem[\protect\citeauthoryear{Billingsley}{Billingsley}{1961}]{billingsley1961lindeberg}
Billingsley, P. (1961).
\newblock The {Lindeberg-Levy} theorem for martingales.
\newblock {\em Proceedings of the American Mathematical Society\/}~{\em 12},
  788--792.

\bibitem[\protect\citeauthoryear{Billingsley}{Billingsley}{1995a}]{billingsley1995}
Billingsley, P. (1995a).
\newblock {\em Probability and Measure}.
\newblock John Wiley \& Sons.

\bibitem[\protect\citeauthoryear{Billingsley}{Billingsley}{1995b}]{billingsley1995_supp}
Billingsley, P. (1995b).
\newblock {\em Probability and Measure}.
\newblock John Wiley \& Sons.

\bibitem[\protect\citeauthoryear{Billio and Caporin}{Billio and
  Caporin}{2005}]{billio2005multivariate}
Billio, M. and M.~Caporin (2005).
\newblock Multivariate markov switching dynamic conditional correlation {GARCH}
  representations for contagion analysis.
\newblock {\em Statistical Methods and Applications\/}~{\em 14}, 145--161.

\bibitem[\protect\citeauthoryear{Bollerslev}{Bollerslev}{1990}]{bollerslev1990ModellingTC}
Bollerslev, T. (1990).
\newblock {Modelling the coherence in short-run nominal exchange rates: A
  multivariate generalized ARCH model}.
\newblock {\em The Review of Economics and Statistics\/}~{\em 72}, 498--505.

\bibitem[\protect\citeauthoryear{Bollerslev, Engle, and Wooldridge}{Bollerslev
  et~al.}{1988}]{bollerslev1988capital}
Bollerslev, T., R.~F. Engle, and J.~M. Wooldridge (1988).
\newblock A capital asset pricing model with time-varying covariances.
\newblock {\em Journal of Political Economy\/}~{\em 96}, 116--131.

\bibitem[\protect\citeauthoryear{Caporin and McAleer}{Caporin and
  McAleer}{2012}]{caporin2012we}
Caporin, M. and M.~McAleer (2012).
\newblock Do we really need both {BEKK} and {DCC}? {A} tale of two multivariate
  {GARCH} models.
\newblock {\em Journal of Economic Surveys\/}~{\em 26}, 736--751.

\bibitem[\protect\citeauthoryear{Caporin and McAleer}{Caporin and
  McAleer}{2013}]{caporin2013ten}
Caporin, M. and M.~McAleer (2013).
\newblock Ten things you should know about the dynamic conditional correlation
  representation.
\newblock {\em Econometrics\/}~{\em 1}, 115--126.

\bibitem[\protect\citeauthoryear{Caporin and McAleer}{Caporin and
  McAleer}{2014}]{caporin2014robust}
Caporin, M. and M.~McAleer (2014).
\newblock {Robust ranking of multivariate GARCH models by problem dimension}.
\newblock {\em Computational Statistics \& Data Analysis\/}~{\em 76}, 172--185.

\bibitem[\protect\citeauthoryear{Cappiello, Engle, and Sheppard}{Cappiello
  et~al.}{2006}]{cappiello2006asymmetric}
Cappiello, L., R.~F. Engle, and K.~Sheppard (2006).
\newblock Asymmetric dynamics in the correlations of global equity and bond
  returns.
\newblock {\em Journal of Financial Econometrics\/}~{\em 4}, 537--572.

\bibitem[\protect\citeauthoryear{Christoffersen}{Christoffersen}{1998}]{Christoffersen_1998}
Christoffersen, P.~F. (1998).
\newblock Evaluating interval forecasts.
\newblock {\em International Economic Review\/}~{\em 39}, 841--862.

\bibitem[\protect\citeauthoryear{Comte and Lieberman}{Comte and
  Lieberman}{2003}]{comte2003asymptotic}
Comte, F. and O.~Lieberman (2003).
\newblock Asymptotic theory for multivariate {GARCH} processes.
\newblock {\em Journal of Multivariate Analysis\/}~{\em 84}, 61--84.

\bibitem[\protect\citeauthoryear{De~Almeida, Hotta, and Ruiz}{De~Almeida
  et~al.}{2018}]{de2018mgarch}
De~Almeida, D., L.~K. Hotta, and E.~Ruiz (2018).
\newblock Mgarch models: Trade-off between feasibility and flexibility.
\newblock {\em International Journal of Forecasting\/}~{\em 34}, 45--63.

\bibitem[\protect\citeauthoryear{Ding and Engle}{Ding and
  Engle}{2001}]{ding2001large}
Ding, Z. and R.~F. Engle (2001).
\newblock Large scale conditional covariance matrix modeling, estimation and
  testing.

\bibitem[\protect\citeauthoryear{Douc, Moulines, and Stoffer}{Douc
  et~al.}{2014}]{douc2014nonlinear}
Douc, R., E.~Moulines, and D.~Stoffer (2014).
\newblock {\em Nonlinear time series: {Theory}, methods and applications with
  {R} examples}.
\newblock CRC press.

\bibitem[\protect\citeauthoryear{Engle}{Engle}{2002}]{engle2002dynamic}
Engle, R. (2002).
\newblock Dynamic conditional correlation: {A} simple class of multivariate
  generalized autoregressive conditional heteroskedasticity models.
\newblock {\em Journal of Business \& Economic Statistics\/}~{\em 20},
  339--350.

\bibitem[\protect\citeauthoryear{Engle and Mezrich}{Engle and
  Mezrich}{1996}]{engle1996garch}
Engle, R. and J.~Mezrich (1996).
\newblock {GARCH for groups: A round-up of recent developments in garch
  techniques for estimating correlation}.
\newblock {\em Risk-London-Risk Magazine Limited\/}~{\em 9}, 36--40.

\bibitem[\protect\citeauthoryear{Engle and Kroner}{Engle and
  Kroner}{1995}]{engle1995MultivariateSG}
Engle, R.~F. and K.~F. Kroner (1995).
\newblock {Multivariate simultaneous generalized ARCH}.
\newblock {\em Econometric Theory\/}~{\em 11}, 122--150.

\bibitem[\protect\citeauthoryear{Engle and Manganelli}{Engle and
  Manganelli}{2004}]{Engle2004}
Engle, R.~F. and S.~Manganelli (2004).
\newblock {CAViaR: conditional autoregressive value at risk by regression
  quantiles}.
\newblock {\em Journal of Business \& Economic Statistics\/}~{\em 22},
  367--381.

\bibitem[\protect\citeauthoryear{Fama and French}{Fama and
  French}{1997}]{fama1997industry}
Fama, E.~F. and K.~R. French (1997).
\newblock Industry costs of equity.
\newblock {\em Journal of Financial Economics\/}~{\em 43}, 153--193.

\bibitem[\protect\citeauthoryear{Fermanian and Malongo}{Fermanian and
  Malongo}{2017a}]{fermanian2017stationarity}
Fermanian, J.-D. and H.~Malongo (2017a).
\newblock On the stationarity of dynamic conditional correlation models.
\newblock {\em Econometric Theory\/}~{\em 33}, 636--663.

\bibitem[\protect\citeauthoryear{Fermanian and Malongo}{Fermanian and
  Malongo}{2017b}]{fermanian2017stationarity_supp}
Fermanian, J.-D. and H.~Malongo (2017b).
\newblock On the stationarity of dynamic conditional correlation models.
\newblock {\em Econometric Theory\/}~{\em 33}, 636--663.

\bibitem[\protect\citeauthoryear{Francq and Sucarrat}{Francq and
  Sucarrat}{2017}]{francq2017equation}
Francq, C. and G.~Sucarrat (2017).
\newblock An equation-by-equation estimator of a multivariate {log-GARCH-X}
  model of financial returns.
\newblock {\em Journal of Multivariate Analysis\/}~{\em 153}, 16--32.

\bibitem[\protect\citeauthoryear{Francq and Zakoian}{Francq and
  Zakoian}{2004}]{francq2004maximum}
Francq, C. and J.-M. Zakoian (2004).
\newblock Maximum likelihood estimation of pure {GARCH} and {ARMA-GARCH}
  processes.
\newblock {\em Bernoulli\/}~{\em 10}, 605--637.

\bibitem[\protect\citeauthoryear{Francq and Zako{\"\i}an}{Francq and
  Zako{\"\i}an}{2012}]{francq2012qml}
Francq, C. and J.-M. Zako{\"\i}an (2012).
\newblock {QML} estimation of a class of multivariate asymmetric {GARCH}
  models.
\newblock {\em Econometric Theory\/}~{\em 28}, 179--206.

\bibitem[\protect\citeauthoryear{Francq and Zakoian}{Francq and
  Zakoian}{2019a}]{francq2019garch}
Francq, C. and J.-M. Zakoian (2019a).
\newblock {\em {GARCH models: structure, statistical inference and financial
  applications}}.
\newblock John Wiley \& Sons.

\bibitem[\protect\citeauthoryear{Francq and Zakoian}{Francq and
  Zakoian}{2019b}]{francq2019garch_supp}
Francq, C. and J.-M. Zakoian (2019b).
\newblock {\em {GARCH models: structure, statistical inference and financial
  applications}}.
\newblock John Wiley \& Sons.

\bibitem[\protect\citeauthoryear{Fryzlewicz and Subba~Rao}{Fryzlewicz and
  Subba~Rao}{2011}]{Fryzlewicz_SubbaRao2011}
Fryzlewicz, P. and S.~Subba~Rao (2011).
\newblock {Mixing properties of {ARCH} and time-varying {ARCH} processes}.
\newblock {\em Bernoulli\/}~{\em 17}, 320--346.

\bibitem[\protect\citeauthoryear{Hafner and Franses}{Hafner and
  Franses}{2009}]{hafner2009generalized}
Hafner, C.~M. and P.~H. Franses (2009).
\newblock A generalized dynamic conditional correlation model: simulation and
  application to many assets.
\newblock {\em Econometric Reviews\/}~{\em 28}, 612--631.

\bibitem[\protect\citeauthoryear{Hafner, Herwartz, and Maxand}{Hafner
  et~al.}{2022}]{hafner2022identification}
Hafner, C.~M., H.~Herwartz, and S.~Maxand (2022).
\newblock Identification of structural multivariate {GARCH} models.
\newblock {\em Journal of Econometrics\/}~{\em 227}, 212--227.

\bibitem[\protect\citeauthoryear{Hafner and Preminger}{Hafner and
  Preminger}{2009}]{hafner2009asymptotic}
Hafner, C.~M. and A.~Preminger (2009).
\newblock On asymptotic theory for multivariate garch models.
\newblock {\em Journal of Multivariate Analysis\/}~{\em 100}, 2044--2054.

\bibitem[\protect\citeauthoryear{Hartfiel}{Hartfiel}{1995}]{hartfiel1995dense}
Hartfiel, D.~J. (1995).
\newblock Dense sets of diagonalizable matrices.
\newblock {\em Proceedings of the American Mathematical Society\/}~{\em 123},
  1669--1672.

\bibitem[\protect\citeauthoryear{Higham}{Higham}{2008}]{higham2008functions}
Higham, N.~J. (2008).
\newblock {\em Functions of matrices: theory and computation}.
\newblock SIAM.

\bibitem[\protect\citeauthoryear{Hirshleifer, Jiang, and
  DiGiovanni}{Hirshleifer et~al.}{2020}]{hirshleifer2020mood}
Hirshleifer, D., D.~Jiang, and Y.~M. DiGiovanni (2020).
\newblock Mood beta and seasonalities in stock returns.
\newblock {\em Journal of Financial Economics\/}~{\em 137}, 272--295.

\bibitem[\protect\citeauthoryear{Horn and Johnson}{Horn and
  Johnson}{2013}]{horn2012matrix}
Horn, R.~A. and C.~R. Johnson (2013).
\newblock {\em Matrix analysis}.
\newblock Cambridge university press.

\bibitem[\protect\citeauthoryear{Huang, Lu, and Zheng}{Huang
  et~al.}{2024}]{huang2024sarma}
Huang, F., K.~Lu, and Y.~Zheng (2024).
\newblock {SARMA}: Scalable low-rank high-dimensional autoregressive moving
  averages via tensor decomposition.
\newblock {\em arXiv preprint arXiv:2405.00626\/}.

\bibitem[\protect\citeauthoryear{Jeantheau}{Jeantheau}{1998}]{jeantheau1998strong}
Jeantheau, T. (1998).
\newblock Strong consistency of estimators for multivariate {ARCH} models.
\newblock {\em Econometric Theory\/}~{\em 14}, 70--86.

\bibitem[\protect\citeauthoryear{Ling and McAleer}{Ling and
  McAleer}{2003}]{ling2003asymptotic}
Ling, S. and M.~McAleer (2003).
\newblock {Asymptotic theory for a vector ARMA-GARCH model}.
\newblock {\em Econometric Theory\/}~{\em 19}, 280--310.

\bibitem[\protect\citeauthoryear{McAleer}{McAleer}{2005}]{mcaleer2005automated}
McAleer, M. (2005).
\newblock Automated inference and learning in modeling financial volatility.
\newblock {\em Econometric Theory\/}~{\em 21}, 232--261.

\bibitem[\protect\citeauthoryear{Nelson}{Nelson}{1991}]{nelson1991conditional}
Nelson, D.~B. (1991).
\newblock Conditional heteroskedasticity in asset returns: {A} new approach.
\newblock {\em Econometrica\/}~{\em 59}, 347--370.

\bibitem[\protect\citeauthoryear{Nielsen and No{\"e}l}{Nielsen and
  No{\"e}l}{2021}]{Nielsen_Noel2021}
Nielsen, M.~{\O}. and A.~L. No{\"e}l (2021).
\newblock {To infinity and beyond: Efficient computation of ARCH($\infty$)
  models}.
\newblock {\em Journal of Time Series Analysis\/}~{\em 42}, 338--354.

\bibitem[\protect\citeauthoryear{Shapiro}{Shapiro}{1986a}]{shapiro1986asymptotic}
Shapiro, A. (1986a).
\newblock Asymptotic theory of overparameterized structural models.
\newblock {\em Journal of the American Statistical Association\/}~{\em 81},
  142--149.

\bibitem[\protect\citeauthoryear{Shapiro}{Shapiro}{1986b}]{shapiro1986asymptotic_supp}
Shapiro, A. (1986b).
\newblock Asymptotic theory of overparameterized structural models.
\newblock {\em Journal of the American Statistical Association\/}~{\em 81},
  142--149.

\bibitem[\protect\citeauthoryear{Tse and Tsui}{Tse and
  Tsui}{2002a}]{tse2002multivariate}
Tse, Y.~K. and A.~K.~C. Tsui (2002a).
\newblock A multivariate generalized autoregressive conditional
  heteroscedasticity model with time-varying correlations.
\newblock {\em Journal of Business \& Economic Statistics\/}~{\em 20},
  351--362.

\bibitem[\protect\citeauthoryear{Tse and Tsui}{Tse and
  Tsui}{2002b}]{tse2002multivariate_supp}
Tse, Y.~K. and A.~K.~C. Tsui (2002b).
\newblock A multivariate generalized autoregressive conditional
  heteroscedasticity model with time-varying correlations.
\newblock {\em Journal of Business \& Economic Statistics\/}~{\em 20},
  351--362.

\bibitem[\protect\citeauthoryear{Tweedie}{Tweedie}{1988}]{tweedie1988invariant_supp}
Tweedie, R. (1988).
\newblock Invariant measures for {Markov} chains with no irreducibility
  assumptions.
\newblock {\em Journal of Applied Probability\/}~{\em 25}, 275--285.

\bibitem[\protect\citeauthoryear{Zheng}{Zheng}{2025}]{zheng2022interpretable}
Zheng, Y. (2025).
\newblock {An Interpretable and Efficient Infinite-Order Vector Autoregressive
  Model for High-Dimensional Time Series}.
\newblock {\em Journal of the American Statistical Association\/}~{\em 120},
  212--225.

\end{thebibliography}

\end{spacing}

\begin{table}[htbp]
\renewcommand\arraystretch{1.2}
\caption{\label{table general QMLE m2r1s0Bkm} Biases, ESDs, and ASDs of $\widehat{\bm{\theta}}_{\text{G}}$ in \eqref{est general QMLE} under DGP1 when $m=2$, $(r,s) = (1,0)$, and the innovations $\{\bm{\eta}_{t}\}$ follow a multivariate normal or Student's $t_5$ distribution.}
\begin{center}
  \scalebox{0.7}{
  \begin{tabular}{crrrrrrrrrr}
    \hline\hline
    &&&\multicolumn{3}{c}{Normal}&&\multicolumn{3}{c}{$t_5$}\\
    \cline{4-6}\cline{8-10}
    &\multicolumn{1}{c}{$n$}&&\multicolumn{1}{c}{Bias}&\multicolumn{1}{c}{ESD}&\multicolumn{1}{c}{ASD}&&\multicolumn{1}{c}{Bias}&\multicolumn{1}{c}{ESD}&\multicolumn{1}{c}{ASD}\\
    \hline
    $\underline{\omega}_{1}$ & 1000 && 0.0417 & 0.0516 & 0.0498 && 0.0524 & 0.0679 & 0.0731 \\ 
                             & 2000 && 0.0280 & 0.0351 & 0.0351 && 0.0317 & 0.0408 & 0.0529 \\ 
    $\underline{\omega}_{2}$ & 1000 && 0.0415 & 0.0521 & 0.0500 && 0.0530 & 0.0674 & 0.0732 \\ 
                             & 2000 && 0.0270 & 0.0341 & 0.0351 && 0.0329 & 0.0418 & 0.0528 \\ 
    $\lambda_{1}$ & 1000 && 0.0343 & 0.0434 & 0.0419 && 0.0538 & 0.0714 & 0.0698 \\ 
                  & 2000 && 0.0225 & 0.0281 & 0.0275 && 0.0371 & 0.0482 & 0.0461 \\ 
    $(G_{0,1})_{11}$ & 1000 && 0.0100 & 0.0126 & 0.0120 && 0.0154 & 0.0195 & 0.0188 \\ 
                    & 2000 && 0.0068 & 0.0084 & 0.0082 && 0.0111 & 0.0143 & 0.0133 \\ 
    $(G_{0,1})_{21}$ & 1000 && 0.0103 & 0.0130 & 0.0119 && 0.0157 & 0.0200 & 0.0188 \\ 
                    & 2000 && 0.0067 & 0.0084 & 0.0082 && 0.0112 & 0.0141 & 0.0131 \\ 
    $(G_{0,1})_{12}$ & 1000 && 0.0099 & 0.0124 & 0.0119 && 0.0153 & 0.0195 & 0.0184 \\ 
                    & 2000 && 0.0069 & 0.0086 & 0.0083 && 0.0112 & 0.0146 & 0.0132 \\ 
    $(G_{0,1})_{22}$ & 1000 && 0.0099 & 0.0126 & 0.0120 && 0.0161 & 0.0211 & 0.0188 \\ 
                    & 2000 && 0.0064 & 0.0081 & 0.0083 && 0.0109 & 0.0137 & 0.0132 \\ 
    $\beta_{1}$ & 1000 && 0.0194 & 0.0243 & 0.0234 && 0.0205 & 0.0262 & 0.0257 \\ 
                & 2000 && 0.0129 & 0.0164 & 0.0161 && 0.0143 & 0.0182 & 0.0178 \\ 
    $\beta_{2}$ & 1000 && 0.0500 & 0.0622 & 0.0593 && 0.0553 & 0.0825 & 0.0665 \\ 
                & 2000 && 0.0317 & 0.0404 & 0.0389 && 0.0353 & 0.0469 & 0.0438 \\ 
    $(\underline{R})_{21}$ & 1000 && 0.0390 & 0.0475 & 0.0459 && 0.0400 & 0.0506 & 0.0467 \\ 
                           & 2000 && 0.0275 & 0.0344 & 0.0332 && 0.0280 & 0.0357 & 0.0332 \\ 
   \hline
  \end{tabular}
  }
\end{center}
\end{table}

\begin{table}[htbp]
\renewcommand\arraystretch{1.2}
\caption{\label{table general QMLE m2r2s0Bkm} Biases, ESDs, and ASDs of $\widehat{\bm{\theta}}_{\text{G}}$ in \eqref{est general QMLE} under DGP2 when $m=2$, $(r,s) = (2,0)$, and the innovations $\{\bm{\eta}_{t}\}$ follow a multivariate normal or Student's $t_5$ distribution.}
\begin{center}
  \begin{tabular}{crrrrrrrrrr}
    \hline\hline
    &&&\multicolumn{3}{c}{Normal}&&\multicolumn{3}{c}{$t_5$}\\
    \cline{4-6}\cline{8-10}
    &\multicolumn{1}{c}{$n$}&&\multicolumn{1}{c}{Bias}&\multicolumn{1}{c}{ESD}&\multicolumn{1}{c}{ASD}&&\multicolumn{1}{c}{Bias}&\multicolumn{1}{c}{ESD}&\multicolumn{1}{c}{ASD}\\
    \hline
    $\underline{\omega}_{1}$ & 1000 && 0.0416 & 0.0515 & 0.0497 && 0.0522 & 0.0661 & 0.0706 \\ 
                             & 2000 && 0.0281 & 0.0352 & 0.0351 && 0.0315 & 0.0401 & 0.0517 \\ 
    $\underline{\omega}_{2}$ & 1000 && 0.0413 & 0.0520 & 0.0498 && 0.0528 & 0.0653 & 0.0702 \\ 
                             & 2000 && 0.0270 & 0.0342 & 0.0350 && 0.0328 & 0.0413 & 0.0516 \\ 
    $\lambda_{1}$ & 1000 && 0.0359 & 0.0461 & 0.0432 && 0.0553 & 0.0767 & 0.0674 \\ 
                  & 2000 && 0.0234 & 0.0294 & 0.0283 && 0.0379 & 0.0486 & 0.0455 \\ 
    $\lambda_{2}$ & 1000 && 0.0514 & 0.0697 & 0.0667 && 0.0971 & 0.1499 & 0.1030 \\ 
                  & 2000 && 0.0339 & 0.0436 & 0.0420 && 0.0614 & 0.0922 & 0.0746 \\ 
    $(G_{0,1})_{11}$ & 1000 && 0.0103 & 0.0129 & 0.0124 && 0.0160 & 0.0223 & 0.0192 \\ 
                    & 2000 && 0.0071 & 0.0088 & 0.0084 && 0.0113 & 0.0144 & 0.0133 \\ 
    $(G_{0,1})_{21}$ & 1000 && 0.0107 & 0.0136 & 0.0124 && 0.0168 & 0.0396 & 0.0200 \\ 
                    & 2000 && 0.0069 & 0.0087 & 0.0084 && 0.0115 & 0.0145 & 0.0133 \\ 
    $(G_{0,1})_{12}$ & 1000 && 0.0105 & 0.0132 & 0.0124 && 0.0171 & 0.0366 & 0.0198 \\ 
                    & 2000 && 0.0072 & 0.0090 & 0.0085 && 0.0114 & 0.0145 & 0.0133 \\ 
    $(G_{0,1})_{22}$ & 1000 && 0.0104 & 0.0132 & 0.0124 && 0.0177 & 0.0374 & 0.0197 \\ 
                    & 2000 && 0.0066 & 0.0084 & 0.0085 && 0.0109 & 0.0139 & 0.0132 \\ 
    $(G_{0,2})_{11}$ & 1000 && 0.0119 & 0.0148 & 0.0146 && 0.0192 & 0.0274 & 0.0227 \\ 
                    & 2000 && 0.0082 & 0.0101 & 0.0099 && 0.0133 & 0.0167 & 0.0157 \\ 
    $(G_{0,2})_{21}$ & 1000 && 0.0114 & 0.0143 & 0.0147 && 0.0195 & 0.0436 & 0.0231 \\ 
                    & 2000 && 0.0079 & 0.0100 & 0.0100 && 0.0132 & 0.0173 & 0.0161 \\ 
    $(G_{0,2})_{12}$ & 1000 && 0.0116 & 0.0147 & 0.0146 && 0.0208 & 0.0384 & 0.0234 \\ 
                    & 2000 && 0.0077 & 0.0097 & 0.0100 && 0.0135 & 0.0178 & 0.0160 \\ 
    $(G_{0,2})_{22}$ & 1000 && 0.0118 & 0.0151 & 0.0146 && 0.0207 & 0.0427 & 0.0232 \\ 
                    & 2000 && 0.0081 & 0.0101 & 0.0100 && 0.0128 & 0.0163 & 0.0157 \\ 
    $\beta_{1}$ & 1000 && 0.0194 & 0.0244 & 0.0235 && 0.0200 & 0.0258 & 0.0253 \\ 
                & 2000 && 0.0130 & 0.0165 & 0.0161 && 0.0139 & 0.0177 & 0.0176 \\ 
    $\beta_{2}$ & 1000 && 0.0505 & 0.0623 & 0.0596 && 0.0548 & 0.0814 & 0.0652 \\ 
                & 2000 && 0.0319 & 0.0404 & 0.0390 && 0.0343 & 0.0440 & 0.0428 \\ 
    $(\underline{R})_{21}$ & 1000 && 0.0389 & 0.0477 & 0.0459 && 0.0399 & 0.0506 & 0.0464 \\ 
                           & 2000 && 0.0275 & 0.0345 & 0.0332 && 0.0279 & 0.0357 & 0.0331 \\ 
   \hline
  \end{tabular}
\end{center}
\end{table}

\begin{table}[htbp]
\renewcommand\arraystretch{1.2}
\caption{\label{table general QMLE m2r0s1Bkm} Biases, ESDs, and ASDs of $\widehat{\bm{\theta}}_{\text{G}}$ in \eqref{est general QMLE} under DGP3 when $m=2$, $(r,s) = (0,1)$, and the innovations $\{\bm{\eta}_{t}\}$ follow a multivariate normal or Student's $t_5$ distribution.}
\begin{center}
  \begin{tabular}{crrrrrrrrrr}
    \hline\hline
    &&&\multicolumn{3}{c}{Normal}&&\multicolumn{3}{c}{$t_5$}\\
    \cline{4-6}\cline{8-10}
    &\multicolumn{1}{c}{$n$}&&\multicolumn{1}{c}{Bias}&\multicolumn{1}{c}{ESD}&\multicolumn{1}{c}{ASD}&&\multicolumn{1}{c}{Bias}&\multicolumn{1}{c}{ESD}&\multicolumn{1}{c}{ASD}\\
    \hline
    $\underline{\omega}_{1}$ & 1000 && 0.0367 & 0.0457 & 0.0448 && 0.0495 & 0.0617 & 0.0672 \\ 
                             & 2000 && 0.0255 & 0.0319 & 0.0317 && 0.0301 & 0.0378 & 0.0500 \\ 
    $\underline{\omega}_{2}$ & 1000 && 0.0374 & 0.0460 & 0.0449 && 0.0504 & 0.0617 & 0.0670 \\ 
                             & 2000 && 0.0242 & 0.0308 & 0.0316 && 0.0316 & 0.0397 & 0.0498 \\ 
    $\gamma_{1}$ & 1000 && 0.0408 & 0.0531 & 0.0543 && 0.0693 & 0.0978 & 0.0906 \\ 
                 & 2000 && 0.0245 & 0.0311 & 0.0333 && 0.0476 & 0.0681 & 0.0606 \\ 
    $\varphi_{1}$ & 1000 && 0.0483 & 0.0658 & 0.0659 && 0.0901 & 0.1271 & 0.1084 \\ 
                  & 2000 && 0.0292 & 0.0371 & 0.0403 && 0.0560 & 0.0753 & 0.0772 \\ 
    $(G_{1,1})_{11}$ & 1000 && 0.0127 & 0.0160 & 0.0160 && 0.0196 & 0.0246 & 0.0241 \\ 
                    & 2000 && 0.0085 & 0.0106 & 0.0109 && 0.0144 & 0.0182 & 0.0172 \\ 
    $(G_{1,1})_{21}$ & 1000 && 0.0130 & 0.0165 & 0.0161 && 0.0201 & 0.0248 & 0.0238 \\ 
                    & 2000 && 0.0086 & 0.0108 & 0.0109 && 0.0136 & 0.0176 & 0.0169 \\ 
    $(G_{1,1})_{12}$ & 1000 && 0.0124 & 0.0155 & 0.0158 && 0.0189 & 0.0243 & 0.0235 \\ 
                    & 2000 && 0.0082 & 0.0104 & 0.0108 && 0.0141 & 0.0185 & 0.0171 \\ 
    $(G_{1,1})_{22}$ & 1000 && 0.0133 & 0.0168 & 0.0159 && 0.0195 & 0.0248 & 0.0245 \\ 
                    & 2000 && 0.0086 & 0.0108 & 0.0108 && 0.0139 & 0.0177 & 0.0169 \\ 
    $(G_{2,1})_{11}$ & 1000 && 0.0186 & 0.0243 & 0.0237 && 0.0328 & 0.0595 & 0.0476 \\ 
                    & 2000 && 0.0121 & 0.0152 & 0.0154 && 0.0204 & 0.0273 & 0.0284 \\ 
    $(G_{2,1})_{21}$ & 1000 && 0.0185 & 0.0240 & 0.0240 && 0.0374 & 0.0830 & 0.0572 \\ 
                    & 2000 && 0.0117 & 0.0147 & 0.0154 && 0.0226 & 0.0403 & 0.0315 \\ 
    $(G_{2,1})_{12}$ & 1000 && 0.0194 & 0.0248 & 0.0242 && 0.0358 & 0.0633 & 0.0512 \\ 
                    & 2000 && 0.0120 & 0.0152 & 0.0155 && 0.0223 & 0.0352 & 0.0316 \\ 
    $(G_{2,1})_{22}$ & 1000 && 0.0186 & 0.0245 & 0.0244 && 0.0354 & 0.0720 & 0.0526 \\ 
                    & 2000 && 0.0121 & 0.0153 & 0.0156 && 0.0214 & 0.0314 & 0.0313 \\ 
    $\beta_{1}$ & 1000 && 0.0193 & 0.0241 & 0.0235 && 0.0201 & 0.0256 & 0.0251 \\ 
                & 2000 && 0.0129 & 0.0164 & 0.0160 && 0.0139 & 0.0176 & 0.0174 \\ 
    $\beta_{2}$ & 1000 && 0.0486 & 0.0612 & 0.0592 && 0.0536 & 0.0790 & 0.0632 \\ 
                & 2000 && 0.0313 & 0.0402 & 0.0384 && 0.0347 & 0.0438 & 0.0420 \\ 
    $(\underline{R})_{21}$ & 1000 && 0.0382 & 0.0472 & 0.0462 && 0.0398 & 0.0501 & 0.0465 \\ 
                           & 2000 && 0.0270 & 0.0342 & 0.0333 && 0.0276 & 0.0353 & 0.0332 \\ 
   \hline
  \end{tabular}
\end{center}
\end{table}

\begin{table}[htbp]
  \renewcommand\arraystretch{1.2}
  \caption{\label{table general QMLE m5r1s0Bkm} Biases, ESDs, and ASDs of $\widehat{\bm{\theta}}_{\text{G}}$ in \eqref{est general QMLE} under DGP4 when $m=5$, $(r,s) = (1,0)$, and the innovations $\{\bm{\eta}_{t}\}$ follow a multivariate normal or Student's $t_5$ distribution.}
  \begin{center}
    \begin{tabular}{crrrrrrrrrr}
      \hline\hline
      &&&\multicolumn{3}{c}{Normal}&&\multicolumn{3}{c}{$t_5$}\\
      \cline{4-6}\cline{8-10}
      &\multicolumn{1}{c}{$n$}&&\multicolumn{1}{c}{Bias}&\multicolumn{1}{c}{ESD}&\multicolumn{1}{c}{ASD}&&\multicolumn{1}{c}{Bias}&\multicolumn{1}{c}{ESD}&\multicolumn{1}{c}{ASD}\\
    \hline
    $\underline{\omega}_{1}$ & 1000 && 0.0454 & 0.0570 & 0.0540 && 0.0490 & 0.0615 & 0.0648 \\ 
                             & 2000 && 0.0319 & 0.0397 & 0.0379 && 0.0312 & 0.0384 & 0.0472 \\ 
    $\underline{\omega}_{2}$ & 1000 && 0.0456 & 0.0568 & 0.0540 && 0.0488 & 0.0614 & 0.0637 \\ 
                             & 2000 && 0.0317 & 0.0395 & 0.0378 && 0.0299 & 0.0374 & 0.0469 \\ 
    $\underline{\omega}_{3}$ & 1000 && 0.0459 & 0.0583 & 0.0539 && 0.0488 & 0.0611 & 0.0645 \\ 
                             & 2000 && 0.0311 & 0.0390 & 0.0378 && 0.0315 & 0.0396 & 0.0473 \\ 
    $\underline{\omega}_{4}$ & 1000 && 0.0448 & 0.0559 & 0.0537 && 0.0491 & 0.0612 & 0.0642 \\ 
                             & 2000 && 0.0307 & 0.0380 & 0.0377 && 0.0316 & 0.0409 & 0.0468 \\ 
    $\underline{\omega}_{5}$ & 1000 && 0.0466 & 0.0581 & 0.0535 && 0.0491 & 0.0624 & 0.0644 \\ 
                             & 2000 && 0.0308 & 0.0384 & 0.0376 && 0.0318 & 0.0400 & 0.0471 \\ 
    $\lambda_{1}$ & 1000 && 0.0207 & 0.0250 & 0.0245 && 0.0303 & 0.0386 & 0.0359 \\ 
                  & 2000 && 0.0134 & 0.0160 & 0.0161 && 0.0203 & 0.0261 & 0.0247 \\ 
    $(G_{0,1})_{11}$ & 1000 && 0.0086 & 0.0107 & 0.0106 && 0.0135 & 0.0169 & 0.0152 \\ 
                    & 2000 && 0.0060 & 0.0075 & 0.0074 && 0.0101 & 0.0127 & 0.0112 \\ 
    $(G_{0,1})_{12}$ & 1000 && 0.0081 & 0.0103 & 0.0106 && 0.0138 & 0.0177 & 0.0152 \\ 
                    & 2000 && 0.0058 & 0.0073 & 0.0074 && 0.0097 & 0.0123 & 0.0111 \\ 
    $(G_{0,1})_{13}$ & 1000 && 0.0091 & 0.0113 & 0.0106 && 0.0135 & 0.0171 & 0.0153 \\ 
                    & 2000 && 0.0060 & 0.0076 & 0.0074 && 0.0095 & 0.0122 & 0.0111 \\ 
    $(G_{0,1})_{14}$ & 1000 && 0.0087 & 0.0110 & 0.0106 && 0.0137 & 0.0174 & 0.0152 \\ 
                    & 2000 && 0.0060 & 0.0075 & 0.0074 && 0.0096 & 0.0123 & 0.0112 \\ 
    $(G_{0,1})_{15}$ & 1000 && 0.0086 & 0.0107 & 0.0106 && 0.0138 & 0.0176 & 0.0152 \\ 
                    & 2000 && 0.0060 & 0.0076 & 0.0074 && 0.0101 & 0.0129 & 0.0112 \\ 
    $\beta_{1}$ & 1000 && 0.0089 & 0.0112 & 0.0109 && 0.0100 & 0.0123 & 0.0119 \\ 
                & 2000 && 0.0063 & 0.0078 & 0.0076 && 0.0071 & 0.0087 & 0.0084 \\ 
    $\beta_{2}$ & 1000 && 0.0228 & 0.0237 & 0.0241 && 0.0243 & 0.0274 & 0.0263 \\ 
                & 2000 && 0.0144 & 0.0164 & 0.0165 && 0.0156 & 0.0185 & 0.0184 \\ 
    $(\underline{R})_{21}$ & 1000 && 0.0337 & 0.0425 & 0.0409 && 0.0350 & 0.0437 & 0.0415 \\ 
                           & 2000 && 0.0235 & 0.0300 & 0.0294 && 0.0244 & 0.0309 & 0.0302 \\ 
    $(\underline{R})_{31}$ & 1000 && 0.0351 & 0.0442 & 0.0411 && 0.0353 & 0.0443 & 0.0415 \\ 
                           & 2000 && 0.0242 & 0.0306 & 0.0294 && 0.0253 & 0.0315 & 0.0300 \\ 
    $(\underline{R})_{41}$ & 1000 && 0.0342 & 0.0429 & 0.0409 && 0.0342 & 0.0430 & 0.0414 \\ 
                           & 2000 && 0.0244 & 0.0306 & 0.0294 && 0.0246 & 0.0310 & 0.0301 \\ 
    $(\underline{R})_{51}$ & 1000 && 0.0345 & 0.0439 & 0.0408 && 0.0366 & 0.0450 & 0.0414 \\ 
                           & 2000 && 0.0241 & 0.0301 & 0.0294 && 0.0249 & 0.0312 & 0.0301 \\ 
   \hline
    \end{tabular}
  \end{center}
\end{table}

\begin{table}[htbp]
  \renewcommand\arraystretch{1.2}
  \caption{\label{table lowrank QMLE m5r1s0Bkm} Biases, ESDs, and ASDs of $\widehat{\bm{\theta}}_{\text{LR}}$ in \eqref{est lowrank QMLE} under DGP4 when $m=5$, $(r,s) = (1,0)$, and the innovations $\{\bm{\eta}_{t}\}$ follow a multivariate normal or Student's $t_5$ distribution.}
  \begin{center}
    \begin{tabular}{crrrrrrrrrr}
      \hline\hline
      &&&\multicolumn{3}{c}{Normal}&&\multicolumn{3}{c}{$t_5$}\\
      \cline{4-6}\cline{8-10}
      &\multicolumn{1}{c}{$n$}&&\multicolumn{1}{c}{Bias}&\multicolumn{1}{c}{ESD}&\multicolumn{1}{c}{ASD}&&\multicolumn{1}{c}{Bias}&\multicolumn{1}{c}{ESD}&\multicolumn{1}{c}{ASD}\\
    \hline
    $\underline{\omega}_{1}$ & 1000 && 0.0444 & 0.0557 & 0.0533 && 0.0487 & 0.0614 & 0.0657 \\ 
                             & 2000 && 0.0317 & 0.0395 & 0.0377 && 0.0306 & 0.0382 & 0.0478 \\ 
    $\underline{\omega}_{2}$ & 1000 && 0.0453 & 0.0559 & 0.0533 && 0.0475 & 0.0608 & 0.0647 \\ 
                             & 2000 && 0.0315 & 0.0392 & 0.0375 && 0.0293 & 0.0373 & 0.0475 \\ 
    $\underline{\omega}_{3}$ & 1000 && 0.0458 & 0.0579 & 0.0532 && 0.0477 & 0.0605 & 0.0656 \\ 
                             & 2000 && 0.0310 & 0.0387 & 0.0376 && 0.0311 & 0.0395 & 0.0481 \\ 
    $\underline{\omega}_{4}$ & 1000 && 0.0444 & 0.0549 & 0.0529 && 0.0484 & 0.0607 & 0.0653 \\ 
                             & 2000 && 0.0304 & 0.0377 & 0.0375 && 0.0310 & 0.0403 & 0.0475 \\ 
    $\underline{\omega}_{5}$ & 1000 && 0.0457 & 0.0568 & 0.0527 && 0.0479 & 0.0624 & 0.0655 \\ 
                             & 2000 && 0.0308 & 0.0385 & 0.0374 && 0.0315 & 0.0400 & 0.0477 \\ 
    $\lambda_{1}$ & 1000 && 0.0200 & 0.0239 & 0.0232 && 0.0294 & 0.0390 & 0.0341 \\ 
                  & 2000 && 0.0131 & 0.0158 & 0.0156 && 0.0200 & 0.0263 & 0.0242 \\ 
    $(G_{0,1})_{11}$ & 1000 && 0.0054 & 0.0067 & 0.0065 && 0.0088 & 0.0111 & 0.0098 \\ 
                    & 2000 && 0.0036 & 0.0045 & 0.0045 && 0.0062 & 0.0078 & 0.0072 \\ 
    $(G_{0,1})_{12}$ & 1000 && 0.0054 & 0.0067 & 0.0065 && 0.0090 & 0.0114 & 0.0099 \\ 
                    & 2000 && 0.0038 & 0.0048 & 0.0045 && 0.0063 & 0.0080 & 0.0073 \\ 
    $(G_{0,1})_{13}$ & 1000 && 0.0054 & 0.0068 & 0.0066 && 0.0087 & 0.0111 & 0.0099 \\ 
                    & 2000 && 0.0036 & 0.0046 & 0.0046 && 0.0062 & 0.0079 & 0.0072 \\ 
    $(G_{0,1})_{14}$ & 1000 && 0.0054 & 0.0068 & 0.0066 && 0.0088 & 0.0115 & 0.0100 \\ 
                    & 2000 && 0.0037 & 0.0046 & 0.0046 && 0.0061 & 0.0077 & 0.0073 \\ 
    $(G_{0,1})_{15}$ & 1000 && 0.0055 & 0.0069 & 0.0067 && 0.0091 & 0.0118 & 0.0100 \\ 
                    & 2000 && 0.0038 & 0.0048 & 0.0046 && 0.0066 & 0.0083 & 0.0074 \\ 
    $\beta_{1}$ & 1000 && 0.0089 & 0.0112 & 0.0109 && 0.0101 & 0.0125 & 0.0121 \\ 
                & 2000 && 0.0063 & 0.0078 & 0.0076 && 0.0072 & 0.0089 & 0.0086 \\ 
    $\beta_{2}$ & 1000 && 0.0227 & 0.0236 & 0.0240 && 0.0245 & 0.0278 & 0.0268 \\ 
                & 2000 && 0.0144 & 0.0164 & 0.0165 && 0.0160 & 0.0189 & 0.0187 \\ 
    $(\underline{R})_{21}$ & 1000 && 0.0336 & 0.0423 & 0.0408 && 0.0352 & 0.0439 & 0.0416 \\ 
                           & 2000 && 0.0234 & 0.0300 & 0.0294 && 0.0247 & 0.0310 & 0.0303 \\ 
    $(\underline{R})_{31}$ & 1000 && 0.0350 & 0.0442 & 0.0410 && 0.0350 & 0.0440 & 0.0417 \\ 
                           & 2000 && 0.0242 & 0.0306 & 0.0293 && 0.0252 & 0.0314 & 0.0302 \\ 
    $(\underline{R})_{41}$ & 1000 && 0.0342 & 0.0430 & 0.0409 && 0.0344 & 0.0431 & 0.0416 \\ 
                           & 2000 && 0.0245 & 0.0307 & 0.0293 && 0.0247 & 0.0311 & 0.0302 \\ 
    $(\underline{R})_{51}$ & 1000 && 0.0346 & 0.0439 & 0.0408 && 0.0364 & 0.0449 & 0.0417 \\ 
                           & 2000 && 0.0242 & 0.0302 & 0.0294 && 0.0251 & 0.0313 & 0.0302 \\ 
   \hline
    \end{tabular}
  \end{center}
\end{table}

\begin{figure}[htbp]
\centering
\includegraphics[width=4.4in]{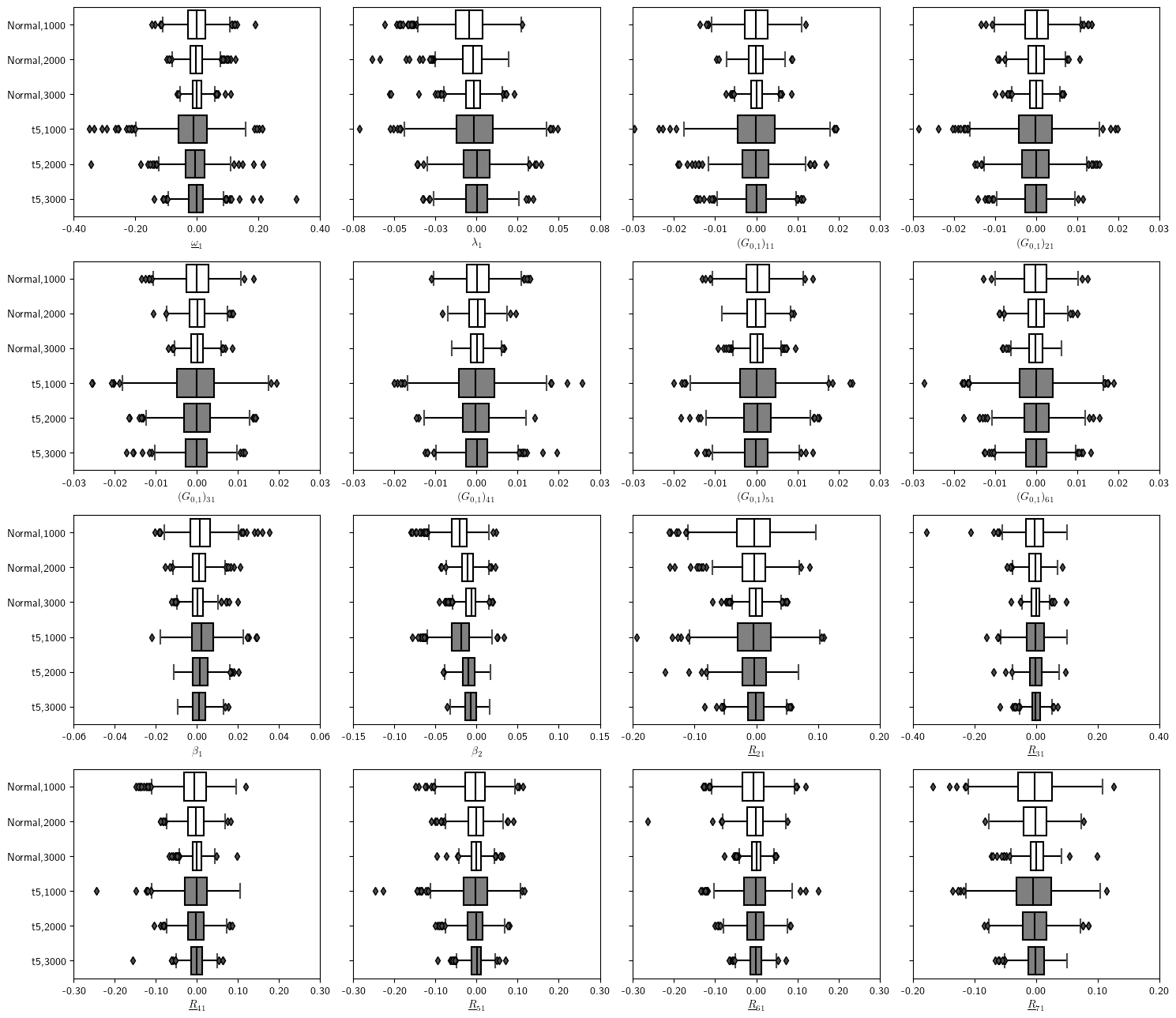}
\caption{\label{figure simulation_m20} Box plots of $\widehat{\bm{\theta}}_{\text{LR}}-\bm{\theta}_{0}$ under DGP5 when $m=20$, $(r,s) = (1,0)$, and the innovations $\{\bm{\eta}_{t}\}$ follow a multivariate normal or Student's $t_5$ distribution.}
\end{figure}

\begin{table}[htbp]
\renewcommand\arraystretch{1.2}
\caption{\label{table BIC} Percentages of underfitted, correct selected, and overfitted cases by the BIC using the QMLEs $\widehat{\bm{\theta}}_{\text{G}}$ in \eqref{est general QMLE} and $\widehat{\bm{\theta}}_{\text{LR}}$ in \eqref{est lowrank QMLE} under DGP1--DGP4, when the innovations $\{\bm{\eta}_{t}\}$ follow a multivariate normal or Student's $t_5$ distribution.}
\begin{center}
 \begin{tabular}{cccccrrrrrrrr}
   \hline\hline
   &&&&&&\multicolumn{3}{c}{Normal}&&\multicolumn{3}{c}{$t_5$}\\
   \cline{7-9}\cline{11-13}
   \multicolumn{1}{c}{DGP}&\multicolumn{1}{c}{$m$}&\multicolumn{1}{c}{$(r_{0},s_{0})$}&\multicolumn{1}{c}{QMLE}&\multicolumn{1}{c}{$n$}&&\multicolumn{1}{c}{Under}&\multicolumn{1}{c}{Exact}&\multicolumn{1}{c}{Over}&&\multicolumn{1}{c}{Under}&\multicolumn{1}{c}{Exact}&\multicolumn{1}{c}{Over}\\
   \hline
   DGP1 & 2 & $(1,0)$ & $\widehat{\bm{\theta}}_{\text{G}}$ 
   & 1000 && 0.000 & 1.000 & 0.000 && 0.022 & 0.924 & 0.054 \\
   &&&                 
   & 2000 && 0.000 & 1.000 & 0.000 && 0.021 & 0.930 & 0.049 \\
   \cline{4-13}
   &&& $\widehat{\bm{\theta}}_{\text{LR}}$ 
   & 1000 && 0.000 & 1.000 & 0.000 && 0.003 & 0.946 & 0.051 \\
   &&&                
   & 2000 && 0.000 & 1.000 & 0.000 && 0.000 & 0.951 & 0.049 \\
   \hline
   DGP2 & 2 & $(2,0)$ & $\widehat{\bm{\theta}}_{\text{G}}$ 
   & 1000 && 0.144 & 0.856 & \multicolumn{1}{c}{-} && 0.125 & 0.875 & \multicolumn{1}{c}{-} \\
   &&&                 
   & 2000 && 0.000 & 1.000 & \multicolumn{1}{c}{-} && 0.004 & 0.996 & \multicolumn{1}{c}{-} \\
   \cline{4-13}
   &&& $\widehat{\bm{\theta}}_{\text{LR}}$ 
   & 1000 && 0.114 & 0.886 & \multicolumn{1}{c}{-} && 0.140 & 0.860 & \multicolumn{1}{c}{-} \\
   &&&                
   & 2000 && 0.000 & 1.000 & \multicolumn{1}{c}{-} && 0.002 & 0.998 & \multicolumn{1}{c}{-} \\
   \hline
   DGP3 & 2 & $(0,1)$ & $\widehat{\bm{\theta}}_{\text{G}}$ 
   & 1000 && 0.169 & 0.831 & \multicolumn{1}{c}{-} && 0.121 & 0.879 & \multicolumn{1}{c}{-} \\
   &&&                 
   & 2000 && 0.001 & 0.999 & \multicolumn{1}{c}{-} && 0.005 & 0.995 & \multicolumn{1}{c}{-} \\
   \cline{4-13}
   &&& $\widehat{\bm{\theta}}_{\text{LR}}$ 
   & 1000 && 0.989 & 0.011 & \multicolumn{1}{c}{-} && 0.838 & 0.162 & \multicolumn{1}{c}{-} \\
   &&&                
   & 2000 && 0.546 & 0.454 & \multicolumn{1}{c}{-} && 0.343 & 0.657 & \multicolumn{1}{c}{-} \\
   \hline
   DGP4 & 5 & $(1,0)$ & $\widehat{\bm{\theta}}_{\text{G}}$ 
   & 1000 && 0.000 & 1.000 & 0.000 && 0.001 & 0.994 & 0.005 \\
   &&&                
   & 2000 && 0.000 & 1.000 & 0.000 && 0.000 & 0.997 & 0.003 \\
   \cline{4-13}
   &&& $\widehat{\bm{\theta}}_{\text{LR}}$ 
   & 1000 && 0.000 & 1.000 & 0.000 && 0.001 & 0.932 & 0.067 \\
   &&&                
   & 2000 && 0.000 & 1.000 & 0.000 && 0.000 & 0.940 & 0.060 \\
   \hline
 \end{tabular}
\end{center}
\end{table}

\begin{figure}[htbp]
\centering
\includegraphics[width=6in]{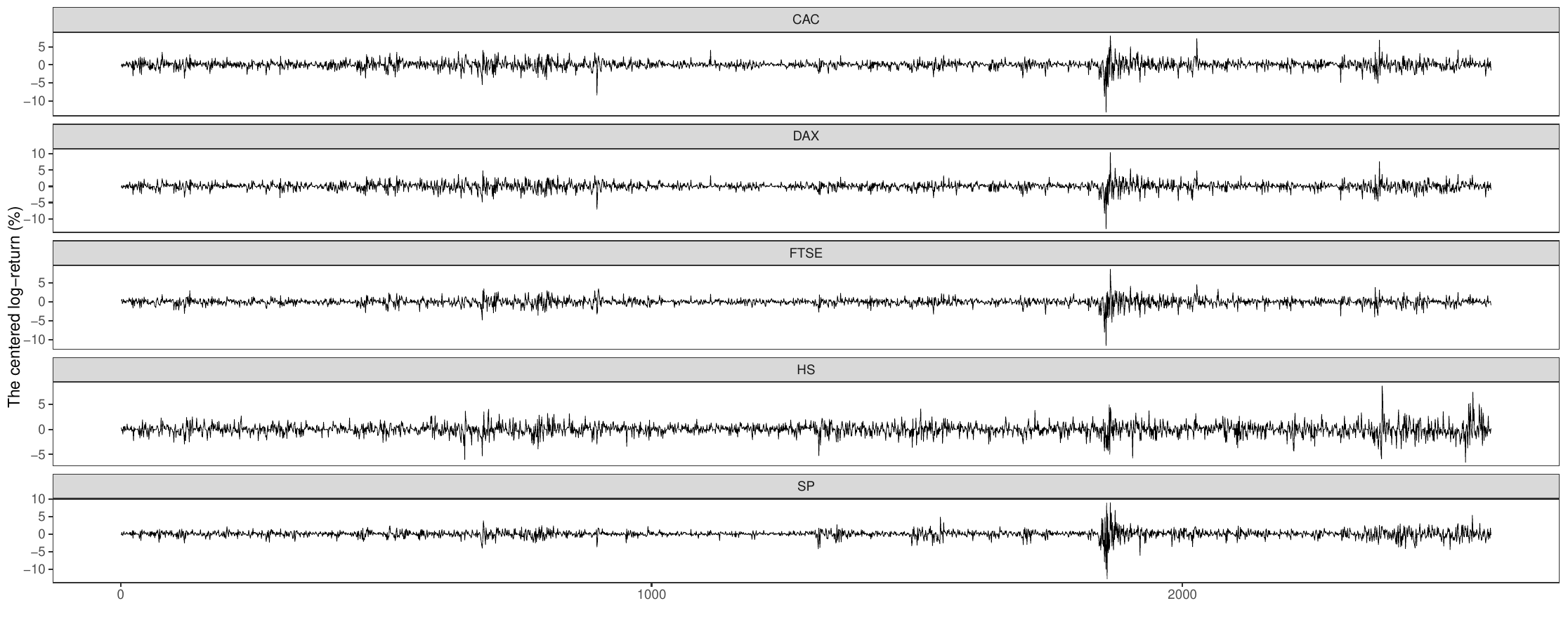}
\caption{\label{figure time plot} Time plot for centered daily log-returns in percentage of 
French Cotation Automatique Continue Index (CAC), Deutsche Aktien Index (DAX), Financial Times 100 Stock Index (FTSE), Hang Seng Index (HS), Standard \& Poor 500 Index (SP) from January 3, 2013, to December 30, 2022.}
\end{figure}

\begin{table}[htbp]
\renewcommand\arraystretch{1.2}
  \caption{\label{table major stocks summary statistics} Summary statistics for $\{\mathbf{y}_{t}\}$.}
  \begin{center}
    \begin{tabular}{crrrrrrr}
      \hline\hline
      & Mean & Median & Min & Max & Std.Dev. & Skewness & Kurtosis \\
      \hline
      CAC  & 0.000 & 0.015  & -13.121 & 8.034  & 1.197 & -0.802 & 13.887 \\
      DAX  & 0.000 & 0.014  & -13.074 & 10.395 & 1.216 & -0.563 & 13.369 \\
      FTSE & 0.000 & 0.011  & -11.520 & 8.659  & 0.980 & -0.870 & 16.670 \\
      HS   & 0.000 & 0.004  & -6.563  & 8.697  & 1.188 & 0.018  & 7.651 \\
      SP   & 0.000 & -0.011 & -12.803 & 8.931  & 1.088 & -0.871 & 20.823 \\
      \hline
    \end{tabular}
  \end{center}
\end{table}

\begin{table}[htbp]
  \renewcommand\arraystretch{1.2}
  \caption{\label{table major stocks fitted coefficients} Summary information of fitted coefficients for model \eqref{fitted model general}.} 
  \begin{center}
    \begin{tabular}{crrrr}
      \hline\hline
      &\multicolumn{1}{c}{Estimate}&\multicolumn{1}{c}{Std. Error}&\multicolumn{1}{c}{$z$ statistic}&\multicolumn{1}{c}{$p$-value}\\ 
      \hline
      $\underline{\omega}_{1}$ & 1.196 & 0.114 & 10.513 & 0.000 \\ 
      $\underline{\omega}_{2}$ & 1.326 & 0.110 & 12.106 & 0.000 \\ 
      $\underline{\omega}_{3}$ & 1.192 & 0.120 & 9.914 & 0.000 \\ 
      $\underline{\omega}_{4}$ & 1.282 & 0.146 & 8.794 & 0.000 \\ 
      $\underline{\omega}_{5}$ & 1.509 & 0.154 & 9.816 & 0.000 \\ 
      $\lambda_{1}$ & 0.979 & 0.007 & 138.401 & 0.000 \\ 
      $\lambda_{2}$ & 0.752 & 0.074 & 10.178 & 0.000 \\ 
      $\beta_{1}$ & 0.006 & 0.002 & 3.889 & 0.000 \\ 
      $\beta_{2}$ & 0.980 & 0.006 & 174.897 & 0.000 \\ 
      \hline
    \end{tabular}
  \end{center}
\end{table}

\begin{figure}[htbp]
  \centering
  \includegraphics[width=6.3in]{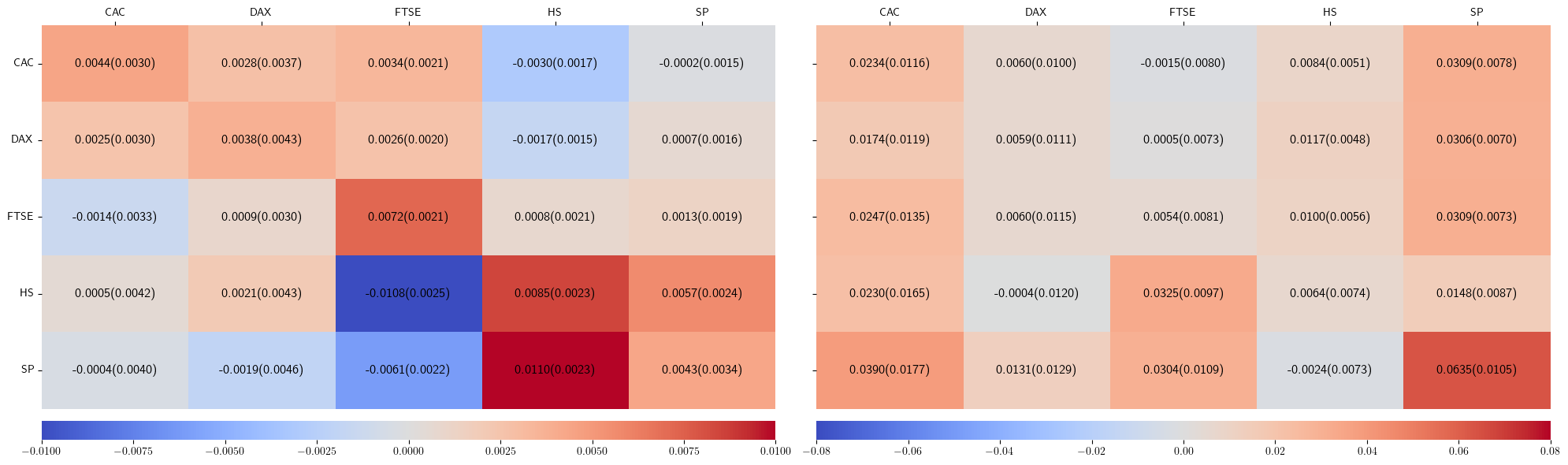}
  \caption{\label{figure major stocks Ghat} The estimates for $G_{0,1}$ (left column) and $G_{0,2}$ (right column) based on the QMLE $\widehat{\bm{\theta}}_{\text{G}}$ in \eqref{est general QMLE}, and the values in parentheses are the corresponding standard errors.}
\end{figure}

\begin{figure}[htbp]
  \centering
  \includegraphics[width=6in]{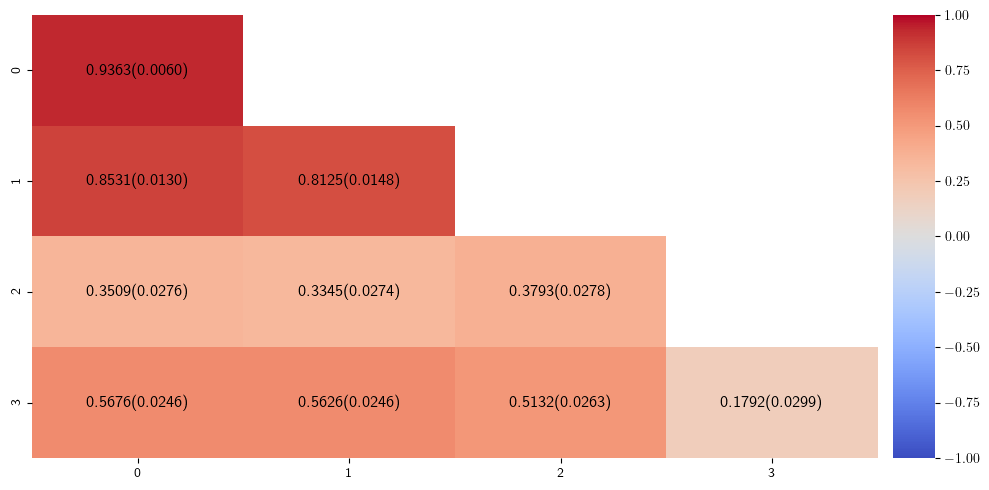}
  \caption{\label{figure major stocks Rulinehat} The estimates for the lower triangle of $\underline{R}$ based on the QMLE $\widehat{\bm{\theta}}_{\text{G}}$ in \eqref{est general QMLE}, and the values in parentheses are the corresponding standard errors.}
\end{figure}

\begin{figure}[htbp]
  \centering
  \includegraphics[width=6.3in]{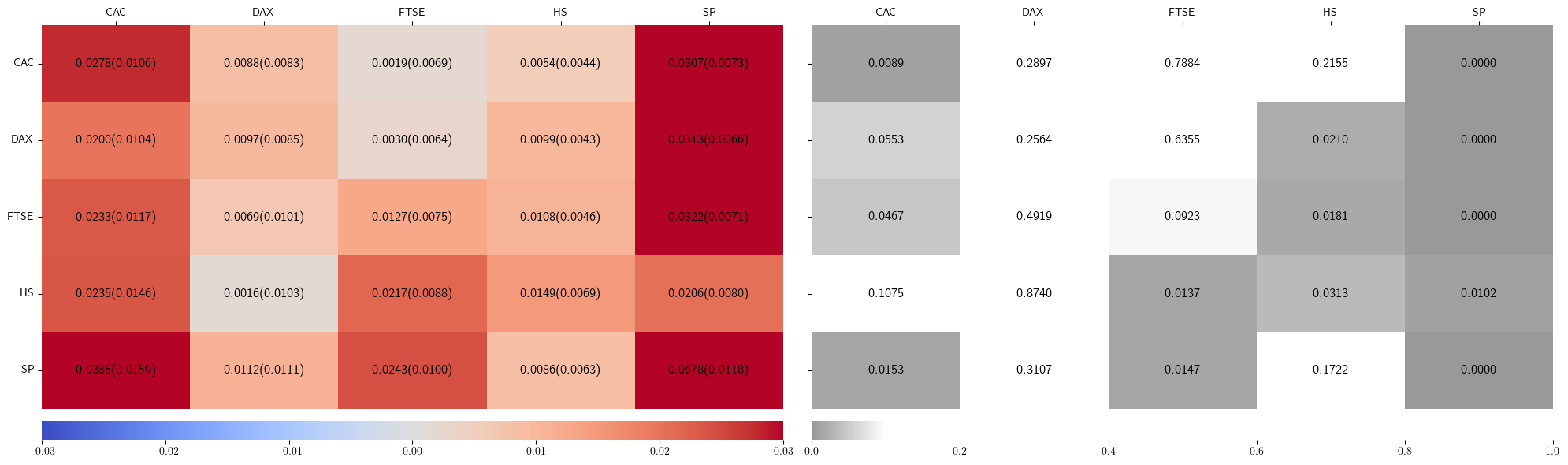}
  \caption{\label{figure major stocks Phihat} The estimates (left column) of $\Phi_{1}$ and $p$-values (right column) of significance tests for parameters in $\Phi_{1}$ based on the QMLE $\widehat{\bm{\theta}}_{\text{G}}$ in \eqref{est general QMLE}, and the values in parentheses are the corresponding standard errors.}
\end{figure}

\begin{table}[htbp]
  \centering
  \renewcommand\arraystretch{1.2}
  \caption{\label{table major stocks comparison methods}The methods for comparison.}
  \scalebox{0.7}{
  \begin{threeparttable}
  \begin{tabular}{ll}
      \toprule\toprule %
      \multicolumn{1}{l}{Model}         & \multicolumn{1}{l}{Description}     \\  
      \midrule
      \multicolumn{2}{c}{Variance-correlation models} \\ \hline
      CCC  & CCC model of \cite{bollerslev1990ModellingTC} \\
      DCC-T & DCC model of \cite{tse2002multivariate} with scalar coefficients in $R_{t}$ \\
      DCC-E & DCC model of \cite{engle2002dynamic} with scalar coefficients in $R_{t}$\\
      DCC-A & Corrected DCC model of \cite{aielli2013dynamic} with scalar coefficients in $R_{t}$\\
      D-DCC-T & Extension of DCC-T model with diagonal coefficient matrices in $R_{t}$ \\
      D-DCC-E & Extension of DCC-E model with diagonal coefficient matrices in $R_{t}$ \\
      D-DCC-A & Extension of DCC-A model with diagonal coefficient matrices in $R_{t}$ \\
      \midrule
      \multicolumn{2}{c}{Covariance models}\\ \hline
      S-BEKK-F & Scalar BEKK model estimated by full QMLE \\
      D-BEKK-F & Diagonal BEKK model estimated by full QMLE \\
      S-BEKK-VT & Scalar BEKK model estimated by VT method \\
      D-BEKK-VT & Diagonal BEKK model estimated by VT method \\
      \bottomrule
  \end{tabular} 
  \begin{tablenotes}
    \footnotesize
    \item $R_{t}$ is the conditional correlation matrix of $\mathbf{y}_{t}$ given $\mathcal{F}_{t-1}$. 
    \item The scalar and diagonal BEKK models are defined by \cite{ding2001large}. 
  \end{tablenotes}
  \end{threeparttable}
  }
\end{table}%

\begin{table}[htbp]  
	\centering
	\renewcommand\arraystretch{1.2}
	\caption{\label{table major stocks VaR} 
		ECRs(\%), PEs, and $p$-values of CC and DQ tests 
		for the one-step-ahead forecasts in Section \ref{subsection example 1} by 12 methods at lower and upper $1\%, 2.5\%, 5\%$ quantile levels, where MGARCH-G denotes the proposed model in \eqref{model Rt SGARCH(r,s)}--\eqref{model Phii in Dt} fitted by the QMLE $\widehat{\bm{\theta}}_{\text{G}}$ in \eqref{est general QMLE}. The ECR closest to the nominal level $\tau$ and the smallest PE are marked in bold.}
	\scalebox{0.7}{
		\resizebox{\linewidth}{!}{
			\begin{tabular}{lrrrrrrrrrrrr}
				\hline\hline
				& \multicolumn{4}{c}{$\tau=1\%$}  & \multicolumn{4}{c}{$\tau=2.5\%$}       & \multicolumn{4}{c}{$\tau=5\%$}       \\
				\cmidrule(r){2-5}  \cmidrule(r){6-9} \cmidrule(r){10-13} 
				& ECR & PE & CC & DQ & ECR & PE & CC & DQ & ECR & PE & CC & DQ  \\
				\hline
				MGARCH-G & \textbf{1.16} & \textbf{0.37} & 0.87  & 1.00  & \textbf{3.68} & \textbf{1.71} & 0.26  & 0.57  & \textbf{6.38} & \textbf{1.44} & 0.00  & 0.00 \\
				CCC   & 1.93  & 2.13  & 0.14  & 0.39  & 4.26  & 2.56  & 0.04  & 0.06  & 7.74  & 2.86  & 0.00  & 0.00 \\
				DCC-T & 1.55  & 1.25  & 0.45  & 0.87  & 4.64  & 3.12  & 0.01  & 0.03  & 7.16  & 2.25  & 0.02  & 0.03 \\
				D-DCC-T & 1.74  & 1.69  & 0.26  & 0.66  & 4.45  & 2.84  & 0.04  & 0.06  & 7.35  & 2.45  & 0.00  & 0.00 \\
				DCC-E & 1.74  & 1.69  & 0.26  & 0.66  & 4.84  & 3.40  & 0.01  & 0.01  & 7.35  & 2.45  & 0.01  & 0.04 \\
				D-DCC-E & 2.13  & 2.58  & 0.06  & 0.17  & 5.22  & 3.96  & 0.00  & 0.00  & 7.74  & 2.86  & 0.00  & 0.00 \\
				DCC-A & 1.74  & 1.69  & 0.26  & 0.66  & 5.03  & 3.68  & 0.00  & 0.00  & 7.35  & 2.45  & 0.01  & 0.04 \\
				D-DCC-A & 1.74  & 1.69  & 0.26  & 0.66  & 5.03  & 3.68  & 0.00  & 0.00  & 7.54  & 2.65  & 0.00  & 0.01 \\
				S-BEKK-F & 2.13  & 2.58  & 0.06  & 0.17  & 5.42  & 4.25  & 0.00  & 0.00  & 7.35  & 2.45  & 0.00  & 0.00 \\
				D-BEKK-F & 2.32  & 3.02  & 0.03  & 0.06  & 5.61  & 4.53  & 0.00  & 0.00  & 8.32  & 3.46  & 0.00  & 0.00 \\
				S-BEKK-VT & 2.13  & 2.58  & 0.06  & 0.03  & 5.22  & 3.96  & 0.00  & 0.00  & 7.54  & 2.65  & 0.00  & 0.00 \\
				D-BEKK-VT & 2.51  & 3.46  & 0.01  & 0.01  & 5.61  & 4.53  & 0.00  & 0.00  & 7.35  & 2.45  & 0.00  & 0.00 \\
				\hline
				& \multicolumn{4}{c}{$\tau=95\%$}  & \multicolumn{4}{c}{$\tau=97.5\%$}       & \multicolumn{4}{c}{$\tau=99\%$}        \\
				\cmidrule(r){2-5}  \cmidrule(r){6-9} \cmidrule(r){10-13} 
				& ECR & PE & CC & DQ & ECR & PE & CC & DQ & ECR & PE & CC & DQ  \\
				\hline
				MGARCH-G & \textbf{95.55} & \textbf{0.58} & 0.84  & 0.87  & \textbf{96.91} & \textbf{0.87} & 0.57  & 0.76  & \textbf{98.45} & \textbf{1.25} & 0.14  & 0.06 \\
				CCC   & 93.04 & 2.05  & 0.15  & 0.27  & 95.94 & 2.27  & 0.05  & 0.20  & 98.26 & 1.69  & 0.26  & 0.66 \\
				DCC-T & 93.04 & 2.05  & 0.14  & 0.16  & 96.33 & 1.71  & 0.13  & 0.46  & \textbf{98.45} & \textbf{1.25} & 0.45  & 0.87 \\
				D-DCC-T & 93.23 & 1.85  & 0.12  & 0.16  & 96.52 & 1.43  & 0.21  & 0.49  & \textbf{98.45} & \textbf{1.25} & 0.45  & 0.87 \\
				DCC-E & 92.84 & 2.25  & 0.10  & 0.14  & 95.55 & 2.84  & 0.04  & 0.11  & \textbf{98.45} & \textbf{1.25} & 0.45  & 0.87 \\
				D-DCC-E & 92.26 & 2.86  & 0.03  & 0.06  & 95.74 & 2.56  & 0.03  & 0.05  & 97.87 & 2.58  & 0.06  & 0.03 \\
				DCC-A & 93.04 & 2.05  & 0.14  & 0.24  & 95.74 & 2.56  & 0.03  & 0.11  & \textbf{98.45} & \textbf{1.25} & 0.45  & 0.87 \\
				D-DCC-A & 92.46 & 2.65  & 0.05  & 0.05  & 95.94 & 2.27  & 0.05  & 0.20  & 98.07 & 2.13  & 0.14  & 0.05 \\
				S-BEKK-F & 93.23 & 1.85  & 0.21  & 0.51  & 95.36 & 3.12  & 0.02  & 0.11  & 98.26 & 1.69  & 0.26  & 0.10 \\
				D-BEKK-F & 93.81 & 1.24  & 0.49  & 0.55  & 95.55 & 2.84  & 0.04  & 0.20  & 97.49 & 3.46  & 0.01  & 0.02 \\
				S-BEKK-VT & 93.04 & 2.05  & 0.10  & 0.07  & 95.55 & 2.84  & 0.04  & 0.20  & 97.87 & 2.58  & 0.06  & 0.07 \\
				D-BEKK-VT & 93.04 & 2.05  & 0.15  & 0.49  & 95.36 & 3.12  & 0.01  & 0.20  & 97.10 & 4.35  & 0.00  & 0.00 \\
				\hline
		\end{tabular} }
	}
\end{table}%

\begin{table}[htbp] 
	\centering
	\renewcommand\arraystretch{1.2}
	\caption{\label{table portfolio VaR} 
		ECRs(\%), PEs, and $p$-values of CC and DQ tests 
		for the one-step-ahead forecasts in Section \ref{subsection example 2} by 12 methods at lower and upper $1\%, 2.5\%, 5\%$ quantile levels, where MGARCH-LR denotes the proposed model in \eqref{model Rt SGARCH(r,s)}--\eqref{model Phii in Dt} fitted by the QMLE $\widehat{\bm{\theta}}_{\text{LR}}$ in \eqref{est lowrank QMLE}.
		The ECR closest to the nominal level $\tau$ and the smallest PE are marked in bold.}
	\scalebox{0.7}{
		\resizebox{\linewidth}{!}{
			\begin{tabular}{lrrrrrrrrrrrr}
				\hline\hline
				& \multicolumn{4}{c}{$\tau=1\%$}  & \multicolumn{4}{c}{$\tau=2.5\%$}       & \multicolumn{4}{c}{$\tau=5\%$}       \\
				\cmidrule(r){2-5}  \cmidrule(r){6-9} \cmidrule(r){10-13} 
				& ECR & PE & CC & DQ & ECR & PE & CC & DQ & ECR & PE & CC & DQ  \\
				\hline
				MGARCH-LR & 0.40  & 0.96  & 0.55  & 0.99  & \textbf{3.98} & \textbf{1.51} & 0.25  & 0.46  & \textbf{8.37} & \textbf{2.45} & 0.06  & 0.11 \\
				CCC   & \textbf{1.20} & \textbf{0.31} & 0.92  & 1.00  & 5.58  & 3.12  & 0.01  & 0.01  & 11.55 & 4.76  & 0.00  & 0.00 \\
				DCC-T & 1.59  & 0.95  & 0.64  & 0.97  & 6.77  & 4.34  & 0.00  & 0.00  & 11.95 & 5.05  & 0.00  & 0.00 \\
				D-DCC-T & 1.59  & 0.95  & 0.64  & 0.97  & 6.77  & 4.34  & 0.00  & 0.00  & 12.75 & 5.63  & 0.00  & 0.00 \\
				DCC-E & 1.99  & 1.58  & 0.34  & 0.74  & 6.77  & 4.34  & 0.00  & 0.00  & 12.35 & 5.34  & 0.00  & 0.00 \\
				D-DCC-E & 5.58  & 7.29  & 0.00  & 0.00  & 12.35 & 10.00 & 0.00  & 0.00  & 18.33 & 9.69  & 0.00  & 0.00 \\
				DCC-A & 5.98  & 7.92  & 0.00  & 0.00  & 11.16 & 8.78  & 0.00  & 0.00  & 12.75 & 5.63  & 0.00  & 0.00 \\
				D-DCC-A & 6.37  & 8.56  & 0.00  & 0.00  & 11.16 & 8.78  & 0.00  & 0.00  & 15.94 & 7.95  & 0.00  & 0.00 \\
				S-BEKK-F & 2.79  & 2.85  & 0.03  & 0.00  & 6.37  & 3.93  & 0.00  & 0.00  & 13.15 & 5.92  & 0.00  & 0.00 \\
				D-BEKK-F & 0.80  & 0.32  & 0.93  & 1.00  & 6.37  & 3.93  & 0.00  & 0.00  & 10.36 & 3.90  & 0.00  & 0.00 \\
				S-BEKK-VT & 0.80  & 0.32  & 0.93  & 1.00  & 4.78  & 2.31  & 0.07  & 0.16  & 10.76 & 4.18  & 0.00  & 0.00 \\
				D-BEKK-VT & 0.80  & 0.32  & 0.93  & 1.00  & 6.37  & 3.93  & 0.00  & 0.00  & 11.16 & 4.47  & 0.00  & 0.00 \\
				\hline
				& \multicolumn{4}{c}{$\tau=95\%$}  & \multicolumn{4}{c}{$\tau=97.5\%$}       & \multicolumn{4}{c}{$\tau=99\%$}        \\
				\cmidrule(r){2-5}  \cmidrule(r){6-9} \cmidrule(r){10-13} 
				& ECR & PE & CC & DQ & ECR & PE & CC & DQ & ECR & PE & CC & DQ  \\
				\hline
				MGARCH-LR & \textbf{94.82} & \textbf{0.13} & 0.37  & 0.13  & \textbf{96.81} & \textbf{0.70} & 0.61  & 0.63  & \textbf{98.80} & \textbf{0.31} & 0.92  & 1.00 \\
				CCC   & 90.44 & 3.32  & 0.01  & 0.00  & 95.22 & 2.31  & 0.10  & 0.13  & 97.21 & 2.85  & 0.05  & 0.09 \\
				DCC-T & 91.24 & 2.74  & 0.03  & 0.01  & 95.62 & 1.91  & 0.14  & 0.22  & 98.01 & 1.58  & 0.34  & 0.74 \\
				D-DCC-T & 91.24 & 2.74  & 0.03  & 0.00  & 95.62 & 1.91  & 0.14  & 0.22  & 97.21 & 2.85  & 0.05  & 0.09 \\
				DCC-E & 90.44 & 3.32  & 0.01  & 0.00  & 93.63 & 3.93  & 0.00  & 0.00  & 96.81 & 3.48  & 0.02  & 0.01 \\
				D-DCC-E & 89.64 & 3.90  & 0.00  & 0.00  & 92.03 & 5.55  & 0.00  & 0.00  & 94.02 & 7.92  & 0.00  & 0.00 \\
				DCC-A & 84.86 & 7.37  & 0.00  & 0.00  & 89.24 & 8.38  & 0.00  & 0.00  & 92.43 & 10.46 & 0.00  & 0.00 \\
				D-DCC-A & 86.85 & 5.92  & 0.00  & 0.00  & 90.44 & 7.17  & 0.00  & 0.00  & 96.02 & 4.75  & 0.00  & 0.00 \\
				S-BEKK-F & 92.43 & 1.87  & 0.05  & 0.12  & 94.02 & 3.53  & 0.00  & 0.01  & 96.81 & 3.48  & 0.02  & 0.01 \\
				D-BEKK-F & 90.44 & 3.32  & 0.01  & 0.00  & 94.82 & 2.72  & 0.03  & 0.05  & 98.01 & 1.58  & 0.34  & 0.74 \\
				S-BEKK-VT & 94.02 & 0.71  & 0.44  & 0.49  & 96.02 & 1.51  & 0.27  & 0.43  & 97.61 & 2.21  & 0.15  & 0.35 \\
				D-BEKK-VT & 93.63 & 1.00  & 0.63  & 0.62  & 95.22 & 2.31  & 0.10  & 0.13  & 97.61 & 2.21  & 0.15  & 0.35 \\
				\hline
		\end{tabular} }
	}
\end{table}%

\clearpage
\appendix
\setcounter{figure}{0}
\setcounter{table}{0}
\renewcommand{\theassum}{A.\arabic{assum}}
\renewcommand{\thethm}{A.\arabic{thm}}
\renewcommand{\thelemma}{A.\arabic{lemma}}
\renewcommand{\thefigure}{A.\arabic{figure}}
\renewcommand{\thetable}{A.\arabic{table}}

\section*{Appendix}

This Appendix includes technical details for Remarks \ref{remark connection with DCC}--\ref{remark general SGARCH}, Proposition \ref{propo Identification} and Theorems \ref{thm Stationarity}--\ref{thm BIC}. 
It also provides the detailed structures of the derivatives in the matrices $\Sigma$ and $\Sigma_{*}$ in Theorem \ref{thm Asymptotic normality}, and introduces Lemmas \ref{lemma for identification}--\ref{lemma positivity of Sigma*} which give some preliminary results for proving the aforementioned proposition and theorems.

Throughout the Appendix, 
$|\cdot|$ denotes either the absolute value of a scalar/vector or the determinant of a matrix, 
$\|\cdot\|_{r}$ denotes either the $\ell_{r}$ vector norm or the matrix norm induced by $\ell_{r}$ vector norm, and 
$\|\cdot\|_{F}$ denotes the Frobenius norm of a matrix. 
Denote by $\rho(\cdot)$ and $\rank(\cdot)$ the spectral radius and the rank of a matrix, respectively. 
For positive integers $m$ and $m'$, 
$I_{m}$ (or $0_{m}$) denotes the $m \times m$ identity matrix (or matrix of zeros), 
$0_{m \times m'}$ is an $m \times m'$ matrix of zeros, 
and $\bm{0}_{m}$ is an $m \times 1$ vector of zeros. 
Denote by $\bm{e}_{\ell}$ the vector with the $\ell$-th element being one and the others being zeros. 
Denote $(A)_{i \cdot}$ and $(A)_{\cdot j}$ as the $i$-th row and $j$-th column of matrix $A$, respectively; 
denote $(A)_{ij}$ as the $(i,j)$-th element of matrix $A$; 
and denote $A = [a_{ij}]$ when $(A)_{ij} = a_{ij}$. 
For a matrix $A = [a_{ij}]$, 
$\ovec(A)$ transforms $A$ into a column vector by vertically stacking the columns of $A$, i.e. $\ovec(A) = (a_{11}, \ldots, a_{m1}, a_{12}, \ldots, a_{m2}, \ldots, a_{1m}, \ldots, a_{mm})^{\prime}$; 
$\ovech(A)$ transforms $A$ into a column vector by vertically stacking the columns of the lower triangular part of A, i.e. $\ovech(A) = (a_{11}, \ldots, a_{m1}, a_{22}, \ldots, a_{m2}, \ldots, a_{mm})^{\prime}$; and 
$\ovechsec(A) = (a_{21}, \ldots, a_{m1}, a_{32}, \ldots, a_{m2}, \ldots, a_{m,m-1})^{\prime}$ is a subvector of $\ovech(A)$. 
For square matrices $A$ and $A_{j}$'s, 
$\lambda_{\text{min}}(A)$ and $\lambda_{\text{max}}(A)$ denote the smallest and largest eigenvalues of $A$, respectively; 
$\diag(A)$ denotes the vector of diagonal elements of $A$, 
$\Diag(A)$ denotes the diagonal matrix whose main diagonal is $\diag(A)$, 
and $\Diag\{A_{1}, \ldots, A_{m}\}$ denotes the block diagonal matrix whose main diagonal consists of $A_{1}, \ldots, A_{m}$. 
And denote $\Diag\{\bm{a}\} = \Diag\{a_{1}, \ldots, a_{m}\}$ as the diagonal matrix whose main diagonal is $\bm{a} = (a_{1}, \ldots, a_{m})^{\prime}$. 
For a positive semidefinite matrix $A$, 
$A^{1/2}$ is the square root of $A$ which satisfies that $A^{1/2} (A^{1/2})^{\prime} = A$. 
The operator $\otimes$ denotes the Kronecker product of two matrices, 
and $A^{\otimes k}$ is the Kronecker product of the $k$ matrices $A$. 
The function $\min\{\}$ (or $\max\{\}$) gives the minimum (or maximum) value of its arguments, and $\ln^{+}(x)$ is defined as $\max\{\ln(x), 0\}$ for $x > 0$. 
Moreover, 
$\to_{p}$ and $\to_{d}$ denote the convergence in probability and in distribution, respectively. 
$E$ denotes the expectation with respect to the probability measure $P$.

	\section{The connection with DCC-GARCH models} \label{section connection between DCC and SGARCH}
		We consider the following DCC-GARCH model: 
		\begin{align}
			&\mathbf{y}_{t} = H_{t}^{1/2} \bm{\eta}_{t}, \;\;
			H_{t} = D_{t} R_{t} D_{t}, \;\;
			R_{t} = (1 - \beta_{1} - \beta_{2}) \underline{R} + \beta_{1} \Psi_{t-1} + \beta_{2} R_{t-1}, \notag\\
			&\ln\bm{h}_{t} = \bm{\omega} + A_{1} \ln\mathbf{y}^{\odot 2}_{t-1} + B_{1} \ln\bm{h}_{t-1}, \label{eq volatility in DCC(1,1)}
		\end{align}
		where $\mathbf{y}_{t} = (y_{1t}, \ldots, y_{mt})^{\prime}$, 
		$\bm{\eta}_{t}$ is an $m$-dimensional innovation, 
		$H_{t} = [h_{ij,t}]$ is the conditional covariance matrix of $\mathbf{y}_{t}$ given $\mathcal{F}_{t-1} = \sigma\{\mathbf{y}_{t-1}, \mathbf{y}_{t-2}, \ldots\}$, 
		$D_{t} = [\Diag(H_{t})]^{1/2}$, 
		and $R_{t}$ is the conditional correlation matrix of $\mathbf{y}_{t}$ that is generated from a commonly used model \citep{tse2002multivariate_supp}. 
		In \eqref{eq volatility in DCC(1,1)}, 
		$\ln\bm{h}_{t} = (\ln h_{11,t}, \ldots, \ln h_{mm,t})^{\prime}$, 
		$\ln\mathbf{y}^{\odot 2}_{t} = (\ln y_{1t}^{2}, \ldots, \ln y_{mt}^{2})^{\prime}$, 
		$\bm{\omega}$ is an $m$-dimensional parameter vector, 
		and both $A_{1}$ and $B_{1}$ are $m \times m$ parameter matrices. 

		We first show that the DCC-GARCH model can be rewritten in the form of the proposed efficient multivariate GARCH model in \eqref{model Dt SGARCH(r,s)}--\eqref{model Phii in Dt}. 
		Suppose that $\rho(B_{1}) < 1$, then model \eqref{eq volatility in DCC(1,1)} can be rewritten as a multivariate ARCH($\infty$) form below, 
		\begin{align} \label{eq ARCHinfinity for volatility in DCC(1,1)}
			\ln\bm{h}_{t} = (I_{m} - B_{1})^{-1} \bm{\omega} + \sum_{i=0}^{\infty} B_{1}^{i} A_{1} \ln\mathbf{y}^{\odot 2}_{t-i-1}. 
		\end{align}
		Moreover, suppose that $B_{1}$ is diagonalizable, and it has 
		$r$ nonzero real eigenvalues, $\lambda_{1}, \ldots, \lambda_{r}$, 
		and $s$ conjugate pairs of nonzero complex eigenvalues, $\lambda_{r+1}, \ldots, \lambda_{r+2s}$, with $(\lambda_{r+2k-1}, \lambda_{r+2k}) = (\gamma_{k} e^{i \varphi_{k}}, \gamma_{k} e^{-i \varphi_{k}}) = (\gamma_{k}(\cos\varphi_{k} + i \sin\varphi_{k}), \gamma_{k}(\cos\varphi_{k} - i \sin\varphi_{k}))$, $\gamma_{k} > 0$, $\varphi_{k} \in (0,\pi)$, $1 \leq k \leq s$, $i$ being the imaginary unit 
		and $r+2s \leq m$. 
		Note that since $\rho(B_{1}) < 1$, it holds that $0 < |\lambda_{j}| < 1$ for $1 \leq j \leq r$ and $0 < \gamma_{k} < 1$ for $1 \leq k \leq s$. 
		By the block diagonalization theorem, there exists an $m \times m$ invertible matrix $P$ such that $B_{1} = P J P^{-1}$, 
		where $J = \Diag\{\lambda_{1}, \ldots, \lambda_{r}, C_{1}, \ldots, C_{s}, \bm{0}_{m-r-2s}\}$ is an $m \times m$ real block diagonal matrix with 
		$$
			C_{k} = \gamma_{k} 
			\left(
			\begin{matrix}
				\cos\varphi_{k} & \sin\varphi_{k} \\
				-\sin\varphi_{k} & \cos\varphi_{k}
			\end{matrix}
			\right) 
			\;\; \text{for} \;\; 1 \leq k \leq s. 
		$$
		Thus \eqref{eq ARCHinfinity for volatility in DCC(1,1)} can be rewritten as follows, 
		\begin{align} \label{eq using B=PJP-1}
			\ln\bm{h}_{t} = (I_{m} - B_{1})^{-1} \bm{\omega} + \sum_{i=0}^{\infty} P J^{i} P^{-1} A_{1} \ln\mathbf{y}^{\odot 2}_{t-i-1}. 
		\end{align}
		Let $\underline{\bm{\omega}} = (I_{m} - B_{1})^{-1} \bm{\omega}$, $\underline{A} = P^{-1} A_{1}$ and $\underline{B} = P$. 
		It follows that 
		\begin{align}
			\ln\bm{h}_{t} = \underline{\bm{\omega}} + \sum_{i=0}^{\infty} \underline{B} J^{i} \underline{A} \ln\mathbf{y}^{\odot 2}_{t-i-1}. 
		\end{align}
		Furthermore, denote 
		$$
			\underline{A} = (\underline{A}_{1}, \ldots, \underline{A}_{r}, \underline{A}_{r+1}, \ldots, \underline{A}_{r+s}, \underline{A}_{r+s+1})^{\prime} 
			\;\; \text{and} \;\; 
			\underline{B} = (\underline{B}_{1}, \ldots, \underline{B}_{r}, \underline{B}_{r+1}, \ldots, \underline{B}_{r+s}, \underline{B}_{r+s+1}), 
		$$
		with $\underline{A}_{k} = \bm{a}_{k}$ and $\underline{B}_{k} = \bm{b}_{k}$ being $m \times 1$ matrices for $1 \leq k \leq r$, 
		$\underline{A}_{r+k'} = (\bm{a}_{r+k'}, \bar{\bm{a}}_{r+k'})$ and $\underline{B}_{r+k'} = (\bm{b}_{r+k'}, \bar{\bm{b}}_{r+k'})$ being $m \times 2$ matrices for $1 \leq k' \leq s$, 
		and $\underline{A}_{r+s+1}$ and $\underline{B}_{r+s+1}$ being $m \times (m-r-2s)$ matrices. 
		It holds that 
		\begin{equation} \label{eq rewritten ARCHinfinity for volatility when pqOne}
		\begin{aligned}
			\ln\bm{h}_{t} 
			&= \underline{\bm{\omega}} 
			+ \sum_{i=0}^{\infty} \left[ 
			\sum_{k=1}^{r} \lambda_{k}^{i} \bm{b}_{k} \bm{a}_{k}^{\prime} 
			+ \sum_{k=1}^{s} \left(\bm{b}_{r+k}, \bar{\bm{b}}_{r+k}\right) C_{k}^{i} \left(\bm{a}_{r+k}, \bar{\bm{a}}_{r+k}\right)^{\prime}
			\right] \ln\mathbf{y}^{\odot 2}_{t-i-1} \\
			&= \underline{\bm{\omega}} 
			+ \sum_{i=0}^{\infty} \Bigg\{ 
			\sum_{k=1}^{r} \lambda_{k}^{i} \bm{b}_{k} \bm{a}_{k}^{\prime} 
			+ \sum_{k=1}^{s} \gamma_{k}^{i} \Big[\cos(i \varphi_{k}) \left(\bm{b}_{r+k} \bm{a}_{r+k}^{\prime} + \bar{\bm{b}}_{r+k} \bar{\bm{a}}_{r+k}^{\prime}\right) \\
			&\mathrel{\phantom{= \underline{\bm{\omega}} + \sum_{i=0}^{\infty} \Bigg\{ \sum_{k=1}^{r} \lambda_{k}^{i} \bm{b}_{k} \bm{a}_{k}^{\prime} + \sum_{k=1}^{s} \gamma_{k}^{i} \Big[}}
			+ \sin(i \varphi_{k}) \left(\bm{b}_{r+k} \bar{\bm{a}}_{r+k}^{\prime} - \bar{\bm{b}}_{r+k} \bm{a}_{r+k}^{\prime}\right)\Big]
			\Bigg\} \ln\mathbf{y}^{\odot 2}_{t-i-1}. 
		\end{aligned}
		\end{equation}
		Then let matrices 
		\begin{align} \label{eq G in DCC}
			G_{0,k} &= \bm{b}_{k} \bm{a}_{k}^{\prime} 
			\;\; \text{for} \;\; 1 \leq k \leq r, \;\; \text{and} \notag\\
			G_{1,k} &= \bm{b}_{r+k} \bm{a}_{r+k}^{\prime} + \bar{\bm{b}}_{r+k} \bar{\bm{a}}_{r+k}^{\prime} \;\; \text{and} \;\; 
			G_{2,k} = \bm{b}_{r+k} \bar{\bm{a}}_{r+k}^{\prime} - \bar{\bm{b}}_{r+k} \bm{a}_{r+k}^{\prime} 
			\;\; \text{for} \;\; 1 \leq k \leq s. 
		\end{align}
		We can obtain that 
		\begin{align*}
			\ln\bm{h}_{t} 
			&= \underline{\bm{\omega}} 
			+ \sum_{i=0}^{\infty} \left\{
			\sum_{k=1}^{r} \lambda_{k}^{i} G_{0,k} 
			+ \sum_{k=1}^{s} \gamma_{k}^{i} \left[\cos(i \varphi_{k}) G_{1,k} + \sin(i \varphi_{k}) G_{2,k}\right]
			\right\} \ln\mathbf{y}^{\odot 2}_{t-i-1} \\
			&= \underline{\bm{\omega}} 
			+ \sum_{i=1}^{\infty} \left\{
			\sum_{k=1}^{r} \lambda_{k}^{i-1} G_{0,k} 
			+ \sum_{k=1}^{s} \gamma_{k}^{i-1} \left[\cos((i-1) \varphi_{k}) G_{1,k} + \sin((i-1) \varphi_{k}) G_{2,k}\right]
			\right\} \ln\mathbf{y}^{\odot 2}_{t-i} \\
			&= \underline{\bm{\omega}} + \sum_{i=1}^{\infty} \Phi_{i} \ln\mathbf{y}^{\odot 2}_{t-i}, 
		\end{align*}
		where $\Phi_{i} = \sum_{k=1}^{r} \lambda_{k}^{i-1} G_{0,k} + \sum_{k=1}^{s} \gamma_{k}^{i-1} \left[\cos((i-1) \varphi_{k}) G_{1,k} + \sin((i-1) \varphi_{k}) G_{2,k}\right]$. 
		As a result, we rewrite the DCC-GARCH model in the form of the proposed efficient multivariate GARCH model. 

		On the other hand, from the above process, we know that if $B_{1}$ is diagonalizable and \eqref{eq G in DCC} holds, then the proposed efficient multivariate GARCH model in \eqref{model Dt SGARCH(r,s)}--\eqref{model Phii in Dt} will be equivalent to the DCC-GARCH model.

	\section{Derivatives} \label{section Derivatives}
	\subsection{Derivatives of $\ln\bm{h}_{t}(\bm{\delta})$}
		Recall that $\ln\bm{h}_{t}(\bm{\delta}) = \underline{\bm{\omega}} + \sum_{i=1}^{\infty} \Phi_{i}(\bm{\kappa}) \ln\mathbf{y}^{\odot 2}_{t-i}$ with $\Phi_{i}(\bm{\kappa}) = \sum_{k=1}^{r} \lambda_{k}^{i-1} G_{0,k} + \sum_{k=1}^{s} \gamma_{k}^{i-1} [\cos((i-1) \varphi_{k}) G_{1,k} + \sin((i-1) \varphi_{k}) G_{2,k}]$ by model \eqref{model Dt SGARCH(r,s)}. 
		And $\bm{\delta} = (\underline{\bm{\omega}}^{\prime}, \bm{\lambda}^{\prime}, \bm{\gamma}^{\prime}, \bm{\varphi}^{\prime}, \bm{g}_{0}^{\prime}, \bm{g}_{1}^{\prime}, \bm{g}_{2}^{\prime})^{\prime}$, where 
		$\underline{\bm{\omega}} = (\underline{\omega}_{1}, \ldots, \underline{\omega}_{m})^{\prime}$, 
		$\bm{\lambda} = (\lambda_{1}, \ldots, \lambda_{r})^{\prime}$, 
		$\bm{\gamma} = (\gamma_{1}, \ldots, \gamma_{s})^{\prime}$, 
		$\bm{\varphi} = (\varphi_{1}, \ldots, \varphi_{s})^{\prime}$, 
		$\bm{g}_{0} = (\bm{g}_{0,1}^{\prime}, \ldots, \bm{g}_{0,r}^{\prime})^{\prime}$, 
		$\bm{g}_{1} = (\bm{g}_{1,1}^{\prime}, \ldots, \bm{g}_{1,s}^{\prime})^{\prime}$ and 
		$\bm{g}_{2} = (\bm{g}_{2,1}^{\prime}, \ldots, \bm{g}_{2,s}^{\prime})^{\prime}$, with 
		$\bm{g}_{0,k} = \ovec(G_{0,k})$, $\bm{g}_{1,k} = \ovec(G_{1,k})$ and $\bm{g}_{2,k} = \ovec(G_{2,k})$. 
		Then the first derivatives of $\ln\bm{h}_{t}(\bm{\delta})$ are: 
		\begin{align} \label{eq first derivative of lnhtunderline}
			&\frac{\partial\ln\bm{h}_{t}(\bm{\delta})}{\partial\underline{\omega}_{\ell}} = \bm{e}_{\ell}; \notag\\
			&\frac{\partial\ln\bm{h}_{t}(\bm{\delta})}{\partial\lambda_{\ell}} = \sum_{i=2}^{\infty} (i-1) \lambda_{\ell}^{i-2} G_{0,\ell} \ln\mathbf{y}^{\odot 2}_{t-i}; \notag\\
			&\frac{\partial\ln\bm{h}_{t}(\bm{\delta})}{\partial\gamma_{\ell}} = \sum_{i=2}^{\infty} (i-1) \gamma_{\ell}^{i-2} [\cos((i-1) \varphi_{\ell}) G_{1,\ell} + \sin((i-1) \varphi_{\ell}) G_{2,\ell}] \ln\mathbf{y}^{\odot 2}_{t-i}; \notag\\
			&\frac{\partial\ln\bm{h}_{t}(\bm{\delta})}{\partial\varphi_{\ell}} = \sum_{i=2}^{\infty} (i-1) \gamma_{\ell}^{i-1} [-\sin((i-1) \varphi_{\ell}) G_{1,\ell} + \cos((i-1) \varphi_{\ell}) G_{2,\ell}] \ln\mathbf{y}^{\odot 2}_{t-i}; \notag\\
			&\frac{\partial\ln\bm{h}_{t}(\bm{\delta})}{\partial\bm{g}_{0,\ell}^{\prime}} = \left(\sum_{i=1}^{\infty} \lambda_{\ell}^{i-1} \ln\mathbf{y}^{\odot 2}_{t-i}\right)^{\prime} \otimes I_{m}; \notag\\
			&\frac{\partial\ln\bm{h}_{t}(\bm{\delta})}{\partial\bm{g}_{1,\ell}^{\prime}} = \left[\sum_{i=1}^{\infty} \gamma_{\ell}^{i-1} \cos((i-1) \varphi_{\ell}) \ln\mathbf{y}^{\odot 2}_{t-i}\right]^{\prime} \otimes I_{m}; \notag\\
			&\frac{\partial\ln\bm{h}_{t}(\bm{\delta})}{\partial\bm{g}_{2,\ell}^{\prime}} = \left[\sum_{i=1}^{\infty} \gamma_{\ell}^{i-1} \sin((i-1) \varphi_{\ell}) \ln\mathbf{y}^{\odot 2}_{t-i}\right]^{\prime} \otimes I_{m}, 
		\end{align}
		where the last three derivatives are obtained by using the fact $\ovec(A B C) = (C^{\prime} \otimes A) \ovec(B)$ for any matrices $A$, $B$ and $C$, and then rewriting $\ln\bm{h}_{t}(\bm{\delta})$ as 
		\begin{small}
		\begin{align*}
			\ln\bm{h}_{t}(\bm{\delta}) 
			=& \sum_{k=1}^{r}\left[\left(\sum_{i=1}^{\infty} \lambda_{k}^{i-1} \ln\mathbf{y}^{\odot 2}_{t-i}\right)^{\prime} \otimes I_{m}\right] \bm{g}_{0,k} 
			+ \sum_{k=1}^{s}\left\{\left[\sum_{i=1}^{\infty} \gamma_{k}^{i-1} \cos((i-1) \varphi_{k}) \ln\mathbf{y}^{\odot 2}_{t-i}\right]^{\prime} \otimes I_{m}\right\} \bm{g}_{1,k} \\
			&+ \sum_{k=1}^{s}\left\{\left[\sum_{i=1}^{\infty} \gamma_{k}^{i-1} \sin((i-1) \varphi_{k}) \ln\mathbf{y}^{\odot 2}_{t-i}\right]^{\prime} \otimes I_{m}\right\} \bm{g}_{2,k}. 
		\end{align*}
		\end{small}
		Furthermore, the second derivatives of $\ln\bm{h}_{t}(\bm{\delta})$ are: 
		\begin{align} \label{eq second derivative of lnhtunderline}
			&\frac{\partial^{2}\ln\bm{h}_{t}(\bm{\delta})}{\partial\lambda_{\ell} \partial\lambda_{\ell}} = \sum_{i=3}^{\infty} (i-1)(i-2) \lambda_{\ell}^{i-3} G_{0,\ell} \ln\mathbf{y}^{\odot 2}_{t-i}; \notag\\
			&\frac{\partial^{2}\ln\bm{h}_{t}(\bm{\delta})}{\partial\bm{g}_{0,\ell}^{\prime} \partial\lambda_{\ell}} = \left[\sum_{i=2}^{\infty} (i-1) \lambda_{\ell}^{i-2} \ln\mathbf{y}^{\odot 2}_{t-i}\right]^{\prime} \otimes I_{m}; \notag\\
			&\frac{\partial^{2}\ln\bm{h}_{t}(\bm{\delta})}{\partial\gamma_{\ell} \partial\gamma_{\ell}} = \sum_{i=3}^{\infty} (i-1)(i-2) \gamma_{\ell}^{i-3} \left[\cos((i-1) \varphi_{\ell}) G_{1,\ell} + \sin((i-1) \varphi_{\ell}) G_{2,\ell}\right] \ln\mathbf{y}^{\odot 2}_{t-i}; \notag\\
			&\frac{\partial^{2}\ln\bm{h}_{t}(\bm{\delta})}{\partial\varphi_{\ell} \partial\gamma_{\ell}} = \sum_{i=2}^{\infty} (i-1)^{2} \gamma_{\ell}^{i-2} \left[- \sin((i-1) \varphi_{\ell}) G_{1,\ell} + \cos((i-1) \varphi_{\ell}) G_{2,\ell}\right] \ln\mathbf{y}^{\odot 2}_{t-i}; \notag\\
			&\frac{\partial^{2}\ln\bm{h}_{t}(\bm{\delta})}{\partial\bm{g}_{1,\ell}^{\prime} \partial\gamma_{\ell}} = \left[\sum_{i=2}^{\infty} (i-1) \gamma_{\ell}^{i-2} \cos((i-1) \varphi_{\ell}) \ln\mathbf{y}^{\odot 2}_{t-i}\right]^{\prime} \otimes I_{m}; \notag\\
			&\frac{\partial^{2}\ln\bm{h}_{t}(\bm{\delta})}{\partial\bm{g}_{2,\ell}^{\prime} \partial\gamma_{\ell}} = \left[\sum_{i=2}^{\infty} (i-1) \gamma_{\ell}^{i-2} \sin((i-1) \varphi_{\ell}) \ln\mathbf{y}^{\odot 2}_{t-i}\right]^{\prime} \otimes I_{m}; \notag\\
			&\frac{\partial^{2}\ln\bm{h}_{t}(\bm{\delta})}{\partial\varphi_{\ell} \partial\varphi_{\ell}} = - \sum_{i=2}^{\infty} (i-1)^{2} \gamma_{\ell}^{i-1} \left[\cos((i-1) \varphi_{\ell}) G_{1,\ell} + \sin((i-1) \varphi_{\ell}) G_{2,\ell}\right] \ln\mathbf{y}^{\odot 2}_{t-i}; \notag\\
			&\frac{\partial^{2}\ln\bm{h}_{t}(\bm{\delta})}{\partial\bm{g}_{1,\ell}^{\prime} \partial\varphi_{\ell}} = - \left[\sum_{i=2}^{\infty} (i-1) \gamma_{\ell}^{i-1} \sin((i-1) \varphi_{\ell}) \ln\mathbf{y}^{\odot 2}_{t-i}\right]^{\prime} \otimes I_{m}; \notag\\
			&\frac{\partial^{2}\ln\bm{h}_{t}(\bm{\delta})}{\partial\bm{g}_{2,\ell}^{\prime} \partial\varphi_{\ell}} = \left[\sum_{i=2}^{\infty} (i-1) \gamma_{\ell}^{i-1} \cos((i-1) \varphi_{\ell}) \ln\mathbf{y}^{\odot 2}_{t-i}\right]^{\prime} \otimes I_{m}. 
		\end{align}
		Additionally, the other second derivatives of $\ln\bm{h}_{t}(\bm{\delta})$ are zeros. 

	\subsection{Derivatives of $D_{t}(\bm{\delta})$}
		Recall that $D_{t}(\bm{\delta}) = \Diag\{h_{11,t}^{1/2}(\bm{\delta}), \ldots, h_{mm,t}^{1/2}(\bm{\delta})\}$. 
		Let $\delta_{\ell}$ be the $\ell$-th element of $\bm{\delta}$. 
		Then the first derivative of $D_{t}(\bm{\delta})$ is 
		\begin{align} \label{eq first derivative of Dt}
			\frac{\partial D_{t}(\bm{\delta})}{\partial\delta_{\ell}} 
			= \frac{1}{2} D_{t}(\bm{\delta}) \Diag\left\{\frac{\partial\ln\bm{h}_{t}(\bm{\delta})}{\partial\delta_{\ell}}\right\}, 
		\end{align}
		and the second derivative of $D_{t}(\bm{\delta})$ is 
		\begin{align} \label{eq second derivative of Dt}
			\frac{\partial^{2} D_{t}(\bm{\delta})}{\partial\delta_{k} \partial\delta_{\ell}} 
			= \frac{1}{4} D_{t}(\bm{\delta}) \Diag\left\{\frac{\partial\ln\bm{h}_{t}(\bm{\delta})}{\partial\delta_{k}}\right\} \Diag\left\{\frac{\partial\ln\bm{h}_{t}(\bm{\delta})}{\partial\delta_{\ell}}\right\} 
			+ \frac{1}{2} D_{t}(\bm{\delta}) \Diag\left\{\frac{\partial^{2}\ln\bm{h}_{t}(\bm{\delta})}{\partial\delta_{k} \partial\delta_{\ell}}\right\}. 
		\end{align}

	\subsection{Derivatives of $\Psi_{t-1}(\bm{\delta})$}
		Recall that $\Psi_{t-1}(\bm{\delta})$ is the sample correlation matrix of $\{\bm{\varepsilon}_{t-1}(\bm{\delta}), \ldots, \bm{\varepsilon}_{t-\Bbbk}(\bm{\delta})\}$ with $\bm{\varepsilon}_{t}(\bm{\delta}) = (\varepsilon_{1t}(\bm{\delta}), \ldots, \varepsilon_{mt}(\bm{\delta}))^{\prime} = D_{t}^{-1}(\bm{\delta}) \mathbf{y}_{t}$. 
		Denote $\Psi_{t-1}(\bm{\delta}) = [\Psi_{t-1,ij}(\bm{\delta})]$, then we have 
		\begin{align*}
			\Psi_{t-1,ij}(\bm{\delta}) = \frac{\sum_{k=1}^{\Bbbk} \varepsilon_{i,t-k}(\bm{\delta}) \varepsilon_{j,t-k}(\bm{\delta})}{\left[\left(\sum_{k=1}^{\Bbbk} \varepsilon_{i,t-k}^{2}(\bm{\delta})\right) \left(\sum_{k=1}^{\Bbbk} \varepsilon_{j,t-k}^{2}(\bm{\delta})\right)\right]^{1/2}}. 
		\end{align*}
		Let $\delta_{\ell}$ be the $\ell$-th element of $\bm{\delta}$. 
		The first and second derivatives of $\bm{\varepsilon}_{t}(\bm{\delta})$ are 
		\begin{align}
			\frac{\partial\bm{\varepsilon}_{t}(\bm{\delta})}{\partial\delta_{\ell}} 
			&= - \frac{1}{2} \Diag\left\{\frac{\partial\ln\bm{h}_{t}(\bm{\delta})}{\partial\delta_{\ell}}\right\} \bm{\varepsilon}_{t}(\bm{\delta}) \;\; \text{and} \label{eq first derivative of varepsilon}\\
			\frac{\partial^{2}\bm{\varepsilon}_{t}(\bm{\delta})}{\partial\delta_{k} \partial\delta_{\ell}} 
			&= - \frac{1}{2} \Diag\left\{\frac{\partial^{2}\ln\bm{h}_{t}(\bm{\delta})}{\partial\delta_{k} \partial\delta_{\ell}}\right\} \bm{\varepsilon}_{t}(\bm{\delta}) 
			+ \frac{1}{4} \Diag\left\{\frac{\partial\ln\bm{h}_{t}(\bm{\delta})}{\partial\delta_{\ell}}\right\} \Diag\left\{\frac{\partial\ln\bm{h}_{t}(\bm{\delta})}{\partial\delta_{k}}\right\} \bm{\varepsilon}_{t}(\bm{\delta}), \label{eq second derivative of varepsilon}
		\end{align}
		respectively. 
		Denote $\dot{\varepsilon}_{i,t-h}^{(\ell)}(\bm{\delta}) = {\partial\varepsilon_{i,t-h}(\bm{\delta})}/{\partial\delta_{\ell}}$ and $\ddot{\varepsilon}_{i,t-h}^{(k,\ell)}(\bm{\delta}) = {\partial^{2}\varepsilon_{i,t-h}(\bm{\delta})}/{\partial\delta_{k} \partial\delta_{\ell}}$. 
		Then the first derivative of $\Psi_{t-1,ij}(\bm{\delta})$ is 
		\begin{align} \label{eq first derivative of Psitminus1}
			\frac{\partial\Psi_{t-1,ij}(\bm{\delta})}{\partial\delta_{\ell}} 
			=& \frac
			{\sum_{h=1}^{\Bbbk} \left(\dot{\varepsilon}_{i,t-h}^{(\ell)}(\bm{\delta}) \varepsilon_{j,t-h}(\bm{\delta}) + \varepsilon_{i,t-h}(\bm{\delta}) \dot{\varepsilon}_{j,t-h}^{(\ell)}(\bm{\delta})\right)}
			{\left[\left(\sum_{h=1}^{\Bbbk} \varepsilon_{i,t-h}^{2}(\bm{\delta})\right) \left(\sum_{h=1}^{\Bbbk} \varepsilon_{j,t-h}^{2}(\bm{\delta})\right)\right]^{1/2}} \notag\\
			&- \Psi_{t-1,ij}(\bm{\delta}) 
			\left(
			\frac
			{\sum_{h=1}^{\Bbbk} \varepsilon_{i,t-h}(\bm{\delta}) \dot{\varepsilon}_{i,t-h}^{(\ell)}(\bm{\delta})}
			{\sum_{h=1}^{\Bbbk} \varepsilon_{i,t-h}^{2}(\bm{\delta})} 
			+ \frac
			{\sum_{h=1}^{\Bbbk} \varepsilon_{j,t-h}(\bm{\delta}) \dot{\varepsilon}_{j,t-h}^{(\ell)}(\bm{\delta})}
			{\sum_{h=1}^{\Bbbk} \varepsilon_{j,t-h}^{2}(\bm{\delta})}
			\right), 
		\end{align}
		and the second derivative of $\Psi_{t-1,ij}(\bm{\delta})$ is 
		\begin{footnotesize}
		\begin{align} \label{eq second derivative of Psitminus1}
			&\frac{\partial^{2}\Psi_{t-1,ij}(\bm{\delta})}{\partial\delta_{k} \partial\delta_{\ell}} 
			= \frac
			{\sum_{h=1}^{\Bbbk} \left(\ddot{\varepsilon}_{i,t-h}^{(k,\ell)}(\bm{\delta}) \varepsilon_{j,t-h}(\bm{\delta}) + \dot{\varepsilon}_{i,t-h}^{(\ell)}(\bm{\delta}) \dot{\varepsilon}_{j,t-h}^{(k)}(\bm{\delta}) + \dot{\varepsilon}_{i,t-h}^{(k)}(\bm{\delta}) \dot{\varepsilon}_{j,t-h}^{(\ell)}(\bm{\delta}) + \varepsilon_{i,t-h}(\bm{\delta}) \ddot{\varepsilon}_{j,t-h}^{(k,\ell)}(\bm{\delta})\right)}
			{\left[\left(\sum_{h=1}^{\Bbbk} \varepsilon_{i,t-h}^{2}(\bm{\delta})\right) \left(\sum_{h=1}^{\Bbbk} \varepsilon_{j,t-h}^{2}(\bm{\delta})\right)\right]^{1/2}} \notag\\
			-& \frac
			{\sum_{h=1}^{\Bbbk} \left(\dot{\varepsilon}_{i,t-h}^{(\ell)}(\bm{\delta}) \varepsilon_{j,t-h}(\bm{\delta}) + \varepsilon_{i,t-h}(\bm{\delta}) \dot{\varepsilon}_{j,t-h}^{(\ell)}(\bm{\delta})\right)}
			{\left[\left(\sum_{h=1}^{\Bbbk} \varepsilon_{i,t-h}^{2}(\bm{\delta})\right) \left(\sum_{h=1}^{\Bbbk} \varepsilon_{j,t-h}^{2}(\bm{\delta})\right)\right]^{1/2}} 
			\left(\frac
			{\sum_{h=1}^{\Bbbk} \varepsilon_{i,t-h}(\bm{\delta}) \dot{\varepsilon}_{i,t-h}^{(k)}(\bm{\delta})}
			{\sum_{h=1}^{\Bbbk} \varepsilon_{i,t-h}^{2}(\bm{\delta})} 
			+ \frac
			{\sum_{h=1}^{\Bbbk} \varepsilon_{j,t-h}(\bm{\delta}) \dot{\varepsilon}_{j,t-h}^{(k)}(\bm{\delta})}
			{\sum_{h=1}^{\Bbbk} \varepsilon_{j,t-h}^{2}(\bm{\delta})}\right) \notag\\
			-& \frac
			{\partial\Psi_{t-1,ij}(\bm{\delta})}
			{\partial\delta_{k}} 
			\left(\frac
			{\sum_{h=1}^{\Bbbk} \varepsilon_{i,t-h}(\bm{\delta}) \dot{\varepsilon}_{i,t-h}^{(\ell)}(\bm{\delta})}
			{\sum_{h=1}^{\Bbbk} \varepsilon_{i,t-h}^{2}(\bm{\delta})} 
			+ \frac
			{\sum_{h=1}^{\Bbbk} \varepsilon_{j,t-h}(\bm{\delta}) \dot{\varepsilon}_{j,t-h}^{(\ell)}(\bm{\delta})}
			{\sum_{h=1}^{\Bbbk} \varepsilon_{j,t-h}^{2}(\bm{\delta})}\right) \notag\\
			-& \Psi_{t-1,ij}(\bm{\delta}) 
			\left[\frac
			{\sum_{h=1}^{\Bbbk} \left(\dot{\varepsilon}_{i,t-h}^{(k)}(\bm{\delta}) \dot{\varepsilon}_{i,t-h}^{(\ell)}(\bm{\delta}) + \varepsilon_{i,t-h}(\bm{\delta}) \ddot{\varepsilon}_{i,t-h}^{(k,\ell)}(\bm{\delta})\right)}
			{\sum_{h=1}^{\Bbbk} \varepsilon_{i,t-h}^{2}(\bm{\delta})} \right.\notag\\
			&\mathrel{\phantom{\Psi_{t-1,ij}(\bm{\delta})}}\left.+ \frac
			{\sum_{h=1}^{\Bbbk} \left(\dot{\varepsilon}_{j,t-h}^{(k)}(\bm{\delta}) \dot{\varepsilon}_{j,t-h}^{(\ell)}(\bm{\delta}) + \varepsilon_{j,t-h}(\bm{\delta}) \ddot{\varepsilon}_{j,t-h}^{(k,\ell)}(\bm{\delta})\right)}
			{\sum_{h=1}^{\Bbbk} \varepsilon_{j,t-h}^{2}(\bm{\delta})}\right] \notag\\
			+& 2 \Psi_{t-1,ij}(\bm{\delta}) 
			\left[\frac
			{\left(\sum_{h=1}^{\Bbbk} \varepsilon_{i,t-h}(\bm{\delta}) \dot{\varepsilon}_{i,t-h}^{(\ell)}(\bm{\delta})\right) \left(\sum_{h=1}^{\Bbbk} \varepsilon_{i,t-h}(\bm{\delta}) \dot{\varepsilon}_{i,t-h}^{(k)}(\bm{\delta})\right)}
			{\left(\sum_{h=1}^{\Bbbk} \varepsilon_{i,t-h}^{2}(\bm{\delta})\right)^{2}} \right.\notag\\
			&\mathrel{\phantom{2 \Psi_{t-1,ij}(\bm{\delta})}}\left.+ \frac
			{\left(\sum_{h=1}^{\Bbbk} \varepsilon_{j,t-h}(\bm{\delta}) \dot{\varepsilon}_{j,t-h}^{(\ell)}(\bm{\delta})\right) \left(\sum_{h=1}^{\Bbbk} \varepsilon_{j,t-h}(\bm{\delta}) \dot{\varepsilon}_{j,t-h}^{(k)}(\bm{\delta})\right)}
			{\left(\sum_{h=1}^{\Bbbk} \varepsilon_{j,t-h}^{2}(\bm{\delta})\right)^{2}}\right]. 
		\end{align}
		\end{footnotesize}

	\subsection{Derivatives of $R_{t}(\bm{\theta})$}
		Recall that $R_{t}(\bm{\theta}) = (1 - \beta_{1} - \beta_{2}) \underline{R} + \beta_{1} \Psi_{t-1}(\bm{\delta}) + \beta_{2} R_{t-1}(\bm{\theta})$ with $\bm{\theta} = (\bm{\delta}^{\prime}, \bm{\beta}^{\prime})^{\prime}$, $\bm{\beta} = (\beta_{1}, \beta_{2}, \underline{\bm{r}}^{\prime})^{\prime}$ and $\underline{\bm{r}} = \ovechsec(\underline{R})$. 
		Let $\delta_{\ell}$ be the $\ell$-th element of $\bm{\delta}$, and $\underline{R} = [\underline{R}_{ij}]$. 
		Then the first derivatives of $R_{t}(\bm{\theta})$ are: 
		\begin{align} \label{eq first derivative of Rt}
			&\frac{\partial R_{t}(\bm{\theta})}{\partial\delta_{\ell}} = \beta_{1} \frac{\partial\Psi_{t-1}(\bm{\delta})}{\partial\delta_{\ell}} + \beta_{2} \frac{\partial R_{t-1}(\bm{\theta})}{\partial\delta_{\ell}} = \beta_{1} \sum_{h=0}^{\infty} \beta_{2}^{h} \frac{\partial\Psi_{t-h-1}(\bm{\delta})}{\partial\delta_{\ell}}; \notag\\
			&\frac{\partial R_{t}(\bm{\theta})}{\partial\beta_{1}} = - \underline{R} + \Psi_{t-1}(\bm{\delta}) + \beta_{2} \frac{\partial R_{t-1}(\bm{\theta})}{\partial\beta_{1}} = - \frac{1}{1 - \beta_{2}} \underline{R} + \sum_{h=0}^{\infty} \beta_{2}^{h} \Psi_{t-h-1}(\bm{\delta}); \notag\\
			&\frac{\partial R_{t}(\bm{\theta})}{\partial\beta_{2}} = - \underline{R} + R_{t-1}(\bm{\theta}) + \beta_{2} \frac{\partial R_{t-1}(\bm{\theta})}{\partial\beta_{2}} = - \frac{\beta_{1}}{(1 - \beta_{2})^{2}} \underline{R} + \beta_{1} \sum_{h=1}^{\infty} h \beta_{2}^{h-1} \Psi_{t-h-1}(\bm{\delta}); \notag\\
			&\frac{\partial R_{t}(\bm{\theta})}{\partial\underline{R}_{ij}} = (1 - \beta_{1} - \beta_{2}) \dot{\underline{R}}^{(i,j)} + \beta_{2} \frac{\partial R_{t-1}(\bm{\theta})}{\partial\underline{R}_{ij}} = \left(1 - \frac{\beta_{1}}{1 - \beta_{2}}\right) \dot{\underline{R}}^{(i,j)}, 
		\end{align}
		where $\dot{\underline{R}}^{(i,j)}$ is an $m \times m$ matrix with the $(i,j)$-th and $(j,i)$-th elements being ones and others being zeroes. 
		Furthermore, the second derivatives of $R_{t}(\bm{\theta})$ are: 
		\begin{align} \label{eq second derivative of Rt}
			&\frac{\partial^{2} R_{t}(\bm{\theta})}{\partial\delta_{k} \partial\delta_{\ell}} = \beta_{1} \frac{\partial^{2}\Psi_{t-1}(\bm{\delta})}{\partial\delta_{k} \partial\delta_{\ell}} + \beta_{2} \frac{\partial^{2} R_{t-1}(\bm{\theta})}{\partial\delta_{k} \partial\delta_{\ell}} = \beta_{1} \sum_{h=0}^{\infty} \beta_{2}^{h} \frac{\partial^{2}\Psi_{t-h-1}(\bm{\delta})}{\partial\delta_{k} \partial\delta_{\ell}}; \notag\\
			&\frac{\partial^{2} R_{t}(\bm{\theta})}{\partial\beta_{1} \partial\delta_{\ell}} = \frac{\partial\Psi_{t-1}(\bm{\delta})}{\partial\delta_{\ell}} + \beta_{2} \frac{\partial^{2} R_{t-1}(\bm{\theta})}{\partial\beta_{1} \partial\delta_{\ell}} = \sum_{h=0}^{\infty} \beta_{2}^{h} \frac{\partial\Psi_{t-h-1}(\bm{\delta})}{\partial\delta_{\ell}}; \notag\\
			&\frac{\partial^{2} R_{t}(\bm{\theta})}{\partial\beta_{2} \partial\delta_{\ell}} = \frac{\partial R_{t-1}(\bm{\theta})}{\partial\delta_{\ell}} + \beta_{2} \frac{\partial^{2} R_{t-1}(\bm{\theta})}{\partial\beta_{2} \partial\delta_{\ell}} = \beta_{1} \sum_{h=1}^{\infty} h \beta_{2}^{h-1} \frac{\partial\Psi_{t-h-1}(\bm{\delta})}{\partial\delta_{\ell}}; \notag\\
			&\frac{\partial^{2} R_{t}(\bm{\theta})}{\partial\beta_{2} \partial\beta_{1}} = \frac{\partial R_{t-1}(\bm{\theta})}{\partial\beta_{1}} + \beta_{2} \frac{\partial^{2} R_{t-1}(\bm{\theta})}{\partial\beta_{2} \partial\beta_{1}} = - \frac{1}{(1 - \beta_{2})^{2}} \underline{R} + \sum_{h=1}^{\infty} h \beta_{2}^{h-1} \Psi_{t-h-1}(\bm{\delta}); \notag\\
			&\frac{\partial^{2} R_{t}(\bm{\theta})}{\partial\underline{R}_{ij} \partial\beta_{1}} = - \dot{\underline{R}}^{(i,j)} + \beta_{2} \frac{\partial^{2} R_{t-1}(\bm{\theta})}{\partial\underline{R}_{ij} \partial\beta_{1}} = - \frac{1}{1 - \beta_{2}} \dot{\underline{R}}^{(i,j)}; \notag\\
			&\frac{\partial^{2} R_{t}(\bm{\theta})}{\partial\beta_{2} \partial\beta_{2}} = 2 \frac{\partial R_{t-1}(\bm{\theta})}{\partial\beta_{2}} + \beta_{2} \frac{\partial^{2} R_{t-1}(\bm{\theta})}{\partial\beta_{2} \partial\beta_{2}} = - \frac{2 \beta_{1}}{(1 - \beta_{2})^{3}} \underline{R} + \beta_{1} \sum_{h=2}^{\infty} h(h-1) \beta_{2}^{h-2} \Psi_{t-h-1}(\bm{\delta}); \notag\\
			&\frac{\partial^{2} R_{t}(\bm{\theta})}{\partial\underline{R}_{ij} \partial\beta_{2}} = - \dot{\underline{R}}^{(i,j)} + \frac{\partial R_{t-1}(\bm{\theta})}{\partial\underline{R}_{ij}} + \beta_{2} \frac{\partial^{2} R_{t-1}(\bm{\theta})}{\partial\underline{R}_{ij} \partial\beta_{2}} = - \frac{\beta_{1}}{(1 - \beta_{2})^{2}} \dot{\underline{R}}^{(i,j)}; \notag\\
			&\frac{\partial^{2} R_{t}(\bm{\theta})}{\partial\underline{R}_{ij} \partial\delta_{\ell}} = 0; \;\; 
			\frac{\partial^{2} R_{t}(\bm{\theta})}{\partial\beta_{1} \partial\beta_{1}} = 0; \;\;
			\frac{\partial^{2} R_{t}(\bm{\theta})}{\partial\underline{R}_{ij} \partial\underline{R}_{ij}} = 0. 
		\end{align}

	\subsection{Derivatives of $H_{t}(\bm{\theta})$}
		Recall that $H_{t}(\bm{\theta}) = D_{t}(\bm{\delta}) R_{t}(\bm{\theta}) D_{t}(\bm{\delta})$ with $\bm{\theta} = (\bm{\delta}^{\prime}, \bm{\beta}^{\prime})^{\prime}$. 
		Let $\delta_{\ell}$ be the $\ell$-th element of $\bm{\delta}$ and $\beta_{\ell}$ be the $\ell$-th element of $\bm{\beta}$. 
		Then the first derivatives of $H_{t}(\bm{\theta})$ are: 
		\begin{align} \label{eq first derivative of Ht}
			&\frac{\partial H_{t}(\bm{\theta})}{\partial\delta_{\ell}} = 
			\frac{\partial D_{t}(\bm{\delta})}{\partial\delta_{\ell}} R_{t}(\bm{\theta}) D_{t}(\bm{\delta})
			+ D_{t}(\bm{\delta}) \frac{\partial R_{t}(\bm{\theta})}{\partial\delta_{\ell}} D_{t}(\bm{\delta}) 
			+ D_{t}(\bm{\delta}) R_{t}(\bm{\theta}) \frac{\partial D_{t}(\bm{\delta})}{\partial\delta_{\ell}}; \notag\\
			&\frac{\partial H_{t}(\bm{\theta})}{\partial\beta_{\ell}} = 
			D_{t}(\bm{\delta}) \frac{\partial R_{t}(\bm{\theta})}{\partial\beta_{\ell}} D_{t}(\bm{\delta}). 
		\end{align}
		Furthermore, the second derivatives of $H_{t}(\bm{\theta})$ are: 
		\begin{align} \label{eq second derivative of Ht}
			\frac{\partial^{2} H_{t}(\bm{\theta})}{\partial\delta_{k} \partial\delta_{\ell}} &= 
			\frac{\partial^{2} D_{t}(\bm{\delta})}{\partial\delta_{k} \partial\delta_{\ell}} R_{t}(\bm{\theta}) D_{t}(\bm{\delta})
			+ \frac{\partial D_{t}(\bm{\delta})}{\partial\delta_{\ell}} \frac{\partial R_{t}(\bm{\theta})}{\partial\delta_{k}} D_{t}(\bm{\delta}) 
			+ \frac{\partial D_{t}(\bm{\delta})}{\partial\delta_{\ell}} R_{t}(\bm{\theta}) \frac{\partial D_{t}(\bm{\delta})}{\partial\delta_{k}} \notag\\
			&+ \frac{\partial D_{t}(\bm{\delta})}{\partial\delta_{k}} \frac{\partial R_{t}(\bm{\theta})}{\partial\delta_{\ell}} D_{t}(\bm{\delta})
			+ D_{t}(\bm{\delta}) \frac{\partial^{2} R_{t}(\bm{\theta})}{\partial\delta_{k} \partial\delta_{\ell}} D_{t}(\bm{\delta}) 
			+ D_{t}(\bm{\delta}) \frac{\partial R_{t}(\bm{\theta})}{\partial\delta_{\ell}} \frac{\partial D_{t}(\bm{\delta})}{\partial\delta_{k}} \notag\\
			&+ \frac{\partial D_{t}(\bm{\delta})}{\partial\delta_{k}} R_{t}(\bm{\theta}) \frac{\partial D_{t}(\bm{\delta})}{\partial\delta_{\ell}}
			+ D_{t}(\bm{\delta}) \frac{\partial R_{t}(\bm{\theta})}{\partial\delta_{k}} \frac{\partial D_{t}(\bm{\delta})}{\partial\delta_{\ell}} 
			+ D_{t}(\bm{\delta}) R_{t}(\bm{\theta}) \frac{\partial^{2} D_{t}(\bm{\delta})}{\partial\delta_{k} \partial\delta_{\ell}}; \notag\\
			\frac{\partial^{2} H_{t}(\bm{\theta})}{\partial\beta_{k} \partial\delta_{\ell}} &= 
			\frac{\partial D_{t}(\bm{\delta})}{\partial\delta_{\ell}} \frac{\partial R_{t}(\bm{\theta})}{\partial\beta_{k}} D_{t}(\bm{\delta})
			+ D_{t}(\bm{\delta}) \frac{\partial^{2} R_{t}(\bm{\theta})}{\partial\beta_{k} \partial\delta_{\ell}} D_{t}(\bm{\delta}) 
			+ D_{t}(\bm{\delta}) \frac{\partial R_{t}(\bm{\theta})}{\partial\beta_{k}} \frac{\partial D_{t}(\bm{\delta})}{\partial\delta_{\ell}}; \notag\\
			\frac{\partial^{2} H_{t}(\bm{\theta})}{\partial\beta_{k} \partial\beta_{\ell}} &= 
			D_{t}(\bm{\delta}) \frac{\partial^{2} R_{t}(\bm{\theta})}{\partial\beta_{k} \partial\beta_{\ell}} D_{t}(\bm{\delta}). 
		\end{align}

	\subsection{Derivatives of $\ell_{t}(\bm{\theta})$}
		Denote $\ell_{t}(\bm{\theta}) = \frac{1}{2} \mathbf{y}_{t}^{\prime} H_{t}^{-1}(\bm{\theta}) \mathbf{y}_{t} + \frac{1}{2} \ln |H_{t}(\bm{\theta})|$, and let $\theta_{\ell}$ be the $\ell$-th element of $\bm{\theta}$. 
		Then the first derivative of $\ell_{t}(\bm{\theta})$ is 
		\begin{align} \label{eq first derivative of lt}
			\frac{\partial\ell_{t}(\bm{\theta})}{\partial\theta_{\ell}} 
			=& - \frac{1}{2} \mathbf{y}_{t}^{\prime} H_{t}^{-1}(\bm{\theta}) \frac{\partial H_{t}(\bm{\theta})}{\partial\theta_{\ell}} H_{t}^{-1}(\bm{\theta}) \mathbf{y}_{t} + \frac{1}{2} \tr\left(H_{t}^{-1}(\bm{\theta}) \frac{\partial H_{t}(\bm{\theta})}{\partial\theta_{\ell}}\right) \notag\\
			=& \frac{1}{2} \tr\left[\left(I_{m} - H_{t}^{-1}(\bm{\theta}) \mathbf{y}_{t} \mathbf{y}_{t}^{\prime}\right) H_{t}^{-1}(\bm{\theta}) \frac{\partial H_{t}(\bm{\theta})}{\partial\theta_{\ell}}\right], 
		\end{align}
		and the second derivative of $\ell_{t}(\bm{\theta})$ is 
		\begin{align} \label{eq second derivative of lt}
			\frac{\partial^{2}\ell_{t}(\bm{\theta})}{\partial\theta_{k} \partial\theta_{\ell}} 
			=& \frac{1}{2} \mathbf{y}_{t}^{\prime} \left(2 H_{t}^{-1}(\bm{\theta}) \frac{\partial H_{t}(\bm{\theta})}{\partial\theta_{k}} H_{t}^{-1}(\bm{\theta}) \frac{\partial H_{t}(\bm{\theta})}{\partial\theta_{\ell}} H_{t}^{-1}(\bm{\theta}) - H_{t}^{-1}(\bm{\theta}) \frac{\partial^{2} H_{t}(\bm{\theta})}{\partial\theta_{k} \partial\theta_{\ell}} H_{t}^{-1}(\bm{\theta})\right) \mathbf{y}_{t} \notag\\
			&+ \frac{1}{2} \tr\left(- H_{t}^{-1}(\bm{\theta}) \frac{\partial H_{t}(\bm{\theta})}{\partial\theta_{k}} H_{t}^{-1}(\bm{\theta}) \frac{\partial H_{t}(\bm{\theta})}{\partial\theta_{\ell}} + H_{t}^{-1}(\bm{\theta}) \frac{\partial^{2} H_{t}(\bm{\theta})}{\partial\theta_{k} \partial\theta_{\ell}}\right) \notag\\
			=& - \frac{1}{2} \tr\left[\left(I_{m} - 2 H_{t}^{-1}(\bm{\theta}) \mathbf{y}_{t} \mathbf{y}_{t}^{\prime}\right) H_{t}^{-1}(\bm{\theta}) \frac{\partial H_{t}(\bm{\theta})}{\partial\theta_{k}} H_{t}^{-1}(\bm{\theta}) \frac{\partial H_{t}(\bm{\theta})}{\partial\theta_{\ell}}\right] \notag\\
			&+ \frac{1}{2} \tr\left[\left(I_{m} - H_{t}^{-1}(\bm{\theta}) \mathbf{y}_{t} \mathbf{y}_{t}^{\prime}\right) H_{t}^{-1}(\bm{\theta}) \frac{\partial^{2} H_{t}(\bm{\theta})}{\partial\theta_{k} \partial\theta_{\ell}}\right]. 
		\end{align}

	\section{Technical proofs}
		Corresponding to the unknown parameter vector $\bm{\theta} = (\bm{\delta}^{\prime}, \bm{\beta}^{\prime})^{\prime}$ with 
		$\bm{\delta} = (\underline{\bm{\omega}}^{\prime}, \bm{\kappa}^{\prime})^{\prime}$, 
		$\bm{\beta} = (\beta_{1}, \beta_{2}, \underline{\bm{r}}^{\prime})^{\prime}$ and 
		$\underline{\bm{r}} = \ovechsec(\underline{R})$, 
		denote by $\bm{\theta}_{0} = (\bm{\delta}_{0}^{\prime}, \bm{\beta}_{0}^{\prime})^{\prime}$ with 
		$\bm{\delta}_{0} = (\underline{\bm{\omega}}_{0}^{\prime}, \bm{\kappa}_{0}^{\prime})^{\prime}$, 
		$\bm{\beta}_{0} = (\beta_{10}, \beta_{20}, \underline{\bm{r}}_{0}^{\prime})^{\prime}$ and 
		$\underline{\bm{r}}_{0} = \ovechsec(\underline{R}_{0})$ 
		the true value of $\bm{\theta}$. 
		Here 
		$\underline{\bm{\omega}}_{0} = (\underline{\omega}_{10}, \ldots, \underline{\omega}_{m0})^{\prime}$, and 
		$\bm{\kappa}_{0} = (\bm{\lambda}_{0}^{\prime}, \bm{\gamma}_{0}^{\prime}, \bm{\varphi}_{0}^{\prime}, \bm{g}_{00}^{\prime}, \bm{g}_{10}^{\prime}, \bm{g}_{20}^{\prime})^{\prime}$ with 
		$\bm{\lambda}_{0} = (\lambda_{10}, \ldots, \lambda_{r0})^{\prime}$, 
		$\bm{\gamma}_{0} = (\gamma_{10}, \ldots, \gamma_{s0})^{\prime}$, 
		$\bm{\varphi}_{0} = (\varphi_{10}, \ldots, \varphi_{s0})^{\prime}$, 
		$\bm{g}_{0,k0} = \ovec(G_{0,k0})$ for $1 \leq k \leq r$, 
		$\bm{g}_{1,k0} = \ovec(G_{1,k0})$ and $\bm{g}_{2,k0} = \ovec(G_{2,k0})$ for $1 \leq k \leq s$, 
		$\bm{g}_{00} = (\bm{g}_{0,10}^{\prime}, \ldots, \bm{g}_{0,r0}^{\prime})^{\prime}$, and 
		$\bm{g}_{\ell0} = (\bm{g}_{\ell,10}^{\prime}, \ldots, \bm{g}_{\ell,s0}^{\prime})^{\prime}$ for $\ell = 1$ and $2$. 

	\subsection{Proof of Proposition \ref{propo Identification}}
		\begin{lemma} \label{lemma for identification}
			The following results hold, where $\mathbb{Z}^{+}$ denotes the set of positive integers. 
			\begin{enumerate}[($\romannumeral1$)]
				\item For any $r \in \mathbb{Z}^{+}$ and nonzero $\lambda_{k} \in (-1, 1)$ for $1 \leq k \leq r$, if $\{\lambda_{k}\}$ are distinct, then $\sum_{k=1}^{r} c_{k} \lambda_{k}^{j} = 0$ holds for all $j \in \mathbb{Z}^{+}$ if and only if $c_{k} = 0$ for all $1 \leq k \leq r$. 
				\item For any $s \in \mathbb{Z}^{+}$, and $\gamma_{k} \in (0,1)$ and $\varphi_{k} \in (0, \pi)$ for $1 \leq k \leq \max\{s, 2\}$, 
				\begin{enumerate}[(a)]
					\item if $\{\gamma_{k}\}$ are distinct, then $\sum_{k=1}^{s} \gamma_{k}^{j} [c_{k1} \cos(j \varphi_{k}) + c_{k2} \sin(j \varphi_{k})] = 0$ holds for all $j \in \mathbb{Z}^{+}$ if and only if $c_{k1} = c_{k2} = 0$ for all $1 \leq k \leq s$; 
					\item $c_{11} \cos(j \varphi_{1}) + c_{12} \sin(j \varphi_{1}) = c_{21} \cos(j \varphi_{2}) + c_{22} \sin(j \varphi_{2})$ holds for all $j \in \mathbb{Z}^{+}$ if and only if $\varphi_{1} = \varphi_{2}$, $c_{11} = c_{21}$ and $c_{12} = c_{22}$. 
				\end{enumerate}
				\item For any $r \in \mathbb{Z}^{+}$, nonzero $\lambda_{k} \in (-1, 1)$ for $1 \leq k \leq r$, $\gamma \in (0,1)$ and $\varphi \in (0, \pi)$, if $\{\lambda_{k}\}$ are distinct, then $\gamma^{j} [c_{01} \cos(j \varphi) + c_{02} \sin(j \varphi)] = \sum_{k=1}^{r} c_{k} \lambda_{k}^{j}$ holds for all $j \in \mathbb{Z}^{+}$ if and only if $c_{01} = c_{02} = c_{k} = 0$ for all $1 \leq k \leq r$. 
				\item For any $s \in \mathbb{Z}^{+}$, nonzero $\lambda \in (-1, 1)$, and $\gamma_{k} \in (0,1)$ and $\varphi_{k} \in (0, \pi)$ for $1 \leq k \leq s$, if $\{\gamma_{k}\}$ are distinct, then $c_{0} \lambda^{j} = \sum_{k=1}^{s} \gamma_{k}^{j} [c_{k1} \cos(j \varphi_{k}) + c_{k2} \sin(j \varphi_{k})]$ holds for all $j \in \mathbb{Z}^{+}$ if and only if $c_{0} = c_{k1} = c_{k2} = 0$ for all $1 \leq k \leq s$. 
			\end{enumerate}
		\end{lemma}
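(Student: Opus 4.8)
The plan is to prove Lemma~\ref{lemma for identification} by reducing all four parts to the single fact that distinct ``frequencies'' produce linearly independent sequences in $j$. Concretely, each identity asserts that a certain finite linear combination $\sum_m c_m f_m(j)$ vanishes (or two such combinations agree) for all $j\in\mathbb{Z}^+$, where each $f_m(j)$ is of the form $\mu^j$ for a nonzero real $\mu\in(-1,1)$ or $\gamma^j\cos(j\varphi)$, $\gamma^j\sin(j\varphi)$ for $\gamma\in(0,1)$, $\varphi\in(0,\pi)$. The key observation is that $\gamma^j\cos(j\varphi)=\tfrac12(\zeta^j+\bar\zeta^j)$ and $\gamma^j\sin(j\varphi)=\tfrac1{2i}(\zeta^j-\bar\zeta^j)$ with $\zeta=\gamma e^{i\varphi}$, so every such relation can be rewritten as $\sum_\ell d_\ell z_\ell^j=0$ for all $j\ge1$, where the $z_\ell$ are distinct nonzero complex numbers (distinctness of the $\lambda_k$ forces distinct real $z_\ell$; distinctness of the $\gamma_k$ together with $\varphi_k\in(0,\pi)$ forces the conjugate pairs $\gamma_k e^{\pm i\varphi_k}$ to be distinct across $k$ and distinct from any real $\lambda$; and $\varphi_k\in(0,\pi)$ rules out $e^{i\varphi_k}=\pm1$ so $\gamma_ke^{i\varphi_k}\neq\gamma_ke^{-i\varphi_k}$). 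Then a standard Vandermonde/generalized-geometric-series argument gives $d_\ell=0$ for all $\ell$.

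First I would state and prove the core lemma: if $z_1,\dots,z_L$ are distinct complex numbers and $\sum_{\ell=1}^L d_\ell z_\ell^j=0$ for all $j=1,\dots,L$ (hence for all $j\ge1$), then $d_\ell=0$ for all $\ell$. This follows because the $L\times L$ matrix $[z_\ell^{\,j}]_{j=1,\dots,L}^{\ell=1,\dots,L}=\operatorname{Diag}\{z_1,\dots,z_L\}\cdot[z_\ell^{\,j-1}]$ has determinant $\bigl(\prod_\ell z_\ell\bigr)\cdot\prod_{\ell<\ell'}(z_{\ell'}-z_\ell)\neq0$ since all $z_\ell$ are nonzero and distinct. (Alternatively one can invoke that geometric sequences with distinct ratios are linearly independent as functions on $\mathbb{Z}^+$.)

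Next I would dispatch the four parts in turn. Part~(i) is immediate: take $z_k=\lambda_k$. For part~(ii)(a), substitute $\zeta_k=\gamma_k e^{i\varphi_k}$; the $2s$ numbers $\{\zeta_k,\bar\zeta_k\}$ are distinct (distinct moduli across $k$, and $\varphi_k\in(0,\pi)$ separates $\zeta_k$ from $\bar\zeta_k$), the relation becomes $\sum_k\bigl[\tfrac12(c_{k1}-i c_{k2})\zeta_k^{\,j}+\tfrac12(c_{k1}+i c_{k2})\bar\zeta_k^{\,j}\bigr]=0$, so the core lemma forces $c_{k1}\mp i c_{k2}=0$, i.e.\ $c_{k1}=c_{k2}=0$. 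Part~(ii)(b) follows by moving everything to one side and applying (ii)(a) with $s=2$ if $\varphi_1\neq\varphi_2$ (four distinct exponentials, forcing all coefficients zero — contradiction unless both sides are identically zero, which forces $c_{ij}=0$, already covered), and the case $\varphi_1=\varphi_2$ is handled by noting $\cos(j\varphi),\sin(j\varphi)$ are linearly independent sequences (again two distinct exponentials $e^{\pm i\varphi}$). Parts~(iii) and~(iv) are the ``mixed'' versions and reduce the same way once one checks that a real $\lambda\in(-1,1)\setminus\{0\}$ is never equal to $\gamma e^{\pm i\varphi}$ with $\varphi\in(0,\pi)$ (the latter is non-real), so the combined exponent list is still distinct and nonzero.

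The main obstacle — really the only place needing care — is the bookkeeping of \emph{distinctness}: verifying in each part that, after converting sines/cosines to conjugate exponential pairs, all the resulting complex bases are pairwise distinct and nonzero. This rests on three elementary facts that must be invoked cleanly: nonzero $\lambda_k$ with distinct absolute values (implied by $\{\lambda_k\}$ distinct only in part (i); in the mixed parts the $\lambda_k$ are assumed distinct, and a single real $\lambda$ is trivially non-real-free), distinct $\gamma_k>0$ giving $|\zeta_k|=\gamma_k$ all different, and $\varphi_k\in(0,\pi)$ ensuring $e^{i\varphi_k}\neq\pm1$ so that $\zeta_k\neq\bar\zeta_k$ and $\zeta_k$ is non-real. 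Once these are tabulated, every part collapses to a one-line application of the core Vandermonde lemma, and the ``only if'' directions are the substantive content while the ``if'' directions are trivial.
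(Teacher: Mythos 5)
Your proof is correct, but it takes a genuinely different route from the paper's. The paper argues asymptotically: it orders the bases by modulus, divides the identity by the dominant term, and lets $j\to\infty$ so that the subdominant ratios vanish (or oscillate as $\pm1$ in the case $\lambda_2=-\lambda_1$), peeling off one coefficient at a time and recursing; part (ii)(b) is handled via the amplitude--phase identity $a\cos x+b\sin x=\sqrt{a^2+b^2}\cos(x-\alpha)$. You instead convert every sine/cosine term to the conjugate exponential pair $\gamma e^{\pm i\varphi}$, check that the resulting complex bases are pairwise distinct and nonzero, and invoke invertibility of the (scaled) Vandermonde matrix. Your route is more uniform --- all four parts reduce to one algebraic lemma, it needs only $L$ values of $j$ rather than the limit $j\to\infty$, and it avoids the paper's case split on whether a ratio of bases equals $-1$; it also handles the degenerate case of (ii)(b) where all $c_{ij}=0$ more cleanly than the paper, whose definition of $\alpha_{1}=\arccos(c_{11}/\sqrt{c_{11}^{2}+c_{12}^{2}})$ silently assumes a nonzero amplitude. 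What the paper's approach buys is that it stays entirely within real sequences and elementary limits. One trivial slip in your write-up: the factorization of $[z_{\ell}^{\,j}]$ should place $\Diag\{z_{1},\ldots,z_{L}\}$ as a right factor (it scales columns), though the determinant you compute is correct either way, so nothing is affected.
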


		\begin{proof}[\textbf{Proof of Proposition \ref{propo Identification}}]
			By Lemma \ref{lemma for identification}, we can obtain that the following equation holds if and only if $r' = r$, $s' = s$, $\lambda_{k} = \lambda_{k0}$, $\gamma_{k} = \gamma_{k0}$, $\varphi_{k} = \varphi_{k0}$, $G_{0k} = G_{0,k0}$, $G_{1k} = G_{1,k0}$ and $G_{2k} = G_{2,k0}$ for all $k$: 
			\begin{align*}
				&\sum_{k=1}^{r'} \lambda_{k}^{i-1} G_{0,k} 
				+ \sum_{k=1}^{s'} \gamma_{k}^{i-1} \left[\cos((i-1) \varphi_{k}) G_{1,k} + \sin((i-1) \varphi_{k}) G_{2,k}\right] \\
				= &\sum_{k=1}^{r} \lambda_{k0}^{i-1} G_{0,k0} 
				+ \sum_{k=1}^{s} \gamma_{k0}^{i-1} \left[\cos((i-1) \varphi_{k0}) G_{1,k0} + \sin((i-1) \varphi_{k0}) G_{2,k0}\right]. 
			\end{align*}
			Thus ($\romannumeral1$) holds. 

			To show that $\bm{\theta}_{0}$ is identifiable, that is to show that if $H_{t}(\bm{\theta}) = H_{t}(\bm{\theta}_{0})$ almost surely (a.s.), then $\bm{\theta} = \bm{\theta}_{0}$. 
			Recall that $H_{t}(\bm{\theta}) = [h_{ij,t}(\bm{\theta})] = D_{t}(\bm{\delta}) R_{t}(\bm{\theta}) D_{t}(\bm{\delta})$, $D_{t}(\bm{\delta}) = [\Diag(H_{t}(\bm{\theta}))]^{1/2}$ and $\ln\bm{h}_{t}(\bm{\delta}) = (\ln h_{11,t}(\bm{\delta}), \ldots, \ln h_{mm,t}(\bm{\delta}))^{\prime}$. 
			Hence $H_{t}(\bm{\theta}) = H_{t}(\bm{\theta}_{0})$ a.s. is equivalent to $\ln\bm{h}_{t}(\bm{\delta}) = \ln\bm{h}_{t}(\bm{\delta}_{0})$ and $R_{t}(\bm{\theta}) = R_{t}(\bm{\theta}_{0})$ a.s.. 
			Moreover, by model \eqref{model Dt SGARCH(r,s)} and the proof of ($\romannumeral1$), we have that $\ln\bm{h}_{t}(\bm{\delta}) = \ln\bm{h}_{t}(\bm{\delta}_{0})$ a.s. implies that $\bm{\delta} = \bm{\delta}_{0}$. 
			Recall that $\bm{\theta} = (\bm{\delta}^{\prime}, \bm{\beta}^{\prime})^{\prime}$, then we are left to verify that $R_{t}(\bm{\theta}) = R_{t}(\bm{\theta}_{0})$ a.s. together with $\bm{\delta} = \bm{\delta}_{0}$ implies that $\bm{\beta} = \bm{\beta}_{0}$. 
			By model \eqref{model Rt SGARCH(r,s)}, it holds that $R_{t}(\bm{\theta}) = [1 - \beta_{1}/(1 - \beta_{2})] \underline{R} + \beta_{1} \sum_{j=0}^{\infty} \beta_{2}^{j} \Psi_{t-j-1}(\bm{\delta})$. 
			Suppose that $R_{t}(\bm{\theta}) = R_{t}(\bm{\theta}_{0})$ a.s. and $\bm{\delta} = \bm{\delta}_{0}$. 
			It follows that 
			$$
				0 = R_{t}(\bm{\theta}) - R_{t}(\bm{\theta}_{0}) 
				= \left[\left(1 - \frac{\beta_{1}}{1 - \beta_{2}}\right) \underline{R} - \left(1 - \frac{\beta_{10}}{1 - \beta_{20}}\right) \underline{R}_{0}\right] + \sum_{j=0}^{\infty} \left(\beta_{1} \beta_{2}^{j} - \beta_{10} \beta_{20}^{j}\right) \Psi_{t-j-1}(\bm{\delta}_{0}) 
			$$
			a.s., which implies that $\beta_{1} = \beta_{10}$, $\beta_{2} = \beta_{20}$ and $\underline{R} = \underline{R}_{0}$, that is $\bm{\beta} = \bm{\beta}_{0}$. 
			As a result, if $H_{t}(\bm{\theta}) = H_{t}(\bm{\theta}_{0})$ a.s., then $\bm{\theta} = \bm{\theta}_{0}$. 
			And thus ($\romannumeral2$) holds. 
			The proof of this proposition is accomplished. 
		\end{proof}

	\subsection{Proof of Theorem \ref{thm Stationarity}}
		By model \eqref{rewritten model Rt SGARCH(r,s)}, we have that 
		\begin{align}
		  \bm{\varepsilon}_{t} = R_{t}^{1/2} \bm{\eta}_{t} 
		  \;\; \text{and} \;\; 
		  R_{t} = \underline{\beta} \underline{R} + \beta_{10} \Psi_{t-1} + \beta_{20} R_{t-1}, \label{eq rewritten model Rt SGARCH(r,s)}
		\end{align}
		where $\{\bm{\eta}_{t}\}$ is a sequence of $i.i.d.$ variables, $\Psi_{t}$ is the sample correlation matrix of $\{\bm{\varepsilon}_{t}, \ldots, \bm{\varepsilon}_{t-\Bbbk+1}\}$, and $\underline{\beta} = 1 - \beta_{10} - \beta_{20}$. 
		Let $\underline{m} = m(m-1)/2$. 
		Define two $(m^{2} + \Bbbk m - 2 m)$-dimensional random vectors and an $(m^{2} + \Bbbk m - 2 m) \times (m^{2} + \Bbbk m - 2 m)$ constant matrix as follows: 
		\begin{small}
		$$
		\bm{x}_{t} = 
		\left(
		\begin{matrix}
		\ovechsec(R_{t}) \\ \ovechsec(\Psi_{t}) \\ \bm{\varepsilon}_{t} \\ \bm{\varepsilon}_{t-1} \\ \vdots \\ \bm{\varepsilon}_{t-\Bbbk+3} \\ \bm{\varepsilon}_{t-\Bbbk+2}
		\end{matrix}
		\right), 
		\bm{\xi}_{t} = 
		\left(
		\begin{matrix}
		\underline{\beta} \ovechsec(\underline{R}_{0}) \\ \ovechsec(\Psi_{t}) \\ \bm{\varepsilon}_{t} \\ \bm{0}_{m} \\ \vdots \\ \bm{0}_{m} \\ \bm{0}_{m} 
		\end{matrix}
		\right), 
		\Upsilon = 
		\left(
		\begin{matrix}
			\beta_{20} I_{\underline{m}} & \beta_{10} I_{\underline{m}} & 0_{\underline{m} \times m} & 0_{\underline{m} \times m} & \cdots & 0_{\underline{m} \times m} & 0_{\underline{m} \times m} \\
			0_{\underline{m}} & 0_{\underline{m}} & 0_{\underline{m} \times m} & 0_{\underline{m} \times m} & \cdots & 0_{\underline{m} \times m} & 0_{\underline{m} \times m} \\
			0_{m \times \underline{m}} & 0_{m \times \underline{m}} & 0_{m} & 0_{m} & \cdots & 0_{m} & 0_{m} \\
			0_{m \times \underline{m}} & 0_{m \times \underline{m}} & I_{m} & 0_{m} & \cdots & 0_{m} & 0_{m} \\
			\vdots & \vdots & \vdots & \vdots & & \vdots & \vdots \\
			0_{m \times \underline{m}} & 0_{m \times \underline{m}} & 0_{m} & 0_{m} & \cdots & 0_{m} & 0_{m} \\
			0_{m \times \underline{m}} & 0_{m \times \underline{m}} & 0_{m} & 0_{m} & \cdots & I_{m} & 0_{m} \\
		\end{matrix}
		\right). 
		$$
		\end{small}
		Then it holds that 
		\begin{align} \label{eq Markov chain}
			\bm{x}_{t} = \Upsilon \bm{x}_{t-1} + \bm{\xi}_{t}. 
		\end{align}

		\begin{lemma} \label{lemma for stationarity}
			The following results hold. 
			\begin{enumerate}[($\romannumeral1$)]
				\item Suppose that $\rho(\Upsilon^{\otimes k}) < 1$ for some $k \geq 1$. Then there exists a vector $\bm{\nu} > 0$ such that $(I_{a} - \Upsilon^{\otimes k})^{\prime} \bm{\nu} > 0$, where $a = (m^{2} + \Bbbk m - 2 m)^{k}$, and $\bm{\nu} > 0$ means that each element of $\bm{\nu}$ is positive. 
				\item Furthermore, suppose that $E\|\bm{\eta}_{t}\|_{2}^{k} < \infty$. Then there exists a compact set $\mathcal{A} = \{\bm{x} \in \mathbb{R}^{a}: \|\bm{x}\|_{1}^{k} \leq \Delta\}$ with $\Delta > 0$, a uniformly bounded function $g_{1}(\bm{x})$ on $\mathcal{A}$, and a constant $c_{0} > 0$, such that the function $g(\bm{x}) = 1 + |\bm{x}^{\otimes k}|^{\prime} \bm{\nu}$ satisfies the following inequalities: 
				\begin{align*}
					&E\left(g(\bm{x}_{t}) \mid \bm{x}_{t-1} = \bm{x}\right) \leq g(\bm{x}) + g_{1}(\bm{x}), \;\; \bm{x} \in \mathbb{R}^{a}, \;\; \text{and} \\
					&E\left(g(\bm{x}_{t}) \mid \bm{x}_{t-1} = \bm{x}\right) \leq (1 - c_{0}) g(\bm{x}), \;\; \bm{x} \in \mathcal{A}^{c} = \mathbb{R}^{a} - \mathcal{A}. 
				\end{align*}
			\end{enumerate}
		\end{lemma}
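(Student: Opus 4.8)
The plan is to read \eqref{eq Markov chain} as a random affine iteration and to build a polynomial Foster--Lyapunov function $g$ of the form stated, where the positivity vector $\bm{\nu}$ from part ($\romannumeral1$) is exactly what forces the leading $\|\bm{x}\|_1^{k}$-coefficient in the one-step drift to drop strictly below $1$.

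\textbf{Part ($\romannumeral1$).} I would use a nonnegative-matrix argument. Every entry of $\Upsilon$ is nonnegative, hence so is every entry of $\Upsilon^{\otimes k}$; since $\rho(\Upsilon^{\otimes k})<1$, the Neumann series $(I_{a}-\Upsilon^{\otimes k})^{-1}=\sum_{j\ge 0}(\Upsilon^{\otimes k})^{j}$ converges and is entrywise nonnegative. Taking $\bm{\nu}=[(I_{a}-\Upsilon^{\otimes k})^{-1}]^{\prime}\bm{1}_{a}$, the $j=0$ term already gives $\bm{\nu}\ge\bm{1}_{a}>0$, while $(I_{a}-\Upsilon^{\otimes k})^{\prime}\bm{\nu}=\bm{1}_{a}>0$, which is the claim. (Since $\Upsilon$ is block upper triangular with a nilpotent shift part and a top block with eigenvalues $\beta_{20}$ and $0$, in fact $\rho(\Upsilon)=\beta_{20}<1$, so this hypothesis holds for every $k\ge 1$.) I would also record $\epsilon_{0}:=\min_{i}\{[(I_{a}-\Upsilon^{\otimes k})^{\prime}\bm{\nu}]_{i}/\nu_{i}\}\in(0,1]$, which gives the entrywise bound $(\Upsilon^{\otimes k})^{\prime}\bm{\nu}\le(1-\epsilon_{0})\bm{\nu}$.

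\textbf{Part ($\romannumeral2$), structural facts.} Choosing the square root $H_{t}^{1/2}=D_{t}R_{t}^{1/2}$ (allowed by the convention $A^{1/2}(A^{1/2})^{\prime}=A$) yields $\bm{\varepsilon}_{t}=D_{t}^{-1}\mathbf{y}_{t}=R_{t}^{1/2}\bm{\eta}_{t}$, where $R_{t}=(1-\beta_{10}-\beta_{20})\underline{R}_{0}+\beta_{10}\Psi_{t-1}+\beta_{20}R_{t-1}$ is a deterministic function of $\bm{x}_{t-1}$ and is a correlation matrix, so $\lambda_{\max}(R_{t})\le\tr(R_{t})=m$. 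Consequently $\|\bm{\varepsilon}_{t}\|_{2}\le\sqrt{m}\,\|\bm{\eta}_{t}\|_{2}$, and since $\Psi_{t}$ is also a correlation matrix, $\|\bm{\xi}_{t}\|_{1}\le c_{1}+c_{2}\|\bm{\eta}_{t}\|_{2}$ with constants not depending on $\bm{x}_{t-1}$; hence $E(\|\bm{\xi}_{t}\|_{1}^{j}\mid\bm{x}_{t-1})\le M<\infty$ uniformly in $\bm{x}_{t-1}$ for all $1\le j\le k$, using $E\|\bm{\eta}_{t}\|_{2}^{k}<\infty$. Next I would expand $\bm{x}_{t}^{\otimes k}=(\Upsilon\bm{x}+\bm{\xi}_{t})^{\otimes k}$ by multilinearity into the leading term $\Upsilon^{\otimes k}\bm{x}^{\otimes k}$ plus $2^{k}-1$ mixed Kronecker products, each carrying at least one factor $\bm{\xi}_{t}$. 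Taking entrywise absolute values, pairing with $\bm{\nu}\ge 0$, using $|\Upsilon^{\otimes k}\bm{x}^{\otimes k}|\le\Upsilon^{\otimes k}|\bm{x}^{\otimes k}|$ and the bound from part ($\romannumeral1$), one gets $g(\bm{x}_{t})\le(1-\epsilon_{0})g(\bm{x})+\epsilon_{0}+(\text{mixed terms})^{\prime}\bm{\nu}$. The conditional expectation of the mixed part is bounded, via $\|\Upsilon\bm{x}\|_{1}\le\|\Upsilon\|_{1}\|\bm{x}\|_{1}$ and the moment bounds on $\bm{\xi}_{t}$, by a polynomial $\phi(\|\bm{x}\|_{1})$ of degree at most $k-1$ with coefficients depending only on $M$, $\bm{\nu}$ and $\Upsilon$; thus $E(g(\bm{x}_{t})\mid\bm{x}_{t-1}=\bm{x})\le(1-\epsilon_{0})g(\bm{x})+\epsilon_{0}+\phi(\|\bm{x}\|_{1})$.

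\textbf{Conclusion and main obstacle.} Setting $g_{1}(\bm{x})=\epsilon_{0}+\phi(\|\bm{x}\|_{1})$ gives the first inequality immediately (since $\epsilon_{0}g(\bm{x})\ge 0$), and $g_{1}$, being a polynomial in $\|\bm{x}\|_{1}$, is bounded on any compact set $\mathcal{A}$. For the second inequality I would use $g(\bm{x})\ge 1+(\min_{i}\nu_{i})\|\bm{x}\|_{1}^{k}$; since $\deg\phi\le k-1<k$, the ratio $[\epsilon_{0}+\phi(\|\bm{x}\|_{1})]/g(\bm{x})\to 0$ as $\|\bm{x}\|_{1}\to\infty$, so there is $\Delta>0$ with this ratio $\le\epsilon_{0}/2$ on $\mathcal{A}^{c}=\{\|\bm{x}\|_{1}^{k}>\Delta\}$, and taking $c_{0}=\epsilon_{0}/2\in(0,1)$ finishes. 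The only genuinely technical step is the $k$-fold Kronecker expansion: one must verify that after absolute values and conditional expectation the remainder really is $O(\|\bm{x}\|_{1}^{k-1})$ with all constants controlled by finite moments of $\bm{\eta}_{t}$ — the nonnegativity input from part ($\romannumeral1$) is what keeps the top-order coefficient strictly below $1$, and everything else is bookkeeping.
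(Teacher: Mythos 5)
Your proposal is correct and follows essentially the same route as the paper's proof: part ($\romannumeral1$) via the nonnegative Neumann series $[(I_{a}-\Upsilon^{\otimes k})^{\prime}]^{-1}\bm{\nu}_{1}$, and part ($\romannumeral2$) via the multilinear expansion of $(\Upsilon\bm{x}+\bm{\xi}_{t})^{\otimes k}$, the uniform conditional moment bound on $\bm{\xi}_{t}$ obtained from $\bm{\varepsilon}_{t}=R_{t}^{1/2}\bm{\eta}_{t}$ with $\tr(R_{t})=m$, and the observation that the remainder is a degree-$(k-1)$ polynomial in $\|\bm{x}\|$ dominated by $g(\bm{x})$ outside a large compact set. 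Your reformulation of the contraction through $\epsilon_{0}$ and $(\Upsilon^{\otimes k})^{\prime}\bm{\nu}\le(1-\epsilon_{0})\bm{\nu}$ is an equivalent repackaging of the paper's lower bound on $|\bm{x}^{\otimes k}|^{\prime}(I_{a}-\Upsilon^{\otimes k})^{\prime}\bm{\nu}/g(\bm{x})$.
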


		\begin{proof}[\textbf{Proof of Lemma \ref{lemma Stationarity of epsilont}}]
			By \eqref{eq rewritten model Rt SGARCH(r,s)}, $\bm{\varepsilon}_{t} = R_{t}^{1/2} \bm{\eta}_{t}$ with $R_{t} = \underline{\beta} \underline{R}_{0} + \beta_{10} \Psi_{t-1} + \beta_{20} R_{t-1}$, $\Psi_{t}$ being the sample correlation matrix of $\{\bm{\varepsilon}_{t}, \ldots, \bm{\varepsilon}_{t-\Bbbk+1}\}$, and $\{\bm{\eta}_{t}\}$ being a sequence of $i.i.d.$ variables. 
			Recall that $\bm{x}_{t} = (\ovechsec(R_{t})^{\prime}, \ovechsec(\Psi_{t})^{\prime}, \bm{\varepsilon}_{t}^{\prime}, \ldots, \bm{\varepsilon}_{t-\Bbbk+2}^{\prime})^{\prime}$. 
			To prove that the process $\{\bm{\varepsilon}_{t}\}$ admits a unique, non-anticipative, strictly stationary and ergodic solution, 
			it suffices to establish the result for $\{[\ovechsec(R_{t})^{\prime}, \bm{\varepsilon}_{t}^{\prime}]^{\prime}\}$, or equivalently to show that the process $\{\bm{x}_{t}\}$ admits a unique, non-anticipative, strictly stationary and ergodic solution. 
			The arguments below adopt the technical tools in \cite{fermanian2017stationarity_supp} and \citet{ling2003asymptotic}. 

			(Non-anticipation and strict stationarity:) 
			Note that 
			($\romannumeral1$) $R_{t}$ is a measurable function of $\Psi_{t-1}$ and $R_{t-1}$, by $R_{t} = \underline{\beta} \underline{R}_{0} + \beta_{10} \Psi_{t-1} + \beta_{20} R_{t-1}$, 
			($\romannumeral2$) $\Psi_{t}$ is a measurable function of $\{\bm{\varepsilon}_{t}, \ldots, \bm{\varepsilon}_{t-\Bbbk+1}\}$, since $\Psi_{t}$ is the sample correlation matrix of $\{\bm{\varepsilon}_{t}, \ldots, \bm{\varepsilon}_{t-\Bbbk+1}\}$, and 
			($\romannumeral3$) $\bm{\varepsilon}_{t}$ is a measurable function of $R_{t}$ and $\bm{\eta}_{t}$, as $\bm{\varepsilon}_{t} = R_{t}^{1/2} \bm{\eta}_{t}$. 
			As a result, $\bm{x}_{t}$ is a measurable function of $\bm{x}_{t-1}$ and $\bm{\eta}_{t}$ only. 
			This together with $\{\bm{\eta}_{t}\}$ being a sequence of $i.i.d.$ variables, implies that $\{\bm{x}_{t}\}$ is a Markov chain with state space $\mathbb{R}^{m^{2} + \Bbbk m - 2 m}$. 
			Moreover, it is straightforward to verify that, $E(g(\bm{x}_{t}) \mid \bm{x}_{t-1} = \bm{x})$ is continuous in $\bm{x}$ for any bounded continuous function $g$ on $\mathbb{R}^{m^{2} + \Bbbk m - 2 m}$, that is, $\{\bm{x}_{t}\}$ is a Feller chain. 
			For the $\Upsilon$ defined before \eqref{eq Markov chain}, it can be verified that $\rho(\Upsilon^{\otimes k}) < 1$ holds for all $k \geq 1$ if $|\beta_{2}| < 1$. 
			Recall that $\bm{x}_{t} = \Upsilon \bm{x}_{t-1} + \bm{\xi}_{t}$ by \eqref{eq Markov chain}. Then under the condition $E\|\bm{\eta}_{t}\|_{2}^{k} < \infty$ for some $k \geq 1$, by Lemma \ref{lemma for stationarity}, there exists a vector $\bm{\nu} > 0$, a compact set $\mathcal{A} = \{\bm{x} \in \mathbb{R}^{a}: \|\bm{x}\|_{1}^{k} \leq \Delta\}$ with $a = k (m^{2} + \Bbbk m - 2 m)$ and $\Delta > 0$, a uniformly bounded function $g_{1}(\bm{x})$ on $\mathcal{A}$, and a constant $c_{0} > 0$, such that the function $g(\bm{x}) = 1 + |\bm{x}^{\otimes k}|^{\prime} \bm{\nu}$ satisfies 
			\begin{align}
				&E\left(g(\bm{x}_{t}) \mid \bm{x}_{t-1} = \bm{x}\right) \leq g(\bm{x}) + g_{1}(\bm{x}), \;\; \bm{x} \in \mathbb{R}^{a}, \;\; \text{and} \label{eq1 in lemma for stationarity}\\
				&E\left(g(\bm{x}_{t}) \mid \bm{x}_{t-1} = \bm{x}\right) \leq (1 - c_{0}) g(\bm{x}), \;\; \bm{x} \in \mathcal{A}^{c} = \mathbb{R}^{a} - \mathcal{A}. \label{eq2 in lemma for stationarity}
			\end{align}
			Note that $g(\bm{x}) \geq 1$. This together with \eqref{eq2 in lemma for stationarity} implies that 
			\begin{align*}
				E\left(g(\bm{x}_{t}) \mid \bm{x}_{t-1} = \bm{x}\right) \leq g(\bm{x}) - c_{0}, \;\; \bm{x} \in \mathcal{A}^{c}. 
			\end{align*}
			Then by (1) in Lemma A.1 of \citet{ling2003asymptotic} or Theorem 2 in \citet{tweedie1988invariant_supp}, there exists a $\sigma$-finite invariant measure $\mu$ for $P$ with $0 < \mu(\mathcal{A}) < \infty$. 
			Furthermore, since $\mathcal{A}$ is a compact set and $g_{1}(\bm{x})$ is uniformly bounded, \eqref{eq1 in lemma for stationarity} implies that when $\bm{x} \in \mathcal{A}$, $E\left(g(\bm{x}_{t}) \mid \bm{x}_{t-1} = \bm{x}\right) \leq c_{1}$ holds for some constant $c_{1} > 0$. 
			Hence we have that 
			\begin{align*}
				\int_{\mathcal{A}} \mu(d \bm{x}) \left[\int_{\mathcal{A}^{c}} P(\bm{x}, d \mathbf{y}) g(\mathbf{y})\right] 
				\leq \int_{\mathcal{A}} \mu(d \bm{x}) E\left(g(\bm{x}_{t}) \mid \bm{x}_{t-1} = \bm{x}\right) \leq c_{1} \mu(\mathcal{A}) < \infty. 
			\end{align*}
			Then by (2) in Lemma A.1 of \citet{ling2003asymptotic} or Theorem 2 in \citet{tweedie1988invariant_supp}, $\{\bm{x}_{t}\}$ has a finite invariant measure $\mu$, and hence $\pi = \mu / \mu(\mathbb{R}^{m^{2} + \Bbbk m - 2 m})$ is an invariant probability measure of $\{\bm{x}_{t}\}$; that is, there exists a strictly stationary solution satisfying \eqref{eq Markov chain}, which is also non-anticipative. 

			(Uniqueness:) 
			Suppose that $\{\bm{x}_{t}\}$ and $\{\widetilde{\bm{x}}_{t}\}$ are strictly stationary solutions to \eqref{eq Markov chain}, where $\bm{x}_{t} = (\ovechsec(R_{t})^{\prime}, \ovechsec(\Psi_{t})^{\prime}, \bm{\varepsilon}_{t}^{\prime}, \ldots, \bm{\varepsilon}_{t-\Bbbk+2}^{\prime})^{\prime}$ and $\widetilde{\bm{x}}_{t} = (\ovechsec(\widetilde{R}_{t})^{\prime}, \ovechsec(\widetilde{\Psi}_{t})^{\prime}, \widetilde{\bm{\varepsilon}}_{t}^{\prime}, \ldots, \widetilde{\bm{\varepsilon}}_{t-\Bbbk+2}^{\prime})^{\prime}$. 
			Note that $\bm{\varepsilon}_{t} = R_{t}^{1/2} \bm{\eta}_{t}$, $\widetilde{\bm{\varepsilon}}_{t} = \widetilde{R}_{t}^{1/2} \bm{\eta}_{t}$, $\Psi_{t}$ is the sample correlation matrix of $\{\bm{\varepsilon}_{t}, \ldots, \bm{\varepsilon}_{t-\Bbbk+1}\}$, and $\widetilde{\Psi}_{t}$ is the sample correlation matrix of $\{\widetilde{\bm{\varepsilon}}_{t}, \ldots, \widetilde{\bm{\varepsilon}}_{t-\Bbbk+1}\}$. 
			Thus $\bm{\varepsilon}_{t} = \widetilde{\bm{\varepsilon}}_{t}$ and $\Psi_{t} = \widetilde{\Psi}_{t}$ a.s. if $R_{t} = \widetilde{R}_{t}$ a.s., then it follows that $\bm{x}_{t} = \widetilde{\bm{x}}_{t}$ a.s..  
			To establish $\bm{x}_{t} = \widetilde{\bm{x}}_{t}$ a.s., below we prove that $R_{t} = \widetilde{R}_{t}$ a.s..  
			Recall that $\bm{\varepsilon}_{t} = (\varepsilon_{1t}, \ldots, \varepsilon_{mt})^{\prime}$ and $\widetilde{\bm{\varepsilon}}_{t} = (\widetilde{\varepsilon}_{1t}, \ldots, \widetilde{\varepsilon}_{mt})^{\prime}$. 
			Let $\bar{\bm{\varepsilon}}_{i,t-1} = (\varepsilon_{i,t-1}, \ldots, \varepsilon_{i,t-\Bbbk})^{\prime}$ and $\widetilde{\bar{\bm{\varepsilon}}}_{i,t-1} = (\widetilde{\varepsilon}_{i,t-1}, \ldots, \widetilde{\varepsilon}_{i,t-\Bbbk})^{\prime}$ for $1 \leq i \leq m$. 
			Denote $\Psi_{t} = [\Psi_{t,ij}]$ and $\widetilde{\Psi}_{t} = [\widetilde{\Psi}_{t,ij}]$. 
			For any $1 \leq i,j \leq m$, it holds that 
			$$
				\Psi_{t-1,ij} 
				= \frac{\sum_{k=1}^{\Bbbk} \varepsilon_{i,t-k} \varepsilon_{j,t-k}}{\left[\left(\sum_{k=1}^{\Bbbk} \varepsilon_{i,t-k}^{2}\right) \left(\sum_{k=1}^{\Bbbk} \varepsilon_{j,t-k}^{2}\right)\right]^{1/2}} 
				= \frac{\bar{\bm{\varepsilon}}_{i,t-1}^{\prime} \bar{\bm{\varepsilon}}_{j,t-1}}{\|\bar{\bm{\varepsilon}}_{i,t-1}\|_{2} \|\bar{\bm{\varepsilon}}_{j,t-1}\|_{2}} 
				= \left(\frac{\bar{\bm{\varepsilon}}_{i,t-1}}{\|\bar{\bm{\varepsilon}}_{i,t-1}\|_{2}}\right)^{\prime} \left(\frac{\bar{\bm{\varepsilon}}_{j,t-1}}{\|\bar{\bm{\varepsilon}}_{j,t-1}\|_{2}}\right). 
			$$
			Similarly, we have 
			$$
				\widetilde{\Psi}_{t-1,ij} 
				= \left(\frac{\widetilde{\bar{\bm{\varepsilon}}}_{i,t-1}}{\|\widetilde{\bar{\bm{\varepsilon}}}_{i,t-1}\|_{2}}\right)^{\prime} \left(\frac{\widetilde{\bar{\bm{\varepsilon}}}_{j,t-1}}{\|\widetilde{\bar{\bm{\varepsilon}}}_{j,t-1}\|_{2}}\right). 
			$$
			Notice that $(\romannumeral1)$ $|\mathbf{a}^{\prime} \mathbf{b} - \mathbf{c}^{\prime} \mathbf{d}| \leq |\mathbf{a}^{\prime} \mathbf{b} - \mathbf{a}^{\prime} \mathbf{d}| + |\mathbf{a}^{\prime} \mathbf{d} - \mathbf{c}^{\prime} \mathbf{d}| \leq \|\mathbf{a}\|_{2} \|\mathbf{b} - \mathbf{d}\|_{2} + \|\mathbf{d}\|_{2} \|\mathbf{a} - \mathbf{c}\|_{2}$ holds for any vectors $\mathbf{a}$, $\mathbf{b}$, $\mathbf{c}$ and $\mathbf{d}$, and $(\romannumeral2)$ $\|(\mathbf{u} / \|\mathbf{u}\|_{2}) - (\mathbf{v} / \|\mathbf{v}\|_{2})\|_{2} \leq 2\|\mathbf{u} - \mathbf{v}\|_{2} / \min\{\|\mathbf{u}\|_{2}, \|\mathbf{v}\|_{2}\}$ holds for any nonzero vectors $\mathbf{u}$ and $\mathbf{v}$. 
			Then under the conditions $\sum_{k=1}^{\Bbbk} \varepsilon_{i, t-\Bbbk}^{2} \geq \underline{c}$ and $\sum_{k=1}^{\Bbbk} \widetilde{\varepsilon}_{i, t-\Bbbk}^{2} \geq \underline{c}$ a.s. for $1 \leq i \leq m$, it can be shown that 
			\begin{align} \label{eq bound of |(Psi)ij - (Psitilde)ij|}
				&\left|\Psi_{t-1,ij} - \widetilde{\Psi}_{t-1,ij}\right| 
				= \left|\left(\frac{\bar{\bm{\varepsilon}}_{i,t-1}}{\|\bar{\bm{\varepsilon}}_{i,t-1}\|_{2}}\right)^{\prime} \left(\frac{\bar{\bm{\varepsilon}}_{j,t-1}}{\|\bar{\bm{\varepsilon}}_{j,t-1}\|_{2}}\right) - \left(\frac{\widetilde{\bar{\bm{\varepsilon}}}_{i,t-1}}{\|\widetilde{\bar{\bm{\varepsilon}}}_{i,t-1}\|_{2}}\right)^{\prime} \left(\frac{\widetilde{\bar{\bm{\varepsilon}}}_{j,t-1}}{\|\widetilde{\bar{\bm{\varepsilon}}}_{j,t-1}\|_{2}}\right)\right| \notag \\
				\leq &\left\|\frac{\bar{\bm{\varepsilon}}_{i,t-1}}{\|\bar{\bm{\varepsilon}}_{i,t-1}\|_{2}} - \frac{\widetilde{\bar{\bm{\varepsilon}}}_{i,t-1}}{\|\widetilde{\bar{\bm{\varepsilon}}}_{i,t-1}\|_{2}}\right\|_{2} + \left\|\frac{\bar{\bm{\varepsilon}}_{j,t-1}}{\|\bar{\bm{\varepsilon}}_{j,t-1}\|_{2}} - \frac{\widetilde{\bar{\bm{\varepsilon}}}_{j,t-1}}{\|\widetilde{\bar{\bm{\varepsilon}}}_{j,t-1}\|_{2}}\right\|_{2} \notag \\
				\leq &\frac{2\|\bar{\bm{\varepsilon}}_{i,t-1} - \widetilde{\bar{\bm{\varepsilon}}}_{i,t-1}\|_{2}}{\min\{\|\bar{\bm{\varepsilon}}_{i,t-1}\|_{2}, \|\widetilde{\bar{\bm{\varepsilon}}}_{i,t-1}\|_{2}\}} + \frac{2\|\bar{\bm{\varepsilon}}_{j,t-1} - \widetilde{\bar{\bm{\varepsilon}}}_{j,t-1}\|_{2}}{\min\{\|\bar{\bm{\varepsilon}}_{j,t-1}\|_{2}, \|\widetilde{\bar{\bm{\varepsilon}}}_{j,t-1}\|_{2}\}} \notag \\
				\leq &\frac{2}{\sqrt{\underline{c}}} \left(\|\bar{\bm{\varepsilon}}_{i,t-1} - \widetilde{\bar{\bm{\varepsilon}}}_{i,t-1}\|_{2} + \|\bar{\bm{\varepsilon}}_{j,t-1} - \widetilde{\bar{\bm{\varepsilon}}}_{j,t-1}\|_{2}\right) \;\; \text{a.s.}. 
			\end{align}
			Moreover, by $\bm{\varepsilon}_{t} = R_{t}^{1/2} \bm{\eta}_{t}$ and $\widetilde{\bm{\varepsilon}}_{t} = \widetilde{R}_{t}^{1/2} \bm{\eta}_{t}$, we have 
			$$
				|\varepsilon_{it} - \widetilde{\varepsilon}_{it}| 
				= \left|\left(R_{t}^{1/2} - \widetilde{R}_{t}^{1/2}\right)_{i \cdot} \bm{\eta}_{t}\right| 
				\leq \left\|\left(R_{t}^{1/2} - \widetilde{R}_{t}^{1/2}\right)_{i \cdot}\right\|_{2} \left\|\bm{\eta}_{t}\right\|_{2} 
				\leq \left\|R_{t}^{1/2} - \widetilde{R}_{t}^{1/2}\right\|_{F} \left\|\bm{\eta}_{t}\right\|_{2}, 
			$$
			and then 
			\begin{align} \label{eq bound of |epsilonbar - epsilonbartilde|}
				\|\bar{\bm{\varepsilon}}_{i,t-1} - \widetilde{\bar{\bm{\varepsilon}}}_{i,t-1}\|_{2} 
				= \left[\sum_{k=1}^{\Bbbk} (\varepsilon_{i,t-k} - \widetilde{\varepsilon}_{i,t-k})^{2}\right]^{1/2} 
				&\leq \left[\sum_{k=1}^{\Bbbk} \left\|R_{t-k}^{1/2} - \widetilde{R}_{t-k}^{1/2}\right\|_{F}^{2} \|\bm{\eta}_{t-k}\|_{2}^{2}\right]^{1/2} \notag \\
				&\leq \sum_{k=1}^{\Bbbk} \left\|R_{t-k}^{1/2} - \widetilde{R}_{t-k}^{1/2}\right\|_{F} \|\bm{\eta}_{t-k}\|_{2}. 
			\end{align}
			By \eqref{eq bound of |(Psi)ij - (Psitilde)ij|}--\eqref{eq bound of |epsilonbar - epsilonbartilde|}, we can obtain that 
			$$
				\left|\Psi_{t-1,ij} - \widetilde{\Psi}_{t-1,ij}\right| 
				\leq \frac{4}{\sqrt{\underline{c}}} \sum_{k=1}^{\Bbbk} \left\|R_{t-k}^{1/2} - \widetilde{R}_{t-k}^{1/2}\right\|_{F} \|\bm{\eta}_{t-k}\|_{2} \;\; \text{a.s.}, 
			$$
			and thus 
			\begin{align*}
				\|\Psi_{t-1} - \widetilde{\Psi}_{t-1}\|_{F} 
				&= \left[\sum_{i=1}^{m} \sum_{j=1}^{m} \left(\Psi_{t-1,ij} - \widetilde{\Psi}_{t-1,ij}\right)^{2}\right]^{1/2} \notag \\
				&\leq \left[m^2 \left(\frac{4}{\sqrt{\underline{c}}} \sum_{k=1}^{\Bbbk} \left\|R_{t-k}^{1/2} - \widetilde{R}_{t-k}^{1/2}\right\|_{F} \|\bm{\eta}_{t-k}\|_{2}\right)^{2}\right]^{1/2} \notag \\
				&= \frac{4m}{\sqrt{\underline{c}}} \sum_{k=1}^{\Bbbk} \left\|R_{t-k}^{1/2} - \widetilde{R}_{t-k}^{1/2}\right\|_{F} \|\bm{\eta}_{t-k}\|_{2} \;\; \text{a.s.}. 
			\end{align*}
			Furthermore, note that (a) $\|R_{t}^{1/2} - \widetilde{R}_{t}^{1/2}\|_{F} \leq (\lambda_{\text{min}}(R_{t})^{1/2} + \lambda_{\text{min}}(\widetilde{R}_{t})^{1/2})^{-1} \|R_{t} - \widetilde{R}_{t}\|_{F}$ by Theorem 6.2 of \cite{higham2008functions}, since the Frobenius norm is unitarily invariant; and (b) $\lambda_{\text{min}}(R_{t}) \geq \underline{\beta} \lambda_{\text{min}}(\underline{R}_{0})$ and $\lambda_{\text{min}}(\widetilde{R}_{t}) \geq \underline{\beta} \lambda_{\text{min}}(\underline{R}_{0})$ by Corollary 4.3.12 of \cite{horn2012matrix}, since $R_{t} = \underline{\beta} \underline{R}_{0} + \beta_{10} \Psi_{t-1} + \beta_{20} R_{t-1}$ and $\widetilde{R}_{t} = \underline{\beta} \underline{R}_{0} + \beta_{10} \widetilde{\Psi}_{t-1} + \beta_{20} \widetilde{R}_{t-1}$ with $\Psi_{t-1}$, $\widetilde{\Psi}_{t-1}$, $R_{t-1}$ and $\widetilde{R}_{t-1}$ being positive definite. 
			It then follows that 
			\begin{align*} 
				\|\Psi_{t-1} - \widetilde{\Psi}_{t-1}\|_{F} 
				&\leq \frac{4m}{\sqrt{\underline{c}}} \sum_{k=1}^{\Bbbk} \frac{1}{\lambda_{\text{min}}(R_{t-k})^{1/2} + \lambda_{\text{min}}(\widetilde{R}_{t-k})^{1/2}} \|R_{t-k} - \widetilde{R}_{t-k}\|_{F} \|\bm{\eta}_{t-k}\|_{2} \\
				&\leq \frac{4m}{\sqrt{\underline{c}}} \sum_{k=1}^{\Bbbk} \frac{1}{2 \left[\underline{\beta} \lambda_{\text{min}}(\underline{R}_{0})\right]^{1/2}} \|R_{t-k} - \widetilde{R}_{t-k}\|_{F} \|\bm{\eta}_{t-k}\|_{2} \\
				&= \frac{2m}{\sqrt{\underline{c} \underline{\beta} \lambda_{\text{min}}(\underline{R}_{0})}} \sum_{k=1}^{\Bbbk} \|R_{t-k} - \widetilde{R}_{t-k}\|_{F} \|\bm{\eta}_{t-k}\|_{2} \;\; \text{a.s.}. 
			\end{align*}
			This together with $R_{t} = \underline{\beta} \underline{R}_{0} + \beta_{10} \Psi_{t-1} + \beta_{20} R_{t-1}$ and $\widetilde{R}_{t} = \underline{\beta} \underline{R}_{0} + \beta_{10} \widetilde{\Psi}_{t-1} + \beta_{20} \widetilde{R}_{t-1}$ implies that the following inequalities hold: 
			\begin{align*}
				\|R_{t} - \widetilde{R}_{t}\|_{F} 
				&= \|\beta_{10} (\Psi_{t-1} - \widetilde{\Psi}_{t-1}) + \beta_{20} (R_{t-1} - \widetilde{R}_{t-1})\|_{F} \\
				&\leq \beta_{10} \|\Psi_{t-1} - \widetilde{\Psi}_{t-1}\|_{F} + \beta_{20} \|R_{t-1} - \widetilde{R}_{t-1}\|_{F} \\
				&\leq \frac{2m \beta_{10}}{\sqrt{\underline{c} \underline{\beta} \lambda_{\text{min}}(\underline{R}_{0})}} \sum_{k=1}^{\Bbbk} \|R_{t-k} - \widetilde{R}_{t-k}\|_{F} \|\bm{\eta}_{t-k}\|_{2} + \beta_{20} \|R_{t-1} - \widetilde{R}_{t-1}\|_{F} \\
				&\equiv \sum_{k=1}^{\Bbbk} \varpi_{t,k} \|R_{t-k} - \widetilde{R}_{t-k}\|_{F} \;\; \text{a.s.}, 
			\end{align*}
			where $\varpi_{t,1} = \frac{2m \beta_{10}}{\sqrt{\underline{c} \underline{\beta} \lambda_{\text{min}}(\underline{R}_{0})}} \|\bm{\eta}_{t-1}\|_{2} + \beta_{20}$ and $\varpi_{t,k} = \frac{2m \beta_{10}}{\sqrt{\underline{c} \underline{\beta} \lambda_{\text{min}}(\underline{R}_{0})}} \|\bm{\eta}_{t-k}\|_{2}$ for $2 \leq k \leq \Bbbk$. 
			Denote $\Delta r_{t} = \|R_{t} - \widetilde{R}_{t}\|_{F}$ and $\Delta \mathbf{r}_{t} = (\Delta r_{t}, \ldots, \Delta r_{t-\Bbbk+1})^{\prime}$. 
			Define the $\Bbbk \times \Bbbk$ matrix 
			\begin{align*} 
			  W_{t} = 
			  \left(
			  \begin{matrix}
			    \varpi_{t,1} & \varpi_{t,2} & \cdots & \varpi_{t,\Bbbk-1} & \varpi_{t,\Bbbk} \\
			    1 & 0 & \cdots & 0 & 0 \\   
			    0 & 1 & \cdots & 0 & 0 \\   
			    \vdots & \vdots & \ddots & \vdots & \vdots \\
			    0 & 0 & \cdots & 1 & 0 \\
			  \end{matrix}
			  \right). 
			\end{align*}
			Then for any positive integer $p$, it holds that 
			$$
				0 \leq \Delta \mathbf{r}_{t} \leq W_{t} \Delta \mathbf{r}_{t-1} 
				\leq \cdots \leq W_{t} W_{t-1} \cdots W_{t-p} \Delta \mathbf{r}_{t-p-1}. 
			$$
			Additionally, we have $\|W_{t} W_{t-1} \cdots W_{t-p}\|_{F}$ tends to zero a.s. when $p \to +\infty$ and for any fixed $t$ under the conditions that $\{\bm{\eta}_{t}\}$ is a sequence of $i.i.d.$ variables, $E(\ln^{+}\|W_{t}\|) < \infty$ holds for some matrix norm $\|\cdot\|$, and the top Lyapunov exponent $\varsigma \equiv \lim_{t \to +\infty} t^{-1} E[\ln(\|W_{1}W_{2} \cdots W_{t}\|)]$ is strictly negative; see also Theorem 2.3 of \cite{francq2019garch_supp} and the proof of Theorem 3 of \cite{fermanian2017stationarity_supp}. 
			Because $\{\Delta \mathbf{r}_{t}\}$ can be initialized arbitrarily far in the past, we obtain that $R_{t} = \widetilde{R}_{t}$ a.s.. 
			It then follows that $\bm{\varepsilon}_{t} = \widetilde{\bm{\varepsilon}}_{t}$ and $\Psi_{t} = \widetilde{\Psi}_{t}$ a.s., since $\bm{\varepsilon}_{t} = R_{t}^{1/2} \bm{\eta}_{t}$, $\widetilde{\bm{\varepsilon}}_{t} = \widetilde{R}_{t}^{1/2} \bm{\eta}_{t}$, $\Psi_{t}$ is the sample correlation matrix of $\{\bm{\varepsilon}_{t}, \ldots, \bm{\varepsilon}_{t-\Bbbk+1}\}$, and $\widetilde{\Psi}_{t}$ is the sample correlation matrix of $\{\widetilde{\bm{\varepsilon}}_{t}, \ldots, \widetilde{\bm{\varepsilon}}_{t-\Bbbk+1}\}$. 
			As a result, $\bm{x}_{t} = \widetilde{\bm{x}}_{t}$ a.s., establishing the uniqueness. 

			(Ergodicity:) 
			The ergodicity of the (now unique) solution $\{\bm{x}_{t}\}$ is a consequence of Corollary 7.17 in \citet{douc2014nonlinear}. 
			As a result, the process $\{\bm{x}_{t}\}$ admits a unique, non-anticipative, strictly stationary and ergodic solution. 
			This completes the proof of Lemma \ref{lemma Stationarity of epsilont}. 
		\end{proof}

		\begin{proof}[\textbf{Proof of Theorem \ref{thm Stationarity}}]
			By Lemma \ref{lemma Stationarity of epsilont}, the process $\{[\ovechsec(R_{t})^{\prime}, \bm{\varepsilon}_{t}^{\prime}]^{\prime}\}$ admits a unique, non-anticipative, strictly stationary and ergodic solution. 
			Based on this result, we next show that the process $\{[\ovech(H_{t})^{\prime}, \mathbf{y}_{t}^{\prime}, \ovechsec(R_{t})^{\prime}, \bm{\varepsilon}_{t}^{\prime}]^{\prime}\}$ defined in models \eqref{model Rt SGARCH(r,s)}--\eqref{model Phii in Dt} admits a unique, non-anticipative, strictly stationary and ergodic solution, which implies that the process $\{\mathbf{y}_{t}\}$ admits a unique, non-anticipative, strictly stationary and ergodic solution. 

			(Non-anticipation, strict stationarity and ergodicity:) 
			Recall that $\bm{\varepsilon}_{t} = (\varepsilon_{1t}, \ldots, \varepsilon_{mt})^{\prime} = D_{t}^{-1} \mathbf{y}_{t}$ with $\mathbf{y}_{t} = (y_{1t}, \ldots, y_{mt})^{\prime}$ and $D_{t} = \Diag\{\sqrt{h_{11,t}}, \ldots, \sqrt{h_{mm,t}}\}$. 
			It holds that $\varepsilon_{it} = y_{it} / \sqrt{h_{ii,t}}$ for $1 \leq i \leq m$ and 
			\begin{align} \label{eq lnytunderline by lnhtunderline and lnvarepsilontunderline}
				\ln\mathbf{y}^{\odot 2}_{t} 
				&= (\ln y_{1t}^{2}, \ldots, \ln y_{mt}^{2})^{\prime} = \left(\ln (h_{11,t} \varepsilon_{1t}^{2}), \ldots, \ln (h_{mm,t} \varepsilon_{mt}^{2})\right)^{\prime} \notag\\
				&= \left(\ln h_{11,t}, \ldots, \ln h_{mm,t}\right)^{\prime} + \left(\ln \varepsilon_{1t}^{2}, \ldots, \ln \varepsilon_{mt}^{2}\right)^{\prime} = \ln\bm{h}_{t} + \ln\bm{\varepsilon}^{\odot 2}_{t}, 
			\end{align}
			where $\ln\bm{\varepsilon}^{\odot 2}_{t} = (\ln \varepsilon_{1t}^{2}, \ldots, \ln \varepsilon_{mt}^{2})^{\prime}$. 
			This together with model \eqref{model Dt SGARCH(r,s)} implies that 
			\begin{align*}
				\ln\bm{h}_{t} 
				=& \underline{\bm{\omega}} + \sum_{i=1}^{\infty} \Phi_{i0} \ln\mathbf{y}^{\odot 2}_{t-i} 
				= \underline{\bm{\omega}} + \sum_{i=1}^{\infty} \Phi_{i0} \ln\bm{h}_{t-i} + \sum_{i=1}^{\infty} \Phi_{i0} \ln\bm{\varepsilon}^{\odot 2}_{t-i} \\
				=& \left(I_{m} + \sum_{k=1}^{\infty} \sum_{i_{1}, \ldots, i_{k} \geq 1} \Phi_{i_{1} 0} \cdots \Phi_{i_{k} 0}\right) \underline{\bm{\omega}} 
				+ \sum_{k=1}^{\infty} \sum_{i_{1}, \ldots, i_{k} \geq 1} \Phi_{i_{1} 0} \cdots \Phi_{i_{k} 0} \ln\bm{\varepsilon}^{\odot 2}_{t-i_{1}-\cdots-i_{k}}. 
			\end{align*}
			Let $\|\cdot\|$ be any matrix norm induced by a vector norm. 
			Then using the properties of the induced matrix norm that $\|A + B\| \leq \|A\| + \|B\|$, $\|A \bm{a}\| \leq \|A\| \|\bm{a}\|$ and $\|A B\| \leq \|A\| \|B\|$ for any vector $\bm{a}$ and matrices $A$ and $B$, it can be shown that 
			\begin{align} \label{eq ||ln htunderline|| inequality}
				\|\ln\bm{h}_{t}\| 
				&\leq 
				\|\underline{\bm{\omega}}\| 
				+ \sum_{k=1}^{\infty} \sum_{i_{1}, \ldots, i_{k} \geq 1} \left\|\Phi_{i_{1} 0} \cdots \Phi_{i_{k} 0} \underline{\bm{\omega}}\right\| 
				+ \sum_{k=1}^{\infty} \sum_{i_{1}, \ldots, i_{k} \geq 1} \left\|\Phi_{i_{1} 0} \cdots \Phi_{i_{k} 0} \ln\bm{\varepsilon}^{\odot 2}_{t-i_{1}-\cdots-i_{k}}\right\| \notag\\
				&\leq 
				\|\underline{\bm{\omega}}\| 
				+ \sum_{k=1}^{\infty} \sum_{i_{1}, \ldots, i_{k} \geq 1} \left\|\Phi_{i_{1} 0} \cdots \Phi_{i_{k} 0}\right\| \|\underline{\bm{\omega}}\| 
				+ \sum_{k=1}^{\infty} \sum_{i_{1}, \ldots, i_{k} \geq 1} \left\|\Phi_{i_{1} 0} \cdots \Phi_{i_{k} 0}\right\| \|\ln\bm{\varepsilon}^{\odot 2}_{t-i_{1}-\cdots-i_{k}}\| \notag\\
				&\leq 
				\|\underline{\bm{\omega}}\| 
				+ \sum_{k=1}^{\infty} \sum_{i_{1}, \ldots, i_{k} \geq 1} \left\|\Phi_{i_{1} 0}\right\| \cdots \left\|\Phi_{i_{k} 0}\right\| \left(\|\underline{\bm{\omega}}\| + \|\ln\bm{\varepsilon}^{\odot 2}_{t-i_{1}-\cdots-i_{k}}\|\right). 
			\end{align}
			Note that under the conditions in Theorem \ref{thm Stationarity}, we have that 
			\begin{align} \label{eq sum ||Phii|| inequality}
				\sum_{i=1}^{\infty} \|\Phi_{i0}\| 
				&= \sum_{i=1}^{\infty} \left\|\sum_{k=1}^{r} \lambda_{k0}^{i-1} G_{0,k0} 
				+ \sum_{k=1}^{s} \gamma_{k0}^{i-1} \left[\cos((i-1) \varphi_{k0}) G_{1,k0} + \sin((i-1) \varphi_{k0}) G_{2,k0}\right]\right\| \notag\\
				&\leq \sum_{i=1}^{\infty} \left[\sum_{k=1}^{r} |\lambda_{k0}|^{i-1} \|G_{0,k0}\| + \sum_{k=1}^{s} |\gamma_{k0}|^{i-1} \left(\|G_{1,k0}\| + \|G_{2,k0}\|\right)\right] \notag\\
				&= \sum_{k=1}^{r} \left(\sum_{i=1}^{\infty} |\lambda_{k0}|^{i-1}\right) \|G_{0,k0}\| 
				+ \sum_{k=1}^{s} \left(\sum_{i=1}^{\infty} |\gamma_{k0}|^{i-1}\right) \left(\|G_{1,k0}\| + \|G_{2,k0}\|\right) \notag\\
				&= \sum_{k=1}^{r} \frac{1}{1 - |\lambda_{k0}|} \|G_{0,k0}\| 
				+ \sum_{k=1}^{s} \frac{1}{1 - |\gamma_{k0}|} \left(\|G_{1,k0}\| + \|G_{2,k0}\|\right) 
				< 1. 
			\end{align}
			Thus by \eqref{eq ||ln htunderline|| inequality}--\eqref{eq sum ||Phii|| inequality}, $E\|\ln\bm{\varepsilon}^{\odot 2}_{t}\| < \infty$ and the stationarity of $\{\bm{\varepsilon}_{t}\}$, it can be obtained that 
			\begin{align} \label{eq E||lnhtunderline|| is finite}
				E\|\ln\bm{h}_{t}\| 
				&\leq \|\underline{\bm{\omega}}\| 
				+ \sum_{k=1}^{\infty} \sum_{i_{1}, \ldots, i_{k} \geq 1} \left\|\Phi_{i_{1} 0}\right\| \cdots \left\|\Phi_{i_{k} 0}\right\| E\left(\|\underline{\bm{\omega}}\| + \|\ln\bm{\varepsilon}^{\odot 2}_{t}\|\right) \notag\\
				&= \|\underline{\bm{\omega}}\| 
				+ \sum_{k=1}^{\infty} \left(\sum_{i=1}^{\infty} \|\Phi_{i0}\|\right)^{k} E\left(\|\underline{\bm{\omega}}\| + \|\ln\bm{\varepsilon}^{\odot 2}_{t}\|\right) 
				< \infty, 
			\end{align}
			which implies that $\ln\bm{h}_{t}$ is finite a.s.. 
			Then $\ln\bm{h}_{t}$ is a measurable function of the process $\{\bm{\varepsilon_{t}}\}$, and $[\ovech(H_{t})^{\prime}, \mathbf{y}_{t}^{\prime}, \ovechsec(R_{t})^{\prime}, \bm{\varepsilon}_{t}^{\prime}]^{\prime}$ is a measurable function of the process $\{[\ovechsec(R_{t})^{\prime}, \bm{\varepsilon}_{t}^{\prime}]^{\prime}\}$. 
			As a result, by Theorem 36.4 of \citet{billingsley1995_supp}, $\{[\ovechsec(R_{t})^{\prime}, \bm{\varepsilon}_{t}^{\prime}]^{\prime}\}$ admitting a non-anticipative, strictly stationary and ergodic solution implies that $\{[\ovech(H_{t})^{\prime}, \mathbf{y}_{t}^{\prime}, \ovechsec(R_{t})^{\prime}, \bm{\varepsilon}_{t}^{\prime}]^{\prime}\}$ admits a non-anticipative, strictly stationary and ergodic solution. 

			(Uniqueness)
			Recall that $\ln\mathbf{y}^{\odot 2}_{t} = (\ln y_{1t}^{2}, \ldots, \ln y_{mt}^{2})^{\prime} = \ln\bm{h}_{t} + \ln\bm{\varepsilon}^{\odot 2}_{t}$ by \eqref{eq lnytunderline by lnhtunderline and lnvarepsilontunderline} and $\diag(D_{t}) = (\sqrt{h_{11,t}}, \ldots, \sqrt{h_{mm,t}})^{\prime}$. 
			Given the uniqueness of strictly stationary solution $\{[\ovechsec(R_{t})^{\prime}, \bm{\varepsilon}_{t}^{\prime}]^{\prime}\}$, it suffices to verify the uniqueness of the strictly stationary solution $\{\ln\bm{h}_{t}\}$. 
			Suppose that $\{\ln\bm{h}_{t}\}$ is a strictly stationary solution to model \eqref{model Dt SGARCH(r,s)}. 
			For any integer $h > 0$, by \eqref{eq lnytunderline by lnhtunderline and lnvarepsilontunderline} and successively substituting $\ln\bm{h}_{t-j}$'s $h$ times, we have that 
			$$
				\ln\bm{h}_{t} = 
				\left(I_{m} + \sum_{k=1}^{h} \sum_{j_{1}, \ldots, j_{k} \geq 1} \Phi_{j_{1} 0} \cdots \Phi_{j_{k} 0}\right) \underline{\bm{\omega}} 
				+ \sum_{k=1}^{h+1} \sum_{j_{1}, \ldots, j_{k} \geq 1} \Phi_{j_{1} 0} \cdots \Phi_{j_{k} 0} \ln\bm{\varepsilon}^{\odot 2}_{t-j_{1}-\cdots-j_{k}} 
				+ \bm{r}_{t,h},
			$$
			where $\bm{r}_{t,h} = \sum_{j_{1}, \ldots, j_{h+1} \geq 1} \Phi_{j_{1} 0} \cdots \Phi_{j_{h+1} 0} \ln\bm{h}_{t-j_{1}-\cdots-j_{h+1}}$. 
			Then we only need to show that $\bm{r}_{t,h}$ tends to zero almost surely as $h$ tends to infinity. 
			By \eqref{eq sum ||Phii|| inequality}--\eqref{eq E||lnhtunderline|| is finite}, the aforementioned properties of $\|\cdot\|$ and the stationarity of $\{\ln\bm{h}_{t}\}$, it can be shown that 
			$$
				E\|\bm{r}_{t,h}\| 
				\leq \sum_{j_{1}, \ldots, j_{h+1} \geq 1} \|\Phi_{j_{1} 0}\| \cdots \|\Phi_{j_{h+1} 0}\| E\|\ln\bm{h}_{t-j_{1}-\cdots-j_{h+1}}\| 
				= \left(\sum_{j=1}^{\infty} \|\Phi_{j0}\|\right)^{h+1} E\|\ln\bm{h}_{t}\|, 
			$$
			and then
			\begin{align*}
				\sum_{h=1}^{\infty} E\|\bm{r}_{t,h}\| \leq \sum_{h=1}^{\infty} \left(\sum_{j=1}^{\infty} \|\Phi_{j0}\|\right)^{h+1} E\|\ln\bm{h}_{t}\| < \infty. 
			\end{align*}
			Since $\sum_{n=1}^{\infty} E|X_{n} - X|^{r} < \infty$ for some $r > 0$ implies that $X_{n} \to X$ a.s., we can conclude that $\|\bm{r}_{t,h}\| \to 0$ a.s., which implies that $\bm{r}_{t,h} \to 0$ a.s.. 
			It follows that the strictly stationary solution $\{\ln\bm{h}_{t}\}$ is unique, and thus the strictly stationary solution $\{[\ovech(H_{t})^{\prime}, \mathbf{y}_{t}^{\prime}, \ovechsec(R_{t})^{\prime}, \bm{\varepsilon}_{t}^{\prime}]^{\prime}\}$ is unique. 
			This completes the proof of Theorem \ref{thm Stationarity}. 
		\end{proof}

	\subsection{Proof of Theorem \ref{thm Consistency}}
		Recall that $\widehat{\bm{\theta}}_{\text{G}} = \argmin_{\bm{\theta} \in \Theta} \widetilde{L}_{n}(\bm{\theta})$ with $\widetilde{L}_n(\bm{\theta})=\sum_{t=1}^{n} \widetilde{\ell}_{t}(\bm{\theta})$ and $\widetilde{\ell}_{t}(\bm{\theta}) = \frac{1}{2} \mathbf{y}_{t}^{\prime} \widetilde{H}_{t}^{-1}(\bm{\theta}) \mathbf{y}_{t} + \frac{1}{2} \ln |\widetilde{H}_{t}(\bm{\theta})|$. 
		Denote  
		\begin{align} \label{eq mathcalLntilde}
			\widetilde{\mathcal{L}}_{n}(\bm{\theta}) = \frac{1}{n} \sum_{t=1}^{n} \widetilde{\ell}_{t}(\bm{\theta}), 
		\end{align}
		and
		\begin{align} \label{eq mathcalLn}
			\mathcal{L}_{n}(\bm{\theta}) = \frac{1}{n} \sum_{t=1}^{n} \ell_{t}(\bm{\theta}) 
			\;\; \text{with} \;\; 
			\ell_{t}(\bm{\theta}) = \frac{1}{2} \mathbf{y}_{t}^{\prime} H_{t}^{-1}(\bm{\theta}) \mathbf{y}_{t} + \frac{1}{2} \ln |H_{t}(\bm{\theta})|. 
		\end{align}
		Then it holds that 
		$\widehat{\bm{\theta}}_{\text{G}} = \argmin_{\bm{\theta} \in \Theta} \widetilde{\mathcal{L}}_{n}(\bm{\theta})$. 

		\begin{lemma} \label{lemma derivatives of lnhtunderline}
			Suppose that the conditions in Assumptions \ref{assum parameters}($\romannumeral1$)--($\romannumeral2$) hold. 
			Denote the random vector $\bm{\zeta}_{t,\varrho} = 1 + \sum_{i=1}^{\infty} \varrho^{i-1} \|\ln\mathbf{y}^{\odot 2}_{t-i}\|$, where $\|\cdot\|$ is any matrix norm induced by a vector norm. 
			Then it holds that $\varrho \bm{\zeta}_{t-1,\varrho} < \bm{\zeta}_{t,\varrho} < \bm{\zeta}_{t,\varrho_{1}}$ for $0 < \varrho < \varrho_{1} < 1$, and there exists a constant $c > 0$ such that the following results hold: 
			\begin{align*}
				(\romannumeral1)& \;\; \sup_{\bm{\theta} \in \Theta} \left\|\ln\bm{h}_{t}(\bm{\delta})\right\| \leq c \bm{\zeta}_{t,\varrho}; ~~~~~~~~~~ 
				(\romannumeral2) \;\; \sup_{\bm{\theta} \in \Theta} \left\|\frac{\partial\ln\bm{h}_{t}(\bm{\delta})}{\partial\delta_{\ell}}\right\| \leq c \bm{\zeta}_{t,\varrho_{1}}; \\
				(\romannumeral3)& \;\; \sup_{\bm{\theta} \in \Theta} \left\|\frac{\partial^{2}\ln\bm{h}_{t}(\bm{\delta})}{\partial\delta_{k} \partial\delta_{\ell}}\right\| \leq c \bm{\zeta}_{t,\varrho_{1}}; ~~~~ 
				(\romannumeral4) \;\; \sup_{\bm{\theta} \in \Theta} \left\|\frac{\partial^{3}\ln\bm{h}_{t}(\bm{\delta})}{\partial\delta_{j} \partial\delta_{k} \partial\delta_{\ell}}\right\| \leq c \bm{\zeta}_{t,\varrho_{1}}, 
			\end{align*}
			where $\delta_{\ell}$ is the $\ell$-th element of $\bm{\delta}$. 
		\end{lemma}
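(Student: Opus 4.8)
The plan is to bound each derivative term by term using the explicit formulas for $\ln\bm{h}_{t}(\bm{\delta})$ and its derivatives derived in Section \ref{section Derivatives}, exploiting the exponential decay of the scalar weights together with the uniform bound $\varrho$ from Assumption \ref{assum parameters}(\romannumeral2). First I would establish the two monotonicity claims: $\varrho \bm{\zeta}_{t-1,\varrho} < \bm{\zeta}_{t,\varrho}$ follows by shifting the summation index in $\bm{\zeta}_{t,\varrho} = 1 + \sum_{i=1}^{\infty}\varrho^{i-1}\|\ln\underline{\mathbf{y}}_{t-i}\|$ and noting the extra $1$ on the left is dominated by the $1$ on the right; $\bm{\zeta}_{t,\varrho} < \bm{\zeta}_{t,\varrho_{1}}$ is immediate from $\varrho^{i-1} < \varrho_{1}^{i-1}$ for $i \geq 2$ (the $i=1$ terms coincide). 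These will be the workhorse inequalities allowing me to absorb one extra power of $i$ into a slightly larger geometric ratio.

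Next, for part (\romannumeral1), I would use $\ln\bm{h}_{t}(\bm{\delta}) = \underline{\bm{\omega}} + \sum_{i=1}^{\infty}\Phi_{i}(\bm{\kappa})\ln\underline{\mathbf{y}}_{t-i}$ and bound $\|\Phi_{i}(\bm{\kappa})\| \leq \varrho^{i-1}\big(\sum_{k}\|G_{0,k}\| + \sum_{k}(\|G_{1,k}\|+\|G_{2,k}\|)\big)$, which holds uniformly on $\Theta$ since $|\cos|,|\sin| \leq 1$, $|\lambda_k| \leq \varrho$, $\gamma_k \leq \varrho$, and the $G$-matrices live in a compact set (Assumption \ref{assum parameters}(\romannumeral1)); combined with $\sup_{\Theta}\|\underline{\bm{\omega}}\| < \infty$ this gives the bound $c\bm{\zeta}_{t,\varrho}$. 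For parts (\romannumeral2)--(\romannumeral4), the derivative formulas in \eqref{eq first derivative of lnhtunderline}, \eqref{eq second derivative of lnhtunderline} (and their third-order analogues, obtained by one more differentiation) all have the schematic form $\sum_{i\geq i_0} p(i)\,\varrho_\ell^{i-c}\,(\text{bounded matrix or }I_m)\,\ln\underline{\mathbf{y}}_{t-i}$ where $p(i)$ is a polynomial in $i$ of degree at most $3$ and $\varrho_\ell \in \{|\lambda_\ell|,\gamma_\ell\} \leq \varrho$. The key observation is that $p(i)\varrho^{i-c} \leq C\varrho_1^{i-1}$ for all $i$ and some constant $C = C(\varrho,\varrho_1,\deg p)$, because $p(i)(\varrho/\varrho_1)^{i}\to 0$ geometrically. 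Summing against $\|\ln\underline{\mathbf{y}}_{t-i}\|$ then yields $\leq c\bm{\zeta}_{t,\varrho_1}$; the $\bm{e}_\ell$ term in \eqref{eq first derivative of lnhtunderline} contributes only a constant, absorbed into the leading $1$ of $\bm{\zeta}_{t,\varrho_1}$.

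The main obstacle, such as it is, is purely bookkeeping rather than conceptual: one must verify that \emph{every} one of the finitely many derivative formulas (the seven first derivatives, the ten nonzero second derivatives, and the corresponding third derivatives) falls into the schematic form above and that the polynomial degree never exceeds what the single step $\varrho \to \varrho_1$ can absorb — and here degree $3$ is comfortably handled since $\sum_i p(i)\varrho_1^{i-1} < \infty$ for any polynomial $p$. A minor point requiring care is that the Kronecker-product terms $[\sum_i w_i \ln\underline{\mathbf{y}}_{t-i}]' \otimes I_m$ must be bounded in the induced matrix norm; since $\|A'\otimes I_m\|$ is controlled by $\|A\|$ up to a dimensional constant, this is routine. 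Finally, all constants $c$ can be taken uniform over $\Theta$ by compactness, completing the proof.
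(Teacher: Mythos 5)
Your proposal is correct and follows essentially the same route as the paper: the index-shift argument for $\varrho\bm{\zeta}_{t-1,\varrho}<\bm{\zeta}_{t,\varrho}<\bm{\zeta}_{t,\varrho_{1}}$, the uniform geometric bound $\sup_{\bm{\theta}\in\Theta}\|\Phi_{i}(\bm{\kappa})\|\leq c\varrho^{i-1}$ for (\romannumeral1), and the absorption of the polynomial factors $p(i)$ into the enlarged ratio $\varrho_{1}$ via $p(i)(\varrho/\varrho_{1})^{i}\leq C$ for (\romannumeral2)--(\romannumeral4) are exactly the paper's steps (the paper writes out the $\partial/\partial\lambda_{\ell}$ case and declares the remaining derivatives analogous, which your ``schematic form'' observation makes explicit).
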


		\begin{lemma} \label{lemma moments of derivatives of lnhtunderline}
			Suppose that the conditions in Lemma \ref{lemma derivatives of lnhtunderline} hold. 
			If $E\|\ln\mathbf{y}^{\odot 2}_{t}\| < \infty$ with $\|\cdot\|$ being any matrix norm induced by a vector norm, then the following results hold: 
			\begin{align*}
				(\romannumeral1)& \; E \sup_{\bm{\theta} \in \Theta} \left\|\ln\bm{h}_{t}(\bm{\delta})\right\| < \infty; ~~~~~~~~~
				(\romannumeral2) \; E \sup_{\bm{\theta} \in \Theta} \left\|\frac{\partial\ln\bm{h}_{t}(\bm{\delta})}{\partial\delta_{\ell}}\right\| < \infty; \\
				(\romannumeral3)& \; E \sup_{\bm{\theta} \in \Theta} \left\|\frac{\partial^{2}\ln\bm{h}_{t}(\bm{\delta})}{\partial\delta_{k} \partial\delta_{\ell}}\right\| < \infty; ~~~~
				(\romannumeral4) \; E \sup_{\bm{\theta} \in \Theta} \left\|\frac{\partial^{3}\ln\bm{h}_{t}(\bm{\delta})}{\partial\delta_{j} \partial\delta_{k} \partial\delta_{\ell}}\right\| < \infty, 
			\end{align*}
			where $\delta_{\ell}$ is the $\ell$-th element of $\bm{\delta}$. 
			Furthermore, if $E\|\ln\mathbf{y}^{\odot 2}_{t}\|^{2+\epsilon} < \infty$ for some $\epsilon > 0$, then it holds that: 
			\begin{align*}
				(\romannumeral5)& \; E \sup_{\bm{\theta} \in \Theta} \left\|\ln\bm{h}_{t}(\bm{\delta})\right\|^{2+\epsilon} < \infty; ~~~~~~~~~~
				(\romannumeral6) \; E \sup_{\bm{\theta} \in \Theta} \left\|\frac{\partial\ln\bm{h}_{t}(\bm{\delta})}{\partial\delta_{\ell}}\right\|^{2+\epsilon} < \infty; \\
				(\romannumeral7)& \; E \sup_{\bm{\theta} \in \Theta} \left\|\frac{\partial^{2}\ln\bm{h}_{t}(\bm{\delta})}{\partial\delta_{k} \partial\delta_{\ell}}\right\|^{2+\epsilon} < \infty; ~~~~
				(\romannumeral8) \; E \sup_{\bm{\theta} \in \Theta} \left\|\frac{\partial^{3}\ln\bm{h}_{t}(\bm{\delta})}{\partial\delta_{j} \partial\delta_{k} \partial\delta_{\ell}}\right\|^{2+\epsilon} < \infty. 
			\end{align*}
		\end{lemma}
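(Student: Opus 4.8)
The plan is to reduce all eight estimates to two moment bounds for the single nonnegative random quantity $\bm{\zeta}_{t,\varrho_{1}}$, and then verify those bounds directly. By Lemma~\ref{lemma derivatives of lnhtunderline}, each supremum over $\Theta$ appearing in the statement is dominated, uniformly in $t$, by $c\,\bm{\zeta}_{t,\varrho}$ (for the $\ln\bm{h}_{t}(\bm{\delta})$ term) or by $c\,\bm{\zeta}_{t,\varrho_{1}}$ (for the first, second and third derivative terms), for some constant $c>0$; moreover that lemma also gives $\bm{\zeta}_{t,\varrho}<\bm{\zeta}_{t,\varrho_{1}}$ whenever $0<\varrho<\varrho_{1}<1$, so every one of the suprema is in fact bounded by $c\,\bm{\zeta}_{t,\varrho_{1}}$. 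Consequently it suffices to show $E\,\bm{\zeta}_{t,\varrho_{1}}<\infty$ under $E\|\ln\underline{\mathbf{y}}_{t}\|<\infty$, and $E\,\bm{\zeta}_{t,\varrho_{1}}^{2+\epsilon}<\infty$ under $E\|\ln\underline{\mathbf{y}}_{t}\|^{2+\epsilon}<\infty$: since all quantities are nonnegative, raising the Lemma~\ref{lemma derivatives of lnhtunderline} inequalities to the $(2+\epsilon)$-th power and absorbing $c^{2+\epsilon}$ then yields (\romannumeral1)--(\romannumeral4) from the first bound and (\romannumeral5)--(\romannumeral8) from the second.

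For the first-moment bound I would use linearity of expectation together with the strict stationarity of $\{\ln\underline{\mathbf{y}}_{t}\}$ from Assumption~\ref{assum stationarity}, noting that $\varrho_{1}<1$ by construction in Lemma~\ref{lemma derivatives of lnhtunderline} (which is compatible with Assumption~\ref{assum parameters}(\romannumeral2)):
\[
E\,\bm{\zeta}_{t,\varrho_{1}}=1+\sum_{i=1}^{\infty}\varrho_{1}^{i-1}E\|\ln\underline{\mathbf{y}}_{t-i}\|=1+\frac{1}{1-\varrho_{1}}\,E\|\ln\underline{\mathbf{y}}_{t}\|<\infty,
\]
the term-by-term evaluation being justified because the bounding series is finite.

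For the $(2+\epsilon)$-th moment I would apply Minkowski's inequality to the series $\bm{\zeta}_{t,\varrho_{1}}=1+\sum_{i=1}^{\infty}\varrho_{1}^{i-1}\|\ln\underline{\mathbf{y}}_{t-i}\|$, which is legitimate since $2+\epsilon\ge 1$, obtaining
\[
\bigl(E\,\bm{\zeta}_{t,\varrho_{1}}^{2+\epsilon}\bigr)^{1/(2+\epsilon)}\le 1+\sum_{i=1}^{\infty}\varrho_{1}^{i-1}\bigl(E\|\ln\underline{\mathbf{y}}_{t-i}\|^{2+\epsilon}\bigr)^{1/(2+\epsilon)}=1+\frac{1}{1-\varrho_{1}}\bigl(E\|\ln\underline{\mathbf{y}}_{t}\|^{2+\epsilon}\bigr)^{1/(2+\epsilon)}<\infty,
\]
again by stationarity and $\varrho_{1}<1$; raising to the power $2+\epsilon$ gives $E\,\bm{\zeta}_{t,\varrho_{1}}^{2+\epsilon}<\infty$, and combining with the domination above proves (\romannumeral5)--(\romannumeral8). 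This argument is essentially routine; the only point requiring mild care is to ensure that the partial sums of the relevant series converge in $L^{1}$ (respectively $L^{2+\epsilon}$) so that term-by-term expectation and Minkowski's inequality apply, and this is immediate from finiteness of the bounding series. There is no substantive obstacle at this stage: all of the genuine work---the deterministic, coefficient-uniform control of $\ln\bm{h}_{t}(\bm{\delta})$ and its derivatives over $\Theta$---has already been carried out in Lemma~\ref{lemma derivatives of lnhtunderline}, and Lemma~\ref{lemma moments of derivatives of lnhtunderline} merely converts those almost-sure bounds into moment bounds.
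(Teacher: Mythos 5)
Your proposal is correct and follows essentially the same route as the paper: dominate each supremum by $c\,\bm{\zeta}_{t,\varrho}$ or $c\,\bm{\zeta}_{t,\varrho_{1}}$ via Lemma \ref{lemma derivatives of lnhtunderline}, then deduce finiteness of the first and $(2+\epsilon)$-th moments of $\bm{\zeta}_{t,\varrho_{1}}$ from $0<\varrho_{1}<1$, stationarity, and the assumed moment conditions on $\ln\underline{\mathbf{y}}_{t}$. Your explicit invocation of Minkowski's inequality for the $(2+\epsilon)$-th moment is the step the paper leaves implicit, and it is the right justification.
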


		\begin{lemma} \label{lemma varrho to the power of t}
			Suppose that the conditions in Assumptions \ref{assum stationarity}--\ref{assum parameters} hold. 
			If $E\|\ln\mathbf{y}^{\odot 2}_{t}\| < \infty$ with $\|\cdot\|$ being any matrix norm induced by a vector norm, then the following results hold for large enough $t$ and some constants $0 < \rho < 1$ and $c > 0$: 
			\begin{enumerate}[($\romannumeral1$)]
				\item $\sup_{\bm{\theta} \in \Theta} \|\ln\widetilde{\bm{h}}_{t}(\bm{\delta}) - \ln\bm{h}_{t}(\bm{\delta})\| \leq c \rho^{t}$ a.s.; 
				\item $\sup_{\bm{\theta} \in \Theta} \|\widetilde{D}_{t}(\bm{\delta}) - D_{t}(\bm{\delta})\| \leq c \rho^{t}$ a.s., $\sup_{\bm{\theta} \in \Theta} \|\widetilde{D}_{t}^{-1}(\bm{\delta}) - D_{t}^{-1}(\bm{\delta})\| \leq c \rho^{t}$ a.s., and both $\sup_{\bm{\theta} \in \Theta} \|D_{t}(\bm{\delta})\|$ and $\sup_{\bm{\theta} \in \Theta} \|D_{t}^{-1}(\bm{\delta})\|$ are finite a.s.; 
				\item $\sup_{\bm{\theta} \in \Theta} \|\widetilde{R}_{t}(\bm{\theta}) - R_{t}(\bm{\theta})\| \leq c \rho^{t}$ a.s., $\sup_{\bm{\theta} \in \Theta} \|\widetilde{R}_{t}^{-1}(\bm{\theta}) - R_{t}^{-1}(\bm{\theta})\| \leq c \rho^{t}$ a.s., and both $\sup_{\bm{\theta} \in \Theta} \|R_{t}(\bm{\theta})\|$ and $\sup_{\bm{\theta} \in \Theta} \|R_{t}^{-1}(\bm{\theta})\|$ are finite; 
				\item $\sup_{\bm{\theta} \in \Theta} \|\widetilde{H}_{t}(\bm{\theta}) - H_{t}(\bm{\theta})\| \leq c \rho^{t}$ a.s., $\sup_{\bm{\theta} \in \Theta} \|\widetilde{H}_{t}^{-1}(\bm{\theta}) - H_{t}^{-1}(\bm{\theta})\| \leq c \rho^{t}$ a.s., and both $\sup_{\bm{\theta} \in \Theta} \|H_{t}(\bm{\theta})\|$ and $\sup_{\bm{\theta} \in \Theta} \|H_{t}^{-1}(\bm{\theta})\|$ are finite a.s.; 
				\item $\sup_{\bm{\theta} \in \Theta} |\ln |\widetilde{H}_{t}(\bm{\theta})| - \ln |H_{t}(\bm{\theta})|| \leq c \rho^{t}$ a.s.; 
				\item $\sup_{\bm{\theta} \in \Theta} \|{\partial\ln\widetilde{\bm{h}}_{t}(\bm{\delta})}/{\partial\delta_{\ell}} - {\partial\ln\bm{h}_{t}(\bm{\delta})}/{\partial\delta_{\ell}}\| \leq c \rho^{t}$ a.s., and $\sup_{\bm{\theta} \in \Theta} \|{\ln\bm{h}_{t}(\bm{\delta})}/{\partial\delta_{\ell}}\|$ is finite a.s.; 
				\item $\sup_{\bm{\theta} \in \Theta} \|{\partial\widetilde{R}_{t}(\bm{\theta})}/{\partial\theta_{\ell}} - {\partial R_{t}(\bm{\theta})}/{\partial\theta_{\ell}}\| \leq c \rho^{t}$ a.s., and $\sup_{\bm{\theta} \in \Theta} \|{\partial R_{t}(\bm{\theta})}/{\partial\theta_{\ell}}\|$ is finite a.s.. 
			\end{enumerate}
		\end{lemma}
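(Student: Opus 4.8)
The plan is to establish part~(i) first, by a direct truncation argument on the ARCH($\infty$) representation \eqref{model Dt SGARCH(r,s)}, and then to deduce parts~(ii)--(vii) by pushing that estimate through the smooth maps $\ln\bm h_t\mapsto D_t$, $(D_t,R_t)\mapsto H_t$, matrix inversion and $\log\det$, and through the geometric $\beta_2$-weighted recursion for $R_t$. For~(i) I would write $\ln\widetilde{\bm h}_t(\bm\delta)-\ln\bm h_t(\bm\delta)=-\sum_{i\ge t}\Phi_i(\bm\kappa)\ln\underline{\mathbf y}_{t-i}$ (the initialized recursion replaces $\ln\underline{\mathbf y}_s$ by $\bm0_m$ for $s\le0$), use Assumption~\ref{assum parameters}(ii) and compactness of $\Theta$ to get $\sup_{\bm\theta\in\Theta}\|\Phi_i(\bm\kappa)\|\le c_0\varrho^{i-1}$, and conclude $\sup_{\bm\theta\in\Theta}\|\ln\widetilde{\bm h}_t(\bm\delta)-\ln\bm h_t(\bm\delta)\|\le c_0\varrho^{t-1}Z$ with $Z=\sum_{j\ge0}\varrho^j\|\ln\underline{\mathbf y}_{-j}\|$. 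The one substantive point is that $Z<\infty$ a.s.: since $E\|\ln\underline{\mathbf y}_0\|<\infty$ and $\{\ln\underline{\mathbf y}_t\}$ is stationary, Markov's inequality makes $\sum_jP(\|\ln\underline{\mathbf y}_{-j}\|\ge\varrho^{-j/2})$ finite, so Borel--Cantelli gives $\|\ln\underline{\mathbf y}_{-j}\|<\varrho^{-j/2}$ eventually a.s.\ and hence $\varrho^j\|\ln\underline{\mathbf y}_{-j}\|<\varrho^{j/2}$ is summable. Then for any $\rho\in(\varrho,1)$ one has $c_0\varrho^{t-1}Z\le\rho^t$ for all large $t$ a.s.; this device of absorbing an a.s.-finite (or, later, only sub-exponentially growing) random prefactor into a slightly enlarged $\rho<1$ will recur throughout.

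Next I would handle~(ii)--(v). For~(ii), $(D_t)_{ii}=\exp\{(\ln\bm h_t)_i/2\}$, so by the mean value theorem $\|\widetilde D_t-D_t\|\le\tfrac12 M_t\|\ln\widetilde{\bm h}_t-\ln\bm h_t\|$ with $M_t=\sup_{\bm\theta\in\Theta}\exp\{\tfrac12(\|\ln\bm h_t(\bm\delta)\|\vee\|\ln\widetilde{\bm h}_t(\bm\delta)\|)\}$; by Lemma~\ref{lemma derivatives of lnhtunderline}(i) and the crude eventual bound $\|\ln\underline{\mathbf y}_s\|\le|s|$ a.s.\ (Borel--Cantelli again, using $\sum_jP(\|\ln\underline{\mathbf y}_0\|>j)\le E\|\ln\underline{\mathbf y}_0\|$), $M_t$ grows only polynomially, so $M_t\rho^t$ stays geometrically small after slightly enlarging $\rho$. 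The bounds for $\widetilde D_t^{-1}-D_t^{-1}$ and the a.s.\ finiteness of $\sup_{\bm\theta}\|D_t^{\pm1}(\bm\delta)\|$ (and of $\|\widetilde D_t^{-1}(\bm\delta)\|$ for $t$ large) follow from $\widetilde D_t^{-1}-D_t^{-1}=-\widetilde D_t^{-1}(\widetilde D_t-D_t)D_t^{-1}$ and Lemma~\ref{lemma moments of derivatives of lnhtunderline}(i). Part~(iv) then comes from expanding $\widetilde D_t\widetilde R_t\widetilde D_t-D_tR_tD_t$ and $\widetilde D_t^{-1}\widetilde R_t^{-1}\widetilde D_t^{-1}-D_t^{-1}R_t^{-1}D_t^{-1}$ as telescoping sums of three terms, each containing exactly one geometrically small difference factor, using~(iii) and the uniform a.s.\ bounds; and~(iv) yields~(v) since $\ln|H_t|=\sum_i(\ln\bm h_t)_i+\ln|R_t|$ and $\log\det$ is Lipschitz on matrices with norm and determinant bounded away from $\infty$ and $0$.

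For~(iii) and~(vii) I would iterate \eqref{model Rt SGARCH(r,s)} to $R_t(\bm\theta)=\tfrac{1-\beta_1-\beta_2}{1-\beta_2}\underline R+\beta_1\sum_{h\ge0}\beta_2^h\Psi_{t-h-1}(\bm\delta)$, with the analogous expansion for $\widetilde R_t$ plus an extra term of order $\beta_2^t$ from the initial value (geometrically small since $\sup_{\bm\theta\in\Theta}\beta_2\le\bar\beta<1$ by compactness of the projection of $\Theta$ onto the $\beta_2$-coordinate). What remains is $\sum_{h\ge0}\beta_2^h\|\widetilde\Psi_{t-h-1}-\Psi_{t-h-1}\|$: for the $O(1)$ many indices with $t-h-1-\Bbbk<1$ use $h\ge t-\Bbbk$ and the crude $\|\widetilde\Psi-\Psi\|\le2$, contributing $O(\bar\beta^{\,t})$; for the rest, every $\bm\varepsilon$ in the window uses the true data, so $\widetilde{\bm\varepsilon}_s-\bm\varepsilon_s=(\widetilde D_s^{-1}-D_s^{-1})\mathbf y_s$ is geometrically small by~(ii), and since the sample-correlation map on a length-$\Bbbk$ window is locally Lipschitz where the denominators $\sum_k\varepsilon_{i,t-k}^2$ are bounded away from $0$ (a.s.\ by ergodicity and $\bm\varepsilon_t\neq\bm0$ a.s.), the contribution is $O(\max(\bar\beta,\rho)^t)$ up to a sub-exponential factor. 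The inverse estimate and the uniform a.s.\ bounds on $\|R_t^{\pm1}(\bm\theta)\|$ then follow as for $D_t$, using Assumption~\ref{assum parameters}(iii). Part~(vii) is the same argument applied to \eqref{eq first derivative of Rt}, since $\partial R_t/\partial\theta_\ell$ is again a $\beta_2$-weighted series in $\partial\Psi_{t-h-1}/\partial\delta_\ell$, which by \eqref{eq first derivative of varepsilon} is built from $\partial\bm\varepsilon_{t-h}/\partial\delta_\ell=-\tfrac12\Diag\{\partial\ln\bm h_{t-h}/\partial\delta_\ell\}\bm\varepsilon_{t-h}$, whose initialization error is handled by~(vi).

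Part~(vi) just repeats the truncation argument of the first paragraph on the explicit series \eqref{eq first derivative of lnhtunderline}: each such derivative is $\sum_i\psi_i^{(\ell)}\ln\underline{\mathbf y}_{t-i}$ with $\sup_{\bm\theta\in\Theta}\|\psi_i^{(\ell)}\|\le c\,i\,\varrho^{i-2}$ (the extra factor $i$ coming from terms like $(i-1)\lambda_\ell^{i-2}$), so the initialization error is bounded by $c\,\varrho^{t-2}(t+1)Z''$ with $Z''=\sum_{j\ge0}(1+j)\varrho^j\|\ln\underline{\mathbf y}_{-j}\|<\infty$ a.s., and finiteness of $\sup_{\bm\theta}\|\partial\ln\bm h_t/\partial\delta_\ell\|$ is Lemma~\ref{lemma moments of derivatives of lnhtunderline}(ii). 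I expect the main obstacle to be the bookkeeping in the $R_t$ step, not any individual inequality: one must track how the several geometric rates ($\varrho$ from $\Phi_i$, the rate from~(ii), $\bar\beta$ from the $\beta_2$-weights) compose through $\sum_h\beta_2^h(\cdot)$, in particular the ``boundary'' window indices $h\gtrsim t$ where $\widetilde\Psi-\Psi$ does not decay but is killed by $\beta_2^h$, while simultaneously verifying that the Lipschitz constant of the sample-correlation map is a.s.\ bounded (via ergodicity and $\bm\varepsilon_t\neq\bm0$ a.s.). Once these rates and the a.s.-finite/sub-exponential prefactors are organized, every part of Lemma~\ref{lemma varrho to the power of t} follows by absorbing the prefactors into a slightly enlarged $\rho<1$.
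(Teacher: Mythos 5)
Your route is the same as the paper's: truncate the ARCH($\infty$) series for (i) and (vi), push the geometric bound through the exponential, matrix inversion, the product $D_tR_tD_t$ and $\log\det$ for (ii)--(v), and expand $R_t$ as a $\beta_2$-weighted series in the $\Psi$'s for (iii) and (vii). Your handling of $\sum_h\beta_2^h\|\widetilde\Psi_{t-h-1}-\Psi_{t-h-1}\|$ is in fact more careful than the paper's (which bounds every summand by $\sup_{\bm\theta}\|\widetilde\Psi_{t-1}-\Psi_{t-1}\|$, ignoring that the decay rate $\rho^{t-h-1}$ degrades as $h$ grows); your split between boundary indices killed by $\beta_2^h$ and interior indices killed by the decay from (ii) is the right way to obtain the rate $\max(\bar\beta,\rho)^{t/2}$.

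There is, however, one step in (ii) that fails as written: the claim that $M_t=\sup_{\bm\theta}\exp\{\tfrac12(\|\ln\bm{h}_t\|\vee\|\ln\widetilde{\bm{h}}_t\|)\}$ ``grows only polynomially.'' Feeding the pathwise bound $\|\ln\underline{\mathbf{y}}_s\|\le|s|+C$ into Lemma \ref{lemma derivatives of lnhtunderline}($\romannumeral1$) gives $\sup_{\bm\theta}\|\ln\bm{h}_t(\bm{\delta})\|\le c\,\bm{\zeta}_{t,\varrho}\le A+Bt$ with $B$ of order $1/(1-\varrho)$; that is, the \emph{logarithm} of $M_t$ grows linearly, so $M_t$ itself grows like $e^{cBt/2}$ --- exponentially, with a rate constant depending on the size of the parameter space that has no reason to be smaller than $\ln(1/\rho)$. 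An exponentially growing prefactor with uncontrolled rate cannot be absorbed by ``slightly enlarging $\rho$,'' so the argument breaks at exactly this point. The fix is the device you already used for $Z$ in part (i), applied one level up: $X_t=\sup_{\bm\theta}h_{jj,t}^{1/2}(\bm{\delta})$ is stationary with $E\ln^{+}X_t\le\tfrac12E\sup_{\bm\theta}\|\ln\bm{h}_t(\bm{\delta})\|<\infty$ by Lemma \ref{lemma moments of derivatives of lnhtunderline}($\romannumeral1$), hence $\sum_tP(X_0>\alpha^{-t})<\infty$ for every $\alpha\in(0,1)$ and Borel--Cantelli gives $\alpha^tX_t\to0$ a.s.; this is what lets you multiply the $c\rho^t$ bound from (i) by $\sup_{\bm\theta}h_{jj,t}^{\pm1/2}$ and still land below $c'\rho'^t$ (the paper does the equivalent via $|e^x-1|\le e^{|x|}-1$ together with the a.s.\ finiteness of $\sup_{\bm\theta}h_{jj,t}^{\pm1}$ and stationarity). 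The same correction is needed wherever you invoke ``sub-exponentially growing'' Lipschitz constants in (iii) and (vii): justify sub-exponential growth through finite log-moments of the stationary prefactors (e.g.\ $1/\inf_{\bm\theta}\sum_{k}\varepsilon_{i,t-k}^2(\bm{\delta})$ and $\|\mathbf{y}_t\|$), not through a pathwise linear bound on $\|\ln\underline{\mathbf{y}}_s\|$ that is then exponentiated.
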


		\begin{lemma} \label{lemma (ii) for consistency}
			Suppose that the conditions in Assumptions \ref{assum stationarity}--\ref{assum parameters} hold. 
			If $E\|\ln\mathbf{y}^{\odot 2}_{t}\|_{2} < \infty$, then the following results hold: 
			\begin{enumerate}[($\romannumeral1$)]
				\item $\lim_{n \to \infty} \sup_{\bm{\theta} \in \Theta} |\mathcal{L}_{n}(\bm{\theta}) - \widetilde{\mathcal{L}}_{n}(\bm{\theta})| = 0$, a.s.; 
				\item $\lim_{n \to \infty} \sup_{\bm{\theta} \in \Theta} \|n^{-1/2} \sum_{t=1}^{n} ({\partial\ell_{t}(\bm{\theta})}/{\partial\bm{\theta}} - {\partial\widetilde{\ell}_{t}(\bm{\theta})}/{\partial\bm{\theta}})\|_{2} = 0$, a.s.; 
				\item $\lim_{n \to \infty} \sup_{\bm{\theta} \in \Theta} \|n^{-1} \sum_{t=1}^{n} ({\partial^{2}\ell_{t}(\bm{\theta})}/{\partial\bm{\theta} \partial\bm{\theta}^{\prime}} - {\partial^{2}\widetilde{\ell}_{t}(\bm{\theta})}/{\partial\bm{\theta} \partial\bm{\theta}^{\prime}})\|_{2} = 0$, a.s.. 
			\end{enumerate}
		\end{lemma}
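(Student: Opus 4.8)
The plan is to show that, for each of the three quantities, the per-$t$ summand, after taking $\sup_{\bm{\theta} \in \Theta}$, is bounded almost surely by $\rho^{t}$ times a strictly stationary random variable with a finite logarithmic moment; the three limits then follow from elementary summation. The key elementary fact is that if $\{U_{t}\}$ is strictly stationary with $E\log^{+}U_{t} < \infty$ and $0 < \rho < 1$, then $\sum_{t=1}^{\infty} \rho^{t} U_{t} < \infty$ a.s.; this follows from the Cauchy root test together with $t^{-1} \log U_{t} \to 0$ a.s., which is a consequence of the Borel--Cantelli lemma. Granting $\sup_{\bm{\theta}} |\ell_{t}(\bm{\theta}) - \widetilde{\ell}_{t}(\bm{\theta})| \leq \rho^{t} U_{t}$ a.s., we get $\sup_{\bm{\theta}} |\mathcal{L}_{n}(\bm{\theta}) - \widetilde{\mathcal{L}}_{n}(\bm{\theta})| \leq n^{-1} \sum_{t=1}^{n} \rho^{t} U_{t} \to 0$ a.s., which is ($\romannumeral1$); the analogous gradient bound yields $\sup_{\bm{\theta}} \|n^{-1/2} \sum_{t=1}^{n} (\partial\ell_{t}/\partial\bm{\theta} - \partial\widetilde{\ell}_{t}/\partial\bm{\theta})\|_{2} \leq n^{-1/2} \sum_{t=1}^{\infty} \rho^{t} U_{t} \to 0$ a.s., which is ($\romannumeral2$); and the analogous Hessian bound yields ($\romannumeral3$) exactly as in ($\romannumeral1$).

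It remains to produce these per-$t$ bounds, which is the technical core. By the explicit formulas in Section \ref{section Derivatives}, $\ell_{t}(\bm{\theta})$ and its first and second partial derivatives are finite sums of terms each of which is a product of: the factor $\mathbf{y}_{t}$ (entering only through $\mathbf{y}_{t} \mathbf{y}_{t}^{\prime}$, hence contributing $\|\mathbf{y}_{t}\|_{2}^{2}$), the matrices $H_{t}^{-1}(\bm{\theta})$ and $\ln|H_{t}(\bm{\theta})|$, and partial derivatives of $H_{t}(\bm{\theta})$ of order at most two; the latter reduce by \eqref{eq first derivative of Dt}--\eqref{eq second derivative of Ht} to $D_{t}(\bm{\delta})$, $R_{t}(\bm{\theta})$, $\Psi_{t-1}(\bm{\delta})$, $\ln\bm{h}_{t}(\bm{\delta})$ and their derivatives. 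I would replace each such quantity by its initial-value analogue one factor at a time via $AB - \widetilde{A}\widetilde{B} = (A - \widetilde{A})B + \widetilde{A}(B - \widetilde{B})$. Lemma \ref{lemma varrho to the power of t} provides the needed inputs: the differences $\sup_{\bm{\theta}}\|\widetilde{H}_{t}^{-1} - H_{t}^{-1}\|$, $\sup_{\bm{\theta}}|\ln|\widetilde{H}_{t}| - \ln|H_{t}||$, $\sup_{\bm{\theta}}\|\partial\ln\widetilde{\bm{h}}_{t}/\partial\delta_{\ell} - \partial\ln\bm{h}_{t}/\partial\delta_{\ell}\|$ and $\sup_{\bm{\theta}}\|\partial\widetilde{R}_{t}/\partial\theta_{\ell} - \partial R_{t}/\partial\theta_{\ell}\|$ are all $\leq c\rho^{t}$ a.s., while $\sup_{\bm{\theta}}\|D_{t}^{\pm 1}\|$, $\sup_{\bm{\theta}}\|R_{t}^{\pm 1}\|$, $\sup_{\bm{\theta}}\|H_{t}^{\pm 1}\|$ and the first-derivative norms are finite a.s.; repeating the truncation estimate underlying that lemma also gives the second-order geometric bounds $\sup_{\bm{\theta}}\|\partial^{2}\ln\widetilde{\bm{h}}_{t}/\partial\delta_{k}\partial\delta_{\ell} - \partial^{2}\ln\bm{h}_{t}/\partial\delta_{k}\partial\delta_{\ell}\| \leq c\rho^{t}$ and the analogue for $R_{t}$, whence $\sup_{\bm{\theta}}\|\partial^{j}\widetilde{H}_{t} - \partial^{j}H_{t}\| \leq c\rho^{t}$ a.s. for $j \leq 2$ by the product rule. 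After telescoping, each term carries exactly one difference factor of order $O(\rho^{t})$ and a bounded number of multiplier factors, each being either an $O(\rho^{t})$-perturbation of a stationary quantity or $\|\mathbf{y}_{t}\|_{2}$; the stationary multipliers have finite first or $(2+\epsilon)$ moments by Lemmas \ref{lemma derivatives of lnhtunderline}--\ref{lemma moments of derivatives of lnhtunderline}, and $\|\mathbf{y}_{t}\|_{2}^{2}$ has a finite $\log^{+}$ moment since $E\|\ln\underline{\mathbf{y}}_{t}\|_{2} < \infty$ forces $E|\ln y_{it}^{2}| < \infty$ for each $i$. As a finite product of nonnegative variables with finite $\log^{+}$ moments again has a finite $\log^{+}$ moment, each term is $\leq \rho^{t} U_{t}$ a.s. with $\{U_{t}\}$ strictly stationary and $E\log^{+}U_{t} < \infty$; summing the finitely many terms gives the three bounds assumed in the first paragraph.

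I expect the main obstacle to be the bookkeeping in the second paragraph: propagating the geometric decay cleanly through the chain $\ln\bm{h}_{t} \mapsto D_{t} \mapsto H_{t} \mapsto H_{t}^{-1}$ and through two rounds of differentiation, while verifying that every multiplier factor stays ``on the good side'' of the root test (stationary up to an $O(\rho^{t})$ perturbation, with a finite $\log^{+}$ moment) and that the $\mathbf{y}_{t}$-factor never enters more than the two times already tracked. A secondary point requiring care is that Lemma \ref{lemma varrho to the power of t} as stated records only first-order geometric bounds, so the second-order analogues for $\partial^{2}\ln\bm{h}_{t}$ and $\partial^{2}R_{t}$ must first be established by repeating the truncation argument of that lemma.
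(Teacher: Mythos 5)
Your proposal is correct and follows essentially the same route as the paper: bound each per-$t$ summand (and, via the telescoping identity $AB-\widetilde{A}\widetilde{B}=(A-\widetilde{A})B+\widetilde{A}(B-\widetilde{B})$ applied to the derivative formulas of Section \ref{section Derivatives}, each term of the gradient and Hessian differences) by $c\rho^{t}$ times a stationary factor with a finite logarithmic moment, using Lemma \ref{lemma varrho to the power of t} for the geometric difference bounds and the a.s.\ finiteness of the multipliers, then sum. The paper packages the root-test step as ``$\leq c\rho^{t}$ a.s.\ for large enough $t$'' rather than as an explicit $E\log^{+}U_{t}<\infty$ lemma, and it handles the second-order analogues you correctly flag as missing from Lemma \ref{lemma varrho to the power of t} only with the phrase ``analogous arguments,'' so your proposal is, if anything, slightly more explicit about the same gaps.
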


		\begin{lemma} \label{lemma (iii) for consistency}
			Suppose that the conditions in Assumption \ref{assum parameters} hold. 
			If $E\|\ln\mathbf{y}^{\odot 2}_{t}\|_{2} < \infty$, then $E|\ell_{t}(\bm{\theta}_{0})| < \infty$, and $E\ell_{t}(\bm{\theta}) > E\ell_{t}(\bm{\theta}_{0})$ holds for any $\bm{\theta} \neq \bm{\theta}_{0}$. 
		\end{lemma}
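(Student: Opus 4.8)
The plan is to establish Lemma~\ref{lemma (iii) for consistency} in two stages: first prove integrability of $\ell_t(\bm{\theta}_0)$, and then prove the strict separation $E\ell_t(\bm{\theta}) > E\ell_t(\bm{\theta}_0)$ for $\bm{\theta} \neq \bm{\theta}_0$ via a Kullback--Leibler type argument combined with the identification result of Proposition~\ref{propo Identification}. For the integrability part, I would write $2\ell_t(\bm{\theta}_0) = \mathbf{y}_t' H_t^{-1} \mathbf{y}_t + \ln|H_t| = \bm{\eta}_t' R_t \bm{\eta}_t + \ln|D_t^2| + \ln|R_t|$, using $\mathbf{y}_t = D_t R_t^{1/2}\bm{\eta}_t$ (so that $H_t^{-1/2}\mathbf{y}_t$ has the law of $\bm{\eta}_t$ after accounting for $R_t$) — more carefully, $\mathbf{y}_t' H_t^{-1}\mathbf{y}_t = \bm{\eta}_t'\bm{\eta}_t$ directly since $H_t^{1/2}(H_t^{1/2})' = H_t$ and $\mathbf{y}_t = H_t^{1/2}\bm{\eta}_t$. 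Then $E\|\bm{\eta}_t\|_2^2 = m < \infty$ by assumption. For $\ln|H_t| = \ln|D_t^2| + \ln|R_t| = \sum_i \ln h_{ii,t} + \ln|R_t| = \bm{1}_m'\ln\bm{h}_t + \ln|R_t|$, I would bound $E|\bm{1}_m'\ln\bm{h}_t| \le \sqrt{m}\, E\|\ln\bm{h}_t\|_2 < \infty$ by Lemma~\ref{lemma moments of derivatives of lnhtunderline}(i) (noting $E\|\ln\underline{\mathbf{y}}_t\|_2 < \infty$ is assumed), and bound $\ln|R_t|$ using that $R_t$ is a correlation matrix, so $|R_t| \le 1$ giving $\ln|R_t| \le 0$, while a lower bound follows from Assumption~\ref{assum parameters}(iii) and the convex-combination structure $R_t = (1-\beta_{10}-\beta_{20})\underline{R}_0 + \beta_{10}\Psi_{t-1} + \beta_{20}R_{t-1}$, which keeps the smallest eigenvalue of $R_t$ bounded below by $(1-\beta_{10}-\beta_{20})\lambda_{\min}(\underline{R}_0) > 0$; hence $|\ln|R_t||$ is bounded deterministically. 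This yields $E|\ell_t(\bm{\theta}_0)| < \infty$.

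For the separation, the standard route is: $E\ell_t(\bm{\theta}) - E\ell_t(\bm{\theta}_0) = \tfrac12 E\big[\mathbf{y}_t'(H_t^{-1}(\bm{\theta}) - H_t^{-1}(\bm{\theta}_0))\mathbf{y}_t\big] + \tfrac12 E\ln\frac{|H_t(\bm{\theta})|}{|H_t(\bm{\theta}_0)|}$. Conditioning on $\mathcal{F}_{t-1}$ and using $E[\mathbf{y}_t\mathbf{y}_t'\mid\mathcal{F}_{t-1}] = H_t(\bm{\theta}_0)$, the first expectation becomes $\tfrac12 E\,\tr\big(H_t(\bm{\theta}_0)[H_t^{-1}(\bm{\theta}) - H_t^{-1}(\bm{\theta}_0)]\big) = \tfrac12 E\big[\tr(H_t(\bm{\theta}_0)H_t^{-1}(\bm{\theta})) - m\big]$. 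Combining with the log-determinant term gives $E\ell_t(\bm{\theta}) - E\ell_t(\bm{\theta}_0) = \tfrac12 E\big[\tr(H_t(\bm{\theta}_0)H_t^{-1}(\bm{\theta})) - m - \ln|H_t(\bm{\theta}_0)H_t^{-1}(\bm{\theta})|\big]$, and the function $x \mapsto x - 1 - \ln x$ (more precisely $M \mapsto \tr(M) - m - \ln|M|$ over positive definite $M$) is nonnegative and vanishes only at $M = I_m$. Hence the difference is $\ge 0$, with equality if and only if $H_t(\bm{\theta}) = H_t(\bm{\theta}_0)$ almost surely, and Proposition~\ref{propo Identification}(ii) forces $\bm{\theta} = \bm{\theta}_0$. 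Along the way one must check that the expectation $E\,\tr(H_t(\bm{\theta}_0)H_t^{-1}(\bm{\theta}))$ and $E\ln|H_t(\bm{\theta})|$ are finite for all $\bm{\theta}\in\Theta$ so the manipulation is legitimate; this follows from Lemma~\ref{lemma varrho to the power of t}(iv)--(v) and Lemma~\ref{lemma moments of derivatives of lnhtunderline}(i), which give uniform (in $\bm{\theta}$) integrable bounds on $\sup_{\bm{\theta}}\|H_t^{-1}(\bm{\theta})\|$ (finite a.s., together with $E\|\ln\bm{h}_t\|$ controlling $\|D_t\|$) and on $\sup_{\bm{\theta}}|\ln|H_t(\bm{\theta})||$.

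The main obstacle I anticipate is not the KL inequality itself, which is classical, but rather justifying finiteness of the relevant expectations uniformly and handling the fact that $\sup_{\bm\theta\in\Theta}\|H_t^{-1}(\bm\theta)\|$ is only guaranteed finite almost surely (not with a uniform moment bound) in Lemma~\ref{lemma varrho to the power of t}; to get $E\,\tr(H_t(\bm\theta_0)H_t^{-1}(\bm\theta)) < \infty$ I would instead bound $\tr(H_t(\bm\theta_0)H_t^{-1}(\bm\theta))$ directly by $\|R_t^{-1}(\bm\theta)\|\cdot\|D_t^{-1}(\bm\theta)\|^2\cdot\|H_t(\bm\theta_0)\|$, where $\|R_t^{-1}(\bm\theta)\|$ is bounded by a deterministic constant (Assumption~\ref{assum parameters}(iii) and the convex-combination structure), $\|D_t^{-1}(\bm\theta)\|^2 = \exp(\max_i(-\ln h_{ii,t}(\bm\delta)))$ is dominated by $\exp(c\|\ln\bm h_t(\bm\delta)\|_\infty)$, and $\|H_t(\bm\theta_0)\| \le \|R_t\|\cdot\|D_t\|^2$ is dominated by $\exp(c\|\ln\bm h_t(\bm\delta_0)\|_\infty)$ — but $\exp$ of a stationary variable with only a first moment need not be integrable. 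So I would need to be more careful: exploit that $\bm\theta_0$ and $\bm\theta$ drive the \emph{same} innovation sequence, so the quotient $H_t(\bm\theta_0)H_t^{-1}(\bm\theta)$ has $\ln$-scale differences $\ln\bm h_t(\bm\delta_0) - \ln\bm h_t(\bm\delta) = \sum_{i\ge 1}(\Phi_{i0} - \Phi_i(\bm\kappa))\ln\underline{\mathbf y}_{t-i}$, whose norm is bounded by $c\,\bm\zeta_{t,\varrho}$ (Lemma~\ref{lemma derivatives of lnhtunderline}-type geometric bound) with only a \emph{first} moment needed — and then $\tr(H_t(\bm\theta_0)H_t^{-1}(\bm\theta)) \le m\,\|R_t^{-1}(\bm\theta)\|\,\|R_t(\bm\theta_0)\|\exp(c\bm\zeta_{t,\varrho})$ still has the exponential issue. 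The clean fix is to note that, although $\mathbf{y}_t'(H_t^{-1}(\bm\theta)-H_t^{-1}(\bm\theta_0))\mathbf{y}_t$ and $\ln|H_t(\bm\theta)|$ may individually fail to be integrable, their sum $2\ell_t(\bm\theta)$ is bounded below (being a valid negative Gaussian log-likelihood contribution up to the dropped constant, $\ell_t(\bm\theta) \ge \tfrac12\ln|H_t(\bm\theta)| \ge -c$ fails too) — so the correct argument uses Jensen's inequality on $E[\ell_t(\bm\theta) - \ell_t(\bm\theta_0)]$ interpreting it as a (possibly $+\infty$) quantity: $E[\ell_t(\bm\theta)-\ell_t(\bm\theta_0)] = \tfrac12 E\, k(H_t(\bm\theta_0)H_t^{-1}(\bm\theta))$ with $k(M) = \tr M - m - \ln|M| \ge 0$ always well-defined in $[0,\infty]$, so the separation $E\ell_t(\bm\theta) > E\ell_t(\bm\theta_0)$ holds even allowing $+\infty$ values, and equality $E[\ell_t(\bm\theta)-\ell_t(\bm\theta_0)]=0$ still forces $k(H_t(\bm\theta_0)H_t^{-1}(\bm\theta))=0$ a.s., i.e. $H_t(\bm\theta)=H_t(\bm\theta_0)$ a.s., contradicting $\bm\theta\neq\bm\theta_0$ by Proposition~\ref{propo Identification}. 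This sidesteps the integrability subtlety entirely, and the only genuine finiteness claim we need is $E|\ell_t(\bm\theta_0)| < \infty$, handled in the first paragraph.
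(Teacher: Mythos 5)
Your proposal follows essentially the same route as the paper's own proof: the integrability of $\ell_{t}(\bm{\theta}_{0})$ via $\mathbf{y}_{t}^{\prime}H_{t}^{-1}(\bm{\theta}_{0})\mathbf{y}_{t}=\bm{\eta}_{t}^{\prime}\bm{\eta}_{t}$ together with the moment bound on $\ln\bm{h}_{t}$ and the boundedness of $\ln|R_{t}|$, and then the identity $2[E\ell_{t}(\bm{\theta})-E\ell_{t}(\bm{\theta}_{0})]=E[\tr(H_{t}(\bm{\theta}_{0})H_{t}^{-1}(\bm{\theta}))-m-\ln|H_{t}(\bm{\theta}_{0})H_{t}^{-1}(\bm{\theta})|]\geq 0$ with equality forcing $H_{t}(\bm{\theta})=H_{t}(\bm{\theta}_{0})$ a.s.\ and hence $\bm{\theta}=\bm{\theta}_{0}$ by Proposition~\ref{propo Identification}(\romannumeral2). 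Your extra care in treating the difference as the expectation of the nonnegative quantity $k(M)=\tr M-m-\ln|M|$ (so that $E\ell_{t}(\bm{\theta})$ may be $+\infty$), and your two-sided deterministic bound on $\ln|R_{t}|$ via $\lambda_{\min}(R_{t})\geq(1-\beta_{10}-\beta_{20})\lambda_{\min}(\underline{R}_{0})$, are refinements of points the paper passes over silently, and they are correct.
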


		\begin{lemma} \label{lemma (iv) for consistency}
			Suppose that the conditions in Lemma \ref{lemma (ii) for consistency} hold. Then for any $\bm{\theta} \neq \bm{\theta}_{0}$, there exists a neighborhood $V(\bm{\theta})$ such that $\liminf_{n \to \infty} \inf_{\bm{\theta}^{*} \in V(\bm{\theta})} \widetilde{\mathcal{L}}_{n}(\bm{\theta}^{*}) \geq E \ell_{1}(\bm{\theta})$ a.s.. 
		\end{lemma}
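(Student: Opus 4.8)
The plan is to adapt the standard localization argument used for the strong consistency of Gaussian QMLEs in the (multivariate) GARCH literature. First I would remove the effect of the initial values: by Lemma \ref{lemma (ii) for consistency}($\romannumeral1$) we have $\sup_{\bm{\theta}\in\Theta}|\mathcal{L}_{n}(\bm{\theta})-\widetilde{\mathcal{L}}_{n}(\bm{\theta})|\to 0$ a.s., so it suffices to prove the claim with $\widetilde{\mathcal{L}}_{n}$ replaced by $\mathcal{L}_{n}(\bm{\theta})=n^{-1}\sum_{t=1}^{n}\ell_{t}(\bm{\theta})$.

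For each positive integer $k$, let $V_{k}(\bm{\theta})=\{\bm{\theta}^{*}\in\Theta:\|\bm{\theta}^{*}-\bm{\theta}\|_{2}<1/k\}$ and set $\ell_{t}^{(k)}(\bm{\theta})=\inf_{\bm{\theta}^{*}\in V_{k}(\bm{\theta})}\ell_{t}(\bm{\theta}^{*})$. The core of the argument is to apply Birkhoff's ergodic theorem to the sequence $\{\ell_{t}^{(k)}(\bm{\theta})\}_{t}$, which is strictly stationary and ergodic since it is a measurable function of $\{\mathbf{y}_{t}\}$ (Assumption \ref{assum stationarity}). For this I need the negative part of $\ell_{t}^{(k)}(\bm{\theta})$ to be integrable, uniformly in $k$. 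Writing $\ell_{t}(\bm{\theta})=\tfrac{1}{2}\mathbf{y}_{t}^{\prime}H_{t}^{-1}(\bm{\theta})\mathbf{y}_{t}+\tfrac{1}{2}\ln|H_{t}(\bm{\theta})|$, the quadratic form is nonnegative, while $\ln|H_{t}(\bm{\theta})|=\bm{1}_{m}^{\prime}\ln\bm{h}_{t}(\bm{\delta})+\ln|R_{t}(\bm{\theta})|$; Lemma \ref{lemma varrho to the power of t}($\romannumeral3$) gives that $\inf_{\bm{\theta}\in\Theta}|R_{t}(\bm{\theta})|$ is bounded below by a positive constant, and Lemma \ref{lemma moments of derivatives of lnhtunderline}($\romannumeral1$) gives $E\sup_{\bm{\theta}\in\Theta}\|\ln\bm{h}_{t}(\bm{\delta})\|<\infty$, so $\ell_{t}^{(k)}(\bm{\theta})\ge-\tfrac{1}{2}\sup_{\bm{\theta}\in\Theta}\bigl|\ln|H_{t}(\bm{\theta})|\bigr|$ is bounded below by an integrable random variable not depending on $k$. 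Hence the ergodic theorem applies (with a limit in $(-\infty,+\infty]$), and
\[
\liminf_{n\to\infty}\inf_{\bm{\theta}^{*}\in V_{k}(\bm{\theta})}\mathcal{L}_{n}(\bm{\theta}^{*})\ \ge\ \lim_{n\to\infty}\frac{1}{n}\sum_{t=1}^{n}\ell_{t}^{(k)}(\bm{\theta})\ =\ E\bigl[\ell_{1}^{(k)}(\bm{\theta})\bigr]\quad\text{a.s.}
\]

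Finally I would let $k\to\infty$. Since $\bm{\theta}\mapsto\ell_{t}(\bm{\theta})$ is continuous on $\Theta$ — continuity of $\ln\bm{h}_{t}(\bm{\delta})$, $D_{t}(\bm{\delta})$, $R_{t}(\bm{\theta})$ and $H_{t}(\bm{\theta})$ follows from the uniform geometric bounds of Lemma \ref{lemma derivatives of lnhtunderline} together with Assumption \ref{assum parameters}($\romannumeral2$), which make the relevant infinite series converge uniformly on $\Theta$, while $|H_{t}(\bm{\theta})|$ stays bounded away from $0$ by Lemma \ref{lemma varrho to the power of t}($\romannumeral4$) — the functions $\ell_{t}^{(k)}(\bm{\theta})$ increase to $\ell_{t}(\bm{\theta})$ as $k\to\infty$, and the uniform integrable lower envelope above lets me invoke monotone convergence to get $E[\ell_{1}^{(k)}(\bm{\theta})]\uparrow E\ell_{1}(\bm{\theta})$. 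Taking $k$ large and $V(\bm{\theta})=V_{k}(\bm{\theta})$, the displayed $\liminf$ is then $\ge E\ell_{1}(\bm{\theta})$ up to an arbitrarily small slack; combined with the strict separation $E\ell_{1}(\bm{\theta})>E\ell_{1}(\bm{\theta}_{0})$ of Lemma \ref{lemma (iii) for consistency}, this is exactly what the proof of Theorem \ref{thm Consistency} uses. The main obstacle is the one-sided integrability bookkeeping: one must verify that $\ell_{t}^{(k)}(\bm{\theta})$ admits an integrable lower envelope \emph{uniformly in} $k$, so that both Birkhoff's theorem (with a possibly infinite mean) and the monotone-convergence passage are legitimate; this reduces entirely to the lower bound on $|R_{t}(\bm{\theta})|$ and the moment bound $E\sup_{\bm{\theta}\in\Theta}\|\ln\bm{h}_{t}(\bm{\delta})\|<\infty$ recorded in Lemmas \ref{lemma varrho to the power of t} and \ref{lemma moments of derivatives of lnhtunderline}.
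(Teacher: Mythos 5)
Your proposal is correct and follows essentially the same route as the paper's proof: localize on shrinking balls $V_k(\bm{\theta})$, apply the ergodic theorem for stationary sequences with mean in $\mathbb{R}\cup\{\infty\}$ after checking integrability of the negative part via $\ln|H_t(\bm{\theta})|=\sum_i\ln h_{ii,t}(\bm{\delta})+\ln|R_t(\bm{\theta})|$ together with Lemma \ref{lemma moments of derivatives of lnhtunderline}($\romannumeral1$) and the lower bound on $|R_t(\bm{\theta})|$, then pass to the limit in $k$ by Beppo--Levi and remove the initial-value effect with Lemma \ref{lemma (ii) for consistency}($\romannumeral1$). Your remark about the "arbitrarily small slack" being absorbed by the strict gap $E\ell_1(\bm{\theta})>E\ell_1(\bm{\theta}_0)$ is a fair observation on how the lemma is actually consumed in the compactness argument, but it does not change the substance of the proof.
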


		\begin{proof}[\textbf{Proof of Theorem \ref{thm Consistency}}]
			By Proposition \ref{propo Identification}($\romannumeral2$) and Lemmas \ref{lemma (ii) for consistency}($\romannumeral1$) and \ref{lemma (iii) for consistency}--\ref{lemma (iv) for consistency}, the following results hold: 
			\begin{enumerate}[($\romannumeral1$)]
				\item $\bm{\theta}$ is identifiable; 
				\item $\lim_{n \to \infty} \sup_{\bm{\theta} \in \Theta} |\mathcal{L}_{n}(\bm{\theta}) - \widetilde{\mathcal{L}}_{n}(\bm{\theta})| = 0$, a.s.; 
				\item $E |\ell_{t}(\bm{\theta}_{0})| < \infty$, and if $\bm{\theta} \neq \bm{\theta}_{0}$ then $E \ell_{t}(\bm{\theta}) > E \ell_{t}(\bm{\theta}_{0})$; 
				\item for any $\bm{\theta} \neq \bm{\theta}_{0}$, there exists a neighborhood $V(\bm{\theta})$ such that $\liminf_{n \to \infty} \inf_{\bm{\theta}^{*} \in V(\bm{\theta})} \widetilde{\mathcal{L}}_{n}(\bm{\theta}^{*}) \geq E \ell_{1}(\bm{\theta})$ a.s.. 
			\end{enumerate}
			Recall that $\widehat{\bm{\theta}}_{\text{G}} = \argmin_{\bm{\theta} \in \Theta} \widetilde{\mathcal{L}}_{n}(\bm{\theta})$. 
			By ($\romannumeral3$) and ($\romannumeral4$), for any $\bm{\theta} \neq \bm{\theta}_{0}$, there exists a neighborhood $V(\bm{\theta})$ such that 
			\begin{align*}
				\liminf_{n \to \infty} \inf_{\bm{\theta}^{*} \in V(\bm{\theta})} \widetilde{\mathcal{L}}_{n}(\bm{\theta}^{*}) > E \ell_{1}(\bm{\theta}_{0}), \;\; \text{a.s.}. 
			\end{align*}
			Moreover, by ($\romannumeral2$) and the ergodic theorem under Assumption \ref{assum stationarity}, it can be shown that 
			\begin{align*}
				\lim_{n \to \infty} \widetilde{\mathcal{L}}_{n}(\bm{\theta}_{0}) = \lim_{n \to \infty} \mathcal{L}_{n}(\bm{\theta}_{0}) = \lim_{n \to \infty} \frac{1}{n} \sum_{t=1}^{n} \ell_{t}(\bm{\theta}_{0}) = E \ell_{1}(\bm{\theta}_{0}), \;\; \text{a.s.}. 
			\end{align*}
			As a result, by a standard compactness argument following the proof of Theorem 3.1 in \citet{francq2012qml}, we complete the proof of the strong consistency of $\widehat{\bm{\theta}}_{\text{G}}$. 
		\end{proof}

	\subsection{Proof of Theorem \ref{thm Asymptotic normality}}
		\begin{lemma} \label{lemma moments of derivatives of Psitminus1}
			Suppose that the conditions in Assumptions \ref{assum parameters}($\romannumeral1$)--($\romannumeral2$) hold. 
			If $E\|\ln\mathbf{y}^{\odot 2}_{t}\|^{2+\epsilon} < \infty$ for some $\epsilon > 0$ with $\|\cdot\|$ being any matrix norm induced by a vector norm, then the following results hold: 
			\begin{align*}
				(\romannumeral1)& \;\; E \sup_{\bm{\theta} \in \Theta} \left\|\frac{\partial\Psi_{t-1}(\bm{\delta})}{\partial\delta_{\ell}}\right\|^{2+\epsilon} < \infty; ~~~~~~ 
				(\romannumeral2) \;\; E \sup_{\bm{\theta} \in \Theta} \left\|\frac{\partial^{2}\Psi_{t-1}(\bm{\delta})}{\partial\delta_{k} \partial\delta_{\ell}}\right\|^{1+\epsilon/2} < \infty; \\
				(\romannumeral3)& \;\; E \sup_{\bm{\theta} \in \Theta} \left\|\frac{\partial^{3}\Psi_{t-1}(\bm{\delta})}{\partial\delta_{j} \partial\delta_{k} \partial\delta_{\ell}}\right\|^{1+\epsilon/2} < \infty, 
			\end{align*}
			where $\delta_{\ell}$ is the $\ell$-th element of $\bm{\delta}$. 
		\end{lemma}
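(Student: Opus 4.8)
The plan is to work directly from the closed-form expressions for the derivatives of $\Psi_{t-1,ij}(\bm{\delta})$, $\bm{\varepsilon}_{t}(\bm{\delta})$ and $\ln\bm{h}_{t}(\bm{\delta})$ collected in Section~\ref{section Derivatives}, and to estimate each term that appears using three elementary observations. First, $|\Psi_{t-1,ij}(\bm{\delta})|\le 1$ for every $\bm{\delta}$, since $\Psi_{t-1}(\bm{\delta})$ is a correlation matrix. Second, the chain-rule identities following from \eqref{eq first derivative of varepsilon}--\eqref{eq second derivative of varepsilon}, namely $\dot{\varepsilon}_{i,t-h}^{(\ell)}(\bm{\delta})=-\tfrac12\big(\partial\ln h_{ii,t-h}(\bm{\delta})/\partial\delta_{\ell}\big)\varepsilon_{i,t-h}(\bm{\delta})$ and its second- and third-order analogues, express every derivative of $\varepsilon_{i,t-h}$ as $\varepsilon_{i,t-h}$ times a polynomial in the derivatives of $\ln h_{ii,t-h}$. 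Third, the Cauchy--Schwarz inequality applied to the sums over $h=1,\dots,\Bbbk$ in the numerators and denominators of $\Psi_{t-1,ij}$. The structural point that makes this work is that $\Psi_{t-1,ij}$ is built from the $\ell^{2}$-normalised vectors $(\varepsilon_{i,t-1},\dots,\varepsilon_{i,t-\Bbbk})$, so that after substituting the chain-rule identities every ratio such as $\sum_{h}\varepsilon_{i,t-h}\dot{\varepsilon}_{i,t-h}^{(\ell)}/\sum_{h}\varepsilon_{i,t-h}^{2}$ collapses into a convex combination of the numbers $\partial\ln h_{ii,t-h}(\bm{\delta})/\partial\delta_{\ell}$, $1\le h\le\Bbbk$; in particular the possibly small denominators $\sum_{h}\varepsilon_{i,t-h}^{2}$ cancel and never need to be bounded from below.

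Carrying this out, I would establish by induction on the derivative order $a\in\{1,2,3\}$ that $\sup_{\bm{\theta}\in\Theta}\big\|\partial^{a}\Psi_{t-1,ij}(\bm{\delta})\big\|$ (any $a$-th order partial in $\bm{\delta}$) is bounded by a finite linear combination of products $\Lambda^{(a_{1})}\Lambda^{(a_{2})}\cdots$ with $a_{1}+a_{2}+\cdots\le a$, where $\Lambda^{(a')}:=\sum_{h=1}^{\Bbbk}\sup_{\bm{\theta}\in\Theta}\big\|\partial^{a'}\ln\bm{h}_{t-h}(\bm{\delta})/\partial\delta^{a'}\big\|$; summing over the finitely many pairs $(i,j)$ then controls the matrix norm. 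By Lemma~\ref{lemma moments of derivatives of lnhtunderline}(vi)--(viii), together with stationarity and Minkowski's inequality, each $\Lambda^{(a')}$ lies in $L^{2+\epsilon}$ under the hypothesis $E\|\ln\underline{\mathbf{y}}_{t}\|^{2+\epsilon}<\infty$. Part~(i) is then immediate, since $\partial\Psi_{t-1,ij}/\partial\delta_{\ell}$ is dominated by a multiple of $\Lambda^{(1)}\in L^{2+\epsilon}$. Part~(ii) follows because $\partial^{2}\Psi_{t-1,ij}/\partial\delta_{k}\partial\delta_{\ell}$ is dominated by terms of the two shapes $\Lambda^{(1)}\Lambda^{(1)}$ and $\Lambda^{(2)}$: the former lies in $L^{1+\epsilon/2}$ by Cauchy--Schwarz, $E|XY|^{1+\epsilon/2}\le(E|X|^{2+\epsilon})^{1/2}(E|Y|^{2+\epsilon})^{1/2}$, and the latter in $L^{2+\epsilon}\subset L^{1+\epsilon/2}$.

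The main obstacle is part~(iii). A third differentiation of the ratio produces, besides the harmless $\Lambda^{(1)}\Lambda^{(2)}$ and $\Lambda^{(3)}$ contributions, a genuinely cubic term of the shape $\Lambda^{(1)}\Lambda^{(1)}\Lambda^{(1)}$ (it arises, for instance, when $\partial_{j}\partial_{k}\partial_{\ell}$ hits the normalising factor $\sum_{h}\varepsilon_{i,t-h}^{2}$, each differentiation peeling off one more factor $\partial\ln h_{ii,t-h}/\partial\delta_{\cdot}$). Here a naive Hölder estimate is not enough, because a product of three $L^{2+\epsilon}$ variables lies only in $L^{(2+\epsilon)/3}$ and $(2+\epsilon)/3<1+\epsilon/2$. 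The route I would take is to avoid splitting the three factors apart: using the diagonal structure of \eqref{eq first derivative of varepsilon}--\eqref{eq second derivative of varepsilon} the cubic term can be written as a $\Psi$-type weighted average of $\big(\partial_{j}\ln h_{ii,\cdot}\big)\big(\partial_{k}\ln h_{ii,\cdot}\big)\big(\partial_{\ell}\ln h_{ii,\cdot}\big)$ evaluated at a common lag $t-h$ with $1\le h\le\Bbbk$, hence dominated by $\sum_{h=1}^{\Bbbk}\sup_{\bm{\theta}}\big\|\partial\ln\bm{h}_{t-h}(\bm{\delta})/\partial\bm{\delta}\big\|^{3}$; combining the finite-range maximum with the pointwise bound $\sup_{\bm{\theta}}\|\partial\ln\bm{h}_{t-h}(\bm{\delta})/\partial\delta_{\cdot}\|\le c\,\bm{\zeta}_{t-h,\varrho_{1}}$ from Lemma~\ref{lemma derivatives of lnhtunderline} and Minkowski's inequality then reduces part~(iii) to a moment bound on $\bm{\zeta}_{t,\varrho_{1}}$. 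I expect the careful bookkeeping of these highest-order terms — identifying exactly which weighted averages and projection factors survive and checking that every one can be absorbed into the $(2+\epsilon)$-th moments supplied by Lemma~\ref{lemma moments of derivatives of lnhtunderline} — to be the technically hardest part of the argument.
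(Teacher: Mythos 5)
Your treatment of parts (\romannumeral1) and (\romannumeral2) is correct and is essentially the paper's own argument: bound $|\Psi_{t-1,ij}(\bm{\delta})|$ by one, use \eqref{eq first derivative of varepsilon}--\eqref{eq second derivative of varepsilon} to turn every derivative of $\varepsilon_{i,t-h}(\bm{\delta})$ into $\varepsilon_{i,t-h}(\bm{\delta})$ times derivatives of $\ln h_{ii,t-h}(\bm{\delta})$, observe that the ratios $\varepsilon_{i,t-h}^{2}(\bm{\delta})/\sum_{h'}\varepsilon_{i,t-h'}^{2}(\bm{\delta})$ are non-negative weights summing to one (so the denominators cancel and never need a lower bound), and finish with Lemma \ref{lemma moments of derivatives of lnhtunderline}($\romannumeral6$)--($\romannumeral7$), using Cauchy--Schwarz for the terms of the shape $\Lambda^{(1)}\Lambda^{(1)}$ in (\romannumeral2). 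Up to that point there is nothing to object to.

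The gap is in part (\romannumeral3), and your own diagnosis of it is sharper than your cure. You correctly identify that $\partial^{3}\Psi_{t-1,ij}/\partial\delta_{j}\partial\delta_{k}\partial\delta_{\ell}$ contains terms of total derivative weight three in $\ln h_{ii,\cdot}$ --- for instance those obtained by differentiating the last bracket of \eqref{eq second derivative of Psitminus1} once more, which produce a product of three factors of the form $\big(\sum_{h}\varepsilon_{i,t-h}\dot{\varepsilon}_{i,t-h}^{(\cdot)}\big)/\big(\sum_{h}\varepsilon_{i,t-h}^{2}\big)$, and the $(\partial\ln h)^{3}\varepsilon$ term inside $\partial^{3}\varepsilon_{i,t-h}$ --- and that a naive H\"older bound places a product of three $L^{2+\epsilon}$ variables only in $L^{(2+\epsilon)/3}\subsetneq L^{1+\epsilon/2}$. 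But the workaround you propose does not close this deficit. Rewriting each factor as a weighted average of $-\tfrac12\,\partial\ln h_{ii,t-h}$ over the finite range $1\le h\le\Bbbk$ merely bounds the cubic term by (a constant times) $\max_{h\le\Bbbk}\sup_{\bm{\theta}\in\Theta}\|\partial\ln\bm{h}_{t-h}(\bm{\delta})/\partial\bm{\delta}\|^{3}$; its $(1+\epsilon/2)$-th moment is then the $(3+3\epsilon/2)$-th moment of $\sup_{\bm{\theta}}\|\partial\ln\bm{h}_{t-h}/\partial\bm{\delta}\|$, equivalently of $\bm{\zeta}_{t,\varrho_{1}}$ and hence of $\|\ln\underline{\mathbf{y}}_{t}\|$. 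Since $3+3\epsilon/2>2+\epsilon$ for every $\epsilon>0$, this is not supplied by the hypothesis $E\|\ln\underline{\mathbf{y}}_{t}\|^{2+\epsilon}<\infty$ nor by any part of Lemma \ref{lemma moments of derivatives of lnhtunderline}, so your ``reduction to a moment bound on $\bm{\zeta}_{t,\varrho_{1}}$'' is a reduction to a bound you do not possess. Closing part (\romannumeral3) requires either a strengthened moment condition on $\ln\underline{\mathbf{y}}_{t}$ or an actual cancellation among the cubic terms, and you exhibit neither. (The paper itself dismisses (\romannumeral3) with ``analogous arguments'' and never confronts this term, so you have in fact located a weak point of the source; but that does not make your argument complete.)
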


		\begin{lemma} \label{lemma moments of derivatives of Rt}
			Suppose that the conditions in Lemma \ref{lemma moments of derivatives of Psitminus1} hold. 
			Then the following results hold: 
			\begin{align*}
				(\romannumeral1)& \;\; E \sup_{\bm{\theta} \in \Theta} \left\|\frac{\partial R_{t}(\bm{\theta})}{\partial\theta_{\ell}}\right\|^{2+\epsilon} < \infty; ~~~~~~~~
				(\romannumeral2) \;\; E \sup_{\bm{\theta} \in \Theta} \left\|\frac{\partial^{2} R_{t}(\bm{\theta})}{\partial\theta_{k} \partial\theta_{\ell}}\right\|^{1+\epsilon/2} < \infty; \\
				(\romannumeral3)& \;\; E \sup_{\bm{\theta} \in \Theta} \left\|\frac{\partial^{3} R_{t}(\bm{\theta})}{\partial\theta_{j} \partial\theta_{k} \partial\theta_{\ell}}\right\|^{1+\epsilon/2} < \infty, 
			\end{align*}
			where $\|\cdot\|$ is any matrix norm induced by a vector norm, and $\theta_{\ell}$ is the $\ell$-th element of $\bm{\theta}$. 
		\end{lemma}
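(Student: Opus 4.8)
The plan is to prove Lemma \ref{lemma moments of derivatives of Rt} by leveraging the recursive structure of $R_t(\bm{\theta})$ together with the moment bounds on the derivatives of $\Psi_{t-1}(\bm{\delta})$ already established in Lemma \ref{lemma moments of derivatives of Psitminus1}. Recall from \eqref{eq first derivative of Rt} that $\partial R_t(\bm{\theta})/\partial\delta_\ell = \beta_1 \sum_{h=0}^{\infty} \beta_2^{h}\, \partial\Psi_{t-h-1}(\bm{\delta})/\partial\delta_\ell$, that $\partial R_t/\partial\beta_1$ and $\partial R_t/\partial\beta_2$ are affine combinations of $\underline{R}$ and weighted geometric sums of $\Psi_{t-h-1}(\bm{\delta})$, and that $\partial R_t/\partial\underline{R}_{ij}$ is simply a multiple of the fixed matrix $\dot{\underline{R}}^{(i,j)}$. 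So for each first-order derivative I would write it explicitly as such a sum, then apply the triangle inequality for the induced matrix norm and Minkowski's inequality in $L^{2+\epsilon}$.

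The first step is to handle the $\bm{\delta}$-derivatives. Using $\|\partial R_t(\bm{\theta})/\partial\delta_\ell\| \le \beta_1 \sum_{h=0}^{\infty}\beta_2^{h}\|\partial\Psi_{t-h-1}(\bm{\delta})/\partial\delta_\ell\|$ and taking $\sup$ over $\bm{\theta}\in\Theta$ inside, then applying Minkowski's inequality, I get
\begin{align*}
\left(E\sup_{\bm{\theta}\in\Theta}\left\|\frac{\partial R_t(\bm{\theta})}{\partial\delta_\ell}\right\|^{2+\epsilon}\right)^{1/(2+\epsilon)}
\le \sup_{\bm{\theta}\in\Theta}\frac{\beta_1}{1-\beta_2}\left(E\sup_{\bm{\theta}\in\Theta}\left\|\frac{\partial\Psi_{t-1}(\bm{\delta})}{\partial\delta_\ell}\right\|^{2+\epsilon}\right)^{1/(2+\epsilon)},
\end{align*}
where I have used stationarity of $\Psi_{t-h-1}(\bm{\delta})$ (so the $L^{2+\epsilon}$ norm does not depend on $h$) and Assumption \ref{assum parameters}(i) to bound $\beta_1/(1-\beta_2)$ uniformly on the compact $\Theta$ (recall $\beta_1+\beta_2<1$, so $\beta_2$ is bounded away from $1$). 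The right-hand side is finite by Lemma \ref{lemma moments of derivatives of Psitminus1}(i). The same argument, with geometric weights $h\beta_2^{h-1}$ (whose sum $1/(1-\beta_2)^2$ is again uniformly bounded on $\Theta$), covers $\partial R_t/\partial\beta_2$; the term $\partial R_t/\partial\beta_1$ involves the constant $\underline{R}$ (bounded on $\Theta$ by compactness) plus a geometric sum of $\Psi_{t-h-1}$, which has finite $L^{2+\epsilon}$ norm since $\|\Psi_{t-h-1}\|\le\|\Psi_{t-h-1}\|_F\le m$ deterministically; and $\partial R_t/\partial\underline{R}_{ij}$ is trivially bounded. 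This proves (i). Parts (ii) and (iii) follow identically: from \eqref{eq second derivative of Rt} every second derivative of $R_t$ is a geometric-weighted sum (possibly with polynomial-in-$h$ weights, whose sums remain finite and uniformly bounded on $\Theta$) of the corresponding second derivatives of $\Psi_{t-h-1}$, plus bounded constant terms, so Minkowski in $L^{1+\epsilon/2}$ together with Lemma \ref{lemma moments of derivatives of Psitminus1}(ii) gives (ii); and the analogous third-derivative recursion (not displayed but obtained by differentiating \eqref{eq second derivative of Rt} once more, which only differentiates the $\Psi$ factors since the $\underline{R}$ terms vanish) together with Lemma \ref{lemma moments of derivatives of Psitminus1}(iii) gives (iii).

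The only point requiring care — and the main obstacle, such as it is — is the interchange of the supremum over $\bm{\theta}\in\Theta$ with the infinite sum over $h$: one must check that $\sup_{\bm{\theta}\in\Theta}$ of a sum is bounded by the sum of suprema (immediate) and that $\sup_{\bm{\theta}}\sum_h \beta_2^h \|\partial\Psi_{t-h-1}/\partial\delta_\ell\|$ is measurable and has the claimed $L^{2+\epsilon}$ bound uniformly; this is handled by noting $\sup_{\bm{\theta}\in\Theta}\beta_2 = \varrho' < 1$ for some $\varrho'$ by compactness of $\Theta$ and $\beta_1+\beta_2<1$, so $\sup_{\bm{\theta}}\sum_h\beta_2^h\|\cdot\| \le \sum_h (\varrho')^h \sup_{\bm{\theta}}\|\cdot\|$, and then Minkowski applies to this deterministic-geometric-weighted sum of the stationary random variables $\sup_{\bm{\theta}}\|\partial\Psi_{t-h-1}/\partial\delta_\ell\|$, whose common $L^{2+\epsilon}$ norm is finite by Lemma \ref{lemma moments of derivatives of Psitminus1}. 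Everything else is bookkeeping over the finitely many derivative types listed in \eqref{eq first derivative of Rt}--\eqref{eq second derivative of Rt}.
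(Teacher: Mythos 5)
Your proposal is correct and follows essentially the same route as the paper: express each derivative of $R_{t}(\bm{\theta})$ via \eqref{eq first derivative of Rt}--\eqref{eq second derivative of Rt} as a geometric-weighted sum of the corresponding derivatives of $\Psi_{t-h-1}(\bm{\delta})$ plus bounded constant terms, pull the supremum inside using $\sup_{\Theta}\beta_{2}<1$ from compactness, and conclude by Minkowski's inequality together with Lemma \ref{lemma moments of derivatives of Psitminus1} and stationarity. Your explicit treatment of the sup/sum interchange is slightly more careful than the paper's, but the argument is the same.
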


		\begin{lemma} \label{lemma moments of Htinverse product derivatives of Ht}
			Suppose that the conditions in Assumption \ref{assum parameters} hold. 
			If $E\|\ln\mathbf{y}^{\odot 2}_{t}\|^{2+\epsilon} < \infty$ for some $\epsilon > 0$ with $\|\cdot\|$ being any matrix norm induced by a vector norm, then the following results hold: 
			\begin{align*}
				(\romannumeral1)& \;\; E \sup_{\bm{\theta} \in \Theta} \left\|D_{t}^{-1}(\bm{\delta}) \frac{\partial D_{t}(\bm{\delta})}{\partial\delta_{\ell}}\right\|^{2+\epsilon} < \infty; ~~~~~~
				E \sup_{\bm{\theta} \in \Theta} \left\|D_{t}^{-1}(\bm{\delta}) \frac{\partial^{2} D_{t}(\bm{\delta})}{\partial\delta_{k} \partial\delta_{\ell}}\right\|^{1+\epsilon/2} < \infty; ~~~ \text{and} \\
				& \;\; E \sup_{\bm{\theta} \in \Theta} \left\|D_{t}^{-1}(\bm{\delta}) \frac{\partial^{3} D_{t}(\bm{\delta})}{\partial\delta_{j} \partial\delta_{k} \partial\delta_{\ell}}\right\|^{1+\epsilon/2} < \infty; \\
				(\romannumeral2)& \;\; E \sup_{\bm{\theta} \in \Theta} \left\|R_{t}^{-1}(\bm{\theta}) \frac{\partial R_{t}(\bm{\theta})}{\partial\theta_{\ell}}\right\|^{2+\epsilon} < \infty; ~~~~~~
				E \sup_{\bm{\theta} \in \Theta} \left\|R_{t}^{-1}(\bm{\theta}) \frac{\partial^{2} R_{t}(\bm{\theta})}{\partial\theta_{k} \partial\theta_{\ell}}\right\|^{1+\epsilon/2} < \infty; ~~~ \text{and} \\
				& \;\; E \sup_{\bm{\theta} \in \Theta} \left\|R_{t}^{-1}(\bm{\theta}) \frac{\partial^{3} R_{t}(\bm{\theta})}{\partial\theta_{j} \partial\theta_{k} \partial\theta_{\ell}}\right\|^{1+\epsilon/2} < \infty; \\
				(\romannumeral3)& \;\; E \sup_{\bm{\theta} \in \Theta} \left\|H_{t}^{-1}(\bm{\theta}) \frac{\partial H_{t}(\bm{\theta})}{\partial\theta_{\ell}}\right\|^{2+\epsilon} < \infty; ~~~~~~
				E \sup_{\bm{\theta} \in \Theta} \left\|H_{t}^{-1}(\bm{\theta}) \frac{\partial^{2} H_{t}(\bm{\theta})}{\partial\theta_{k} \partial\theta_{\ell}}\right\|^{1+\epsilon/2} < \infty; ~~~ \text{and} \\
				& \;\; E \sup_{\bm{\theta} \in \Theta} \left\|H_{t}^{-1}(\bm{\theta}) \frac{\partial^{3} H_{t}(\bm{\theta})}{\partial\theta_{j} \partial\theta_{k} \partial\theta_{\ell}}\right\|^{1+\epsilon/2} < \infty, 
			\end{align*}
			where $\delta_{\ell}$ is the $\ell$-th element of $\bm{\delta}$ and $\theta_{\ell}$ is the $\ell$-th element of $\bm{\theta}$. 
		\end{lemma}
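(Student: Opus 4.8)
The plan is to establish the three parts in order, treating part (iii) as a consequence of parts (i) and (ii) together with the factorization $H_t(\bm\theta)=D_t(\bm\delta)R_t(\bm\theta)D_t(\bm\delta)$. The only tools needed are submultiplicativity $\|AB\|\le\|A\|\,\|B\|$ of a matrix norm induced by a vector norm, Hölder's inequality for products of random matrices, the explicit derivative formulas of Section \ref{section Derivatives}, the moment bounds of Lemma \ref{lemma moments of derivatives of lnhtunderline}(vi)--(viii) for the derivatives of $\ln\bm h_t$, the moment bounds of Lemma \ref{lemma moments of derivatives of Rt} for the derivatives of $R_t$, and the uniform bounds $\sup_{\bm\theta\in\Theta}\|R_t(\bm\theta)\|<\infty$ and $\sup_{\bm\theta\in\Theta}\|R_t^{-1}(\bm\theta)\|<\infty$ from Lemma \ref{lemma varrho to the power of t}(iii) (which themselves rest on Assumption \ref{assum parameters}(iii) and the compactness of $\Theta$).

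For part (i), I would start from \eqref{eq first derivative of Dt}--\eqref{eq second derivative of Dt} and the analogous third-order formula. Because $D_t(\bm\delta)$ is diagonal and every derivative of $D_t$ carries $D_t$ itself as a left factor, the product $D_t^{-1}(\bm\delta)\,\partial^{j}D_t(\bm\delta)/\partial\delta_{\cdots}$ reduces to a finite sum of products of at most $j$ of the diagonal matrices $\Diag\{\partial^{k}\ln\bm h_t(\bm\delta)/\partial\delta_{\cdots}\}$ with $1\le k\le j$; in particular $D_t(\bm\delta)$ cancels and no factor $D_t^{-1}$ survives. Taking $\sup_{\bm\theta\in\Theta}$, using $\|\Diag\{\bm{v}\}\|\le c\|\bm{v}\|$, and applying Hölder's inequality to each product then reduces the claim to the finiteness of the $(2+\epsilon)$-th (for $j=1$) or $(1+\epsilon/2)$-th (for $j=2,3$) moments of $\sup_{\bm\theta\in\Theta}\|\partial^{k}\ln\bm h_t(\bm\delta)/\partial\delta_{\cdots}\|$, which is exactly Lemma \ref{lemma moments of derivatives of lnhtunderline}(vi)--(viii).

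For part (ii), submultiplicativity gives, for each order $j$,
\[\sup_{\bm\theta\in\Theta}\Big\|R_t^{-1}(\bm\theta)\frac{\partial^{j}R_t(\bm\theta)}{\partial\theta_{\cdots}}\Big\|\ \le\ \Big(\sup_{\bm\theta\in\Theta}\|R_t^{-1}(\bm\theta)\|\Big)\cdot\sup_{\bm\theta\in\Theta}\Big\|\frac{\partial^{j}R_t(\bm\theta)}{\partial\theta_{\cdots}}\Big\|,\]
and since the first factor on the right is a finite constant by Lemma \ref{lemma varrho to the power of t}(iii), the asserted moment bounds follow immediately from Lemma \ref{lemma moments of derivatives of Rt}(i)--(iii).

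For part (iii), I would substitute $H_t^{-1}=D_t^{-1}R_t^{-1}D_t^{-1}$ and expand $\partial^{j}H_t/\partial\theta_{\cdots}$ by the product rule applied to $H_t=D_tR_tD_t$, using \eqref{eq first derivative of Ht}--\eqref{eq second derivative of Ht} and their third-order analogue. The key device is the identity $D_t^{-1}(\bm\delta)\,\partial D_t(\bm\delta)/\partial\delta_\ell=\tfrac12\Diag\{\partial\ln\bm h_t(\bm\delta)/\partial\delta_\ell\}$ together with its higher-order versions from part (i): applying these term by term, the ``pure $D_t$'' contributions collapse to the diagonal matrices controlled by part (i), while every remaining summand of $H_t^{-1}\,\partial^{j}H_t/\partial\theta_{\cdots}$ reduces, after regrouping the surviving $D_t$-factors against those coming from $H_t^{-1}$, to a product of finitely many factors each of which is one of the quantities bounded in part (i), one of the quantities bounded in part (ii), or $R_t(\bm\theta)$ or $R_t^{-1}(\bm\theta)$ (both uniformly bounded on $\Theta$). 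Submultiplicativity followed by Hölder's inequality then yields the $(2+\epsilon)$-th moment bound for $j=1$ and the $(1+\epsilon/2)$-th moment bound for $j=2,3$. The main obstacle — and the bulk of the work — is precisely this bookkeeping: enumerating the handful of summands for $j=1$ and the larger families for $j=2,3$, verifying in each case that the $D_t$-factors recombine with those in $H_t^{-1}$ as claimed, and checking that the Hölder exponents attached to the surviving factors are admissible. Everything beyond that is a routine appeal to parts (i)--(ii) and the earlier lemmas.
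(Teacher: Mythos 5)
Your route is the paper's: part (i) via the cancellation $D_{t}^{-1}(\bm{\delta})\,\partial D_{t}(\bm{\delta})/\partial\delta_{\ell}=\tfrac12\Diag\{\partial\ln\bm{h}_{t}(\bm{\delta})/\partial\delta_{\ell}\}$ from \eqref{eq first derivative of Dt} and its higher-order analogues combined with Lemma \ref{lemma moments of derivatives of lnhtunderline}(vi)--(viii); part (ii) via submultiplicativity, the uniform bound on $\sup_{\bm{\theta}\in\Theta}\|R_{t}^{-1}(\bm{\theta})\|$ (the paper re-derives it inside the proof from Assumption \ref{assum parameters}(iii) rather than citing Lemma \ref{lemma varrho to the power of t}(iii), but it is the same fact), and Lemma \ref{lemma moments of derivatives of Rt}; part (iii) by substituting $H_{t}=D_{t}R_{t}D_{t}$ into the derivative formulas. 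Parts (i) and (ii) are correct and match the paper, with one caveat you share with it: in the third-order case of (i) the summand $\tfrac18\Diag\{\partial_{j}\ln\bm{h}_{t}\}\Diag\{\partial_{k}\ln\bm{h}_{t}\}\Diag\{\partial_{\ell}\ln\bm{h}_{t}\}$ requires, via H\"older, a $(3+3\epsilon/2)$-th moment of each first derivative, which exceeds the $(2+\epsilon)$-th moment that Lemma \ref{lemma moments of derivatives of lnhtunderline}(vi) supplies; the paper's ``analogous arguments'' glosses over this as well.

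The genuine gap is in (iii), at exactly the step you flag and then assert: the $D_{t}$-factors do \emph{not} all recombine. Writing out $H_{t}^{-1}\,\partial H_{t}/\partial\delta_{\ell}$, two of the three summands are $D_{t}^{-1}R_{t}^{-1}\Lambda_{\ell}R_{t}D_{t}$ (with $\Lambda_{\ell}$ diagonal) and $D_{t}^{-1}R_{t}^{-1}(\partial R_{t}/\partial\delta_{\ell})D_{t}$: in each, a $D_{t}^{-1}$ and a $D_{t}$ are separated by a non-diagonal matrix, so they cannot be paired off, and for an induced norm one is left with a factor $\|D_{t}^{-1}\|\,\|D_{t}\|$, the condition number of $D_{t}$. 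That quantity is an exponential of differences of the $\ln h_{ii,t}$'s, hence finite a.s.\ but with no moment bound under $E\|\ln\underline{\mathbf{y}}_{t}\|^{2+\epsilon}<\infty$ (compare \eqref{eq Dtinverse Dt finite}, where only a.s.\ finiteness of $\sup_{\bm{\theta}}\|D_{t}^{-1}(\bm{\delta})D_{t}(\bm{\delta}_{0})\|$ is claimed). The paper's own proof of (iii) is the single sentence that it ``directly follows'' from (i)--(ii), so you have faithfully reproduced its argument; but as an operator-norm statement the reduction does not go through as described. What is actually controllable --- and what the paper uses downstream in Lemma \ref{lemma moments of derivatives of lt} --- are the trace and quadratic-form versions, where cyclic invariance of the trace eliminates the conjugation by $D_{t}$, together with parts (i) and (ii) applied to the $D$- and $R$-blocks separately.
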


		\begin{lemma} \label{lemma moments of derivatives of lt}
			Suppose that the conditions in Assumption \ref{assum parameters} hold. 
			If $E\|\ln\mathbf{y}^{\odot 2}_{t}\|_{2}^{2+\epsilon} < \infty$ for some $\epsilon > 0$ and $E \|\bm{\eta}_{t} \bm{\eta}_{t}^{\prime}\|_{2}^{2} < \infty$, 
			then the following results hold: 
			\begin{enumerate}[($\romannumeral1$)]
				\item $E |{\partial\ell_{t}(\bm{\theta}_{0})}/{\partial\theta_{\ell}}|^{2} < \infty$, and $\sup_{\bm{\theta} \in \Theta} |{\partial\ell_{t}(\bm{\theta})}/{\partial\theta_{\ell}}|^{2}$ is finite a.s.; 
				\item $E |{\partial^{2}\ell_{t}(\bm{\theta}_{0})}/{\partial\theta_{k} \partial\theta_{\ell}}| < \infty$, and $\sup_{\bm{\theta} \in \Theta} |{\partial^{2}\ell_{t}(\bm{\theta})}/{\partial\theta_{k} \partial\theta_{\ell}}|$ is finite a.s.; 
				\item $E |{\partial^{3}\ell_{t}(\bm{\theta}_{0})}/{\partial\theta_{j} \partial\theta_{k} \partial\theta_{\ell}}| < \infty$, and $\sup_{\bm{\theta} \in \Theta} |{\partial^{3}\ell_{t}(\bm{\theta})}/{\partial\theta_{j} \partial\theta_{k} \partial\theta_{\ell}}|$ is finite a.s.; 
			\end{enumerate}
			where $\bm{\eta}_{t} = (\eta_{1t}, \ldots, \eta_{mt})^{\prime}$, and $\theta_{\ell}$ is the $\ell$-th element of $\bm{\theta}$. 
		\end{lemma}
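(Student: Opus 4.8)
The plan is to express every derivative of $\ell_{t}(\bm{\theta})$ through a short list of ``tame'' quantities --- $R_{t}$, $R_{t}^{-1}$, the first, second and third partials of $\ln\bm{h}_{t}$ and of $R_{t}$, and $\bm{\eta}_{t}$ --- for which the needed moment and almost-sure bounds are already supplied by Lemmas \ref{lemma derivatives of lnhtunderline}, \ref{lemma moments of derivatives of lnhtunderline}, \ref{lemma varrho to the power of t}, \ref{lemma moments of derivatives of Psitminus1} and \ref{lemma moments of derivatives of Rt}. Starting from the formulas \eqref{eq first derivative of lt}--\eqref{eq second derivative of lt}, and from the analogous expansion of $\partial^{3}\ell_{t}(\bm{\theta})/\partial\theta_{j}\partial\theta_{k}\partial\theta_{\ell}$ obtained by differentiating \eqref{eq second derivative of lt} once more, each such derivative is a finite linear combination of traces $\tr[(I_{m}-H_{t}^{-1}\mathbf{y}_{t}\mathbf{y}_{t}^{\prime})\,M_{t}]$, where $M_{t}$ is a product of factors of the form $H_{t}^{-1}(\partial^{a_{i}}H_{t}/\partial\theta^{a_{i}})$ with the orders $a_{i}\in\{1,2,3\}$ summing to the order of the $\ell_{t}$-derivative. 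I would then use $H_{t}=D_{t}R_{t}D_{t}$, $H_{t}^{-1}=D_{t}^{-1}R_{t}^{-1}D_{t}^{-1}$, the identity $D_{t}^{-1}(\partial D_{t}/\partial\delta_{\ell})=\tfrac12\Diag\{\partial\ln\bm{h}_{t}/\partial\delta_{\ell}\}$ from \eqref{eq first derivative of Dt} (which, being diagonal, commutes with $D_{t}$), and cyclicity of the trace, to cancel all the $D_{t}$'s. Every summand then collapses to an expression $\bm{\varepsilon}_{t}^{\prime}R_{t}^{-1}\Gamma_{1}R_{t}^{-1}\cdots\Gamma_{q}R_{t}^{-1}\bm{\varepsilon}_{t}$, or its $\bm{\varepsilon}_{t}$-free trace analogue, where $\bm{\varepsilon}_{t}=D_{t}^{-1}\mathbf{y}_{t}$ and, by \eqref{eq first derivative of Dt}, \eqref{eq second derivative of Dt}, \eqref{eq first derivative of Rt}, \eqref{eq second derivative of Rt}, \eqref{eq first derivative of Ht} and \eqref{eq second derivative of Ht}, each $\Gamma_{i}:=D_{t}^{-1}(\partial^{a_{i}}H_{t}/\partial\theta^{a_{i}})D_{t}^{-1}$ is a finite sum of products of at most $a_{i}$ of the factors $\Diag\{\partial^{q}\ln\bm{h}_{t}/\partial\delta^{q}\}$ and $\partial^{q}R_{t}/\partial\theta^{q}$, with bounded matrices $R_{t}$ ($\|R_{t}\|_{2}\le m$, being a correlation matrix) interspersed.

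Next I would bound each summand by submultiplicativity of $\|\cdot\|_{2}$ and $|\tr(AB)|\le m\|A\|_{2}\|B\|_{2}$, yielding a constant times $(1+\|\bm{\eta}_{t}\bm{\eta}_{t}^{\prime}\|_{2})\prod_{i}\|\Gamma_{i}\|_{2}$ times powers of $\|R_{t}\|_{2}$ and $\|R_{t}^{-1}\|_{2}$; here I use $\bm{\varepsilon}_{t}=D_{t}^{-1}H_{t}^{1/2}\bm{\eta}_{t}$, so that $\|\bm{\varepsilon}_{t}\bm{\varepsilon}_{t}^{\prime}\|_{2}\le\|R_{t}\|_{2}\,\|\bm{\eta}_{t}\bm{\eta}_{t}^{\prime}\|_{2}\le m\,\|\bm{\eta}_{t}\bm{\eta}_{t}^{\prime}\|_{2}$. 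For the a.s.\ finiteness of the suprema over $\Theta$, note that $E\|\ln\underline{\mathbf{y}}_{t}\|_{2}<\infty$ makes $\bm{\zeta}_{t}$ from Lemma \ref{lemma derivatives of lnhtunderline} finite a.s., whence every $\sup_{\bm{\theta}\in\Theta}\|\Gamma_{i}\|_{2}$ is finite a.s.\ by Lemmas \ref{lemma derivatives of lnhtunderline} and \ref{lemma moments of derivatives of Rt}, while $\sup_{\bm{\theta}\in\Theta}\|R_{t}^{\pm1}\|_{2}$ is finite a.s.\ by Lemma \ref{lemma varrho to the power of t}(iii) and $\|\bm{\eta}_{t}\|_{2}<\infty$ a.s. For the integrability at $\bm{\theta}_{0}$, observe that the $\Gamma_{i}$ and $R_{t}^{\pm1}$ at $\bm{\theta}_{0}$ are $\mathcal{F}_{t-1}$-measurable, that $\bm{\eta}_{t}$ is independent of $\mathcal{F}_{t-1}$, and that $\|R_{t}^{-1}(\bm{\theta}_{0})\|_{2}\le[(1-\beta_{10}-\beta_{20})\lambda_{\min}(\underline{R}_{0})]^{-1}$ is a deterministic constant; conditioning on $\mathcal{F}_{t-1}$ reduces matters to moments of products of the $\Gamma_{i}$'s. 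For part (i) the score involves a single $\Gamma$, and $E[(1+\|\bm{\eta}_{t}\bm{\eta}_{t}^{\prime}\|_{2})^{2}]<\infty$ (this is precisely where the hypothesis $E\|\bm{\eta}_{t}\bm{\eta}_{t}^{\prime}\|_{2}^{2}<\infty$ is used) combined with $E\|\Gamma\|_{2}^{2}\le 1+E\|\Gamma\|_{2}^{2+\epsilon}<\infty$ (from the $(2+\epsilon)$-th-moment bounds on the first partials in Lemmas \ref{lemma moments of derivatives of lnhtunderline} and \ref{lemma moments of derivatives of Rt}) gives the $L^{2}$ bound; for part (ii), a product of two first-order $\Gamma$'s is in $L^{1}$ by Cauchy--Schwarz and the same $L^{2}$-bounds, a single second-order $\Gamma$ is in $L^{1+\epsilon/2}$ by the corresponding bounds in Lemmas \ref{lemma moments of derivatives of Psitminus1} and \ref{lemma moments of derivatives of Rt}, and $E(1+\|\bm{\eta}_{t}\bm{\eta}_{t}^{\prime}\|_{2})<\infty$ suffices.

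The step I expect to be the main obstacle is part (iii). The extra differentiation produces summands that are genuinely cubic in first-order factors --- for instance $\tr\!\big(R_{t}^{-1}(\partial R_{t}/\partial\theta_{j})R_{t}^{-1}(\partial R_{t}/\partial\theta_{k})R_{t}^{-1}(\partial R_{t}/\partial\theta_{\ell})\big)$ and $\bm{\varepsilon}_{t}^{\prime}R_{t}^{-1}\Gamma_{j}R_{t}^{-1}\Gamma_{k}R_{t}^{-1}\Gamma_{\ell}R_{t}^{-1}\bm{\varepsilon}_{t}$ with $\theta_{j},\theta_{k},\theta_{\ell}$ all components of $\bm{\delta}$ --- whose $\bm{\eta}_{t}$-free parts are only dominated by $\|\Gamma_{j}\|_{2}\|\Gamma_{k}\|_{2}\|\Gamma_{\ell}\|_{2}$, and each $\Gamma$ carries only a $(2+\epsilon)$-th moment, so a crude Hölder bound puts the product merely in $L^{(2+\epsilon)/3}$. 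Bringing this down to $L^{1}$ will require exploiting the identity $R_{t}^{-1}\Gamma_{\ell}R_{t}^{-1}=\tfrac12\big(R_{t}^{-1}\Diag\{\partial\ln\bm{h}_{t}/\partial\delta_{\ell}\}+\Diag\{\partial\ln\bm{h}_{t}/\partial\delta_{\ell}\}R_{t}^{-1}\big)-\partial R_{t}^{-1}/\partial\theta_{\ell}$ together with the cancellations in $\partial^{2}\Psi_{t-1}$ and $\partial^{3}\Psi_{t-1}$ induced by the correlation normalization (the same cancellations that keep those derivatives at the moment orders claimed in Lemma \ref{lemma moments of derivatives of Psitminus1}), or else a careful pairing of each cubic-prone factor with the lower-order factors that accompany it so that the orders in Lemmas \ref{lemma moments of derivatives of lnhtunderline}--\ref{lemma moments of derivatives of Rt} recombine back to $L^{1}$; this is where the moment hypothesis on $\ln\underline{\mathbf{y}}_{t}$ is used most tightly and where the bookkeeping, rather than the underlying idea, is heaviest.
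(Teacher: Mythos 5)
Your treatment of parts (i) and (ii) follows essentially the same route as the paper: rewrite each derivative of $\ell_{t}$ as a trace, use $H_{t}=D_{t}R_{t}D_{t}$ and $D_{t}^{-1}\partial D_{t}/\partial\delta_{\ell}=\tfrac12\Diag\{\partial\ln\bm{h}_{t}/\partial\delta_{\ell}\}$ to cancel the $D_{t}$'s up to the a.s.\ bounded factor $D_{t}^{-1}(\bm{\delta})D_{t}(\bm{\delta}_{0})$, bound by $\tr(AB)\le\|A\|_{F}\|B\|_{F}$ and submultiplicativity, exploit $\mathcal{F}_{t-1}$-measurability of $H_{t}$ and its derivatives to factor out $E\|\bm{\eta}_{t}\bm{\eta}_{t}^{\prime}\|_{2}^{2}$, and invoke the $(2+\epsilon)$- and $(1+\epsilon/2)$-moment bounds of Lemmas \ref{lemma moments of derivatives of lnhtunderline}, \ref{lemma moments of derivatives of Rt} and \ref{lemma moments of Htinverse product derivatives of Ht} together with Cauchy--Schwarz. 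The only cosmetic difference is that the paper works with $H_{t}^{1/2}(\bm{\theta}_{0})H_{t}^{-1}(\bm{\theta})(\partial H_{t}/\partial\theta_{\ell})H_{t}^{-1}(\bm{\theta})H_{t}^{1/2}(\bm{\theta}_{0})$ and keeps the factors $\|D_{t}^{-1}(\bm{\delta})D_{t}(\bm{\delta}_{0})\|_{F}$, $\|R_{t}^{\pm1}\|_{F}$ explicit rather than normalizing to $\bm{\varepsilon}_{t}$; the substance is identical.

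For part (iii) your diagnosis is accurate, and it is worth saying plainly that the paper does not resolve it either: its entire proof of (iii) is the sentence that it ``can be established with similar arguments.'' The obstruction you identify is real. Differentiating \eqref{eq second derivative of lt} once more produces summands that are trilinear in the first-order factors $H_{t}^{-1}\partial H_{t}/\partial\theta$, e.g.\ $\tr\bigl[(I_{m}-2H_{t}^{-1}\mathbf{y}_{t}\mathbf{y}_{t}^{\prime})H_{t}^{-1}(\partial_{j}H_{t})H_{t}^{-1}(\partial_{k}H_{t})H_{t}^{-1}(\partial_{\ell}H_{t})\bigr]$, and when $\theta_{j},\theta_{k},\theta_{\ell}$ are all components of $\bm{\kappa}$ each factor reduces to a quantity that is linear in weighted sums of $\ln\underline{\mathbf{y}}_{t-i}$, hence lies only in $L^{2+\epsilon}$; the product is a cubic form in $\ln\underline{\mathbf{y}}$ whose integrability would require roughly $E\|\ln\underline{\mathbf{y}}_{t}\|_{2}^{3}<\infty$, which is not assumed when $\epsilon<1$. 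Your two suggested remedies --- the identity $R_{t}^{-1}\Gamma_{\ell}R_{t}^{-1}=\tfrac12(R_{t}^{-1}\Diag\{\cdot\}+\Diag\{\cdot\}R_{t}^{-1})-\partial R_{t}^{-1}/\partial\theta_{\ell}$ and cancellations in the $\Psi_{t-1}$ derivatives --- do not change the moment order of the purely diagonal trilinear terms, so as written part (iii) remains unproved in your proposal (and, as far as I can see, in the paper). The honest fixes are either to strengthen the hypothesis to $E\|\ln\underline{\mathbf{y}}_{t}\|_{2}^{3(1+\epsilon)}<\infty$ (which is what the downstream Taylor-remainder argument in Theorem \ref{thm Asymptotic normality} actually needs, since it requires an ergodic average of $\sup_{V(\bm{\theta}_{0})}|\partial^{3}\ell_{t}|$ to be $O_{p}(1)$, i.e.\ an $L^{1}$ bound uniform on a neighborhood), or to replace the third-derivative bound by a Lipschitz/uniform-continuity argument for the second derivative that avoids three powers of the first-order factors. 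You should flag one of these explicitly rather than leave the step as bookkeeping.
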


		\begin{lemma} \label{lemma positivity of Sigma*}
			Suppose that the condition in Assumption \ref{assum parameters}($\romannumeral3$) holds. 
			Then $\Sigma_{*} = E({\partial^{2}\ell_{t}(\bm{\theta}_{0})}/{\partial\bm{\theta} \partial\bm{\theta}^{\prime}})$ is positive definite. 
		\end{lemma}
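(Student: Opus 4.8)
The plan is to reduce $\Sigma_{*}$ to a Fisher-information form using the martingale structure at the true value, and then to reduce positive definiteness to a linear-independence statement that is essentially already available from the identification analysis. Starting from the second-derivative formula for $\ell_{t}(\bm{\theta})$ in Section~\ref{section Derivatives}, I would evaluate at $\bm{\theta}_{0}$ and condition on $\mathcal{F}_{t-1}$. Since $\mathbf{y}_{t} = H_{t}^{1/2}(\bm{\theta}_{0}) \bm{\eta}_{t}$ with $E(\bm{\eta}_{t}\bm{\eta}_{t}^{\prime}\mid\mathcal{F}_{t-1}) = I_{m}$, we have $E(H_{t}^{-1}(\bm{\theta}_{0}) \mathbf{y}_{t}\mathbf{y}_{t}^{\prime}\mid\mathcal{F}_{t-1}) = I_{m}$, and all the matrices $\partial H_{t}(\bm{\theta}_{0})/\partial\theta_{k}$ and $\partial^{2} H_{t}(\bm{\theta}_{0})/\partial\theta_{k}\partial\theta_{\ell}$ are $\mathcal{F}_{t-1}$-measurable. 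Hence $E[(I_{m} - H_{t}^{-1}\mathbf{y}_{t}\mathbf{y}_{t}^{\prime})\mid\mathcal{F}_{t-1}] = 0$ kills the $\partial^{2} H_{t}$ term while $E[(I_{m} - 2 H_{t}^{-1}\mathbf{y}_{t}\mathbf{y}_{t}^{\prime})\mid\mathcal{F}_{t-1}] = -I_{m}$, so, writing $\partial_{k} = \partial/\partial\theta_{k}$ and suppressing $\bm{\theta}_{0}$,
$$ E\!\left(\frac{\partial^{2}\ell_{t}(\bm{\theta}_{0})}{\partial\theta_{k}\partial\theta_{\ell}}\,\Big|\,\mathcal{F}_{t-1}\right) = \frac{1}{2}\,\tr\!\left(H_{t}^{-1}(\partial_{k} H_{t}) H_{t}^{-1}(\partial_{\ell} H_{t})\right), $$
and taking expectations (finite by Lemma~\ref{lemma moments of derivatives of lt}) gives $(\Sigma_{*})_{k\ell} = \tfrac{1}{2}\, E[\tr(H_{t}^{-1}(\partial_{k} H_{t}) H_{t}^{-1}(\partial_{\ell} H_{t}))]$.

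Next, for any $\bm{c} = (c_{1},\dots,c_{d})^{\prime}\in\mathbb{R}^{d}$ put $V_{t} = \sum_{k} c_{k}\,\partial H_{t}(\bm{\theta}_{0})/\partial\theta_{k}$, which is symmetric, and let $H_{t}^{-1/2}$ be a symmetric square root of $H_{t}^{-1}(\bm{\theta}_{0})$. The cyclic property of the trace then yields
$$ \bm{c}^{\prime}\Sigma_{*}\bm{c} = \tfrac{1}{2}\, E\big[\tr\big((H_{t}^{-1/2} V_{t} H_{t}^{-1/2})^{2}\big)\big] = \tfrac{1}{2}\, E\big\|H_{t}^{-1/2} V_{t} H_{t}^{-1/2}\big\|_{F}^{2} \ge 0, $$
with equality if and only if $V_{t} = 0$ almost surely. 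So it remains to show that $\sum_{k} c_{k}\,\partial H_{t}(\bm{\theta}_{0})/\partial\theta_{k} = 0$ a.s. forces $\bm{c} = \bm{0}$, i.e.\ that the derivative matrices are linearly independent as random matrices.

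For this I would split $\bm{c}$ along $\bm{\theta} = (\bm{\delta}^{\prime},\bm{\beta}^{\prime})^{\prime}$ and use $H_{t} = D_{t} R_{t} D_{t}$ with $R_{t}$ having unit diagonal. Taking the main diagonal of the identity and noting that $\partial R_{t}/\partial\beta_{1}$, $\partial R_{t}/\partial\beta_{2}$ and $\partial R_{t}/\partial\underline{R}_{ij}$ all have zero diagonal (immediate from the recursions in \eqref{eq first derivative of Rt}), while $\diag(\partial H_{t}/\partial\delta_{\ell}) = \partial\bm{h}_{t}/\partial\delta_{\ell}$, the $\bm{\beta}$-block disappears on the diagonal and we obtain $\sum_{\ell} c_{\ell}\,\partial\ln\bm{h}_{t}(\bm{\delta}_{0})/\partial\delta_{\ell} = 0$ a.s. Inserting the explicit derivatives \eqref{eq first derivative of lnhtunderline} and invoking the linear-independence results of Lemma~\ref{lemma for identification} (exactly as in the proof of Proposition~\ref{propo Identification}, using the distinctness of the $\lambda_{k0}$'s and $\gamma_{k0}$'s and $G_{\ell,k0}\neq 0_{m}$) forces the $\bm{\delta}$-part of $\bm{c}$ to vanish. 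Substituting back leaves $D_{t}(\bm{\delta}_{0})\big(\sum_{\ell} c_{\ell}\,\partial R_{t}(\bm{\theta}_{0})/\partial\beta_{\ell}\big) D_{t}(\bm{\delta}_{0}) = 0$ a.s.; since $D_{t}(\bm{\delta}_{0})$ is invertible, $\sum_{\ell} c_{\ell}\,\partial R_{t}(\bm{\theta}_{0})/\partial\beta_{\ell} = 0$ a.s. Using the recursion $\partial R_{t}/\partial\beta_{1} - \beta_{20}\,\partial R_{t-1}/\partial\beta_{1} = \Psi_{t-1}(\bm{\delta}_{0}) - \underline{R}_{0}$ and its analogues in \eqref{eq first derivative of Rt}, the linear independence of the weight sequences $\{\beta_{20}^{h}\}$ and $\{h\beta_{20}^{h-1}\}$, the fact that $\partial R_{t}/\partial\underline{R}_{ij}$ is a nonzero constant multiple of the basis matrix $\dot{\underline{R}}^{(i,j)}$, and the genuine lag-by-lag fluctuation of $\{\Psi_{t-h-1}(\bm{\delta}_{0})\}$ (from $\bm{\eta}_{t}$ being nonzero a.s.\ together with $\underline{R}_{0}$ positive definite, Assumption~\ref{assum parameters}(iii)), one concludes the $\bm{\beta}$-part of $\bm{c}$ also vanishes, so $\bm{c} = \bm{0}$ and $\Sigma_{*}$ is positive definite.

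The main obstacle is this last step, confirming that the random matrices $\{\partial H_{t}(\bm{\theta}_{0})/\partial\theta_{k}\}$ span a $d$-dimensional space. The diagonal/off-diagonal split cleanly decouples the $\bm{\delta}$ and $\bm{\beta}$ directions; after that the $\bm{\delta}$-block reduces to the familiar identification computation (Lemma~\ref{lemma for identification}), and the $\bm{\beta}$-block reduces to showing that $\Psi_{t-1}(\bm{\delta}_{0})$ genuinely varies at each lag so that the $\beta_{1},\beta_{2}$ directions cannot collapse into the constant direction spanned by $\underline{R}_{0}$ — which is precisely where Assumption~\ref{assum parameters}(iii) and the nonvanishing of the innovations enter.
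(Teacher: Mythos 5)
Your proposal is correct and follows essentially the same route as the paper's proof: reduce $\Sigma_{*}$ to the Fisher-information form $\tfrac{1}{2}E[\tr(H_{t}^{-1}(\partial_{k}H_{t})H_{t}^{-1}(\partial_{\ell}H_{t}))]$ by conditioning on $\mathcal{F}_{t-1}$, deduce from $\bm{c}^{\prime}\Sigma_{*}\bm{c}=0$ that $\sum_{k}c_{k}\,\partial H_{t}(\bm{\theta}_{0})/\partial\theta_{k}=0$ a.s., and then kill the $\bm{\delta}$-block via the diagonal of $H_{t}$ together with Lemma~\ref{lemma for identification} and the $\bm{\beta}$-block via the linear independence of the $\partial R_{t}/\partial\beta_{\ell}$ from \eqref{eq first derivative of Rt}. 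The only cosmetic difference is that the paper phrases the quadratic form through $\ovec$ and the Kronecker matrix $\mathbb{H}=H_{t}^{-1/2}\otimes H_{t}^{-1/2}$ rather than your Frobenius-norm identity, and your treatment of the $\bm{\beta}$-block is, if anything, more explicit than the paper's.
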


		\begin{proof}[\textbf{Proof of Theorem \ref{thm Asymptotic normality}}]
		Recall that $\widehat{\bm{\theta}}_{\text{G}} = \argmin_{\bm{\theta} \in \Theta} \widetilde{\mathcal{L}}_{n}(\bm{\theta})$, where 
		\begin{align*}
			\widetilde{\mathcal{L}}_{n}(\bm{\theta}) = \frac{1}{n} \sum_{t=1}^{n} \widetilde{\ell}_{t}(\bm{\theta}) 
			\;\; \text{with} \;\; 
			\widetilde{\ell}_{t}(\bm{\theta}) = \frac{1}{2} \mathbf{y}_{t}^{\prime} \widetilde{H}_{t}^{-1}(\bm{\theta}) \mathbf{y}_{t} + \frac{1}{2} \ln |\widetilde{H}_{t}(\bm{\theta})|. 
		\end{align*}
		By Taylor's expansion, we have 
		\begin{align*}
			0 = \frac{1}{\sqrt{n}} \sum_{t=1}^{n} \frac{\partial\widetilde{\ell}_{t}(\widehat{\bm{\theta}}_{\text{G}})}{\partial\bm{\theta}} 
			= \frac{1}{\sqrt{n}} \sum_{t=1}^{n} \frac{\partial\widetilde{\ell}_{t}(\bm{\theta}_{0})}{\partial\bm{\theta}} + \frac{1}{n} \sum_{t=1}^{n} \frac{\partial^{2}\widetilde{\ell}_{t}(\bar{\bm{\theta}}_{n})}{\partial\bm{\theta} \partial\bm{\theta}^{\prime}} \sqrt{n} \left(\widehat{\bm{\theta}}_{\text{G}} - \bm{\theta}_{0}\right), 
		\end{align*}
		where the elements of $\bar{\bm{\theta}}_{n} = (\bar{\bm{\delta}}_{n}^{\prime}, \bar{\bm{\beta}}_{n}^{\prime})^{\prime}$ lie in the segment joining the corresponding elements of $\widehat{\bm{\theta}}_{\text{G}}$ and $\bm{\theta}_{0}$. 
		It then follows that 
		\begin{align*}
			\sqrt{n} \left(\widehat{\bm{\theta}}_{\text{G}} - \bm{\theta}_{0}\right) = - \left(\frac{1}{n} \sum_{t=1}^{n} \frac{\partial^{2}\widetilde{\ell}_{t}(\bar{\bm{\theta}}_{n})}{\partial\bm{\theta} \partial\bm{\theta}^{\prime}}\right)^{-1} \frac{1}{\sqrt{n}} \sum_{t=1}^{n} \frac{\partial\widetilde{\ell}_{t}(\bm{\theta}_{0})}{\partial\bm{\theta}}. 
		\end{align*}
		Hence it suffices to show that 
		\begin{enumerate}[($\romannumeral1$)]
			\item $\frac{1}{\sqrt{n}} \sum_{t=1}^{n} \frac{\partial\widetilde{\ell}_{t}(\bm{\theta}_{0})}{\partial\bm{\theta}} \to_{d} N(\bm{0}, \Sigma)$ with $\Sigma = E\left(\frac{\partial\ell_{t}(\bm{\theta}_{0})}{\partial\bm{\theta}} \frac{\partial\ell_{t}(\bm{\theta}_{0})}{\partial\bm{\theta}^{\prime}}\right)$; 
			\item $\frac{1}{n} \sum_{t=1}^{n} \frac{\partial^{2}\widetilde{\ell}_{t}(\bar{\bm{\theta}}_{n})}{\partial\bm{\theta} \partial\bm{\theta}^{\prime}} \to_{p} \Sigma_{*}$ with $\Sigma_{*} = E\left(\frac{\partial^{2}\ell_{t}(\bm{\theta}_{0})}{\partial\bm{\theta} \partial\bm{\theta}^{\prime}}\right)$; 
			\item $\Sigma_{*}$ is positive definite. 
		\end{enumerate}

		We first show ($\romannumeral1$). 
		Note that under Assumptions \ref{assum stationarity}--\ref{assum parameters}, 
		$\{{\partial\ell_{t}(\bm{\theta}_{0})}/{\partial\bm{\theta}}\}$ is ergodic and strictly stationary, 
		${\partial\ell_{t}(\bm{\theta}_{0})}/{\partial\bm{\theta}}$ is measurable with respect to $\mathcal{F}_{t}$, 
		$E({\partial\ell_{t}(\bm{\theta}_{0})}/{\partial\bm{\theta}} \mid \mathcal{F}_{t-1}) = 0$ by \eqref{eq first derivative of lt}, 
		and $E|{\partial\ell_{t}(\bm{\theta}_{0})}/{\partial\theta_{\ell}}|^{2} < \infty$ for all $1 \leq \ell \leq d$ by Lemma \ref{lemma moments of derivatives of lt}($\romannumeral1$), 
		where $d$ is the dimension of $\bm{\theta}$. 
		Then for any $\bm{c} \in \mathbb{R}^{d}$, the sequence $\{\bm{c}^{\prime} {\partial\ell_{t}(\bm{\theta}_{0})}/{\partial\bm{\theta}}, \mathcal{F}_{t}\}$ is an ergodic, strictly stationary, and square integral martingale difference. 
		Thus 
		by the central limit theorem of \citet{billingsley1961lindeberg}, as $n \to \infty$, it holds that 
		\begin{align*}
			\frac{1}{\sqrt{n}} \sum_{t=1}^{n} \frac{\partial\ell_{t}(\bm{\theta}_{0})}{\partial\bm{\theta}} \to_{d} N(\bm{0}, \Sigma). 
		\end{align*}
		This together with 
		\begin{align*}
			\left\|\frac{1}{\sqrt{n}} \sum_{t=1}^{n} \left(\frac{\partial\ell_{t}(\bm{\theta}_{0})}{\partial\bm{\theta}} - \frac{\partial\widetilde{\ell}_{t}(\bm{\theta}_{0})}{\partial\bm{\theta}}\right)\right\|_{2} \to_{p} 0 
		\end{align*}
		as $n \to \infty$ by Lemma \ref{lemma (ii) for consistency}($\romannumeral2$), implies that ($\romannumeral1$) holds. 

		Next we verify ($\romannumeral2$). 
		By Taylor's expansion, it holds that  
		\begin{align} \label{eq Taylor expansion of third derivative of lt}
			\frac{1}{n} \sum_{t=1}^{n} \frac{\partial^{2}\ell_{t}(\bar{\bm{\theta}}_{n})}{\partial\theta_{k} \partial\theta_{\ell}} 
			= \frac{1}{n} \sum_{t=1}^{n} \frac{\partial^{2}\ell_{t}(\bm{\theta}_{0})}{\partial\theta_{k} \partial\theta_{\ell}} + \frac{1}{n} \sum_{t=1}^{n} \frac{\partial}{\partial \bm{\theta}^{\prime}} \frac{\partial^{2}\ell_{t}(\bar{\bar{\bm{\theta}}}_{n})}{\partial\theta_{k} \partial\theta_{\ell}} \left(\bar{\bm{\theta}}_{n} - \bm{\theta}_{0}\right), 
		\end{align}
		where the elements of $\bar{\bar{\bm{\theta}}}_{n}$ lie in the segment joining the corresponding elements of $\bar{\bm{\theta}}_{n}$ and $\bm{\theta}_{0}$. 
		Then we show that the following results hold as $n \to \infty$: 
		\begin{enumerate}[(a)]
			\item $\frac{1}{n} \sum_{t=1}^{n} \frac{\partial^{2}\ell_{t}(\bm{\theta}_{0})}{\partial\bm{\theta} \partial\bm{\theta}^{\prime}} \to_{p} \Sigma_{*}$; 
			\item $\|\bar{\bm{\theta}}_{n} - \bm{\theta}_{0}\|_{2} \to 0$ a.s.;
			\item there exists a compact neighborhood $V(\bm{\theta}_{0})$ of $\bm{\theta}_{0}$ such that $\sup_{\bm{\theta} \in V(\bm{\theta}_{0})} \left\|\frac{\partial}{\partial \bm{\theta}^{\prime}} \frac{\partial^{2}\ell_{t}(\bm{\theta})}{\partial\theta_{k} \partial\theta_{\ell}}\right\|_{2}$ is finite a.s.; 
			\item $\sup_{\bm{\theta} \in \Theta} \left\|n^{-1} \sum_{t=1}^{n} \left(\frac{\partial^{2}\ell_{t}(\bm{\theta})}{\partial\bm{\theta} \partial\bm{\theta}^{\prime}} - \frac{\partial^{2}\widetilde{\ell}_{t}(\bm{\theta})}{\partial\bm{\theta} \partial\bm{\theta}^{\prime}}\right)\right\|_{2} \to_{p} 0$. 
		\end{enumerate}
		Specifically, (a) holds by the ergodic theorem and $E|{\partial^{2}\ell_{t}(\bm{\theta}_{0})}/{\partial\theta_{k} \partial\theta_{\ell}}| < \infty$ from Lemma \ref{lemma moments of derivatives of lt}($\romannumeral2$), 
		(b) follows from Theorem \ref{thm Consistency}, 
		(c) holds by Lemma \ref{lemma moments of derivatives of lt}($\romannumeral3$), 
		and (d) is obtained by Lemma \ref{lemma (ii) for consistency}($\romannumeral3$). 
		Thus by \eqref{eq Taylor expansion of third derivative of lt} and (a)--(d), we can obtain that ($\romannumeral2$) holds. 

		Lastly, ($\romannumeral3$) holds by Lemma \ref{lemma positivity of Sigma*}. 
		As a result, the proof of the asymptotic normality of $\widehat{\bm{\theta}}_{\text{G}}$ is accomplished by ($\romannumeral1$)--($\romannumeral3$). 
		\end{proof}

	\subsection{Proof of Theorem \ref{thm Asymptotic normality under low rank}}
		\begin{proof}[\textbf{Proof of Theorem \ref{thm Asymptotic normality under low rank}}]
		Theorem \ref{thm Asymptotic normality under low rank} directly follows from Theorem \ref{thm Asymptotic normality} and Proposition 4.1 in \cite{shapiro1986asymptotic_supp} for overparameterized models. 
		\end{proof}

	\subsection{Proof of Theorem \ref{thm BIC}}
		\begin{proof}[\textbf{Proof of Theorem \ref{thm BIC}}]
		We only establish the selection consistency of the proposed BIC in \eqref{BIC} using the QMLE $\widehat{\bm{\theta}}_{\text{G}}$ in \eqref{est general QMLE}, and that using the QMLE $\widehat{\bm{\theta}}_{\text{LR}}$ in \eqref{est lowrank QMLE} can be proved similarly. 
		For $(r,s)$ and $(r_{0},s_{0}) \in \Pi = \{(r,s): 1 \leq r+2s \leq o_{\text{max}}\}$, 
		denote $\Theta$ (or $\Theta_{0}$) as the parameter space of $\bm{\theta}$ with the order set to $(r,s)$ (or $(r_{0},s_{0})$). 
		Let $\widehat{\bm{\theta}}_{\text{G}}^{(r,s)}$ (or $\widehat{\bm{\theta}}_{\text{G}}^{(r_{0},s_{0})}$) be the QMLE in \eqref{est general QMLE} with the order set to $(r,s)$ (or $(r_{0},s_{0})$). 
		To prove Theorem \ref{thm BIC}, it suffices to show that the following result holds for any $(r,s) \neq (r_{0},s_{0})$: 
		\begin{align}\label{eq BIC consistency}
			\lim_{n \to \infty} P\left(\text{BIC}(r,s) - \text{BIC}(r_{0},s_{0}) > 0\right) = 1. 
		\end{align}
		By \eqref{BIC}, it holds that 
		\begin{align} \label{BIC(r,s) - BIC(r0,s0)}
			\text{BIC}(r,s) - \text{BIC}(r_{0},s_{0}) 
			= \left[\widetilde{L}_{n}\left(\widehat{\bm{\theta}}_{\text{G}}^{(r,s)}\right) - \widetilde{L}_{n}\left(\widehat{\bm{\theta}}_{\text{G}}^{(r_{0},s_{0})}\right)\right] + (d - d_{0}) \ln(n), 
		\end{align}
		where $\widetilde{L}_{n}(\bm{\theta}) = \sum_{t=1}^{n} \widetilde{\ell}_{t}(\bm{\theta})$ with $\widetilde{\ell}_{t}(\bm{\theta}) = \frac{1}{2} \mathbf{y}_{t}^{\prime} \widetilde{H}_{t}^{-1}(\bm{\theta}) \mathbf{y}_{t} + \frac{1}{2} \ln |\widetilde{H}_{t}(\bm{\theta})|$, $d = m + (r + 2s) (1 + m^2) + m(m-1)/2 + 2$ and $d_{0} = m + (r_{0} + 2s_{0}) (1 + m^2) + m(m-1)/2 + 2$. 
		By Lemma \ref{lemma (iii) for consistency}, we have that $\bm{\theta}_{0} = \argmin_{\bm{\theta} \in \Theta_{0}} E(\ell_{t}(\bm{\theta}))$ with $\ell_{t}(\bm{\theta}) = \frac{1}{2} \mathbf{y}_{t}^{\prime} H_{t}^{-1}(\bm{\theta}) \mathbf{y}_{t} + \frac{1}{2} \ln |H_{t}(\bm{\theta})|$. 
		In addition, let $\mathring{\bm{\theta}} = \argmin_{\bm{\theta} \in \Theta}E(\ell_{t}(\bm{\theta}))$. 
		To verify \eqref{eq BIC consistency}, we next consider two cases. 

		Case I (overfitting): $r \geq r_{0}$, $s \geq s_{0}$, and at least one inequality holds. 
		Note that $d - d_{0} > 0$ in Case I, which implies that $(d - d_{0}) \ln(n) \to \infty$ as $n \to \infty$. 
		Thus by \eqref{BIC(r,s) - BIC(r0,s0)}, to establish that \eqref{eq BIC consistency} holds for Case I, it suffices to show that  
		\begin{align*}
			\widetilde{L}_{n}\left(\widehat{\bm{\theta}}_{\text{G}}^{(r,s)}\right) - \widetilde{L}_{n}\left(\widehat{\bm{\theta}}_{\text{G}}^{(r_{0},s_{0})}\right) = O_{p}(1) 
			\;\; \text{as} \;\; n \to \infty. 
		\end{align*}
		Rewrite $\widetilde{L}_{n}(\widehat{\bm{\theta}}_{\text{G}}^{(r,s)}) - \widetilde{L}_{n}(\widehat{\bm{\theta}}_{\text{G}}^{(r_{0},s_{0})})$ as follows, 
		\begin{align} \label{Ltilden(thetahat,r,s) - Ltilden(thetahat,r0,s0)}
			\widetilde{L}_{n}\left(\widehat{\bm{\theta}}_{\text{G}}^{(r,s)}\right) - \widetilde{L}_{n}\left(\widehat{\bm{\theta}}_{\text{G}}^{(r_{0},s_{0})}\right) 
			= &\left[\widetilde{L}_{n}\left(\widehat{\bm{\theta}}_{\text{G}}^{(r,s)}\right) - L_{n}(\mathring{\bm{\theta}})\right] 
			+ \left[L_{n}(\mathring{\bm{\theta}}) - L_{n}(\bm{\theta}_{0})\right] \notag\\
			&+ \left[L_{n}(\bm{\theta}_{0}) - \widetilde{L}_{n}\left(\widehat{\bm{\theta}}_{\text{G}}^{(r_{0},s_{0})}\right)\right],  
		\end{align}
		where $L_{n}(\bm{\theta}) = \sum_{t=1}^{n} \ell_{t}(\bm{\theta})$. 
		Notice that the model with order $(r,s)$ in Case I corresponds to a bigger model. 
		Then it holds that $L_{n}(\bm{\theta}_{0}) = L_{n}(\mathring{\bm{\theta}})$. 
		Hence by \eqref{Ltilden(thetahat,r,s) - Ltilden(thetahat,r0,s0)}, we are left to show the following results hold as $n \to \infty$: 
		(a) $\widetilde{L}_{n}(\widehat{\bm{\theta}}_{\text{G}}^{(r_{0},s_{0})}) - L_{n}(\bm{\theta}_{0}) = O_{p}(1)$; and 
		(b) $\widetilde{L}_{n}(\widehat{\bm{\theta}}_{\text{G}}^{(r,s)}) - L_{n}(\mathring{\bm{\theta}}) = O_{p}(1)$. 

		For (a), by (a1)--(a2) in the proof of Lemma \ref{lemma (ii) for consistency}, we can obtain that $\sup_{\bm{\theta} \in \Theta_{0}} |\widetilde{L}_{n}(\bm{\theta}) - L_{n}(\bm{\theta})| = O_{p}(1)$. 
		This together with $\bm{\theta}_{0} \in \Theta_{0}$ under Assumption \ref{assum parameters}, implies that $\widetilde{L}_{n}(\bm{\theta}_{0}) - L_{n}(\bm{\theta}_{0}) = O_{p}(1)$. 
		Thus we only need to show that $\widetilde{L}_{n}(\widehat{\bm{\theta}}_{\text{G}}^{(r_{0},s_{0})}) - \widetilde{L}_{n}(\bm{\theta}_{0}) = O_{p}(1)$. 
		By the mean value theorem, we have 
		\begin{align}
			&\widetilde{L}_{n}\left(\widehat{\bm{\theta}}_{\text{G}}^{(r_{0},s_{0})}\right) - \widetilde{L}_{n}\left(\bm{\theta}_{0}\right) 
			\leq \frac{1}{\sqrt{n}} \left\|\frac{\partial \widetilde{L}_{n}\left(\check{\bm{\theta}}_{\text{G}}^{(r_{0},s_{0})}\right)}{\partial \bm{\theta}}\right\|_{2} \sqrt{n} \left\|\widehat{\bm{\theta}}_{\text{G}}^{(r_{0},s_{0})} - \bm{\theta}_{0}\right\|_{2}, \;\; \text{and} \label{eq Ltilde(thetahat) - Ltilde(theta0)}\\ 
			&\frac{1}{\sqrt{n}} \left\|\frac{\partial \widetilde{L}_{n}\left(\check{\bm{\theta}}_{\text{G}}^{(r_{0},s_{0})}\right)}{\partial \bm{\theta}} - \frac{\partial \widetilde{L}_{n}\left(\bm{\theta}_{0}\right)}{\partial \bm{\theta}}\right\|_{2} 
			\leq \frac{1}{n} \left\|\frac{\partial^{2} \widetilde{L}_{n}\left(\bar{\bm{\theta}}_{\text{G}}^{(r_{0},s_{0})}\right)}{\partial \bm{\theta} \partial \bm{\theta}^{\prime}}\right\|_{2} \sqrt{n} \left\|\check{\bm{\theta}}_{\text{G}}^{(r_{0},s_{0})} - \bm{\theta}_{0}\right\|_{2}, \label{eq dLtilde(thetahat) - dLtilde(theta0)}
		\end{align}
		where $\check{\bm{\theta}}_{\text{G}}^{(r_{0},s_{0})}$ lies between $\widehat{\bm{\theta}}_{\text{G}}^{(r_{0},s_{0})}$ and $\bm{\theta}_{0}$, and $\bar{\bm{\theta}}_{\text{G}}^{(r_{0},s_{0})}$ lies between $\check{\bm{\theta}}_{\text{G}}^{(r_{0},s_{0})}$ and $\bm{\theta}_{0}$. 
		Note that $\sqrt{n}\|\widehat{\bm{\theta}}_{\text{G}}^{(r_{0},s_{0})} - \bm{\theta}_{0}\|_{2} = O_{p}(1)$ by Theorem \ref{thm Asymptotic normality}, 
		$\|{\partial \widetilde{L}_{n}(\bm{\theta}_{0})} / {\partial \bm{\theta}}\|_{2} / \sqrt{n} = O_{p}(1)$ by ($\romannumeral1$) in the proof of Theorem \ref{thm Asymptotic normality}, and 
		$\|{\partial^{2} \widetilde{L}_{n}(\bar{\bm{\theta}}_{\text{G}}^{(r_{0},s_{0})})} / {\partial \bm{\theta} \partial \bm{\theta}^{\prime}}\|_{2} / n = O_{p}(1)$ by ($\romannumeral2$) in the proof of Theorem \ref{thm Asymptotic normality}. 
		And by \eqref{eq dLtilde(thetahat) - dLtilde(theta0)}, it follows that $\|{\partial \widetilde{L}_{n}(\check{\bm{\theta}}_{\text{G}}^{(r_{0},s_{0})})} / {\partial \bm{\theta}}\|_{2} / \sqrt{n} = O_{p}(1)$. 
		Thus by \eqref{eq Ltilde(thetahat) - Ltilde(theta0)}, $\|{\partial \widetilde{L}_{n}(\check{\bm{\theta}}_{\text{G}}^{(r_{0},s_{0})})} / {\partial \bm{\theta}}\|_{2} / \sqrt{n} = O_{p}(1)$ and $\sqrt{n}\|\widehat{\bm{\theta}}_{\text{G}}^{(r_{0},s_{0})} - \bm{\theta}_{0}\|_{2} = O_{p}(1)$, 
		we can obtain that $\widetilde{L}_{n}(\widehat{\bm{\theta}}_{\text{G}}^{(r_{0},s_{0})}) - \widetilde{L}_{n}(\bm{\theta}_{0}) = O_{p}(1)$. 
		Hence (a) holds. 

		For (b), 
		recall that $\bm{\theta} = (\bm{\delta}^{\prime}, \bm{\beta}^{\prime})^{\prime}$ with $\bm{\delta} = (\underline{\bm{\omega}}^{\prime}, \bm{\kappa}^{\prime})^{\prime}$ and $\bm{\kappa} = (\bm{\lambda}^{\prime}, \bm{\gamma}^{\prime}, \bm{\varphi}^{\prime}, \bm{g}_{0}^{\prime}, \bm{g}_{1}^{\prime}, \bm{g}_{2}^{\prime})^{\prime}$. 
		Denote $\underline{\bm{\theta}} = (\underline{\bm{\omega}}^{\prime}, \underline{\bm{\kappa}}^{\prime}, \bm{\beta}^{\prime})^{\prime}$ with $\underline{\bm{\kappa}} = (\ovec(\Phi_{1}), \ovec(\Phi_{2}), \ldots)^{\prime}$, and define $\widehat{\underline{\bm{\theta}}}_{\text{G}}^{(r,s)}$ and $\mathring{\underline{\bm{\theta}}}$ correspondingly. 
		It is noteworthy that $\bm{\theta}$ is non-identifiable in Case I due to the non-identifiability of parameters $\{\lambda_{k}, \gamma_{j}, \varphi_{j} : r_{0} < k \leq r, s_{0} < j \leq s\}$, but $\underline{\bm{\theta}}$ is still identifiable because $\Phi_{i}$ is identifiable. 
		Then we still have the consistency and asymptotic normality of the QMLEs for $\{\lambda_{k}, \gamma_{j}, \varphi_{j} : 1 \leq k \leq r_{0}, 1 \leq j \leq s_{0}\}$ and the other parameters, although $\{\lambda_{k}, \gamma_{j}, \varphi_{j} : r_{0} < k \leq r, s_{0} < j \leq s\}$ are non-identifiable. 
		It follows that 
		$\sqrt{n} \|\widehat{\underline{\bm{\omega}}}^{(r,s)} - \mathring{\underline{\bm{\omega}}}\|_{2}$, 
		$\sqrt{n} \|\widehat{\bm{\beta}}^{(r,s)} - \mathring{\bm{\beta}}\|_{2}$, 
		$\sqrt{n} \|\widehat{G}^{(r,s)}_{0,k} - \mathring{G}_{0,k}\|_{2}$ for $1 \leq k \leq r$, 
		$\sqrt{n} \|\widehat{G}^{(r,s)}_{1,k} - \mathring{G}_{1,k}\|_{2}$ and $\sqrt{n} \|\widehat{G}^{(r,s)}_{2,k} - \mathring{G}_{2,k}\|_{2}$ for $1 \leq k \leq s$, 
		$\sqrt{n} \|\widehat{\lambda}^{(r,s)}_{k} - \mathring{\lambda}_{k}\|_{2}$ for $1 \leq k \leq r_{0}$, and 
		$\sqrt{n} \|\widehat{\gamma}^{(r,s)}_{k} - \mathring{\gamma}_{k}\|_{2}$ and $\sqrt{n} \|\widehat{\varphi}^{(r,s)}_{k} - \mathring{\varphi}_{k}\|_{2}$ for $1 \leq k \leq s_{0}$ 
		are all $O_{p}(1)$, where the notations are defined corresponding to $\widehat{\bm{\theta}}_{\text{G}}^{(r,s)}$ and $\mathring{\bm{\theta}}$. 
		Note that $\mathring{G}_{0,k} = 0$ for $r_{0} < k \leq r$, $\mathring{G}_{1,k} = 0$ and $\mathring{G}_{2,k} = 0$ for $s_{0} < k \leq s$, and $0 < |\lambda_{1}|, \ldots, |\lambda_{r}|, \gamma_{1}, \ldots, \gamma_{s} \leq \varrho$ under Assumption \ref{assum parameters}($\romannumeral2$). 
		These together with \eqref{model Phii in Dt} and $\sum_{i=1}^{\infty} \sup_{\bm{\theta} \in \Theta} \|\Phi_{i}(\bm{\kappa})\|_{2} = O(1)$ by \eqref{eq sum of Phi} under Assumptions \ref{assum parameters}($\romannumeral1$)--($\romannumeral2$), imply that $\sqrt{n} \|\ovec(\Phi_{i}(\widehat{\bm{\kappa}}^{(r,s)})) - \ovec(\Phi_{i}(\mathring{\bm{\kappa}}))\|_{2} = O_{p}(1)$. 
		It then follows that $\sqrt{n} \|\widehat{\underline{\bm{\theta}}}_{\text{G}}^{(r,s)} - \mathring{\underline{\bm{\theta}}}\|_{2} = O_{p}(1)$. 
		Similar to the proof of $\|{\partial \widetilde{L}_{n}(\widehat{\bm{\theta}}_{\text{G}}^{(r_{0},s_{0})})} / {\partial \bm{\theta}}\|_{2} / \sqrt{n} = O_{p}(1)$ in (a), we can verify that $\|{\partial \widetilde{L}_{n}(\widehat{\underline{\bm{\theta}}}_{\text{G}}^{(r,s)})} / {\partial \underline{\bm{\theta}}}\|_{2} / \sqrt{n} = O_{p}(1)$. 
		With analogous arguments in the proof of (a), we can show that (b) holds. 
		As a result, \eqref{eq BIC consistency} holds for Case I. 

		Case II (misspecification): $r < r_{0}$ or $s < s_{0}$. 
		Recall that $\widetilde{\mathcal{L}}_{n}(\bm{\theta}) = \widetilde{L}_{n}(\bm{\theta}) / n$. 
		Under Assumption \ref{assum stationarity}, it holds that 
		\begin{align} \label{BIC(r,s) - BIC(r0,s0), Case II}
			&\text{BIC}(r,s) - \text{BIC}(r_{0},s_{0}) 
			= \left[\widetilde{L}_{n}\left(\widehat{\bm{\theta}}_{\text{G}}^{(r,s)}\right) - \widetilde{L}_{n}\left(\widehat{\bm{\theta}}_{\text{G}}^{(r_{0},s_{0})}\right)\right] + (d - d_{0}) \ln(n) \notag\\
			=& \left[\widetilde{L}_{n}\left(\widehat{\bm{\theta}}_{\text{G}}^{(r,s)}\right) - E \left(L_{n}(\mathring{\bm{\theta}})\right)\right] 
			+ \left[E \left(L_{n}(\mathring{\bm{\theta}})\right) - E \left(L_{n}(\bm{\theta}_{0})\right)\right] 
			+ \left[E \left(L_{n}(\bm{\theta}_{0})\right) - \widetilde{L}_{n}\left(\widehat{\bm{\theta}}_{\text{G}}^{(r_{0},s_{0})}\right)\right] \notag\\
			&+ (d - d_{0}) \ln(n) \notag\\
			=& n \left\{ 
			\left[\widetilde{\mathcal{L}}_{n}\left(\widehat{\bm{\theta}}_{\text{G}}^{(r,s)}\right) - E \left(\ell_{t}(\mathring{\bm{\theta}})\right)\right] 
			+ \left[E \left(\ell_{t}(\mathring{\bm{\theta}})\right) - E \left(\ell_{t}(\bm{\theta}_{0})\right)\right] 
			+ \left[E \left(\ell_{t}(\bm{\theta}_{0})\right) - \widetilde{\mathcal{L}}_{n}\left(\widehat{\bm{\theta}}_{\text{G}}^{(r_{0},s_{0})}\right)\right] 
			\right\} \notag\\
			&+ (d - d_{0}) \ln(n). 
		\end{align}
		Thus to establish \eqref{eq BIC consistency}, it suffices to show 
		(c) $E(\ell_{t}(\mathring{\bm{\theta}})) - E(\ell_{t}(\bm{\theta}_{0})) > \delta$ for some $\delta > 0$; 
		(d) $\widetilde{\mathcal{L}}_{n}(\widehat{\bm{\theta}}_{\text{G}}^{(r_{0},s_{0})}) - E (\ell_{t}(\bm{\theta}_{0})) = o_{p}(1)$ as $n \to \infty$; and 
		(e) $\widetilde{\mathcal{L}}_{n}(\widehat{\bm{\theta}}_{\text{G}}^{(r,s)}) - E (\ell_{t}(\mathring{\bm{\theta}})) = o_{p}(1)$ as $n \to \infty$. 

		For (c), 
		denote $r_{*} = \max\{r, r_{0}\}$ and $s_{*} = \max\{s, s_{0}\}$. 
		Let $\bm{\theta}_{0 *}$ (or $\mathring{\bm{\theta}}_{*}$) be a parameter vector with the order set to $(r_{*},s_{*})$, including $\bm{\theta}_{0}$ (or $\mathring{\bm{\theta}}$) as its subvector at the corresponding locations and zeroes at the remaining locations. 
		Moreover, denote $\Theta_{*}$ as the parameter space of $\bm{\theta}$ with the order set to $(r_{*},s_{*})$, including the points $\bm{\theta}_{0 *}$ and $\mathring{\bm{\theta}}_{*}$. 
		Since $E(\ell_{t}(\bm{\theta}))$ has a unique minimum at $\bm{\theta}_{0}$ on $\Theta_{0}$ by Lemma \ref{lemma (iii) for consistency}, we can obtain that $\bm{\theta}_{0 *}$ is the unique minimizer of $E(\ell_{t}(\bm{\theta}))$ on $\Theta_{*}$. 
		This together with $\bm{\theta}_{0 *}, \mathring{\bm{\theta}}_{*} \in \Theta_{*}$, $E(\ell_{t}(\bm{\theta}_{0 *})) = E(\ell_{t}(\bm{\theta}_{0}))$ and $E(\ell_{t}(\mathring{\bm{\theta}}_{*})) = E(\ell_{t}(\mathring{\bm{\theta}}))$, implies that the following result holds for some constant $\delta > 0$: 
		\begin{align*} 
			E\left(\ell_{t}(\mathring{\bm{\theta}})\right) - E\left(\ell_{t}(\bm{\theta}_{0})\right) 
			= E\left(\ell_{t}(\mathring{\bm{\theta}}_{*})\right) - E\left(\ell_{t}(\bm{\theta}_{0 *})\right) 
			> \delta. 
		\end{align*}

		For (d), 
		we only need to show 
		(d1) $\widetilde{\mathcal{L}}_{n}(\widehat{\bm{\theta}}_{\text{G}}^{(r_{0},s_{0})}) - \widetilde{\mathcal{L}}_{n}(\bm{\theta}_{0}) = o_{p}(1)$; 
		(d2) $\widetilde{\mathcal{L}}_{n}(\bm{\theta}_{0}) - \mathcal{L}_{n}(\bm{\theta}_{0}) = o_{p}(1)$; and 
		(d3) $\mathcal{L}_{n}(\bm{\theta}_{0}) - E (\ell_{t}(\bm{\theta}_{0})) = o_{p}(1)$ as $n \to \infty$. 
		It can be obtained that 
		(d1) holds by the proof of (a), 
		(d2) holds by ($\romannumeral1$) of Lemma \ref{lemma (ii) for consistency}, and 
		(d3) holds by the ergodic theorem under Assumption \ref{assum stationarity}. 
		Hence (d) holds. 

		For (e), similar to (d), we only need to verify 
		(e1) $\widetilde{\mathcal{L}}_{n}(\widehat{\bm{\theta}}_{\text{G}}^{(r,s)}) - \widetilde{\mathcal{L}}_{n}(\mathring{\bm{\theta}}) = o_{p}(1)$; 
		(e2) $\widetilde{\mathcal{L}}_{n}(\mathring{\bm{\theta}}) - \mathcal{L}_{n}(\mathring{\bm{\theta}}) = o_{p}(1)$; and 
		(e3) $\mathcal{L}_{n}(\mathring{\bm{\theta}}) - E (\ell_{t}(\mathring{\bm{\theta}})) = o_{p}(1)$ as $n \to \infty$. 
		(e2) and (e3) can be proved with the analogous arguments in the proof of (d2) and (d3), respectively. 
		Then we are left to verify (e1). 
		Assume that $E(\ell_{t}(\bm{\theta}))$ has a unique minimum at $\mathring{\bm{\theta}}$ on $\Theta$. 
		Similar to the proof of Theorem \ref{thm Consistency}, we can obtain that $\widehat{\bm{\theta}}_{\text{G}}^{(r,s)} \to \mathring{\bm{\theta}}$ a.s. as $n \to \infty$. 
		By the mean value theorem, we have 
		\begin{align*}
			&\widetilde{\mathcal{L}}_{n}\left(\widehat{\bm{\theta}}_{\text{G}}^{(r,s)}\right) - \widetilde{\mathcal{L}}_{n}\left(\mathring{\bm{\theta}}\right) 
			\leq \left\|\frac{\partial \widetilde{\mathcal{L}}_{n}\left(\check{\bm{\theta}}_{\text{G}}^{(r,s)}\right)}{\partial \bm{\theta}}\right\|_{2} \left\|\widehat{\bm{\theta}}_{\text{G}}^{(r,s)} - \mathring{\bm{\theta}}\right\|_{2}, \;\; \text{and} \\ 
			&\left\|\frac{\partial \widetilde{\mathcal{L}}_{n}\left(\check{\bm{\theta}}_{\text{G}}^{(r,s)}\right)}{\partial \bm{\theta}} - \frac{\partial \widetilde{\mathcal{L}}_{n}\left(\mathring{\bm{\theta}}\right)}{\partial \bm{\theta}}\right\|_{2} 
			\leq \left\|\frac{\partial^{2} \widetilde{\mathcal{L}}_{n}\left(\bar{\bm{\theta}}_{\text{G}}^{(r,s)}\right)}{\partial \bm{\theta} \partial \bm{\theta}^{\prime}}\right\|_{2} \left\|\check{\bm{\theta}}_{\text{G}}^{(r,s)} - \mathring{\bm{\theta}}\right\|_{2}, 
		\end{align*}
		where $\check{\bm{\theta}}_{\text{G}}^{(r,s)}$ lies between $\widehat{\bm{\theta}}_{\text{G}}^{(r,s)}$ and $\mathring{\bm{\theta}}$, and $\bar{\bm{\theta}}_{\text{G}}^{(r,s)}$ lies between $\check{\bm{\theta}}_{\text{G}}^{(r,s)}$ and $\mathring{\bm{\theta}}$. 
		These together with $\|\widehat{\bm{\theta}}_{\text{G}}^{(r,s)} - \mathring{\bm{\theta}}\|_{2} = o_{p}(1)$ by $\widehat{\bm{\theta}}_{\text{G}}^{(r,s)} \to \mathring{\bm{\theta}}$ a.s. and 
		$\|{\partial^{2} \widetilde{\mathcal{L}}_{n}(\bar{\bm{\theta}}_{\text{G}}^{(r,s)})} / {\partial \bm{\theta} \partial \bm{\theta}^{\prime}}\|_{2} = O_{p}(1)$ by ($\romannumeral2$) in the proof of Theorem \ref{thm Asymptotic normality}, 
		imply that (e1) holds. 
		Then by (e1)--(e3), we have (e) holds. 
		Hence by \eqref{BIC(r,s) - BIC(r0,s0), Case II} and (c)--(e), we have 
		\begin{align*}
			\text{BIC}(r,s) - \text{BIC}(r_{0},s_{0}) 
			> n (o_p(1) + \delta) + O(\ln(n)) \to \infty  
			\;\; \text{as} \;\; n \to \infty. 
		\end{align*}
		As a result, \eqref{eq BIC consistency} holds for Case II. 
		This completes the proof of Theorem \ref{thm BIC}. 
		\end{proof}

	\section{Proofs of lemmas}
		\begin{proof}[Proof of Lemma \ref{lemma for identification}]
			The sufficiency of ($\romannumeral1$)--($\romannumeral4$) is obvious and thus we only need to verify their necessity. 
			Without loss of generality, suppose that $|\lambda_{1}| \geq |\lambda_{2}| \geq \ldots \geq |\lambda_{r}|$ and $\gamma_{1} > \gamma_{2} > \ldots > \gamma_{s}$. 

			For ($\romannumeral1$), we first verify the claim when $r = 2$. 
			If $c_{1} \lambda_{1}^{j} + c_{2} \lambda_{2}^{j} = 0$ holds for all $j \in \mathbb{Z}^{+}$, then we have 
			\begin{align} \label{eq1 in lemma identification}
				c_{1} = - c_{2} \left(\frac{\lambda_{2}}{\lambda_{1}}\right)^{j} 
				\;\; \text{for all} \;\; j \in \mathbb{Z}^{+}. 
			\end{align}
			Recall that $\{\lambda_{k}\}$ are distinct and $|\lambda_{1}| \geq |\lambda_{2}|$, which implies that $-1 \leq \lambda_{2} / \lambda_{1} < 1$. 
			Then as $j \to \infty$, it holds that $(\lambda_{2} / \lambda_{1})^{j} \to 0$ if $-1 < \lambda_{2} / \lambda_{1} < 1$, and $(\lambda_{2} / \lambda_{1})^{j} = \pm 1$ if $\lambda_{2} / \lambda_{1} = -1$. 
			This together with \eqref{eq1 in lemma identification} implies that $c_{1} = c_{2} = 0$. 
			Next we consider the case when $r=3$. 
			If $c_{1} \lambda_{1}^{j} + c_{2} \lambda_{2}^{j} + c_{3} \lambda_{3}^{j} = 0$ holds for all $j \in \mathbb{Z}^{+}$, then we have 
			\begin{align} \label{eq2 in lemma identification}
				c_{1} = - c_{2} \left(\frac{\lambda_{2}}{\lambda_{1}}\right)^{j} - c_{3} \left(\frac{\lambda_{3}}{\lambda_{1}}\right)^{j} 
				\;\; \text{for all} \;\; j \in \mathbb{Z}^{+}. 
			\end{align}
			Similarly, we have that $-1 \leq \lambda_{2} / \lambda_{1} < 1$ and $-1 < \lambda_{3} / \lambda_{1} < 1$. 
			If $-1 < \lambda_{2} / \lambda_{1} < 1$, it holds that $(\lambda_{2} / \lambda_{1})^{j} \to 0$ and $(\lambda_{3} / \lambda_{1})^{j} \to 0$ as $j \to \infty$. This together with \eqref{eq2 in lemma identification} implies that $c_{1} = 0$, and then $c_{2} \lambda_{2}^{j} + c_{3} \lambda_{3}^{j} = 0$ holds for all $j \in \mathbb{Z}^{+}$. 
			By the above discussion for the case $r=2$, it follows that $c_{2} = c_{3} = 0$. 
			Thus $c_{1} = c_{2} = c_{3} = 0$. 
			If $\lambda_{2} / \lambda_{1} = -1$, it holds that $(\lambda_{2} / \lambda_{1})^{j} = \pm 1$ and $(\lambda_{3} / \lambda_{1})^{j} \to 0$ as $j \to \infty$. Similarly we can also conclude that $c_{1} = c_{2} = c_{3} = 0$. 
			As a result, the claim in ($\romannumeral1$) for general $r$ can be established similarly. 

			For ($\romannumeral2$), we first show (a). 
			If $\sum_{k=1}^{s} \gamma_{k}^{j} [c_{k1} \cos(j \varphi_{k}) + c_{k2} \sin(j \varphi_{k})] = 0$ holds for all $j \in \mathbb{Z}^{+}$, then we have 
			\begin{align*}
				c_{11} \cos(j \varphi_{1}) + c_{12} \sin(j \varphi_{1}) 
				= - \sum_{k=2}^{s} \left(\frac{\gamma_{k}}{\gamma_{1}}\right)^{j} [c_{k1} \cos(j \varphi_{k}) + c_{k2} \sin(j \varphi_{k})]
				\;\; \text{for all} \;\; j \in \mathbb{Z}^{+}. 
			\end{align*}
			Note that $0 < \gamma_{k} / \gamma_{1} < 1$, we can obtain that $c_{k1} = c_{k2} = 0$ for all $k$ with analogous arguments in the proof of ($\romannumeral1$). 
			Next we verify (b). 
			By the fact that $a \cos x + b \sin x = \sqrt{a^{2} + b^{2}} \cos(x - \alpha)$ with $\alpha = \arccos(a / \sqrt{a^{2} + b^{2}})$, it is equivalent to show that if the following equation holds for all $j \in \mathbb{Z}^{+}$, 
			\begin{align} \label{eq3 in lemma identification}
				\sqrt{c_{11}^{2} + c_{12}^{2}} \cos(j \varphi_{1} - \alpha_{1}) 
				= \sqrt{c_{21}^{2} + c_{22}^{2}} \cos(j \varphi_{2} - \alpha_{2}), 
			\end{align}
			where $\alpha_{1} = \arccos(c_{11} / \sqrt{c_{11}^{2} + c_{12}^{2}})$ and $\alpha_{2} = \arccos(c_{21} / \sqrt{c_{21}^{2} + c_{22}^{2}})$, 
			then $\varphi_{1} = \varphi_{2}$, $c_{11} = c_{21}$ and $c_{12} = c_{22}$. 
			It is obvious that \eqref{eq3 in lemma identification} implies that $c_{11}^{2} + c_{12}^{2} = c_{21}^{2} + c_{22}^{2}$ and $\cos(j \varphi_{1} - \alpha_{1}) = \cos(j \varphi_{2} - \alpha_{2})$ for all $j \in \mathbb{Z}^{+}$. 
			Since $\varphi_{k} \in (0, \pi)$ for $k=1,2$, we can obtain that $\varphi_{1} = \varphi_{2}$ and $\alpha_{1} = \alpha_{2}$ by $\cos(j \varphi_{1} - \alpha_{1}) = \cos(j \varphi_{2} - \alpha_{2})$ for all $j \in \mathbb{Z}^{+}$. 
			And then $\alpha_{1} = \alpha_{2}$ together with $c_{11}^{2} + c_{12}^{2} = c_{21}^{2} + c_{22}^{2}$ implies that $c_{11} = c_{21}$ and $c_{12} = c_{22}$. 
			Thus (b) is verified. 

			For ($\romannumeral3$), if $\gamma^{j} [c_{01} \cos(j \varphi) + c_{02} \sin(j \varphi)] = \sum_{k=1}^{r} c_{k} \lambda_{k}^{j}$ holds for all $j \in \mathbb{Z}^{+}$, then we have 
			\begin{align} \label{eq4 in lemma identification}
				c_{01} \cos(j \varphi) + c_{02} \sin(j \varphi) 
				= \sum_{k=1}^{r} c_{k} \left(\frac{\lambda_{k}}{\gamma}\right)^{j} 
				\;\; \text{for all} \;\; j \in \mathbb{Z}^{+}. 
			\end{align}
			As $j \to \infty$, it holds that 
			$(\lambda_{k} / \gamma)^{j} \to 0$ if $|\lambda_{k} / \gamma| < 1$, 
			$(\lambda_{k} / \gamma)^{j} \to \infty$ if $|\lambda_{k} / \gamma| > 1$, 
			and $(\lambda_{k} / \gamma)^{j} = \pm 1$ if $|\lambda_{k} / \gamma| = 1$. 
			Note that $\cos(j \varphi)$ and $\sin (j \varphi)$ change as $j$ changes and they are both bounded. 
			These together with \eqref{eq4 in lemma identification} and ($\romannumeral1$) imply that $c_{01} = c_{02} = c_{k} = 0$ for all $k$. 

			For ($\romannumeral4$), if $c_{0} \lambda^{j} = \sum_{k=1}^{s} \gamma_{k}^{j} [c_{k1} \cos(j \varphi_{k}) + c_{k2} \sin(j \varphi_{k})]$ holds for all $j \in \mathbb{Z}^{+}$, then it holds that 
			\begin{align*}
				c_{0} = \sum_{k=1}^{s} \left(\frac{\gamma_{k}}{\lambda}\right)^{j} [c_{k1} \cos(j \varphi_{k}) + c_{k2} \sin(j \varphi_{k})]
				\;\; \text{for all} \;\; j \in \mathbb{Z}^{+}. 
			\end{align*}
			With ($\romannumeral2$) and analogous arguments in the proof of ($\romannumeral3$), it can be shown that $c_{0} = c_{k1} = c_{k2} = 0$ for all $k$. 
		\end{proof}

		\begin{proof}[Proof of Lemma \ref{lemma for stationarity}]
			We first show ($\romannumeral1$). 
			Since $\rho(\Upsilon^{\otimes k}) < 1$, we have that $I_{a} - \Upsilon^{\otimes k}$ is invertible. 
			Note that all elements of $\Upsilon$ is non-negative. 
			Then there exists a vector $\bm{\nu}_{1} > 0$ such that $[(I_{a} - \Upsilon^{\otimes k})^{\prime}]^{-1} \bm{\nu}_{1} = \sum_{i=0}^{\infty} [(\Upsilon^{\otimes k})^{\prime}]^{i} \bm{\nu}_{1} > 0$. 
			Denote $\bm{\nu} = [(I_{a} - \Upsilon^{\otimes k})^{\prime}]^{-1} \bm{\nu}_{1}$. 
			It follows that $\bm{\nu} > 0$ and $(I_{a} - \Upsilon^{\otimes k})^{\prime} \bm{\nu} = \bm{\nu}_{1} > 0$. 
			Thus ($\romannumeral1$) holds. 

			Next we establish ($\romannumeral2$). 
			Recall that $\bm{x}_{t} = \Upsilon \bm{x}_{t-1} + \bm{\xi}_{t}$ by \eqref{eq Markov chain}. 
			It can be shown that 
			\begin{align} \label{eq1 in g(xt) inequality}
				&g(\bm{x}_{t}) = 1 + |\bm{x}_{t}^{\otimes k}|^{\prime} \bm{\nu} 
				= 1 + |(\Upsilon \bm{x}_{t-1} + \bm{\xi}_{t})^{\otimes k}|^{\prime} \bm{\nu} \notag\\
				=& 1 + |(\Upsilon \bm{x}_{t-1})^{\otimes k} + (\Upsilon \bm{x}_{t-1})^{\otimes (k-1)} \otimes \bm{\xi}_{t} + \ldots + \bm{\xi}_{t}^{\otimes k}|^{\prime} \bm{\nu} \notag\\
				\leq& 1 + |\bm{x}_{t-1}^{\otimes k}|^{\prime} (\Upsilon^{\otimes k})^{\prime} \bm{\nu} + |(\Upsilon^{\otimes (k-1)} \bm{x}_{t-1}^{\otimes (k-1)}) \otimes \bm{\xi}_{t} + \ldots + \bm{\xi}_{t}^{\otimes k}|^{\prime} \bm{\nu} \notag\\
				\leq& 1 + |\bm{x}_{t-1}^{\otimes k}|^{\prime} (\Upsilon^{\otimes k})^{\prime} \bm{\nu} + \left\|(\Upsilon^{\otimes (k-1)} \bm{x}_{t-1}^{\otimes (k-1)}) \otimes \bm{\xi}_{t} + \ldots + \bm{\xi}_{t}^{\otimes k}\right\|_{2} \left\|\bm{\nu}\right\|_{2} \notag\\
				\leq& 1 + |\bm{x}_{t-1}^{\otimes k}|^{\prime} (\Upsilon^{\otimes k})^{\prime} \bm{\nu} + \left[\left\|(\Upsilon^{\otimes (k-1)} \bm{x}_{t-1}^{\otimes (k-1)}) \otimes \bm{\xi}_{t}\right\|_{2} + \ldots + \left\|\bm{\xi}_{t}^{\otimes k}\right\|_{2}\right] \left\|\bm{\nu}\right\|_{2} \notag\\
				=& 1 + |\bm{x}_{t-1}^{\otimes k}|^{\prime} (\Upsilon^{\otimes k})^{\prime} \bm{\nu} + \left(\left\|\Upsilon^{\otimes (k-1)} \bm{x}_{t-1}^{\otimes (k-1)}\right\|_{2} \left\|\bm{\xi}_{t}\right\|_{2} + \ldots + \left\|\bm{\xi}_{t}\right\|_{2}^{k}\right) \left\|\bm{\nu}\right\|_{2} \notag\\ 
				\leq& 1 + |\bm{x}_{t-1}^{\otimes k}|^{\prime} (\Upsilon^{\otimes k})^{\prime} \bm{\nu} + \left(\left\|\Upsilon^{\otimes (k-1)}\right\|_{2} \left\|\bm{x}_{t-1}\right\|_{2}^{k-1} \left\|\bm{\xi}_{t}\right\|_{2} + \ldots + \left\|\bm{\xi}_{t}\right\|_{2}^{k}\right) \left\|\bm{\nu}\right\|_{2}, 
			\end{align}
			where the four inequalities are from $(A \bm{a})^{\otimes k} = A^{\otimes k} \bm{a}^{\otimes k}$ for a matrix $A$ and a vector $\bm{a}$, the H{\"o}lder's inequality, the triangle inequality of matrix norms, and the submultiplicativity of induced matrix norms, respectively, 
			and the penultimate equation is from the fact $\|A \otimes B\|_{2} = \|A\|_{2} \|B\|_{2}$ for matrices $A$ and $B$. 
			Let $\bm{\xi}_{t} = \bm{\xi}_{t,1} + \bm{\xi}_{t,2}$ with 
			\begin{align*}
				&\bm{\xi}_{t,1} = \left((1 - \beta_{10} - \beta_{20})\ovechsec(\underline{R}_{0})^{\prime}, \ovechsec(\Psi_{t})^{\prime}, \bm{0}_{m}^{\prime}, \bm{0}_{m}^{\prime}, \ldots, \bm{0}_{m}^{\prime}, \bm{0}_{m}^{\prime}\right)^{\prime} 
				\;\; \text{and} \\
				&\bm{\xi}_{t,2} = \left(\bm{0}_{m(m-1)/2}^{\prime}, \bm{0}_{m(m-1)/2}^{\prime}, \bm{\varepsilon}_{t}^{\prime}, \bm{0}_{m}^{\prime}, \ldots, \bm{0}_{m}^{\prime}, \bm{0}_{m}^{\prime}\right)^{\prime}.  
			\end{align*}
			Since $0 < \beta_{10} + \beta_{20} < 1$ and the elements of $\underline{R}_{0}$ and $\Psi_{t}$ are between $-1$ and 1, we have that 
			\begin{small}
			\begin{equation} \label{eq2 in g(xt) inequality}
				E\left(\|\bm{\xi}_{t}\|_{2}^{k} \mid \bm{x}_{t-1} = \bm{x}\right) 
				\leq E\left[(\|\bm{\xi}_{t,1}\|_{2} + \|\bm{\xi}_{t,2}\|_{2})^{k} \mid \bm{x}_{t-1} = \bm{x}\right] 
				\leq E\left[(\sqrt{m(m-1)} + \|\bm{\varepsilon}_{t}\|_{2})^{k} \mid \bm{x}_{t-1} = \bm{x}\right]. 
			\end{equation}
			\end{small}
			Moreover, recall that $\bm{\varepsilon}_{t} = D_{t}^{-1} \mathbf{y}_{t}$ with $\mathbf{y}_{t} = H_{t}^{1/2} \bm{\eta}_{t}$ and $H_{t} = D_{t} R_{t} D_{t}$. 
			It holds that 
			\begin{align} \label{eq epsilont, a function of Rt and etat}
				\bm{\varepsilon}_{t} = D_{t}^{-1} \mathbf{y}_{t} = D_{t}^{-1} H_{t}^{1/2} \bm{\eta}_{t} = D_{t}^{-1} D_{t} R_{t}^{1/2} \bm{\eta}_{t} = R_{t}^{1/2} \bm{\eta}_{t}. 
			\end{align}
			Note that $R_{t}$ is a conditional correlation matrix. 
			Then using the submultiplicativity of induced matrix norms and the fact $\|A\|_{2} \leq \|A\|_{F}$ for a matrix $A$, we can show that 
			\begin{align} \label{eq3 in g(xt) inequality}
				&E\left(\|\bm{\varepsilon}_{t}\|_{2}^{k} \mid \bm{x}_{t-1} = \bm{x}\right) 
				= E\left(\|R_{t}^{1/2} \bm{\eta}_{t}\|_{2}^{k} \mid \bm{x}_{t-1} = \bm{x}\right) 
				\leq E\left(\|R_{t}^{1/2}\|_{2}^{k} \|\bm{\eta}_{t}\|_{2}^{k} \mid \bm{x}_{t-1} = \bm{x}\right) \notag\\
				\leq& E\left(\|R_{t}^{1/2}\|_{F}^{k} \|\bm{\eta}_{t}\|_{2}^{k} \mid \bm{x}_{t-1} = \bm{x}\right) 
				= E\left(\left[\tr(R_{t})\right]^{k/2} \|\bm{\eta}_{t}\|_{2}^{k} \mid \bm{x}_{t-1} = \bm{x}\right) 
				= m^{k/2} E\|\bm{\eta}_{t}\|_{2}^{k}. 
			\end{align}
			By \eqref{eq1 in g(xt) inequality}--\eqref{eq3 in g(xt) inequality} and $E\|\bm{\eta}_{t}\|_{2}^{k} < \infty$, the following inequality holds for some constant $c_{1} > 0$: 
			\begin{align*}
				E\left(g(\bm{x}_{t}) \mid \bm{x}_{t-1} = \bm{x}\right) 
				\leq 1 + |\bm{x}^{\otimes k}|^{\prime} (\Upsilon^{\otimes k})^{\prime} \bm{\nu} + c_{1} (1 + \|\bm{x}\|_{2} + \ldots + \|\bm{x}\|_{2}^{k-1}). 
			\end{align*}
			Let $g_{1}(\bm{x}) = c_{1} (1 + \|\bm{x}\|_{2} + \ldots + \|\bm{x}\|_{2}^{k-1})$. 
			It then follows that 
			\begin{equation} \label{eq in Eg(xt) inequality}
				E\left(g(\bm{x}_{t}) \mid \bm{x}_{t-1} = \bm{x}\right) 
				\leq 1 + |\bm{x}^{\otimes k}|^{\prime} \bm{\nu} - |\bm{x}^{\otimes k}|^{\prime} (I_{a} - \Upsilon^{\otimes k})^{\prime} \bm{\nu} + g_{1}(\bm{x}) 
				= g(\bm{x}) - |\bm{x}^{\otimes k}|^{\prime} (I_{a} - \Upsilon^{\otimes k})^{\prime} \bm{\nu} + g_{1}(\bm{x}), 
			\end{equation}
			which implies that $E\left(g(\bm{x}_{t}) \mid \bm{x}_{t-1} = \bm{x}\right) \leq g(\bm{x}) + g_{1}(\bm{x})$ holds since $(I_{a} - \Upsilon^{\otimes k})^{\prime} \bm{\nu} > 0$ by ($\romannumeral1$). 
			Furthermore, denote $c_{2} = \min\{\text{all components of} \; (I_{a} - \Upsilon^{\otimes k})^{\prime} \bm{\nu}\}$, $c_{3} = \max\{\text{all components of} \; \bm{\nu}\}$ and $c_{4} = \min\{\text{all components of} \; \bm{\nu}\}$. 
			And let $\mathcal{A} = \{\bm{x} \in \mathbb{R}^{a}: \|\bm{x}\|_{1}^{k} \leq \Delta\}$ with $\Delta > \max\{1, 1/c_{3}, [(2 c_{1} c_{3} k) / (c_{2} c_{4})]^{k}\}$. 
			Then when $\bm{x} \in \mathcal{A}^{c} = \mathbb{R}^{a} - \mathcal{A}$, it can be shown that 
			\begin{align*}
				\frac{|\bm{x}^{\otimes k}|^{\prime} (I_{a} - \Upsilon^{\otimes k})^{\prime} \bm{\nu}}{g(\bm{x})} 
				= \frac{|\bm{x}^{\otimes k}|^{\prime} (I_{a} - \Upsilon^{\otimes k})^{\prime} \bm{\nu}}{1 + |\bm{x}^{\otimes k}|^{\prime} \bm{\nu}} 
				\geq \frac{c_{2} \|\bm{x}\|_{1}^{k}}{1 + c_{3} \|\bm{x}\|_{1}^{k}} 
				\geq \frac{c_{2} \|\bm{x}\|_{1}^{k}}{2 c_{3} \|\bm{x}\|_{1}^{k}} 
				= \frac{c_{2}}{2 c_{3}}, 
			\end{align*}
			and 
			\begin{align*}
				\frac{g_{1}(\bm{x})}{g(\bm{x})} 
				&= \frac{c_{1} (1 + \|\bm{x}\|_{2} + \ldots + \|\bm{x}\|_{2}^{k-1})}{1 + |\bm{x}^{\otimes k}|^{\prime} \bm{\nu}} 
				\leq \frac{c_{1} (1 + \|\bm{x}\|_{1} + \ldots + \|\bm{x}\|_{1}^{k-1})}{c_{4} \|\bm{x}\|_{1}^{k}} \\
				&= \frac{c_{1} (\|\bm{x}\|_{1}^{1-k} + \|\bm{x}\|_{1}^{2-k} + \ldots + 1)}{c_{4} \|\bm{x}\|_{1}} 
				\leq \frac{c_{1} k}{c_{4} \Delta^{1/k}}. 
			\end{align*}
			These together with \eqref{eq in Eg(xt) inequality} implies that 
			$$
				E\left(g(\bm{x}_{t}) \mid \bm{x}_{t-1} = \bm{x}\right) 
				\leq g(\bm{x}) \left[1 - \frac{|\bm{x}^{\otimes k}|^{\prime} (I_{a} - \Upsilon^{\otimes k})^{\prime} \bm{\nu}}{g(\bm{x})} + \frac{g_{1}(\bm{x})}{g(\bm{x})}\right] 
				\leq g(\bm{x}) \left(1 - \frac{c_{2}}{2 c_{3}} + \frac{c_{1} k}{c_{4} \Delta^{1/k}}\right). 
			$$
			Thus $E\left(g(\bm{x}_{t}) \mid \bm{x}_{t-1} = \bm{x}\right) \leq (1 - c_{0}) g(\bm{x})$ for $\bm{x} \in \mathcal{A}^{c}$ with $c_{0} = c_{2} / (2 c_{3}) - (c_{1} k) / (c_{4} \Delta^{1/k}) > 0$. 
			As a result, ($\romannumeral2$) is established. 
		\end{proof}

		\begin{proof}[Proof of Lemma \ref{lemma derivatives of lnhtunderline}]
			For $\varrho < \varrho_{1} < 1$, it can be directly shown that 
			\begin{align*}
				\varrho \bm{\zeta}_{t-1,\varrho} &= \varrho + \sum_{i=1}^{\infty} \varrho^{i} \|\ln\mathbf{y}^{\odot 2}_{t-1-i}\| = \varrho + \sum_{i=2}^{\infty} \varrho^{i-1} \|\ln\mathbf{y}^{\odot 2}_{t-i}\| 
				< 1 + \sum_{i=1}^{\infty} \varrho^{i-1} \|\ln\mathbf{y}^{\odot 2}_{t-i}\| = \bm{\zeta}_{t,\varrho} \\
				&< 1 + \sum_{i=1}^{\infty} \varrho_{1}^{i-1} \|\ln\mathbf{y}^{\odot 2}_{t-i}\| = \bm{\zeta}_{t,\varrho_{1}}. 
			\end{align*}

			For ($\romannumeral1$), recall that $\ln\bm{h}_{t}(\bm{\delta}) = \underline{\bm{\omega}} + \sum_{i=1}^{\infty} \Phi_{i}(\bm{\kappa}) \ln\mathbf{y}^{\odot 2}_{t-i}$ with $\Phi_{i}(\bm{\kappa}) = \sum_{k=1}^{r} \lambda_{k}^{i-1} G_{0,k} + \sum_{k=1}^{s} \gamma_{k}^{i-1} [\cos((i-1) \varphi_{k}) G_{1,k} + \sin((i-1) \varphi_{k}) G_{2,k}]$ by model \eqref{model Dt SGARCH(r,s)}. 
			By Assumptions \ref{assum parameters}($\romannumeral1$)--($\romannumeral2$) and the properties of the induced matrix norm $\|\cdot\|$ that $\|A + B\| \leq \|A\| + \|B\|$ and $\|A \bm{a}\| \leq \|A\| \|\bm{a}\|$ for any vector $\bm{a}$ and matrices $A$ and $B$, we can obtain that 
			$$
				\sup_{\bm{\theta} \in \Theta} \|\Phi_{i}(\bm{\kappa})\| 
				\leq \sup_{\bm{\theta} \in \Theta} \left[\sum_{k=1}^{r} |\lambda_{k}|^{i-1} \|G_{0,k}\| 
				+ \sum_{k=1}^{s} |\gamma_{k}|^{i-1} \left(\|G_{1,k}\| + \|G_{2,k}\|\right)\right] 
				\leq c_{1} \varrho^{i-1}, 
			$$
			and then 
			$$
				\sup_{\bm{\theta} \in \Theta} \left\|\ln\bm{h}_{t}(\bm{\delta})\right\| 
				\leq \sup_{\bm{\theta} \in \Theta} \left(\|\underline{\bm{\omega}}\| + \sum_{i=1}^{\infty} \|\Phi_{i}(\bm{\kappa})\| \|\ln\mathbf{y}^{\odot 2}_{t-i}\|\right) 
				\leq c_{2} + c_{1} \sum_{i=1}^{\infty} \varrho^{i-1} \|\ln\mathbf{y}^{\odot 2}_{t-i}\| 
				\leq c \bm{\zeta}_{t,\varrho}, 
			$$
			for some constants $c_{1}, c_{2}, c > 0$. 
			Hence ($\romannumeral1$) holds. 

			For ($\romannumeral2$), we only show the result for ${\partial\ln\bm{h}_{t}(\bm{\delta})}/{\partial\lambda_{\ell}}$, and the other results can be verified similarly. 
			By \eqref{eq first derivative of lnhtunderline}, Assumptions \ref{assum parameters}($\romannumeral1$)--($\romannumeral2$), the aforementioned properties of the induced matrix norm $\|\cdot\|$, and the fact $i (\varrho / \varrho_{1})^{i-1} \leq c_{1}$ for $0 < \varrho < \varrho_{1} < 1$, $i \geq 1$ and some $c_{1} > 0$, it can be shown that 
			\begin{align*}
				\sup_{\bm{\theta} \in \Theta} \left\|\frac{\partial\ln\bm{h}_{t}(\bm{\delta})}{\partial\lambda_{\ell}}\right\| 
				&\leq \sup_{\bm{\theta} \in \Theta} \sum_{i=2}^{\infty} (i-1) |\lambda_{\ell}|^{i-2} \|G_{0,\ell}\| \|\ln\mathbf{y}^{\odot 2}_{t-i}\| \\
				&\leq c_{2} \sum_{i=2}^{\infty} (i-1) \varrho^{i-2} \|\ln\mathbf{y}^{\odot 2}_{t-i}\| 
				= c_{2} \sum_{i=2}^{\infty} (i-1) \left(\frac{\varrho}{\varrho_{1}}\right)^{i-2} \varrho_{1}^{i-2} \|\ln\mathbf{y}^{\odot 2}_{t-i}\| \\
				&\leq c_{1}c_{2} \sum_{i=2}^{\infty} \varrho_{1}^{i-2} \|\ln\mathbf{y}^{\odot 2}_{t-i}\| 
				= c_{1}c_{2} \sum_{i=1}^{\infty} \varrho_{1}^{i-1} \|\ln\mathbf{y}^{\odot 2}_{t-i-1}\| 
				\leq c_{1} c_{2} \bm{\zeta}_{t-1,\varrho_{1}} 
				\leq c \bm{\zeta}_{t,\varrho_{1}}, 
			\end{align*}
			for some constants $c_{1}, c_{2}, c > 0$. 
			As a result, ($\romannumeral2$) is established. 

			For ($\romannumeral3$), we only show the result for ${\partial^{2}\ln\bm{h}_{t}(\bm{\delta})}/{\partial\lambda_{\ell} \partial\lambda_{\ell}}$, and the other results can be verified similarly. 
			By \eqref{eq second derivative of lnhtunderline}, Assumptions \ref{assum parameters}($\romannumeral1$)--($\romannumeral2$), the aforementioned properties of the induced matrix norm $\|\cdot\|$, and the fact $i(i-1) (\varrho / \varrho_{1})^{i-2} \leq c_{1}$ for $0 < \varrho < \varrho_{1} < 1$, $i \geq 2$ and some $c_{1} > 0$, it can be shown that 
			\begin{align*}
				\sup_{\bm{\theta} \in \Theta} \left\|\frac{\partial^{2}\ln\bm{h}_{t}(\bm{\delta})}{\partial\lambda_{\ell} \partial\lambda_{\ell}}\right\| 
				&\leq \sup_{\bm{\theta} \in \Theta} \sum_{i=3}^{\infty} (i-1)(i-2) |\lambda_{\ell}|^{i-3} \|G_{0,\ell}\| \|\ln\mathbf{y}^{\odot 2}_{t-i}\| \\
				&\leq c_{2} \sum_{i=3}^{\infty} (i-1)(i-2) \left(\frac{\varrho}{\varrho_{1}}\right)^{i-3} \varrho_{1}^{i-3} \|\ln\mathbf{y}^{\odot 2}_{t-i}\| 
				\leq c_{1} c_{2} \sum_{i=3}^{\infty} \varrho_{1}^{i-3} \|\ln\mathbf{y}^{\odot 2}_{t-i}\| \\
				&= c_{1} c_{2} \sum_{i=1}^{\infty} \varrho_{1}^{i-1} \|\ln\mathbf{y}^{\odot 2}_{t-i-2}\| 
				\leq c_{1} c_{2} \bm{\zeta}_{t-2,\varrho_{1}}
				\leq c \bm{\zeta}_{t,\varrho_{1}},  
			\end{align*}
			for some constants $c_{1}, c_{2}, c > 0$. 
			As a result, ($\romannumeral3$) is established. 

			With analogous arguments in the proofs of ($\romannumeral2$) and ($\romannumeral3$), we can show that ($\romannumeral4$) holds. 
		\end{proof}
	
		\begin{proof}[Proof of Lemma \ref{lemma moments of derivatives of lnhtunderline}]
			By Lemma \ref{lemma derivatives of lnhtunderline}($\romannumeral1$), we have that 
			\begin{align*}
				E \sup_{\bm{\theta} \in \Theta} \left\|\ln\bm{h}_{t}(\bm{\delta})\right\| \leq c E \bm{\zeta}_{t,\varrho} = E \left(1 + \sum_{i=1}^{\infty} \varrho^{i-1} \|\ln\mathbf{y}^{\odot 2}_{t-i}\|\right), 
			\end{align*}
			and 
			\begin{align*}
				E \sup_{\bm{\theta} \in \Theta} \left\|\ln\bm{h}_{t}(\bm{\delta})\right\|^{2+\epsilon} \leq c E \bm{\zeta}_{t,\varrho}^{2+\epsilon} = E \left(1 + \sum_{i=1}^{\infty} \varrho^{i-1} \|\ln\mathbf{y}^{\odot 2}_{t-i}\|\right)^{2+\epsilon}. 
			\end{align*}
			Note that $0 < \varrho < 1$. 
			It then follows that $E \sup_{\bm{\theta} \in \Theta} \left\|\ln\bm{h}_{t}(\bm{\delta})\right\| < \infty$ holds under the condition $E\|\ln\mathbf{y}^{\odot 2}_{t}\| < \infty$, and $E \sup_{\bm{\theta} \in \Theta} \left\|\ln\bm{h}_{t}(\bm{\delta})\right\|^{2+\epsilon} < \infty$ holds under the condition $E\|\ln\mathbf{y}^{\odot 2}_{t}\|^{2+\epsilon} < \infty$. 
			Thus ($\romannumeral1$) and ($\romannumeral5$) hold. 
			Similarly, we can show that ($\romannumeral2$)--($\romannumeral4$) and ($\romannumeral6$)--($\romannumeral8$) hold. 
		\end{proof}
	
		\begin{proof}[Proof of Lemma \ref{lemma varrho to the power of t}]
			We first show ($\romannumeral1$). 
			Recall that $\ln\bm{h}_{t}(\bm{\delta}) = \underline{\bm{\omega}} + \sum_{i=1}^{\infty} \Phi_{i}(\bm{\kappa}) \ln\mathbf{y}^{\odot 2}_{t-i}$ and $\ln\widetilde{\bm{h}}_{t}(\bm{\delta}) = \underline{\bm{\omega}} + \sum_{i=1}^{t-1} \Phi_{i}(\bm{\kappa}) \ln\mathbf{y}^{\odot 2}_{t-i}$. 
			It holds that 
			\begin{align*}
				\ln\widetilde{\bm{h}}_{t}(\bm{\delta}) - \ln\bm{h}_{t}(\bm{\delta}) = - \sum_{i=t}^{\infty} \Phi_{i}(\bm{\kappa}) \ln\mathbf{y}^{\odot 2}_{t-i}. 
			\end{align*}
			Then using the properties of the induced matrix norm $\|\cdot\|$ that $\|A + B\| \leq \|A\| + \|B\|$ and $\|A \bm{a}\| \leq \|A\| \|\bm{a}\|$ for any vector $\bm{a}$ and matrices $A$ and $B$, it can be shown that 
			\begin{align} \label{eq ||lnhtunderlinetilde - lnhtunderline||}
				\sup_{\bm{\theta} \in \Theta} \|\ln\widetilde{\bm{h}}_{t}(\bm{\delta}) - \ln\bm{h}_{t}(\bm{\delta})\| 
				\leq \sum_{i=t}^{\infty} \sup_{\bm{\theta} \in \Theta} \|\Phi_{i}(\bm{\kappa})\| \|\ln\mathbf{y}^{\odot 2}_{t-i}\|. 
			\end{align}
			Under Assumptions \ref{assum parameters}($\romannumeral1$)--($\romannumeral2$), the following inequality holds for some constant $c_{1} > 0$: 
			\begin{small}
			\begin{align} \label{eq sum of Phi}
				\sum_{i=t}^{\infty} \sup_{\bm{\theta} \in \Theta} \|\Phi_{i}(\bm{\kappa})\| 
				&= \sup_{\bm{\theta} \in \Theta} \sum_{i=t}^{\infty} \left\|\sum_{k=1}^{r} \lambda_{k}^{i-1} G_{0,k} 
				+ \sum_{k=1}^{s} \gamma_{k}^{i-1} \left[\cos((i-1) \varphi_{k}) G_{1,k} + \sin((i-1) \varphi_{k}) G_{2,k}\right]\right\| \notag\\
				&\leq \sup_{\bm{\theta} \in \Theta} \sum_{k=1}^{r} \left(\sum_{i=t}^{\infty} |\lambda_{k}|^{i-1}\right) \|G_{0,k}\| 
				+ \sup_{\bm{\theta} \in \Theta} \sum_{k=1}^{s} \left(\sum_{i=t}^{\infty} |\gamma_{k}|^{i-1}\right) \left(\|G_{1,k}\| + \|G_{2,k}\|\right) \notag\\
				&= \sup_{\bm{\theta} \in \Theta} \sum_{k=1}^{r} \frac{|\lambda_{k}|^{t-1}}{1 - |\lambda_{k}|} \|G_{0,k}\| 
				+ \sup_{\bm{\theta} \in \Theta} \sum_{k=1}^{s} \frac{|\gamma_{k}|^{t-1}}{1 - |\gamma_{k}|} \left(\|G_{1,k}\| + \|G_{2,k}\|\right) \notag\\
				&\leq c_{1} \varrho^{t}. 
			\end{align} 
			\end{small}
			Moreover, since $0 < \varrho < 1$ and $\|\ln\mathbf{y}^{\odot 2}_{t}\|$ is finite a.s. by $E\|\ln\mathbf{y}^{\odot 2}_{t}\| < \infty$, it holds that $\varrho^{t/2} \|\ln\mathbf{y}^{\odot 2}_{t}\| \to 0$ a.s. as $t \to \infty$. 
			Hence $\varrho^{t/2} \|\ln\mathbf{y}^{\odot 2}_{t}\| < c_{2}$ a.s. for some $c_{2} > 0$ and large enough $t$. 
			This together with \eqref{eq ||lnhtunderlinetilde - lnhtunderline||}--\eqref{eq sum of Phi} and Assumption \ref{assum stationarity}, 
			implies that 
			\begin{align*} 
				\sup_{\bm{\theta} \in \Theta} \|\ln\widetilde{\bm{h}}_{t}(\bm{\delta}) - \ln\bm{h}_{t}(\bm{\delta})\| 
				&\leq \sum_{i=t}^{\infty} c_{1} \varrho^{i-1} \|\ln\mathbf{y}^{\odot 2}_{t-i}\|
				\leq \sum_{i=t}^{\infty} c_{1} \varrho^{i-1-t/2} \varrho^{t/2} \|\ln\mathbf{y}^{\odot 2}_{t-i}\| \\
				&\leq c_{1} c_{2} \sum_{i=t}^{\infty} \varrho^{i-1-t/2} 
				\leq c_{1} c_{2} \frac{\varrho^{t/2 - 1}}{1 - \varrho} 
				\leq c \rho^{t} \;\; \text{a.s.} 
			\end{align*}
			holds for large enough $t$ and some constants $0 < \rho < 1$ and $c > 0$. 
			Besides, with analogous arguments, we can obtain that the following inequality holds for large enough $t$, some constants $0 < \rho < 1$ and $c > 0$, and all $1 \leq j \leq m$: 
			\begin{align} \label{eq |lnhjjttilde - lnhjjt| with varrho}
				\sup_{\bm{\theta} \in \Theta} |\ln\widetilde{h}_{jj,t}(\bm{\delta}) - \ln h_{jj,t}(\bm{\delta})| \leq c \rho^{t}. 
			\end{align}

			Next we establish ($\romannumeral2$). 
			Recall that $D_{t}(\bm{\delta}) = \Diag\{h_{11,t}^{1/2}(\bm{\delta}), \ldots, h_{mm,t}^{1/2}(\bm{\delta})\}$ and $\widetilde{D}_{t}(\bm{\delta}) = \Diag\{\widetilde{h}_{11,t}^{1/2}(\bm{\delta}), \ldots, \widetilde{h}_{mm,t}^{1/2}(\bm{\delta})\}$. 
			For all $1 \leq j \leq m$, by $|e^{x} - 1| \leq e^{|x|} - 1$ and $e^{x} - 1 \leq (e^{a} - 1) x$ for $x \in [0,a]$, it can be shown that the following inequalities hold for some constant $c$: 
			\begin{align*}
				&\left|\widetilde{h}_{jj,t}^{1/2}(\bm{\delta}) - h_{jj,t}^{1/2}(\bm{\delta})\right| 
				= \left|\frac{\widetilde{h}_{jj,t}^{1/2}(\bm{\delta})}{h_{jj,t}^{1/2}(\bm{\delta})} - 1\right| \left|h_{jj,t}^{1/2}(\bm{\delta})\right| 
				= \left|e^{\frac{1}{2} \left(\ln \widetilde{h}_{jj,t}(\bm{\delta}) -  \ln h_{jj,t}(\bm{\delta})\right)} - 1\right| \left|h_{jj,t}^{1/2}(\bm{\delta})\right| \\
				\leq& \left(e^{\frac{1}{2} \left|\ln \widetilde{h}_{jj,t}(\bm{\delta}) -  \ln h_{jj,t}(\bm{\delta})\right|} - 1\right) \left|h_{jj,t}^{1/2}(\bm{\delta})\right| 
				\leq c \left|\ln \widetilde{h}_{jj,t}(\bm{\delta}) -  \ln h_{jj,t}(\bm{\delta})\right| \left|h_{jj,t}^{1/2}(\bm{\delta})\right|, 
			\end{align*}
			and 
			\begin{align*}
				&\left|\widetilde{h}_{jj,t}^{-1/2}(\bm{\delta}) - h_{jj,t}^{-1/2}(\bm{\delta})\right| 
				= \left|\frac{\widetilde{h}_{jj,t}^{-1/2}(\bm{\delta})}{h_{jj,t}^{-1/2}(\bm{\delta})} - 1\right| \left|h_{jj,t}^{-1/2}(\bm{\delta})\right| 
				= \left|e^{\frac{1}{2} \left(\ln h_{jj,t}(\bm{\delta}) - \ln \widetilde{h}_{jj,t}(\bm{\delta})\right)} - 1\right| \left|h_{jj,t}^{-1/2}(\bm{\delta})\right| \\
				\leq& \left(e^{\frac{1}{2} \left|\ln \widetilde{h}_{jj,t}(\bm{\delta}) -  \ln h_{jj,t}(\bm{\delta})\right|} - 1\right) \left|h_{jj,t}^{-1/2}(\bm{\delta})\right| 
				\leq c \left|\ln \widetilde{h}_{jj,t}(\bm{\delta}) -  \ln h_{jj,t}(\bm{\delta})\right| \left|h_{jj,t}^{-1/2}(\bm{\delta})\right|. 
			\end{align*}
			Note that $\sup_{\bm{\theta} \in \Theta} h_{jj,t}(\bm{\delta}) = \sup_{\bm{\theta} \in \Theta} \exp\{\ln h_{jj,t}(\bm{\delta})\}$ and $\sup_{\bm{\theta} \in \Theta} h_{jj,t}^{-1}(\bm{\delta}) = \sup_{\bm{\theta} \in \Theta} \exp\{- \ln h_{jj,t}(\bm{\delta})\}$ are finite a.s. by Lemma \ref{lemma moments of derivatives of lnhtunderline}($\romannumeral1$). 
			These together with \eqref{eq |lnhjjttilde - lnhjjt| with varrho} and Assumption \ref{assum stationarity}, imply that the following inequalities hold for large enough $t$, some constants $0 < \rho < 1$ and $c > 0$, and all $1 \leq j \leq m$: 
			\begin{align*}
				\sup_{\bm{\theta} \in \Theta} |\widetilde{h}_{jj,t}^{1/2}(\bm{\delta}) - h_{jj,t}^{1/2}(\bm{\delta})| \leq c \rho^{t} \;\; \text{a.s.} \;\; \text{and} \;\; 
				\sup_{\bm{\theta} \in \Theta} |\widetilde{h}_{jj,t}^{-1/2}(\bm{\delta}) - h_{jj,t}^{-1/2}(\bm{\delta})| \leq c \rho^{t} \;\; \text{a.s.}. 
			\end{align*}
			It then follows that 
			\begin{align*} 
				\sup_{\bm{\theta} \in \Theta} \left\|\widetilde{D}_{t}(\bm{\delta}) - D_{t}(\bm{\delta})\right\| \leq c \rho^{t} \;\; \text{a.s.} \;\; \text{and} \;\; 
				\sup_{\bm{\theta} \in \Theta} \left\|\widetilde{D}_{t}^{-1}(\bm{\delta}) - D_{t}^{-1}(\bm{\delta})\right\| \leq c \rho^{t} \;\; \text{a.s.} 
			\end{align*}
			hold for large enough $t$ and some constants $0 < \rho < 1$ and $c > 0$. 
			Moreover, both $\sup_{\bm{\theta} \in \Theta} \|D_{t}(\bm{\delta})\|$ and $\sup_{\bm{\theta} \in \Theta} \|D_{t}^{-1}(\bm{\delta})\|$ are finite a.s. since both $\sup_{\bm{\theta} \in \Theta} h_{jj,t}(\bm{\delta})$ and $\sup_{\bm{\theta} \in \Theta} h_{jj,t}^{-1}(\bm{\delta})$ are finite a.s. for all $1 \leq j \leq m$. 

			We then verify ($\romannumeral3$). 
			Recall that $R_{t}(\bm{\theta}) = (1 - \beta_{1} - \beta_{2}) \underline{R} + \beta_{1} \Psi_{t-1}(\bm{\delta}) + \beta_{2} R_{t-1}(\bm{\theta})$ and $\widetilde{R}_{t}(\bm{\theta}) = (1 - \beta_{1} - \beta_{2}) \underline{R} + \beta_{1} \widetilde{\Psi}_{t-1}(\bm{\delta}) + \beta_{2} \widetilde{R}_{t-1}(\bm{\theta})$. 
			Since $R_{t}$ is a conditional correlation matrix with all elements between $-1$ and 1, we have that $\sup_{\bm{\theta} \in \Theta} \|R_{t}(\bm{\theta})\|$ is finite. 
			Moreover, note that $R_{t}(\bm{\theta})$ is positive definite under Assumption \ref{assum parameters}($\romannumeral3$) and conditions $0 <\beta_{1}, \beta_{2} < 1$ and $0 < \beta_{1} + \beta_{2} < 1$. 
			Then using Theorem 8.4.9 of \citet{Bernstein2009} and Assumption \ref{assum parameters}($\romannumeral3$), we can obtain that 
			$$
				\sup_{\bm{\theta} \in \Theta} \|R_{t}^{-1}(\bm{\theta})\|_{2} 
				= \sup_{\bm{\theta} \in \Theta} \lambda_{\max}(R_{t}^{-1}(\bm{\theta})) 
				= \sup_{\bm{\theta} \in \Theta} \lambda_{\min}^{-1}(R_{t}(\bm{\theta})) 
				\leq \sup_{\bm{\theta} \in \Theta} \lambda_{\min}^{-1}((1 - \beta_{1} - \beta_{2}) \underline{R}) 
				< \infty, 
			$$
			which implies that $\sup_{\bm{\theta} \in \Theta} \|R_{t}^{-1}(\bm{\theta})\|$ is finite for any induced matrix norm $\|\cdot\|$ as the dimension is fixed. 
			Furthermore, using the property of the induced matrix norm $\|\cdot\|$ that $\|A B\| \leq \|A\| \|B\|$ for any matrices $A$ and $B$, it holds that 
			\begin{align*}
				\sup_{\bm{\theta} \in \Theta} \|\widetilde{R}_{t}^{-1}(\bm{\theta}) - R_{t}^{-1}(\bm{\theta})\| 
				&= \sup_{\bm{\theta} \in \Theta} \|\widetilde{R}_{t}^{-1}(\bm{\theta}) (\widetilde{R}_{t}(\bm{\theta}) - R_{t}(\bm{\theta})) R_{t}^{-1}(\bm{\theta})\| \\
				&\leq \sup_{\bm{\theta} \in \Theta} \|\widetilde{R}_{t}^{-1}(\bm{\theta})\| \sup_{\bm{\theta} \in \Theta} \|\widetilde{R}_{t}(\bm{\theta}) - R_{t}(\bm{\theta})\| \sup_{\bm{\theta} \in \Theta} \|R_{t}^{-1}(\bm{\theta})\|.  
			\end{align*}
			Hence we are left to show that the following inequality holds for large enough $t$ and some constants $0 < \rho < 1$ and $c > 0$: 
			\begin{align*}
				\sup_{\bm{\theta} \in \Theta} \|\widetilde{R}_{t}(\bm{\theta}) - R_{t}(\bm{\theta})\| \leq c \rho^{t} \;\; \text{a.s.}. 
			\end{align*}
			Recall that $\Psi_{t-1}(\bm{\delta})$ and $\widetilde{\Psi}_{t-1}(\bm{\delta})$ are the sample correlation matrices of $\{\bm{\varepsilon}_{t-1}(\bm{\delta}), \ldots, \bm{\varepsilon}_{t-\Bbbk}(\bm{\delta})\}$ and $\{\widetilde{\bm{\varepsilon}}_{t-1}(\bm{\delta}), \ldots, \widetilde{\bm{\varepsilon}}_{t-\Bbbk}(\bm{\delta})\}$, respectively, with $\bm{\varepsilon}_{t}(\bm{\delta}) = D_{t}^{-1}(\bm{\delta}) \mathbf{y}_{t}$ and $\widetilde{\bm{\varepsilon}}_{t}(\bm{\delta}) = \widetilde{D}_{t}^{-1}(\bm{\delta}) \mathbf{y}_{t}$. 
			By ($\romannumeral1$), Assumption \ref{assum stationarity} and $\|\mathbf{y}_{t}\|$ is finite a.s. under the condition $E\|\ln\mathbf{y}^{\odot 2}_{t}\| < \infty$, it can be shown that 
			$$
				\sup_{\bm{\theta} \in \Theta} \|\widetilde{\bm{\varepsilon}}_{t}(\bm{\delta}) - \bm{\varepsilon}_{t}(\bm{\delta})\| 
				= \sup_{\bm{\theta} \in \Theta} \|\widetilde{D}_{t}^{-1}(\bm{\delta}) \mathbf{y}_{t} - D_{t}^{-1}(\bm{\delta}) \mathbf{y}_{t}\| 
				\leq \sup_{\bm{\theta} \in \Theta} \|\widetilde{D}_{t}^{-1}(\bm{\delta}) - D_{t}^{-1}(\bm{\delta})\| \|\mathbf{y}_{t}\| 
				\leq c \rho^{t} \;\; \text{a.s.} 
			$$
			holds for large enough $t$ and some constants $0 < \rho < 1$ and $c > 0$, 
			and 
			\begin{align*}
				\sup_{\bm{\theta} \in \Theta} \|\bm{\varepsilon}_{t}(\bm{\delta})\| 
				= \sup_{\bm{\theta} \in \Theta} \|D_{t}^{-1}(\bm{\delta}) \mathbf{y}_{t}\| 
				\leq \sup_{\bm{\theta} \in \Theta} \|D_{t}^{-1}(\bm{\delta})\| \|\mathbf{y}_{t}\| \;\; \text{is finite a.s.}. 
			\end{align*}
			Note that $\Psi_{t-1}(\bm{\delta})$ (or $\widetilde{\Psi}_{t-1}(\bm{\delta})$) is a continuous differentiable function of $\{\bm{\varepsilon}_{t-1}(\bm{\delta}), \ldots, \bm{\varepsilon}_{t-\Bbbk}(\bm{\delta})\}$ (or $\{\widetilde{\bm{\varepsilon}}_{t-1}(\bm{\delta}), \ldots, \widetilde{\bm{\varepsilon}}_{t-\Bbbk}(\bm{\delta})\}$). 
			Thus by the mean value theorem, we can obtain that 
			\begin{align*}
				\sup_{\bm{\theta} \in \Theta} \|\widetilde{\Psi}_{t-1}(\bm{\delta}) - \Psi_{t-1}(\bm{\delta})\| \leq c \rho^{t} \;\; \text{a.s.} 
			\end{align*}
			holds for large enough $t$ and some constants $0 < \rho < 1$ and $c > 0$. 
			This together with Assumption \ref{assum parameters}($\romannumeral2$) implies that 
			\begin{align*}
				\sup_{\bm{\theta} \in \Theta} \|\widetilde{R}_{t}(\bm{\theta}) - R_{t}(\bm{\theta})\| 
				&= \sup_{\bm{\theta} \in \Theta} \left\|\beta_{1} \sum_{i=0}^{\infty} \beta_{2}^{i} \left(\widetilde{\Psi}_{t-i-1}(\bm{\delta}) - \Psi_{t-i-1}(\bm{\delta})\right)\right\| \\
				&\leq \beta_{1} \sum_{i=0}^{\infty} \beta_{2}^{i} \sup_{\bm{\theta} \in \Theta} \|\widetilde{\Psi}_{t-1}(\bm{\delta}) - \Psi_{t-1}(\bm{\delta})\| 
				\leq \frac{\beta_{1}}{1 - \beta_{2}} c \rho^{t} \;\; \text{a.s.} 
			\end{align*}
			holds for large enough $t$ and some constants $0 < \rho < 1$ and $c > 0$. 
			As a result, ($\romannumeral3$) is established. 

			For ($\romannumeral4$), recall that $H_{t}(\bm{\theta}) = D_{t}(\bm{\delta}) R_{t}(\bm{\theta}) D_{t}(\bm{\delta})$ and $\widetilde{H}_{t}(\bm{\theta}) = \widetilde{D}_{t}(\bm{\delta}) \widetilde{R}_{t}(\bm{\theta}) \widetilde{D}_{t}(\bm{\delta})$. 
			Then using the aforementioned properties of the induced matrix norm $\|\cdot\|$, it holds that $\|H_{t}(\bm{\theta})\| \leq \|D_{t}(\bm{\delta})\|^{2} \|R_{t}(\bm{\theta})\|$, $\|H_{t}^{-1}(\bm{\theta})\| \leq \|D_{t}^{-1}(\bm{\delta})\|^{2} \|R_{t}^{-1}(\bm{\theta})\|$, 
			\begin{align*}
				&\|\widetilde{H}_{t}(\bm{\theta}) - H_{t}(\bm{\theta})\| 
				= \|\widetilde{D}_{t}(\bm{\delta}) \widetilde{R}_{t}(\bm{\theta}) \widetilde{D}_{t}(\bm{\delta}) - D_{t}(\bm{\delta}) R_{t}(\bm{\theta}) D_{t}(\bm{\delta})\| \\
				\leq& \|\widetilde{D}_{t}(\bm{\delta}) - D_{t}(\bm{\delta})\| \|\widetilde{R}_{t}(\bm{\theta})\| \|\widetilde{D}_{t}(\bm{\delta})\| 
				+ \|D_{t}(\bm{\delta})\| \|\widetilde{R}_{t}(\bm{\theta}) - R_{t}(\bm{\theta})\| \|\widetilde{D}_{t}(\bm{\delta})\| \\
				&+ \|D_{t}(\bm{\delta})\| \|R_{t}(\bm{\theta})\| \|\widetilde{D}_{t}(\bm{\delta}) - D_{t}(\bm{\delta})\|, \;\; \text{and} \\
				&\|\widetilde{H}_{t}^{-1}(\bm{\theta}) - H_{t}^{-1}(\bm{\theta})\| 
				= \|\widetilde{D}_{t}^{-1}(\bm{\delta}) \widetilde{R}_{t}^{-1}(\bm{\theta}) \widetilde{D}_{t}^{-1}(\bm{\delta}) - D_{t}^{-1}(\bm{\delta}) R_{t}^{-1}(\bm{\theta}) D_{t}^{-1}(\bm{\delta})\| \\
				\leq& \|\widetilde{D}_{t}^{-1}(\bm{\delta}) - D_{t}^{-1}(\bm{\delta})\| \|\widetilde{R}_{t}^{-1}(\bm{\theta})\| \|\widetilde{D}_{t}^{-1}(\bm{\delta})\| 
				+ \|D_{t}^{-1}(\bm{\delta})\| \|\widetilde{R}_{t}^{-1}(\bm{\theta}) - R_{t}^{-1}(\bm{\theta})\| \|\widetilde{D}_{t}^{-1}(\bm{\delta})\| \\
				&+ \|D_{t}^{-1}(\bm{\delta})\| \|R_{t}^{-1}(\bm{\theta})\| \|\widetilde{D}_{t}^{-1}(\bm{\delta}) - D_{t}^{-1}(\bm{\delta})\|. 
			\end{align*}
			These together with ($\romannumeral2$)--($\romannumeral3$) and Assumption \ref{assum stationarity}, imply that  
			$\sup_{\bm{\theta} \in \Theta} \|\widetilde{H}_{t}(\bm{\theta}) - H_{t}(\bm{\theta})\| \leq c \rho^{t}$ a.s. and $\sup_{\bm{\theta} \in \Theta} \|\widetilde{H}_{t}^{-1}(\bm{\theta}) - H_{t}^{-1}(\bm{\theta})\| \leq c \rho^{t}$ a.s. hold for large enough $t$ and some constants $0 < \rho < 1$ and $c > 0$, and both $\sup_{\bm{\theta} \in \Theta} \|H_{t}(\bm{\theta})\|$ and $\sup_{\bm{\theta} \in \Theta} \|H_{t}^{-1}(\bm{\theta})\|$ are finite a.s.. 

			For ($\romannumeral5$), 
			it holds that 
			\begin{align*}
				\ln |\widetilde{H}_{t}(\bm{\theta})| - \ln |H_{t}(\bm{\theta})| 
				= \ln |\widetilde{H}_{t}(\bm{\theta}) H_{t}^{-1}(\bm{\theta})| 
				= \ln |I_{m} + (\widetilde{H}_{t}(\bm{\theta}) - H_{t}(\bm{\theta})) H_{t}^{-1}(\bm{\theta})|. 
			\end{align*}
			Since $|A^{\prime}A| \leq \|A\|_{2}^{2 \min\{d_{1}, d_{2}\}}$ holds for any $d_{1} \times d_{2}$ matrix $A$, 
			$\ln|A| \leq c \ln\|A\|$ holds for any fixed dimensional square matrix $A$, any matrix norm induced by a vector norm and some constant $c > 0$. 
			These together with the aforementioned properties of the induced matrix norm $\|\cdot\|$ and the fact $\ln(1+x) \leq x$ for $x > -1$, imply that 
			\begin{align} \label{eq ln|Httilde| - ln|Ht|}
				\ln |\widetilde{H}_{t}(\bm{\theta})| - \ln |H_{t}(\bm{\theta})| 
				&\leq c \ln \|I_{m} + (\widetilde{H}_{t}(\bm{\theta}) - H_{t}(\bm{\theta})) H_{t}^{-1}(\bm{\theta})\| \notag\\
				&\leq c \ln\left(\|I_{m}\| + \|\widetilde{H}_{t}(\bm{\theta}) - H_{t}(\bm{\theta})\| \|H_{t}^{-1}(\bm{\theta})\|\right) \notag\\
				&\leq c \|\widetilde{H}_{t}(\bm{\theta}) - H_{t}(\bm{\theta})\| \|H_{t}^{-1}(\bm{\theta})\| 
			\end{align}
			holds for some constant $c > 0$. 
			Thus ($\romannumeral5$) holds by \eqref{eq ln|Httilde| - ln|Ht|}, ($\romannumeral4$) and Assumption \ref{assum stationarity}. 

			Lastly we show ($\romannumeral6$) and ($\romannumeral7$). 
			Note that for all $1 \leq i,j \leq m$, it holds that $|\Psi_{t-1,ij}(\bm{\delta})| \leq 1$, ${\varepsilon_{i,t-1}^{2}(\bm{\delta})}/{(\sum_{h=1}^{\Bbbk} \varepsilon_{i,t-h}^{2}(\bm{\delta}))} \leq 1$ and $\sup_{\bm{\theta} \in \Theta} |{\partial\ln h_{ii,t-1}(\bm{\delta})}/{\partial\delta_{\ell}}|$ is finite a.s. by Lemma \ref{lemma moments of derivatives of lnhtunderline}($\romannumeral2$). Then we have that $\sup_{\bm{\theta} \in \Theta} |{\partial\Psi_{t-1,ij}(\bm{\delta})}/{\partial\delta_{\ell}}|$ is finite a.s.. 
			As a result, with analogous arguments in the proof of ($\romannumeral1$)--($\romannumeral5$), we can establish ($\romannumeral6$) and ($\romannumeral7$). 
		\end{proof}
	
		\begin{proof}[Proof of Lemma \ref{lemma (ii) for consistency}]
			We first show ($\romannumeral1$). 
			By \eqref{eq mathcalLntilde}--\eqref{eq mathcalLn}, it holds that 
			\begin{equation} \label{eq |mathcalLn - mathcalLntilde|}
				\sup_{\bm{\theta} \in \Theta} |\mathcal{L}_{n}(\bm{\theta}) - \widetilde{\mathcal{L}}_{n}(\bm{\theta})| 
				\leq \frac{1}{2n} \sum_{t=1}^{n} \sup_{\bm{\theta} \in \Theta} \left|\mathbf{y}_{t}^{\prime} \left(H_{t}^{-1}(\bm{\theta}) - \widetilde{H}_{t}^{-1}(\bm{\theta})\right) \mathbf{y}_{t}\right| 
				+ \frac{1}{2n} \sum_{t=1}^{n} \sup_{\bm{\theta} \in \Theta} \left|\ln |H_{t}(\bm{\theta})| - \ln |\widetilde{H}_{t}(\bm{\theta})|\right|. 
			\end{equation}
			Hence it suffices to show that the following results hold for large enough $t$ and some constants $0 < \rho < 1$ and $c > 0$: 
			\begin{enumerate}
				\item[(a1)] $\sup_{\bm{\theta} \in \Theta} |\mathbf{y}_{t}^{\prime} (H_{t}^{-1}(\bm{\theta}) - \widetilde{H}_{t}^{-1}(\bm{\theta})) \mathbf{y}_{t}| \leq c \rho^{t}$ a.s.; 
				\item[(a2)] $\sup_{\bm{\theta} \in \Theta} |\ln |H_{t}(\bm{\theta})| - \ln |\widetilde{H}_{t}(\bm{\theta})|| \leq c \rho^{t}$ a.s.. 
			\end{enumerate}
			For (a1), using the facts $\tr(AB) \leq \|A\|_{F} \|B\|_{F}$ and $\|A\|_{F} \leq \sqrt{c_{1}} \|A\|_{2}$ for any matrices $A$ and $B$ with $c_{1} = \rank(A)$, it can be shown that 
			\begin{align*}
				&\sup_{\bm{\theta} \in \Theta} \left|\mathbf{y}_{t}^{\prime} \left(H_{t}^{-1}(\bm{\theta}) - \widetilde{H}_{t}^{-1}(\bm{\theta})\right) \mathbf{y}_{t}\right| 
				= \sup_{\bm{\theta} \in \Theta} \left|\tr\left[\left(H_{t}^{-1}(\bm{\theta}) - \widetilde{H}_{t}^{-1}(\bm{\theta})\right) \mathbf{y}_{t} \mathbf{y}_{t}^{\prime}\right]\right| \\
				\leq& \sup_{\bm{\theta} \in \Theta} \left\|H_{t}^{-1}(\bm{\theta}) - \widetilde{H}_{t}^{-1}(\bm{\theta})\right\|_{F} \left\|\mathbf{y}_{t} \mathbf{y}_{t}^{\prime}\right\|_{F} 
				\leq \sqrt{m} \sup_{\bm{\theta} \in \Theta} \left\|H_{t}^{-1}(\bm{\theta}) - \widetilde{H}_{t}^{-1}(\bm{\theta})\right\|_{2} \left\|\mathbf{y}_{t} \mathbf{y}_{t}^{\prime}\right\|_{F}. 
			\end{align*}
			This together with $\sup_{\bm{\theta} \in \Theta} \|\widetilde{H}_{t}^{-1}(\bm{\theta}) - H_{t}^{-1}(\bm{\theta})\|_{2} \leq c \rho^{t}$ a.s. for large enough $t$ and some constants $0 < \rho < 1$ and $c > 0$ by Lemma \ref{lemma varrho to the power of t}(\romannumeral4) and $\left\|\mathbf{y}_{t} \mathbf{y}_{t}^{\prime}\right\|_{F}$ is finite a.s. under the condition $E\|\ln\mathbf{y}^{\odot 2}_{t}\|_{2} < \infty$, implies that (a1) holds. 
			Moreover, (a2) holds by Lemma \ref{lemma varrho to the power of t}(\romannumeral5). 
			As a result, $\lim_{n \to \infty} \sup_{\bm{\theta} \in \Theta} |\mathcal{L}_{n}(\bm{\theta}) - \widetilde{\mathcal{L}}_{n}(\bm{\theta})| = 0$ a.s. follows from \eqref{eq |mathcalLn - mathcalLntilde|} and (a1)--(a2). 

			Next we show ($\romannumeral2$). 
			Recall that $\bm{\theta} = (\bm{\delta}^{\prime}, \bm{\beta}^{\prime})^{\prime}$. 
			Denote $\delta_{\ell}$ as the $\ell$-th element of $\bm{\delta}$ and $\beta_{\ell}$ as the $\ell$-th element of $\bm{\beta}$. 
			By \eqref{eq first derivative of Ht}, \eqref{eq first derivative of lt} and the fact $\tr(AB) \leq \|A\|_{F} \|B\|_{F}$ for any matrices $A$ and $B$, it can be shown that 
			\begin{small}
			\begin{align} \label{eq difference of the first derivative of lt wrt delta}
				&2 \left|\frac{\partial\ell_{t}(\bm{\theta})}{\partial\delta_{\ell}} - \frac{\partial\widetilde{\ell}_{t}(\bm{\theta})}{\partial\delta_{\ell}}\right| \notag\\
				\leq& \left|\mathbf{y}_{t}^{\prime} \left(H_{t}^{-1}(\bm{\theta}) \frac{\partial H_{t}(\bm{\theta})}{\partial\delta_{\ell}} H_{t}^{-1}(\bm{\theta}) - \widetilde{H}_{t}^{-1}(\bm{\theta}) \frac{\partial\widetilde{H}_{t}(\bm{\theta})}{\partial\delta_{\ell}} \widetilde{H}_{t}^{-1}(\bm{\theta})\right) \mathbf{y}_{t}\right| \notag\\
				&+ \left|\tr\left(H_{t}^{-1}(\bm{\theta}) \frac{\partial H_{t}(\bm{\theta})}{\partial\delta_{\ell}} - \widetilde{H}_{t}^{-1}(\bm{\theta}) \frac{\partial\widetilde{H}_{t}(\bm{\theta})}{\partial\delta_{\ell}}\right)\right| \notag\\
				\leq& 2 \left|\tr\left[\left(D_{t}^{-1}(\bm{\delta}) \frac{\partial D_{t}(\bm{\delta})}{\partial\delta_{\ell}} H_{t}^{-1}(\bm{\theta}) - \widetilde{D}_{t}^{-1}(\bm{\delta}) \frac{\partial\widetilde{D}_{t}(\bm{\delta})}{\partial\delta_{\ell}} \widetilde{H}_{t}^{-1}(\bm{\theta})\right) \mathbf{y}_{t} \mathbf{y}_{t}^{\prime}\right]\right| \notag\\
				&+ \left|\tr\left[\left(D_{t}^{-1}(\bm{\delta}) R_{t}^{-1}(\bm{\theta}) \frac{\partial R_{t}(\bm{\theta})}{\partial\delta_{\ell}} R_{t}^{-1}(\bm{\theta}) D_{t}^{-1}(\bm{\delta}) - \widetilde{D}_{t}^{-1}(\bm{\delta}) \widetilde{R}_{t}^{-1}(\bm{\theta}) \frac{\partial\widetilde{R}_{t}(\bm{\theta})}{\partial\delta_{\ell}} \widetilde{R}_{t}^{-1}(\bm{\theta}) \widetilde{D}_{t}^{-1}(\bm{\delta})\right) \mathbf{y}_{t} \mathbf{y}_{t}^{\prime}\right]\right| \notag\\
				&+ 2 \left|\tr\left(D_{t}^{-1}(\bm{\delta}) \frac{\partial D_{t}(\bm{\delta})}{\partial\delta_{\ell}} - \widetilde{D}_{t}^{-1}(\bm{\delta}) \frac{\partial\widetilde{D}_{t}(\bm{\delta})}{\partial\delta_{\ell}}\right)\right| 
				+ \left|\tr\left(R_{t}^{-1}(\bm{\theta}) \frac{\partial R_{t}(\bm{\theta})}{\partial\delta_{\ell}} - \widetilde{R}_{t}^{-1}(\bm{\theta}) \frac{\partial\widetilde{R}_{t}(\bm{\theta})}{\partial\delta_{\ell}}\right)\right| \notag\\
				\leq& 2 \left\|D_{t}^{-1}(\bm{\delta}) \frac{\partial D_{t}(\bm{\delta})}{\partial\delta_{\ell}} H_{t}^{-1}(\bm{\theta}) - \widetilde{D}_{t}^{-1}(\bm{\delta}) \frac{\partial\widetilde{D}_{t}(\bm{\delta})}{\partial\delta_{\ell}} \widetilde{H}_{t}^{-1}(\bm{\theta})\right\|_{F} \left\|\mathbf{y}_{t} \mathbf{y}_{t}^{\prime}\right\|_{F} \notag\\
				&+ \left\|D_{t}^{-1}(\bm{\delta}) R_{t}^{-1}(\bm{\theta}) \frac{\partial R_{t}(\bm{\theta})}{\partial\delta_{\ell}} R_{t}^{-1}(\bm{\theta}) D_{t}^{-1}(\bm{\delta}) - \widetilde{D}_{t}^{-1}(\bm{\delta}) \widetilde{R}_{t}^{-1}(\bm{\theta}) \frac{\partial\widetilde{R}_{t}(\bm{\theta})}{\partial\delta_{\ell}} \widetilde{R}_{t}^{-1}(\bm{\theta}) \widetilde{D}_{t}^{-1}(\bm{\delta})\right\|_{F} \left\|\mathbf{y}_{t} \mathbf{y}_{t}^{\prime}\right\|_{F} \notag\\
				&+ 2m \left\|D_{t}^{-1}(\bm{\delta}) \frac{\partial D_{t}(\bm{\delta})}{\partial\delta_{\ell}} - \widetilde{D}_{t}^{-1}(\bm{\delta}) \frac{\partial\widetilde{D}_{t}(\bm{\delta})}{\partial\delta_{\ell}}\right\|_{F} 
				+ m \left\|R_{t}^{-1}(\bm{\theta}) \frac{\partial R_{t}(\bm{\theta})}{\partial\delta_{\ell}} - \widetilde{R}_{t}^{-1}(\bm{\theta}) \frac{\partial\widetilde{R}_{t}(\bm{\theta})}{\partial\delta_{\ell}}\right\|_{F}, 
			\end{align}
			\end{small}
			and 
			\begin{small}
			\begin{align} \label{eq difference of the first derivative of lt wrt beta}
				&2 \left|\frac{\partial\ell_{t}(\bm{\theta})}{\partial\beta_{\ell}} - \frac{\partial\widetilde{\ell}_{t}(\bm{\theta})}{\partial\beta_{\ell}}\right| \notag\\
				\leq& \left|\mathbf{y}_{t}^{\prime} \left(H_{t}^{-1}(\bm{\theta}) \frac{\partial H_{t}(\bm{\theta})}{\partial\beta_{\ell}} H_{t}^{-1}(\bm{\theta}) - \widetilde{H}_{t}^{-1}(\bm{\theta}) \frac{\partial\widetilde{H}_{t}(\bm{\theta})}{\partial\beta_{\ell}} \widetilde{H}_{t}^{-1}(\bm{\theta})\right) \mathbf{y}_{t}\right| \notag\\
				&+ \left|\tr\left(H_{t}^{-1}(\bm{\theta}) \frac{\partial H_{t}(\bm{\theta})}{\partial\beta_{\ell}} - \widetilde{H}_{t}^{-1}(\bm{\theta}) \frac{\partial\widetilde{H}_{t}(\bm{\theta})}{\partial\beta_{\ell}}\right)\right| \notag\\
				\leq& \left|\tr\left[\left(D_{t}^{-1}(\bm{\delta}) R_{t}^{-1}(\bm{\theta}) \frac{\partial R_{t}(\bm{\theta})}{\partial\beta_{\ell}} R_{t}^{-1}(\bm{\theta}) D_{t}^{-1}(\bm{\delta}) - \widetilde{D}_{t}^{-1}(\bm{\delta}) \widetilde{R}_{t}^{-1}(\bm{\theta}) \frac{\partial\widetilde{R}_{t}(\bm{\theta})}{\partial\beta_{\ell}} \widetilde{R}_{t}^{-1}(\bm{\theta}) \widetilde{D}_{t}^{-1}(\bm{\delta})\right) \mathbf{y}_{t} \mathbf{y}_{t}^{\prime}\right]\right| \notag\\
				&+ \left|\tr\left(R_{t}^{-1}(\bm{\theta}) \frac{\partial R_{t}(\bm{\theta})}{\partial\beta_{\ell}} - \widetilde{R}_{t}^{-1}(\bm{\theta}) \frac{\partial\widetilde{R}_{t}(\bm{\theta})}{\partial\beta_{\ell}}\right)\right| \notag\\
				\leq& \left\|D_{t}^{-1}(\bm{\delta}) R_{t}^{-1}(\bm{\theta}) \frac{\partial R_{t}(\bm{\theta})}{\partial\beta_{\ell}} R_{t}^{-1}(\bm{\theta}) D_{t}^{-1}(\bm{\delta}) - \widetilde{D}_{t}^{-1}(\bm{\delta}) \widetilde{R}_{t}^{-1}(\bm{\theta}) \frac{\partial\widetilde{R}_{t}(\bm{\theta})}{\partial\beta_{\ell}} \widetilde{R}_{t}^{-1}(\bm{\theta}) \widetilde{D}_{t}^{-1}(\bm{\delta})\right\|_{F} \left\|\mathbf{y}_{t} \mathbf{y}_{t}^{\prime}\right\|_{F} \notag\\
				&+ m \left\|R_{t}^{-1}(\bm{\theta}) \frac{\partial R_{t}(\bm{\theta})}{\partial\beta_{\ell}} - \widetilde{R}_{t}^{-1}(\bm{\theta}) \frac{\partial\widetilde{R}_{t}(\bm{\theta})}{\partial\beta_{\ell}}\right\|_{F}. 
			\end{align}
			\end{small}
			Note that by \eqref{eq first derivative of Dt} and the facts $\|A + B\|_{F} \leq \|A\|_{F} + \|B\|_{F}$ and $\|A B\|_{F} \leq \|A\|_{F} \|B\|_{F}$ for any matrices $A$ and $B$, it holds that 
			\begin{small}
			\begin{align}
				&\left\|D_{t}^{-1}(\bm{\delta}) \frac{\partial D_{t}(\bm{\delta})}{\partial\delta_{\ell}}\right\|_{F} 
				= \frac{1}{2} \left\|\Diag\left\{\frac{\partial\ln\bm{h}_{t}(\bm{\delta})}{\partial\delta_{\ell}}\right\}\right\|_{F},  \label{eq 1 in difference of the first derivative of lt}\\ 
				&\left\|D_{t}^{-1}(\bm{\delta}) \frac{\partial D_{t}(\bm{\delta})}{\partial\delta_{\ell}} - \widetilde{D}_{t}^{-1}(\bm{\delta}) \frac{\partial\widetilde{D}_{t}(\bm{\delta})}{\partial\delta_{\ell}}\right\|_{F} 
				= \frac{1}{2} \left\|\Diag\left\{\frac{\partial\ln\bm{h}_{t}(\bm{\delta})}{\partial\delta_{\ell}}\right\} - \Diag\left\{\frac{\partial\ln\bm{h}_{t}(\bm{\delta})}{\partial\delta_{\ell}}\right\}\right\|_{F},  \label{eq 2 in difference of the first derivative of lt}\\
				&\left\|D_{t}^{-1}(\bm{\delta}) \frac{\partial D_{t}(\bm{\delta})}{\partial\delta_{\ell}} H_{t}^{-1}(\bm{\theta}) - \widetilde{D}_{t}^{-1}(\bm{\delta}) \frac{\partial\widetilde{D}_{t}(\bm{\delta})}{\partial\delta_{\ell}} \widetilde{H}_{t}^{-1}(\bm{\theta})\right\|_{F} \notag\\
				\leq& \left\|\left(D_{t}^{-1}(\bm{\delta}) \frac{\partial D_{t}(\bm{\delta})}{\partial\delta_{\ell}} - \widetilde{D}_{t}^{-1}(\bm{\delta}) \frac{\partial\widetilde{D}_{t}(\bm{\delta})}{\partial\delta_{\ell}}\right) H_{t}^{-1}(\bm{\theta})\right\|_{F} 
				+ \left\|\widetilde{D}_{t}^{-1}(\bm{\delta}) \frac{\partial\widetilde{D}_{t}(\bm{\delta})}{\partial\delta_{\ell}} \left(H_{t}^{-1}(\bm{\theta}) - \widetilde{H}_{t}^{-1}(\bm{\theta})\right)\right\|_{F} \notag\\
				\leq& \left\|D_{t}^{-1}(\bm{\delta}) \frac{\partial D_{t}(\bm{\delta})}{\partial\delta_{\ell}} - \widetilde{D}_{t}^{-1}(\bm{\delta}) \frac{\partial\widetilde{D}_{t}(\bm{\delta})}{\partial\delta_{\ell}}\right\|_{F} \left\|H_{t}^{-1}(\bm{\theta})\right\|_{F} 
				+ \left\|\widetilde{D}_{t}^{-1}(\bm{\delta}) \frac{\partial\widetilde{D}_{t}(\bm{\delta})}{\partial\delta_{\ell}}\right\|_{F} \left\|H_{t}^{-1}(\bm{\theta}) - \widetilde{H}_{t}^{-1}(\bm{\theta})\right\|_{F},  \label{eq 3 in difference of the first derivative of lt} 
			\end{align}
			\end{small}
			and 
			\begin{small}
			\begin{align} \label{eq 4 in difference of the first derivative of lt}
				&\left\|D_{t}^{-1}(\bm{\delta}) R_{t}^{-1}(\bm{\theta}) \frac{\partial R_{t}(\bm{\theta})}{\partial\theta_{\ell}} R_{t}^{-1}(\bm{\theta}) D_{t}^{-1}(\bm{\delta}) - \widetilde{D}_{t}^{-1}(\bm{\delta}) \widetilde{R}_{t}^{-1}(\bm{\theta}) \frac{\partial\widetilde{R}_{t}(\bm{\theta})}{\partial\theta_{\ell}} \widetilde{R}_{t}^{-1}(\bm{\theta}) \widetilde{D}_{t}^{-1}(\bm{\delta})\right\|_{F} \notag\\
				\leq& \left\|D_{t}^{-1}(\bm{\delta}) - \widetilde{D}_{t}^{-1}(\bm{\delta})\right\|_{F} \left\|D_{t}^{-1}(\bm{\delta})\right\|_{F} \left\|R_{t}^{-1}(\bm{\theta})\right\|_{F}^{2} \left\|\frac{\partial R_{t}(\bm{\theta})}{\partial\theta_{\ell}}\right\|_{F} \notag\\
				&+ \left\|D_{t}^{-1}(\bm{\delta}) - \widetilde{D}_{t}^{-1}(\bm{\delta})\right\|_{F} \left\|\widetilde{D}_{t}^{-1}(\bm{\delta})\right\|_{F} \left\|\widetilde{R}_{t}^{-1}(\bm{\theta})\right\|_{F}^{2} \left\|\frac{\partial \widetilde{R}_{t}(\bm{\theta})}{\partial\theta_{\ell}}\right\|_{F} \notag\\
				&+ \left\|R_{t}^{-1}(\bm{\theta}) - \widetilde{R}_{t}^{-1}(\bm{\theta})\right\|_{F} \left\|D_{t}^{-1}(\bm{\delta})\right\|_{F} \left\|\widetilde{D}_{t}^{-1}(\bm{\delta})\right\|_{F} \left\|R_{t}^{-1}(\bm{\theta})\right\|_{F} \left\|\frac{\partial R_{t}(\bm{\theta})}{\partial\theta_{\ell}}\right\|_{F} \notag\\ 
				&+ \left\|R_{t}^{-1}(\bm{\theta}) - \widetilde{R}_{t}^{-1}(\bm{\theta})\right\|_{F} \left\|D_{t}^{-1}(\bm{\delta})\right\|_{F} \left\|\widetilde{D}_{t}^{-1}(\bm{\delta})\right\|_{F} \left\|\widetilde{R}_{t}^{-1}(\bm{\theta})\right\|_{F} \left\|\frac{\partial \widetilde{R}_{t}(\bm{\theta})}{\partial\theta_{\ell}}\right\|_{F} \notag\\
				&+ \left\|\frac{\partial R_{t}(\bm{\theta})}{\partial\theta_{\ell}} - \frac{\partial\widetilde{R}_{t}(\bm{\theta})}{\partial\theta_{\ell}}\right\|_{F} \left\|D_{t}^{-1}(\bm{\delta})\right\|_{F} \left\|\widetilde{D}_{t}^{-1}(\bm{\delta})\right\|_{F} \left\|R_{t}^{-1}(\bm{\theta})\right\|_{F} \left\|\widetilde{R}_{t}^{-1}(\bm{\theta})\right\|_{F}. 
			\end{align}
			\end{small}
			Then by \eqref{eq difference of the first derivative of lt wrt delta}--\eqref{eq 4 in difference of the first derivative of lt} and Lemma \ref{lemma varrho to the power of t}, together with the facts $\|\Diag\{\bm{a}\}\|_{F} = \|\bm{a}\|_{2}$ for any vector $\bm{a}$ and $\|A\|_{F} \leq \sqrt{c_{1}} \|A\|_{2}$ for any matrix $A$ with $c_{1} = \rank(A)$, 
			we can conclude that ($\romannumeral2$) holds. 

			With analogous arguments in the proof of ($\romannumeral2$), we can establish ($\romannumeral3$). 
		\end{proof}
	
		\begin{proof}[Proof of Lemma \ref{lemma (iii) for consistency}]
			Recall that $\ell_{t}(\bm{\theta}) = \frac{1}{2} \mathbf{y}_{t}^{\prime} H_{t}^{-1}(\bm{\theta}) \mathbf{y}_{t} + \frac{1}{2} \ln |H_{t}(\bm{\theta})|$ with $\mathbf{y}_{t} = H_{t}^{1/2}(\bm{\theta}_{0}) \bm{\eta}_{t}$ and $H_{t}(\bm{\theta}) = D_{t}(\bm{\delta}) R_{t}(\bm{\theta}) D_{t}(\bm{\delta})$, 
			and $\{\bm{\eta}_{t}\}$ are $i.i.d.$ with zero mean and identity covariance matrix. 
			Using the properties that $|A^{\prime}A| \leq \|A\|_{2}^{2 \min\{d_{1}, d_{2}\}}$ holds for any $d_{1} \times d_{2}$ matrix $A$, and $\|A\|_{2} \leq \|A\|_{F}$ holds for any matrix $A$, it can be shown that 
			\begin{small}
			\begin{align} \label{eq E lt(theta0)}
				2 E \left|\ell_{t}(\bm{\theta}_{0})\right| 
				&= E \left|\mathbf{y}_{t}^{\prime} H_{t}^{-1}(\bm{\theta}_{0}) \mathbf{y}_{t} + \ln |H_{t}(\bm{\theta}_{0})|\right| 
				= E \left|\bm{\eta}_{t}^{\prime} H_{t}^{1/2}(\bm{\theta}_{0}) H_{t}^{-1}(\bm{\theta}_{0}) H_{t}^{1/2}(\bm{\theta}_{0}) \bm{\eta}_{t} + \ln |H_{t}(\bm{\theta}_{0})|\right| \notag\\
				&= E \left|\bm{\eta}_{t}^{\prime} \bm{\eta}_{t} + \ln |H_{t}(\bm{\theta}_{0})|\right| 
				= m + E \left|\ln |H_{t}(\bm{\theta}_{0})|\right| 
				= m + E \left|\ln |D_{t}(\bm{\delta}_{0}) R_{t}(\bm{\theta}_{0}) D_{t}(\bm{\delta}_{0})|\right| \notag\\
				&= m + E \left|\ln |D_{t}(\bm{\delta}_{0})|^{2} + \ln |R_{t}(\bm{\theta}_{0})|\right| 
				\leq m + E \left|\ln |D_{t}(\bm{\delta}_{0})|^{2}\right| + E \left|\ln |R_{t}(\bm{\theta}_{0})|\right| \notag\\
				&\leq m + E \left|\sum_{i=1}^{m} \ln h_{ii,t}(\bm{\delta}_{0})\right| + E \left|\ln \|R_{t}^{1/2}(\bm{\theta}_{0})\|_{2}^{2m}\right| \notag\\
				&\leq m + \sum_{i=1}^{m} E \left|\ln h_{ii,t}(\bm{\delta}_{0})\right| + m E \left|\ln \|R_{t}^{1/2}(\bm{\theta}_{0})\|_{F}^{2}\right|. 
			\end{align}
			\end{small}
			Since $R_{t}$ is a conditional correlation matrix, it holds that $\|R_{t}^{1/2}(\bm{\theta}_{0})\|_{F}^{2} = \tr(R_{t}(\bm{\theta}_{0})) = m$, which implies that $E \left|\ln \|R_{t}^{1/2}(\bm{\theta}_{0})\|_{F}^{2}\right| = |\ln m| < \infty$. 
			Moreover, we have that $\sum_{i=1}^{m} E |\ln h_{ii,t}(\bm{\delta}_{0})| < \infty$ by Lemma \ref{lemma moments of derivatives of lnhtunderline}($\romannumeral1$). 
			These together with \eqref{eq E lt(theta0)} imply that $E|\ell_{t}(\bm{\theta}_{0})| < \infty$ holds. 

			Next we establish that $E\ell_{t}(\bm{\theta}) > E\ell_{t}(\bm{\theta}_{0})$ holds for any $\bm{\theta} \neq \bm{\theta}_{0}$. 
			Note that 
			\begin{small}
			\begin{align*}
				&E\left(\mathbf{y}_{t}^{\prime} H_{t}^{-1}(\bm{\theta}) \mathbf{y}_{t}\right) 
				= E\left(\bm{\eta}_{t}^{\prime} H_{t}^{1/2}(\bm{\theta}_{0}) H_{t}^{-1}(\bm{\theta}) H_{t}^{1/2}(\bm{\theta}_{0}) \bm{\eta}_{t}\right) 
				= \tr\left[E\left(H_{t}^{1/2}(\bm{\theta}_{0}) H_{t}^{-1}(\bm{\theta}) H_{t}^{1/2}(\bm{\theta}_{0}) \bm{\eta}_{t} \bm{\eta}_{t}^{\prime}\right)\right] \\
				=& \tr\left\{E\left[E\left(H_{t}^{1/2}(\bm{\theta}_{0}) H_{t}^{-1}(\bm{\theta}) H_{t}^{1/2}(\bm{\theta}_{0}) \bm{\eta}_{t} \bm{\eta}_{t}^{\prime} \mid \mathcal{F}_{t-1}\right)\right]\right\} 
				= \tr\left[E\left(H_{t}^{1/2}(\bm{\theta}_{0}) H_{t}^{-1}(\bm{\theta}) H_{t}^{1/2}(\bm{\theta}_{0})\right)\right] \\
				=& E\left[\tr\left(H_{t}^{1/2}(\bm{\theta}_{0}) H_{t}^{-1}(\bm{\theta}) H_{t}^{1/2}(\bm{\theta}_{0})\right)\right] 
				= E\left[\tr\left(H_{t}(\bm{\theta}_{0}) H_{t}^{-1}(\bm{\theta})\right)\right] 
			\end{align*}
			\end{small}
			Hence it holds that 
			\begin{align} \label{eq Elt(theta) - Elt(theta0)}
				&2 \left[E\ell_{t}(\bm{\theta}) - E\ell_{t}(\bm{\theta}_{0})\right] 
				= E\left(\mathbf{y}_{t}^{\prime} H_{t}^{-1}(\bm{\theta}) \mathbf{y}_{t}\right) - E\left(\mathbf{y}_{t}^{\prime} H_{t}^{-1}(\bm{\theta}_{0}) \mathbf{y}_{t}\right) - E\left(\ln |H_{t}(\bm{\theta}_{0})| - \ln |H_{t}(\bm{\theta})|\right) \notag\\
				=& E\left[\tr\left(H_{t}(\bm{\theta}_{0}) H_{t}^{-1}(\bm{\theta})\right) - m - \ln |H_{t}(\bm{\theta}_{0}) H_{t}^{-1}(\bm{\theta})|\right]. 
			\end{align}
			Denote $\lambda_{t,1}(\bm{\theta}), \ldots, \lambda_{t,m}(\bm{\theta})$ as the eigenvalues of $H_{t}(\bm{\theta}_{0}) H_{t}^{-1}(\bm{\theta})$. 
			Since $H_{t}(\bm{\theta})$ is positive definite on $\Theta$ under Assumption \ref{assum parameters}($\romannumeral3$), $\lambda_{t,i}(\bm{\theta}) > 0$ holds for all $1 \leq i \leq m$. 
			Then by \eqref{eq Elt(theta) - Elt(theta0)} and the fact $\ln x \leq x -1$ for $x > 0$ with equality if and only if $x = 1$, we can obtain that 
			$$
				E\ell_{t}(\bm{\theta}) - E\ell_{t}(\bm{\theta}_{0}) 
				= \frac{1}{2} E \left(\sum_{i=1}^{m} \lambda_{t,i}(\bm{\theta}) - m - \ln \prod_{i=1}^{m} \lambda_{t,i}(\bm{\theta})\right) 
				= \frac{1}{2} \sum_{i=1}^{m} E \left(\lambda_{t,i}(\bm{\theta}) - 1 - \ln \lambda_{t,i}(\bm{\theta})\right) 
				\geq 0, 
			$$
			where the equality holds if and only if $\lambda_{t,i}(\bm{\theta}) = 1$ a.s. for all $1 \leq i \leq m$, which implies that $H_{t}(\bm{\theta}) = H_{t}(\bm{\theta}_{0})$ a.s. and thus $\bm{\theta} = \bm{\theta}_{0}$ by Proposition \ref{propo Identification}($\romannumeral2$). 
			As a result, $E\ell_{t}(\bm{\theta}) > E\ell_{t}(\bm{\theta}_{0})$ holds for any $\bm{\theta} \neq \bm{\theta}_{0}$. 
		\end{proof}
	
		\begin{proof}[Proof of Lemma \ref{lemma (iv) for consistency}]
			For any $\bm{\theta} \in \Theta$ and any positive integer $k$, let $V_{k}(\bm{\theta})$ be the open ball with center $\bm{\theta}$ and radius $1/k$. 
			It holds that 
			\begin{align*}
				&\liminf_{n \to \infty} \inf_{\bm{\theta}^{*} \in V_{k}(\bm{\theta}) \cap \Theta} \widetilde{\mathcal{L}}_{n}(\bm{\theta}^{*}) 
				= \liminf_{n \to \infty} \inf_{\bm{\theta}^{*} \in V_{k}(\bm{\theta}) \cap \Theta} \left[\mathcal{L}_{n}(\bm{\theta}^{*}) - \left(\mathcal{L}_{n}(\bm{\theta}^{*}) - \widetilde{\mathcal{L}}_{n}(\bm{\theta}^{*})\right)\right] \\
				\geq& \liminf_{n \to \infty} \inf_{\bm{\theta}^{*} \in V_{k}(\bm{\theta}) \cap \Theta} \mathcal{L}_{n}(\bm{\theta}^{*}) 
				- \limsup_{n \to \infty} \sup_{\bm{\theta}^{*} \in V_{k}(\bm{\theta}) \cap \Theta} \left|\mathcal{L}_{n}(\bm{\theta}^{*}) - \widetilde{\mathcal{L}}_{n}(\bm{\theta}^{*})\right| \\
				=& \liminf_{n \to \infty} \inf_{\bm{\theta}^{*} \in V_{k}(\bm{\theta}) \cap \Theta} \frac{1}{n} \sum_{t=1}^{n} \ell_{t}(\bm{\theta}^{*}) 
				- \limsup_{n \to \infty} \sup_{\bm{\theta}^{*} \in V_{k}(\bm{\theta}) \cap \Theta} \left|\mathcal{L}_{n}(\bm{\theta}^{*}) - \widetilde{\mathcal{L}}_{n}(\bm{\theta}^{*})\right| \\
				\geq& \liminf_{n \to \infty} \frac{1}{n} \sum_{t=1}^{n} \inf_{\bm{\theta}^{*} \in V_{k}(\bm{\theta}) \cap \Theta} \ell_{t}(\bm{\theta}^{*}) 
				- \limsup_{n \to \infty} \sup_{\bm{\theta}^{*} \in V_{k}(\bm{\theta}) \cap \Theta} \left|\mathcal{L}_{n}(\bm{\theta}^{*}) - \widetilde{\mathcal{L}}_{n}(\bm{\theta}^{*})\right|. 
			\end{align*}
			This together with Lemma \ref{lemma (ii) for consistency}($\romannumeral1$), implies that 
			\begin{align} \label{eq liminf inf mathcalLntilde(theta*)}
				\liminf_{n \to \infty} \inf_{\bm{\theta}^{*} \in V_{k}(\bm{\theta}) \cap \Theta} \widetilde{\mathcal{L}}_{n}(\bm{\theta}^{*}) 
				\geq \liminf_{n \to \infty} \frac{1}{n} \sum_{t=1}^{n} \inf_{\bm{\theta}^{*} \in V_{k}(\bm{\theta}) \cap \Theta} \ell_{t}(\bm{\theta}^{*}) \;\; \text{a.s.}. 
			\end{align}
			Next we use the following ergodic theorem under Assumption \ref{assum stationarity}: 
			if $\{X_{t}\}$ is a stationary and ergodic process such that $E X_{1} \in \mathbb{R} \cup \{\infty\}$, then $n^{-1} \sum_{t=1}^{n} X_{t}$ converges a.s. to $E X_{1}$ when $n \to \infty$ (see \citet{billingsley1995_supp}, pages 284 and 495; or the proof of Theorem 2.1 in \citet{francq2004maximum}). 
			To apply this theorem to $\{\inf_{\bm{\theta}^{*} \in V_{k}(\bm{\theta}) \cap \Theta} \ell_{t}(\bm{\theta}^{*})\}$, we need to verify that $E \ell_{t}^{-}(\bm{\theta}) = \max\{- E \ell_{t}(\bm{\theta}), 0\} < \infty$ on $\Theta$. 
			Recall that $\ell_{t}(\bm{\theta}) = \frac{1}{2} \mathbf{y}_{t}^{\prime} H_{t}^{-1}(\bm{\theta}) \mathbf{y}_{t} + \frac{1}{2} \ln |H_{t}(\bm{\theta})|$ with $H_{t}(\bm{\theta}) = D_{t}(\bm{\delta}) R_{t}(\bm{\theta}) D_{t}(\bm{\delta})$ and $D_{t}(\bm{\delta}) = \Diag\{h_{11,t}^{1/2}(\bm{\delta}), \ldots, h_{mm,t}^{1/2}(\bm{\delta})\}$. 
			Hence by Lemma \ref{lemma moments of derivatives of lnhtunderline}($\romannumeral1$) and the positive definiteness of $R_{t}(\bm{\theta})$ under Assumption \ref{assum parameters}($\romannumeral3$), it holds that 
			\begin{align*}
				- 2 E \ell_{t}(\bm{\theta}) 
				\leq& - E \ln |H_{t}(\bm{\theta})|
				= - E \ln |D_{t}(\bm{\delta}) R_{t}(\bm{\theta}) D_{t}(\bm{\delta})| 
				= - E \ln |D_{t}^{2}(\bm{\delta})| - E \ln |R_{t}(\bm{\theta})| \\
				=& - E \ln \prod_{i=1}^{m} h_{ii,t}(\bm{\delta}) - E \ln |R_{t}(\bm{\theta})| 
				= - \sum_{i=1}^{m} E \ln h_{ii,t}(\bm{\delta}) - E \ln |R_{t}(\bm{\theta})| < \infty. 
			\end{align*}
			And it follows that $E \ell_{t}^{-}(\bm{\theta}) = \max\{- E \ell_{t}(\bm{\theta}), 0\} < \infty$ on $\Theta$. 
			Then applying the aforementioned ergodic theorem to $\{\inf_{\bm{\theta}^{*} \in V_{k}(\bm{\theta}) \cap \Theta} \ell_{t}(\bm{\theta}^{*})\}$, we can obtain that 
			\begin{align*}
				\liminf_{n \to \infty} \frac{1}{n} \sum_{t=1}^{n} \inf_{\bm{\theta}^{*} \in V_{k}(\bm{\theta}) \cap \Theta} \ell_{t}(\bm{\theta}^{*}) = E \inf_{\bm{\theta}^{*} \in V_{k}(\bm{\theta}) \cap \Theta} \ell_{1}(\bm{\theta}^{*}) \;\; \text{a.s.}. 
			\end{align*}
			This together with \eqref{eq liminf inf mathcalLntilde(theta*)} implies that 
			\begin{align*}
				\liminf_{n \to \infty} \inf_{\bm{\theta}^{*} \in V_{k}(\bm{\theta}) \cap \Theta} \widetilde{\mathcal{L}}_{n}(\bm{\theta}^{*}) 
				\geq E \inf_{\bm{\theta}^{*} \in V_{k}(\bm{\theta}) \cap \Theta} \ell_{1}(\bm{\theta}^{*}) \;\; \text{a.s.}. 
			\end{align*}
			Furthermore, by the Beppo-Levi theorem, we have $E \inf_{\bm{\theta}^{*} \in V_{k}(\bm{\theta}) \cap \Theta} \ell_{1}(\bm{\theta}^{*})$ increases to $E \ell_{1}(\bm{\theta})$ as $k$ increases to the infinity. 
			As a result, we have 
			\begin{align*}
				\liminf_{n \to \infty} \inf_{\bm{\theta}^{*} \in V_{k}(\bm{\theta}) \cap \Theta} \widetilde{\mathcal{L}}_{n}(\bm{\theta}^{*}) 
				\geq E \ell_{1}(\bm{\theta}) \;\; \text{a.s.}, 
			\end{align*}
			that is the statement in this lemma holds. 
		\end{proof}
	
		\begin{proof}[Proof of Lemma \ref{lemma moments of derivatives of Psitminus1}]

			We first show ($\romannumeral1$). 
			It suffices to show that $E \sup_{\bm{\theta} \in \Theta} |{\partial\Psi_{t-1,ij}(\bm{\delta})}/{\partial\delta_{\ell}}|^{2+\epsilon} < \infty$ for all $1 \leq i,j \leq m$. 
			Note that $|\Psi_{t-1,ij}(\bm{\delta})| \leq 1$ on $\Theta$. 
			By \eqref{eq first derivative of Psitminus1}, we only need to show that 
			\begin{small}
			$$
				(\text{a1}) \;
				E \sup_{\bm{\theta} \in \Theta} \left|\frac
				{\frac{\partial\varepsilon_{i,t-1}(\bm{\delta})}{\partial\delta_{\ell}} \varepsilon_{j,t-1}(\bm{\delta})}
				{\left[\left(\sum_{h=1}^{\Bbbk} \varepsilon_{i,t-h}^{2}(\bm{\delta})\right) \left(\sum_{h=1}^{\Bbbk} \varepsilon_{j,t-h}^{2}(\bm{\delta})\right)\right]^{1/2}}\right|^{2+\epsilon} < \infty; ~
				(\text{a2}) \;
				E \sup_{\bm{\theta} \in \Theta} \left|\frac
				{\varepsilon_{i,t-1}(\bm{\delta}) \frac{\partial\varepsilon_{i,t-1}(\bm{\delta})}{\partial\delta_{\ell}}}
				{\sum_{h=1}^{\Bbbk} \varepsilon_{i,t-h}^{2}(\bm{\delta})}\right|^{2+\epsilon} < \infty. 
			$$
			\end{small}
			By \eqref{eq first derivative of varepsilon} and Lemma \ref{lemma moments of derivatives of lnhtunderline}($\romannumeral6$) under the condition $E\|\ln\mathbf{y}^{\odot 2}_{t}\|^{2+\epsilon} < \infty$, it can be shown that 
			\begin{align*}
				&E \sup_{\bm{\theta} \in \Theta} \left|\frac
				{\frac{\partial\varepsilon_{i,t-1}(\bm{\delta})}{\partial\delta_{\ell}} \varepsilon_{j,t-1}(\bm{\delta})}
				{\left[\left(\sum_{h=1}^{\Bbbk} \varepsilon_{i,t-h}^{2}(\bm{\delta})\right) \left(\sum_{h=1}^{\Bbbk} \varepsilon_{j,t-h}^{2}(\bm{\delta})\right)\right]^{1/2}}\right|^{2+\epsilon} \\
				=& E \sup_{\bm{\theta} \in \Theta} \left|\frac
				{-\frac{1}{2} \frac{\partial\ln h_{ii,t-1}(\bm{\delta})}{\partial\delta_{\ell}} \varepsilon_{i,t-1}(\bm{\delta}) \varepsilon_{j,t-1}(\bm{\delta})}
				{\left[\left(\sum_{h=1}^{\Bbbk} \varepsilon_{i,t-h}^{2}(\bm{\delta})\right) \left(\sum_{h=1}^{\Bbbk} \varepsilon_{j,t-h}^{2}(\bm{\delta})\right)\right]^{1/2}}\right|^{2+\epsilon} \\
				=& \frac{1}{4^{1+\epsilon/2}} E \sup_{\bm{\theta} \in \Theta} 
				\left[\left(\frac{\varepsilon_{i,t-1}^{2}(\bm{\delta})}{\sum_{h=1}^{\Bbbk} \varepsilon_{i,t-h}^{2}(\bm{\delta})}\right)^{1+\epsilon/2}
				\left(\frac{\varepsilon_{j,t-1}^{2}(\bm{\delta})}{\sum_{h=1}^{\Bbbk} \varepsilon_{j,t-h}^{2}(\bm{\delta})}\right)^{1+\epsilon/2}
				\left|\frac{\partial\ln h_{ii,t-1}(\bm{\delta})}{\partial\delta_{\ell}}\right|^{2+\epsilon}\right] \\
				\leq& \frac{1}{4^{1+\epsilon/2}} E \sup_{\bm{\theta} \in \Theta} \left|\frac{\partial\ln h_{ii,t-1}(\bm{\delta})}{\partial\delta_{\ell}}\right|^{2+\epsilon} 
				< \infty, 
			\end{align*}
			and 
			\begin{align*}
				&E \sup_{\bm{\theta} \in \Theta} \left|\frac
				{\varepsilon_{i,t-1}(\bm{\delta}) \frac{\partial\varepsilon_{i,t-1}(\bm{\delta})}{\partial\delta_{\ell}}}
				{\sum_{h=1}^{\Bbbk} \varepsilon_{i,t-h}^{2}(\bm{\delta})}\right|^{2+\epsilon} 
				= E \sup_{\bm{\theta} \in \Theta} \left|\frac
				{-\frac{1}{2} \frac{\partial\ln h_{ii,t-1}(\bm{\delta})}{\partial\delta_{\ell}} \varepsilon_{i,t-1}^{2}(\bm{\delta})}
				{\sum_{h=1}^{\Bbbk} \varepsilon_{i,t-h}^{2}(\bm{\delta})}\right|^{2+\epsilon} \\
				=& \frac{1}{4^{1+\epsilon/2}} E \sup_{\bm{\theta} \in \Theta} \left(\left|\frac
				{\varepsilon_{i,t-1}^{2}(\bm{\delta})}
				{\sum_{h=1}^{\Bbbk} \varepsilon_{i,t-h}^{2}(\bm{\delta})}\right|^{2+\epsilon} 
				\left|\frac{\partial\ln h_{ii,t-1}(\bm{\delta})}{\partial\delta_{\ell}}\right|^{2+\epsilon}\right) \\
				\leq& \frac{1}{4^{1+\epsilon/2}} E \sup_{\bm{\theta} \in \Theta} 
				\left|\frac{\partial\ln h_{ii,t-1}(\bm{\delta})}{\partial\delta_{\ell}}\right|^{2+\epsilon} < \infty. 
			\end{align*}
			Thus (a1) and (a2) hold, and then ($\romannumeral1$) holds. 

			Next we show ($\romannumeral2$). 
			Similarly, it suffices to verify that $E \sup_{\bm{\theta} \in \Theta} |{\partial^{2}\Psi_{t-1,ij}(\bm{\delta})}/{\partial\delta_{k} \partial\delta_{\ell}}|^{1+\epsilon/2} < \infty$ for all $1 \leq i,j \leq m$. 
			Then by \eqref{eq second derivative of Psitminus1}, $|\Psi_{t-1,ij}(\bm{\delta})| \leq 1$, 
			($\romannumeral1$), (a1)--(a2) 
			and the Cauchy-Schwarz inequality, we only need to show the following results hold: 
			\begin{align*}
				&(\text{b1}) \;\;
				E \sup_{\bm{\theta} \in \Theta} \left|\frac
				{\frac{\partial\varepsilon_{i,t-1}(\bm{\delta})}{\partial\delta_{\ell}} \frac{\partial\varepsilon_{j,t-1}(\bm{\delta})}{\partial\delta_{k}}}
				{\left[\left(\sum_{h=1}^{\Bbbk} \varepsilon_{i,t-h}^{2}(\bm{\delta})\right) \left(\sum_{h=1}^{\Bbbk} \varepsilon_{j,t-h}^{2}(\bm{\delta})\right)\right]^{1/2}}\right|^{1+\epsilon/2} < \infty; \\
				&(\text{b2}) \;\;
				E \sup_{\bm{\theta} \in \Theta} \left|\frac
				{\frac{\partial^{2}\varepsilon_{i,t-1}(\bm{\delta})}{\partial\delta_{k} \partial\delta_{\ell}} \varepsilon_{j,t-1}(\bm{\delta})}
				{\left[\left(\sum_{h=1}^{\Bbbk} \varepsilon_{i,t-h}^{2}(\bm{\delta})\right) \left(\sum_{h=1}^{\Bbbk} \varepsilon_{j,t-h}^{2}(\bm{\delta})\right)\right]^{1/2}}\right|^{1+\epsilon/2} < \infty. 
			\end{align*}
			For (b1), by \eqref{eq first derivative of varepsilon} and Lemma \ref{lemma moments of derivatives of lnhtunderline}($\romannumeral6$), we have that 
			\begin{small}
			\begin{align*}
				&E \sup_{\bm{\theta} \in \Theta} \left|\frac{\frac{\partial\varepsilon_{i,t-1}(\bm{\delta})}{\partial\delta_{\ell}}}{\left(\sum_{h=1}^{\Bbbk} \varepsilon_{i,t-h}^{2}(\bm{\delta})\right)^{1/2}}\right|^{2+\epsilon} 
				= E \sup_{\bm{\theta} \in \Theta} \left|\frac{-\frac{1}{2} \frac{\partial\ln h_{ii,t-1}(\bm{\delta})}{\partial\delta_{\ell}} \varepsilon_{i,t-1}(\bm{\delta})}{\left(\sum_{h=1}^{\Bbbk} \varepsilon_{i,t-h}^{2}(\bm{\delta})\right)^{1/2}}\right|^{2+\epsilon} \\
				=& \frac{1}{4^{1+\epsilon/2}} E \sup_{\bm{\theta} \in \Theta} \left(\left|\frac{\varepsilon_{i,t-1}^{2}(\bm{\delta})}{\sum_{h=1}^{\Bbbk} \varepsilon_{i,t-h}^{2}(\bm{\delta})}\right|^{1+\epsilon/2} \left|\frac{\partial\ln h_{ii,t-1}(\bm{\delta})}{\partial\delta_{\ell}}\right|^{2+\epsilon}\right) 
				\leq \frac{1}{4^{1+\epsilon/2}} E \sup_{\bm{\theta} \in \Theta} \left|\frac{\partial\ln h_{ii,t-1}(\bm{\delta})}{\partial\delta_{\ell}}\right|^{2+\epsilon} 
				< \infty. 
			\end{align*}
			\end{small}
			It then follows that (b1) holds by the Cauchy-Schwarz inequality. 
			For (b2), by \eqref{eq second derivative of varepsilon}, it can be shown that 
			\begin{align*}
				&E \sup_{\bm{\theta} \in \Theta} \left|\frac
				{\frac{\partial^{2}\varepsilon_{i,t-1}(\bm{\delta})}{\partial\delta_{k} \partial\delta_{\ell}}}
				{\left(\sum_{h=1}^{\Bbbk} \varepsilon_{i,t-h}^{2}(\bm{\delta})\right)^{1/2}}\right|^{1+\epsilon/2} \\
				=& E \sup_{\bm{\theta} \in \Theta} \left|\frac
				{-\frac{1}{2} \frac{\partial^{2}\ln h_{ii,t-1}(\bm{\delta})}{\partial\delta_{k} \partial\delta_{\ell}} \varepsilon_{i,t-1}(\bm{\delta}) + \frac{1}{4} \frac{\partial\ln h_{ii,t-1}(\bm{\delta})}{\partial\delta_{\ell}} \frac{\partial\ln h_{ii,t-1}(\bm{\delta})}{\partial\delta_{k}} \varepsilon_{i,t-1}(\bm{\delta})}
				{\left(\sum_{h=1}^{\Bbbk} \varepsilon_{i,t-h}^{2}(\bm{\delta})\right)^{1/2}}\right|^{1+\epsilon/2} \\
				=& E \sup_{\bm{\theta} \in \Theta} \left|-\frac{1}{2} \frac{\partial^{2}\ln h_{ii,t-1}(\bm{\delta})}{\partial\delta_{k} \partial\delta_{\ell}} + \frac{1}{4} \frac{\partial\ln h_{ii,t-1}(\bm{\delta})}{\partial\delta_{\ell}} \frac{\partial\ln h_{ii,t-1}(\bm{\delta})}{\partial\delta_{k}}\right|^{1+\epsilon/2} \left(\frac{\varepsilon_{i,t-1}^{2}(\bm{\delta})}{\sum_{h=1}^{\Bbbk} \varepsilon_{i,t-h}^{2}(\bm{\delta})}\right)^{1/2+\epsilon/4} \\
				\leq& E \sup_{\bm{\theta} \in \Theta} \left|-\frac{1}{2} \frac{\partial^{2}\ln h_{ii,t-1}(\bm{\delta})}{\partial\delta_{k} \partial\delta_{\ell}} + \frac{1}{4} \frac{\partial\ln h_{ii,t-1}(\bm{\delta})}{\partial\delta_{\ell}} \frac{\partial\ln h_{ii,t-1}(\bm{\delta})}{\partial\delta_{k}}\right|^{1+\epsilon/2} \\
				\leq& E \sup_{\bm{\theta} \in \Theta} \left(\left|\frac{1}{2} \frac{\partial^{2}\ln h_{ii,t-1}(\bm{\delta})}{\partial\delta_{k} \partial\delta_{\ell}}\right| + \left|\frac{1}{4} \frac{\partial\ln h_{ii,t-1}(\bm{\delta})}{\partial\delta_{\ell}} \frac{\partial\ln h_{ii,t-1}(\bm{\delta})}{\partial\delta_{k}}\right|\right)^{1+\epsilon/2}. 
			\end{align*}
			This together with $E \sup_{\bm{\theta} \in \Theta} \left|{\partial\ln h_{ii,t-1}(\bm{\delta})}/{\partial\delta_{\ell}}\right|^{2+\epsilon} < \infty$ and $E \sup_{\bm{\theta} \in \Theta} \left|{\partial^{2}\ln h_{ii,t-1}(\bm{\delta})}/{\partial\delta_{k} \partial\delta_{\ell}}\right|^{1+\epsilon/2} < \infty$ by Lemmas \ref{lemma moments of derivatives of lnhtunderline}($\romannumeral6$)--($\romannumeral7$), the Minkowski inequality and the Cauchy-Schwarz inequality, implies that 
			\begin{align*}
				E \sup_{\bm{\theta} \in \Theta} \left|\frac
				{\frac{\partial^{2}\varepsilon_{i,t-1}(\bm{\delta})}{\partial\delta_{k} \partial\delta_{\ell}}}
				{\left(\sum_{h=1}^{\Bbbk} \varepsilon_{i,t-h}^{2}(\bm{\delta})\right)^{1/2}}\right|^{1+\epsilon/2} < \infty. 
			\end{align*}
			Then we can obtain that 
			\begin{align*}
				&E \sup_{\bm{\theta} \in \Theta} \left|\frac
				{\frac{\partial^{2}\varepsilon_{i,t-1}(\bm{\delta})}{\partial\delta_{k} \partial\delta_{\ell}} \varepsilon_{j,t-1}(\bm{\delta})}
				{\left[\left(\sum_{h=1}^{\Bbbk} \varepsilon_{i,t-h}^{2}(\bm{\delta})\right) \left(\sum_{h=1}^{\Bbbk} \varepsilon_{j,t-h}^{2}(\bm{\delta})\right)\right]^{1/2}}\right|^{1+\epsilon/2} \\
				=& E \sup_{\bm{\theta} \in \Theta} \left[\left|\frac
				{\frac{\partial^{2}\varepsilon_{i,t-1}(\bm{\delta})}{\partial\delta_{k} \partial\delta_{\ell}}}
				{\left(\sum_{h=1}^{\Bbbk} \varepsilon_{i,t-h}^{2}(\bm{\delta})\right)^{1/2}}\right|^{1+\epsilon/2} 
				\left(\frac{\varepsilon_{j,t-1}^{2}(\bm{\delta})}{\sum_{h=1}^{\Bbbk} \varepsilon_{j,t-h}^{2}(\bm{\delta})}\right)^{1/2+\epsilon/4}\right] \\
				\leq& E \sup_{\bm{\theta} \in \Theta} \left|\frac
				{\frac{\partial^{2}\varepsilon_{i,t-1}(\bm{\delta})}{\partial\delta_{k} \partial\delta_{\ell}}}
				{\left(\sum_{h=1}^{\Bbbk} \varepsilon_{i,t-h}^{2}(\bm{\delta})\right)^{1/2}}\right|^{1+\epsilon/2} 
				< \infty. 
			\end{align*}
			Thus (b2) holds. 
			As a result, ($\romannumeral2$) is established. 

			With analogous arguments in the proofs of ($\romannumeral1$) and ($\romannumeral2$), we can show that ($\romannumeral3$) holds. 
		\end{proof}

		\begin{proof}[Proof of Lemma \ref{lemma moments of derivatives of Rt}]
			We first show ($\romannumeral1$). 
			Recall that $\bm{\theta} = (\bm{\delta}^{\prime}, \bm{\beta}^{\prime})^{\prime}$ with $\bm{\beta} = (\beta_{1}, \beta_{2}, \underline{\bm{r}}^{\prime})^{\prime}$ and $\underline{\bm{r}} = \ovechsec(\underline{R})$. 
			Let $\delta_{\ell}$ be the $\ell$-th element of $\bm{\delta}$, and $\underline{R} = [\underline{R}_{ij}]$. 
			By \eqref{eq first derivative of Rt} and the triangle inequality of matrix norms, it holds that 
			\begin{small}
			\begin{align*}
				E \sup_{\bm{\theta} \in \Theta} \left\|\frac{\partial R_{t}(\bm{\theta})}{\partial\delta_{\ell}}\right\|^{2+\epsilon} 
				= E \sup_{\bm{\theta} \in \Theta} \left\|\beta_{1} \sum_{h=0}^{\infty} \beta_{2}^{h} \frac{\partial\Psi_{t-h-1}(\bm{\delta})}{\partial\delta_{\ell}}\right\|^{2+\epsilon} 
				\leq E \sup_{\bm{\theta} \in \Theta} \left(\beta_{1} \sum_{h=0}^{\infty} \beta_{2}^{h} \left\|\frac{\partial\Psi_{t-h-1}(\bm{\delta})}{\partial\delta_{\ell}}\right\|\right)^{2+\epsilon} 
			\end{align*}
			\end{small}
			This together with $0 < \beta_{1} < 1$, $0 < \beta_{2} < 1$ and $E \sup_{\bm{\theta} \in \Theta} \|{\partial\Psi_{t-1}(\bm{\delta})}/{\partial\delta_{\ell}}\|^{2+\epsilon} < \infty$ by Lemma \ref{lemma moments of derivatives of Psitminus1}($\romannumeral1$), implies that 
			\begin{align*}
				E \sup_{\bm{\theta} \in \Theta} \left\|\frac{\partial R_{t}(\bm{\theta})}{\partial\delta_{\ell}}\right\|^{2+\epsilon} < \infty. 
			\end{align*}
			Moreover, since $0 < \beta_{1} < 1$, $0 < \beta_{2} < 1$, and all elements of $\underline{R}$ and $\Psi_{t}(\bm{\delta})$ are between $-1$ and 1, it can be easily shown that $E \sup_{\bm{\theta} \in \Theta} \|{\partial R_{t}(\bm{\theta})}/{\partial\beta_{1}}\|^{2+\epsilon} < \infty$, $E \sup_{\bm{\theta} \in \Theta} \|{\partial R_{t}(\bm{\theta})}/{\partial\beta_{2}}\|^{2+\epsilon} < \infty$ and $E \sup_{\bm{\theta} \in \Theta} \|{\partial R_{t}(\bm{\theta})}/{\partial\underline{R}_{ij}}\|^{2+\epsilon} < \infty$ by \eqref{eq first derivative of Rt}. 
			As a result, ($\romannumeral1$) is established. 

			Similarly, using \eqref{eq second derivative of Rt} and Lemma \ref{lemma moments of derivatives of Psitminus1}($\romannumeral2$), we can obtain that $E \sup_{\bm{\theta} \in \Theta} \|{\partial^{2} R_{t}(\bm{\theta})}/{\partial\theta_{k} \partial\theta_{\ell}}\|^{1+\epsilon/2} < \infty$. Thus ($\romannumeral2$) holds. 
			And we can also establish ($\romannumeral3$) with analogous arguments. 
		\end{proof}

		\begin{proof}[Proof of Lemma \ref{lemma moments of Htinverse product derivatives of Ht}]
			We first show ($\romannumeral1$). 
			By \eqref{eq first derivative of Dt} and $E \sup_{\bm{\theta} \in \Theta} \|{\partial\ln\bm{h}_{t}(\bm{\delta})}/{\partial\delta_{\ell}}\|^{2+\epsilon} < \infty$ by Lemma \ref{lemma moments of derivatives of lnhtunderline}($\romannumeral6$), we have that 
			\begin{align*}
				&E \sup_{\bm{\theta} \in \Theta} \left\|D_{t}^{-1}(\bm{\delta}) \frac{\partial D_{t}(\bm{\delta})}{\partial\delta_{\ell}}\right\|^{2+\epsilon} 
				= E \sup_{\bm{\theta} \in \Theta} \left\|D_{t}^{-1}(\bm{\delta}) \frac{1}{2} D_{t}(\bm{\delta}) \Diag\left\{\frac{\partial\ln\bm{h}_{t}(\bm{\delta})}{\partial\delta_{\ell}}\right\}\right\|^{2+\epsilon} \\
				=& E \sup_{\bm{\theta} \in \Theta} \left\|\frac{1}{2} \Diag\left\{\frac{\partial\ln\bm{h}_{t}(\bm{\delta})}{\partial\delta_{\ell}}\right\}\right\|^{2+\epsilon} 
				< \infty, 
			\end{align*}
			Moreover, by \eqref{eq second derivative of Dt} and the properties of the induced matrix norm $\|\cdot\|$ that $\|A + B\| \leq \|A\| + \|B\|$ and $\|A B\| \leq \|A\| \|B\|$ for any matrices $A$ and $B$, it holds that 
			\begin{small}
			\begin{align*}
				&E \sup_{\bm{\theta} \in \Theta} \left\|D_{t}^{-1}(\bm{\delta}) \frac{\partial^{2} D_{t}(\bm{\delta})}{\partial\delta_{k} \partial\delta_{\ell}}\right\|^{1+\epsilon/2} \\
				=& E \sup_{\bm{\theta} \in \Theta} \left\|D_{t}^{-1}(\bm{\delta}) \left(\frac{1}{4} D_{t}(\bm{\delta}) \Diag\left\{\frac{\partial\ln\bm{h}_{t}(\bm{\delta})}{\partial\delta_{k}}\right\} \Diag\left\{\frac{\partial\ln\bm{h}_{t}(\bm{\delta})}{\partial\delta_{\ell}}\right\} 
				+ \frac{1}{2} D_{t}(\bm{\delta}) \Diag\left\{\frac{\partial^{2}\ln\bm{h}_{t}(\bm{\delta})}{\partial\delta_{k} \partial\delta_{\ell}}\right\}\right)\right\|^{1+\epsilon/2} \\
				=& E \sup_{\bm{\theta} \in \Theta} \left\|\frac{1}{4} \Diag\left\{\frac{\partial\ln\bm{h}_{t}(\bm{\delta})}{\partial\delta_{k}}\right\} \Diag\left\{\frac{\partial\ln\bm{h}_{t}(\bm{\delta})}{\partial\delta_{\ell}}\right\} 
				+ \frac{1}{2} \Diag\left\{\frac{\partial^{2}\ln\bm{h}_{t}(\bm{\delta})}{\partial\delta_{k} \partial\delta_{\ell}}\right\}\right\|^{1+\epsilon/2} \\
				\leq& E \sup_{\bm{\theta} \in \Theta} \left(\frac{1}{4} \left\|\Diag\left\{\frac{\partial\ln\bm{h}_{t}(\bm{\delta})}{\partial\delta_{k}}\right\}\right\| \left\|\Diag\left\{\frac{\partial\ln\bm{h}_{t}(\bm{\delta})}{\partial\delta_{\ell}}\right\}\right\| 
				+ \frac{1}{2} \left\|\Diag\left\{\frac{\partial^{2}\ln\bm{h}_{t}(\bm{\delta})}{\partial\delta_{k} \partial\delta_{\ell}}\right\}\right\|\right)^{1+\epsilon/2}. 
			\end{align*}
			\end{small}
			This together with $E \sup_{\bm{\theta} \in \Theta} \|{\partial\ln\bm{h}_{t}(\bm{\delta})}/{\partial\delta_{\ell}}\|^{2+\epsilon} < \infty$ and $E \sup_{\bm{\theta} \in \Theta} \left\|{\partial^{2}\ln\bm{h}_{t}(\bm{\delta})}/{\partial\delta_{k} \partial\delta_{\ell}}\right\|^{1+\epsilon/2} < \infty$ by Lemmas \ref{lemma moments of derivatives of lnhtunderline}($\romannumeral6$)--($\romannumeral7$), the Minkowski inequality and the Cauchy-Schwarz inequality, implies that 
			\begin{align*}
				E \sup_{\bm{\theta} \in \Theta} \left\|D_{t}^{-1}(\bm{\delta}) \frac{\partial^{2} D_{t}(\bm{\delta})}{\partial\delta_{k} \partial\delta_{\ell}}\right\|^{1+\epsilon/2} < \infty. 
			\end{align*}
			With analogous arguments, we can also verify $E \sup_{\bm{\theta} \in \Theta} \|D_{t}^{-1}(\bm{\delta}) {\partial^{3} D_{t}(\bm{\delta})}/{\partial\delta_{j} \partial\delta_{k} \partial\delta_{\ell}}\|^{1+\epsilon/2} < \infty$. 
			Thus ($\romannumeral1$) holds. 

			Next we show ($\romannumeral2$). 
			Note that $R_{t}(\bm{\theta}) = (1 - \beta_{1} - \beta_{2}) \underline{R} + \beta_{1} \Psi_{t-1}(\bm{\delta}) + \beta_{2} R_{t-1}(\bm{\theta})$, with $0 <\beta_{1}, \beta_{2} < 1$ and $0 < \beta_{1} + \beta_{2} < 1$, is positive definite under Assumption \ref{assum parameters}($\romannumeral3$). 
			Then using Theorem 8.4.9 of \citet{Bernstein2009} and Assumption \ref{assum parameters}($\romannumeral3$), we can obtain that 
			$$
				\sup_{\bm{\theta} \in \Theta} \|R_{t}^{-1}(\bm{\theta})\|_{2} 
				= \sup_{\bm{\theta} \in \Theta} \lambda_{\max}(R_{t}^{-1}(\bm{\theta})) 
				= \sup_{\bm{\theta} \in \Theta} \lambda_{\min}^{-1}(R_{t}(\bm{\theta})) 
				\leq \sup_{\bm{\theta} \in \Theta} \lambda_{\min}^{-1}((1 - \beta_{1} - \beta_{2}) \underline{R}) 
				< \infty, 
			$$
			which implies that $\sup_{\bm{\theta} \in \Theta} \|R_{t}^{-1}(\bm{\theta})\| < \infty$ for any induced matrix norm $\|\cdot\|$ as the dimension is fixed. 
			This together with $\|A B\| \leq \|A\| \|B\|$ for any matrices $A$ and $B$ and Lemma \ref{lemma moments of derivatives of Rt}, implies that 
			\begin{align*}
				&E \sup_{\bm{\theta} \in \Theta} \left\|R_{t}^{-1}(\bm{\theta}) \frac{\partial R_{t}(\bm{\theta})}{\partial\theta_{\ell}}\right\|^{2+\epsilon} 
				\leq E \sup_{\bm{\theta} \in \Theta} \left\|\frac{\partial R_{t}(\bm{\theta})}{\partial\theta_{\ell}}\right\|^{2+\epsilon} 
				< \infty, \\
				&E \sup_{\bm{\theta} \in \Theta} \left\|R_{t}^{-1}(\bm{\theta}) \frac{\partial^{2} R_{t}(\bm{\theta})}{\partial\theta_{k} \partial\theta_{\ell}}\right\|^{1+\epsilon/2} 
				\leq E \sup_{\bm{\theta} \in \Theta} \left\|\frac{\partial^{2} R_{t}(\bm{\theta})}{\partial\theta_{k} \partial\theta_{\ell}}\right\|^{1+\epsilon/2} 
				< \infty, \;\; \text{and} \\
				&E \sup_{\bm{\theta} \in \Theta} \left\|R_{t}^{-1}(\bm{\theta}) \frac{\partial^{3} R_{t}(\bm{\theta})}{\partial\theta_{j} \partial\theta_{k} \partial\theta_{\ell}}\right\|^{1+\epsilon/2} 
				\leq E \sup_{\bm{\theta} \in \Theta} \left\|\frac{\partial^{3} R_{t}(\bm{\theta})}{\partial\theta_{j} \partial\theta_{k} \partial\theta_{\ell}}\right\|^{1+\epsilon/2} 
				< \infty. 
			\end{align*}
			Hence ($\romannumeral2$) holds. 

			Recall that $H_{t}(\bm{\theta}) = D_{t}(\bm{\delta}) R_{t}(\bm{\theta}) D_{t}(\bm{\delta})$. 
			By \eqref{eq first derivative of Ht} and \eqref{eq second derivative of Ht}, ($\romannumeral3$) directly follows from ($\romannumeral1$)--($\romannumeral2$) and the aforementioned properties of the induced matrix norm $\|\cdot\|$. 
		\end{proof}
	
		\begin{proof}[Proof of Lemma \ref{lemma moments of derivatives of lt}]
			We first show ($\romannumeral1$). 
			By \eqref{eq first derivative of lt}, it suffice to show that 
			\begin{enumerate}
				\item[(a1)] $E \left|\mathbf{y}_{t}^{\prime} H_{t}^{-1}(\bm{\theta}_{0}) \frac{\partial H_{t}(\bm{\theta}_{0})}{\partial\theta_{\ell}} H_{t}^{-1}(\bm{\theta}_{0}) \mathbf{y}_{t}\right|^{2} < \infty$ and $\sup_{\bm{\theta} \in \Theta} \left|\mathbf{y}_{t}^{\prime} H_{t}^{-1}(\bm{\theta}) \frac{\partial H_{t}(\bm{\theta})}{\partial\theta_{\ell}} H_{t}^{-1}(\bm{\theta}) \mathbf{y}_{t}\right|^{2}$ is finite a.s.; 
				\item[(a2)] $E \sup_{\bm{\theta} \in \Theta} \left|\tr\left(D_{t}^{-1}(\bm{\delta}) \frac{\partial D_{t}(\bm{\delta})}{\partial\delta_{\ell}}\right)\right|^{2} < \infty$ \;\; \text{and} \;\; 
				$E \sup_{\bm{\theta} \in \Theta} \left|\tr\left(R_{t}^{-1}(\bm{\theta}) \frac{\partial R_{t}(\bm{\theta})}{\partial\theta_{\ell}}\right)\right|^{2} < \infty$. 
			\end{enumerate}
			For (a1), recall that $\mathbf{y}_{t} = H_{t}^{1/2}(\bm{\theta}_{0}) \bm{\eta}_{t}$ and $\{\bm{\eta}_{t}\}$ are $i.i.d.$ with zero mean and identity covariance matrix. 
			Using the fact $\tr(AB) = \tr(BA) \leq \|A\|_{F} \|B\|_{F}$, we have that 
			\begin{align} \label{eq ytprime Htinverse dHt Htinverse yt}
				&\left|\mathbf{y}_{t}^{\prime} H_{t}^{-1}(\bm{\theta}) \frac{\partial H_{t}(\bm{\theta})}{\partial\theta_{\ell}} H_{t}^{-1}(\bm{\theta}) \mathbf{y}_{t}\right|^{2} 
				= \left|\tr\left(\mathbf{y}_{t}^{\prime} H_{t}^{-1}(\bm{\theta}) \frac{\partial H_{t}(\bm{\theta})}{\partial\theta_{\ell}} H_{t}^{-1}(\bm{\theta}) \mathbf{y}_{t}\right)\right|^{2} \notag\\
				=& \left|\tr\left(\bm{\eta}_{t}^{\prime} H_{t}^{1/2}(\bm{\theta}_{0}) H_{t}^{-1}(\bm{\theta}) \frac{\partial H_{t}(\bm{\theta})}{\partial\theta_{\ell}} H_{t}^{-1}(\bm{\theta}) H_{t}^{1/2}(\bm{\theta}_{0}) \bm{\eta}_{t}\right)\right|^{2} \notag\\
				=& \left|\tr\left(H_{t}^{1/2}(\bm{\theta}_{0}) H_{t}^{-1}(\bm{\theta}) \frac{\partial H_{t}(\bm{\theta})}{\partial\theta_{\ell}} H_{t}^{-1}(\bm{\theta}) H_{t}^{1/2}(\bm{\theta}_{0}) \bm{\eta}_{t} \bm{\eta}_{t}^{\prime}\right)\right|^{2} \notag\\
				\leq& \left\|H_{t}^{1/2}(\bm{\theta}_{0}) H_{t}^{-1}(\bm{\theta}) \frac{\partial H_{t}(\bm{\theta})}{\partial\theta_{\ell}} H_{t}^{-1}(\bm{\theta}) H_{t}^{1/2}(\bm{\theta}_{0})\right\|_{F}^{2} \left\|\bm{\eta}_{t} \bm{\eta}_{t}^{\prime}\right\|_{F}^{2}. 
			\end{align}
			Moreover, by \eqref{eq first derivative of Ht}, $\|A\|_{F}^{2} = \tr(A^{\prime} A)$ and $\tr(AB) = \tr(BA) \leq \|A\|_{F} \|B\|_{F}$, it holds that 
			\begin{align} \label{eq sec0 about Dtinverse Dt}
				&\left\|H_{t}^{1/2}(\bm{\theta}_{0}) H_{t}^{-1}(\bm{\theta}) \frac{\partial H_{t}(\bm{\theta})}{\partial\theta_{\ell}} H_{t}^{-1}(\bm{\theta}) H_{t}^{1/2}(\bm{\theta}_{0})\right\|_{F}^{2} \notag\\
				=& \tr\left(H_{t}^{1/2}(\bm{\theta}_{0}) H_{t}^{-1}(\bm{\theta}) \frac{\partial H_{t}(\bm{\theta})}{\partial\theta_{\ell}} H_{t}^{-1}(\bm{\theta}) H_{t}(\bm{\theta}_{0}) H_{t}^{-1}(\bm{\theta}) \frac{\partial H_{t}(\bm{\theta})}{\partial\theta_{\ell}} H_{t}^{-1}(\bm{\theta}) H_{t}^{1/2}(\bm{\theta}_{0})\right) \notag\\
				=& \tr\left(\frac{\partial H_{t}(\bm{\theta})}{\partial\theta_{\ell}} H_{t}^{-1}(\bm{\theta}) H_{t}(\bm{\theta}_{0}) H_{t}^{-1}(\bm{\theta}) \frac{\partial H_{t}(\bm{\theta})}{\partial\theta_{\ell}} H_{t}^{-1}(\bm{\theta}) H_{t}(\bm{\theta}_{0}) H_{t}^{-1}(\bm{\theta})\right), 
			\end{align}
			with 
			\begin{align} \label{eq sec1 about Dtinverse Dt}
				\tr&\left(\frac{\partial H_{t}(\bm{\theta})}{\partial\delta_{\ell}} H_{t}^{-1}(\bm{\theta}) H_{t}(\bm{\theta}_{0}) H_{t}^{-1}(\bm{\theta}) \frac{\partial H_{t}(\bm{\theta})}{\partial\delta_{\ell}} H_{t}^{-1}(\bm{\theta}) H_{t}(\bm{\theta}_{0}) H_{t}^{-1}(\bm{\theta})\right) \notag\\
				= \tr&\left(D_{t}^{-1}(\bm{\delta}) \frac{\partial D_{t}(\bm{\delta})}{\partial\delta_{\ell}} D_{t}^{-1}(\bm{\delta}) D_{t}(\bm{\delta}_{0}) R_{t}(\bm{\theta}_{0}) D_{t}(\bm{\delta}_{0}) D_{t}^{-1}(\bm{\delta}) R_{t}^{-1}(\bm{\theta})\right. \notag\\ &\left.D_{t}^{-1}(\bm{\delta}) \frac{\partial D_{t}(\bm{\delta})}{\partial\delta_{\ell}} D_{t}^{-1}(\bm{\delta}) D_{t}(\bm{\delta}_{0}) R_{t}(\bm{\theta}_{0}) D_{t}(\bm{\delta}_{0}) D_{t}^{-1}(\bm{\delta}) R_{t}^{-1}(\bm{\theta})\right) \notag\\
				+ \tr&\left(\frac{\partial R_{t}(\bm{\theta})}{\partial\delta_{\ell}} R_{t}^{-1}(\bm{\theta}) D_{t}^{-1}(\bm{\delta}) D_{t}(\bm{\delta}_{0}) R_{t}(\bm{\theta}_{0}) D_{t}(\bm{\delta}_{0}) D_{t}^{-1}(\bm{\delta}) R_{t}^{-1}(\bm{\theta})\right. \notag\\ &\left.\frac{\partial R_{t}(\bm{\theta})}{\partial\delta_{\ell}} R_{t}^{-1}(\bm{\theta}) D_{t}^{-1}(\bm{\delta}) D_{t}(\bm{\delta}_{0}) R_{t}(\bm{\theta}_{0}) D_{t}(\bm{\delta}_{0}) D_{t}^{-1}(\bm{\delta}) R_{t}^{-1}(\bm{\theta})\right) \notag\\
				+ \tr&\left(\frac{\partial D_{t}(\bm{\delta})}{\partial\delta_{\ell}} D_{t}^{-1}(\bm{\delta}) R_{t}^{-1}(\bm{\theta}) D_{t}^{-1}(\bm{\delta}) D_{t}(\bm{\delta}_{0}) R_{t}(\bm{\theta}_{0}) D_{t}(\bm{\delta}_{0}) D_{t}^{-1}(\bm{\delta})\right. \notag\\ &\left.\frac{\partial D_{t}(\bm{\delta})}{\partial\delta_{\ell}} D_{t}^{-1}(\bm{\delta}) R_{t}^{-1}(\bm{\theta}) D_{t}^{-1}(\bm{\delta}) D_{t}(\bm{\delta}_{0}) R_{t}(\bm{\theta}_{0}) D_{t}(\bm{\delta}_{0}) D_{t}^{-1}(\bm{\delta})\right) \notag\\
				\leq\mathrel{\phantom{\tr}}& \left\|D_{t}^{-1}(\bm{\delta}) \frac{\partial D_{t}(\bm{\delta})}{\partial\delta_{\ell}}\right\|_{F}^{2} \|D_{t}^{-1}(\bm{\delta}) D_{t}(\bm{\delta}_{0})\|_{F}^{2} \|R_{t}(\bm{\theta}_{0})\|_{F}^{2} \|D_{t}(\bm{\delta}_{0}) D_{t}^{-1}(\bm{\delta})\|_{F}^{2} \|R_{t}^{-1}(\bm{\theta})\|_{F}^{2} \notag\\
				+\mathrel{\phantom{\tr}}& \left\|\frac{\partial R_{t}(\bm{\theta})}{\partial\delta_{\ell}} R_{t}^{-1}(\bm{\theta})\right\|_{F}^{2} \|D_{t}^{-1}(\bm{\delta}) D_{t}(\bm{\delta}_{0})\|_{F}^{2} \|R_{t}(\bm{\theta}_{0})\|_{F}^{2} \|D_{t}(\bm{\delta}_{0}) D_{t}^{-1}(\bm{\delta})\|_{F}^{2} \|R_{t}^{-1}(\bm{\theta})\|_{F}^{2} \notag\\
				+\mathrel{\phantom{\tr}}& \left\|\frac{\partial D_{t}(\bm{\delta})}{\partial\delta_{\ell}} D_{t}^{-1}(\bm{\delta})\right\|_{F}^{2} \|R_{t}^{-1}(\bm{\theta})\|_{F}^{2} \|D_{t}^{-1}(\bm{\delta}) D_{t}(\bm{\delta}_{0})\|_{F}^{2} \|R_{t}(\bm{\theta}_{0})\|_{F}^{2} \|D_{t}(\bm{\delta}_{0}) D_{t}^{-1}(\bm{\delta})\|_{F}^{2}, 
			\end{align}
			and 
			\begin{align} \label{eq sec2 about Dtinverse Dt}
				\tr&\left(\frac{\partial H_{t}(\bm{\theta})}{\partial\beta_{\ell}} H_{t}^{-1}(\bm{\theta}) H_{t}(\bm{\theta}_{0}) H_{t}^{-1}(\bm{\theta}) \frac{\partial H_{t}(\bm{\theta})}{\partial\beta_{\ell}} H_{t}^{-1}(\bm{\theta}) H_{t}(\bm{\theta}_{0}) H_{t}^{-1}(\bm{\theta})\right) \notag\\
				= \tr&\left(\frac{\partial R_{t}(\bm{\theta})}{\partial\beta_{\ell}} R_{t}^{-1}(\bm{\theta}) D_{t}^{-1}(\bm{\delta}) D_{t}(\bm{\delta}_{0}) R_{t}(\bm{\theta}_{0}) D_{t}(\bm{\delta}_{0}) D_{t}^{-1}(\bm{\delta}) R_{t}^{-1}(\bm{\theta})\right. \notag\\ &\left.\frac{\partial R_{t}(\bm{\theta})}{\partial\beta_{\ell}} R_{t}^{-1}(\bm{\theta}) D_{t}^{-1}(\bm{\delta}) D_{t}(\bm{\delta}_{0}) R_{t}(\bm{\theta}_{0}) D_{t}(\bm{\delta}_{0}) D_{t}^{-1}(\bm{\delta}) R_{t}^{-1}(\bm{\theta})\right) \notag\\
				\leq\mathrel{\phantom{\tr}}& \left\|\frac{\partial R_{t}(\bm{\theta})}{\partial\beta_{\ell}} R_{t}^{-1}(\bm{\theta})\right\|_{F}^{2} \|D_{t}^{-1}(\bm{\delta}) D_{t}(\bm{\delta}_{0})\|_{F}^{2} \|R_{t}(\bm{\theta}_{0})\|_{F}^{2} \|D_{t}(\bm{\delta}_{0}) D_{t}^{-1}(\bm{\delta})\|_{F}^{2} \|R_{t}^{-1}(\bm{\theta})\|_{F}^{2}. 
			\end{align}
			Note that by $|e^{x}| \leq e^{|x|}$, we have that 
			\begin{align*}
				&\sup_{\bm{\theta} \in \Theta} \left|\frac{h_{ii,t}(\bm{\delta}_{0})}{h_{ii,t}(\bm{\delta})}\right| 
				= \sup_{\bm{\theta} \in \Theta} \left|\frac{\exp\{\ln h_{ii,t}(\bm{\delta}_{0})\}}{\exp\{\ln h_{ii,t}(\bm{\delta})\}}\right| 
				= \sup_{\bm{\theta} \in \Theta} \left|\exp\left\{\ln h_{ii,t}(\bm{\delta}_{0}) - \ln h_{ii,t}(\bm{\delta})\right\}\right| \\
				\leq& \sup_{\bm{\theta} \in \Theta} \exp\left\{\left|\ln h_{ii,t}(\bm{\delta}_{0}) - \ln h_{ii,t}(\bm{\delta})\right|\right\} 
				= \exp\left\{\sup_{\bm{\theta} \in \Theta} \left|\ln h_{ii,t}(\bm{\delta}_{0}) - \ln h_{ii,t}(\bm{\delta})\right|\right\}. 
			\end{align*}
			This together with Lemma \ref{lemma moments of derivatives of lnhtunderline}($\romannumeral1$) implies that 
			\begin{align} \label{eq Dtinverse Dt finite}
				\sup_{\bm{\theta} \in \Theta} \|D_{t}^{-1}(\bm{\delta}) D_{t}(\bm{\delta}_{0})\|_{F}^{2} \;\; \text{is finite a.s.}. 
			\end{align}
			Then by \eqref{eq sec0 about Dtinverse Dt}--\eqref{eq sec2 about Dtinverse Dt}, 
			together with \eqref{eq Dtinverse Dt finite}, $\sup_{\bm{\theta} \in \Theta} \|R_{t}(\bm{\theta})\|$ and $\sup_{\bm{\theta} \in \Theta} \|R_{t}^{-1}(\bm{\theta})\|$ are finite by Lemma \ref{lemma varrho to the power of t}($\romannumeral3$), $E \sup_{\bm{\theta} \in \Theta}$ $\|D_{t}^{-1}(\bm{\delta}) \frac{\partial D_{t}(\bm{\delta})}{\partial\delta_{\ell}}\|^{2+\epsilon} < \infty$ and $E \sup_{\bm{\theta} \in \Theta} \|R_{t}^{-1}(\bm{\theta}) \frac{\partial R_{t}(\bm{\theta})}{\partial\theta_{\ell}}\|^{2+\epsilon} < \infty$ by Lemma \ref{lemma moments of Htinverse product derivatives of Ht}, $\|A\|_{F} \leq \sqrt{c_{1}} \|A\|_{2}$ for any matrix $A$ with $c_{1} = \rank(A)$, 
			it can be shown that 
			\begin{align} \label{eq Esec0 about Dtinverse Dt}
				E \left\|H_{t}^{1/2}(\bm{\theta}_{0}) H_{t}^{-1}(\bm{\theta}_{0}) \frac{\partial H_{t}(\bm{\theta}_{0})}{\partial\theta_{\ell}} H_{t}^{-1}(\bm{\theta}_{0}) H_{t}^{1/2}(\bm{\theta}_{0})\right\|_{F}^{2} < \infty, 
			\end{align}
			and 
			\begin{align} \label{eq Ht Htinverse dHt Htinver Ht finite}
				\sup_{\bm{\theta} \in \Theta} \left\|H_{t}^{1/2}(\bm{\theta}_{0}) H_{t}^{-1}(\bm{\theta}) \frac{\partial H_{t}(\bm{\theta})}{\partial\theta_{\ell}} H_{t}^{-1}(\bm{\theta}) H_{t}^{1/2}(\bm{\theta}_{0})\right\|_{F}^{2} \;\; \text{is finite a.s.}. 
			\end{align}
			Since $H_{t}(\bm{\theta})$ is measurable with respect to $\mathcal{F}_{t-1}$, by \eqref{eq ytprime Htinverse dHt Htinverse yt}, \eqref{eq Esec0 about Dtinverse Dt}, $\|A\|_{F} \leq \sqrt{c_{1}} \|A\|_{2}$ for any matrix $A$ with $c_{1} = \rank(A)$ and $E \left\|\bm{\eta}_{t} \bm{\eta}_{t}^{\prime}\right\|_{2}^{2} < \infty$, we can obtain that 
			\begin{align*}
				&E \left|\mathbf{y}_{t}^{\prime} H_{t}^{-1}(\bm{\theta}_{0}) \frac{\partial H_{t}(\bm{\theta}_{0})}{\partial\theta_{\ell}} H_{t}^{-1}(\bm{\theta}_{0}) \mathbf{y}_{t}\right|^{2} \\
				\leq& E \left(\left\|H_{t}^{1/2}(\bm{\theta}_{0}) H_{t}^{-1}(\bm{\theta}_{0}) \frac{\partial H_{t}(\bm{\theta}_{0})}{\partial\theta_{\ell}} H_{t}^{-1}(\bm{\theta}_{0}) H_{t}^{1/2}(\bm{\theta}_{0})\right\|_{F}^{2} \left\|\bm{\eta}_{t} \bm{\eta}_{t}^{\prime}\right\|_{F}^{2}\right) \\
				\leq& E \left\|H_{t}^{1/2}(\bm{\theta}_{0}) H_{t}^{-1}(\bm{\theta}_{0}) \frac{\partial H_{t}(\bm{\theta}_{0})}{\partial\theta_{\ell}} H_{t}^{-1}(\bm{\theta}_{0}) H_{t}^{1/2}(\bm{\theta}_{0})\right\|_{F}^{2} \cdot 
				E \left\|\bm{\eta}_{t} \bm{\eta}_{t}^{\prime}\right\|_{2}^{2} 
				< \infty, 
			\end{align*}
			and 
			\begin{align*}
				&\sup_{\bm{\theta} \in \Theta} \left|\mathbf{y}_{t}^{\prime} H_{t}^{-1}(\bm{\theta}) \frac{\partial H_{t}(\bm{\theta})}{\partial\theta_{\ell}} H_{t}^{-1}(\bm{\theta}) \mathbf{y}_{t}\right|^{2} \\
				\leq& \sup_{\bm{\theta} \in \Theta} \left\|H_{t}^{1/2}(\bm{\theta}_{0}) H_{t}^{-1}(\bm{\theta}) \frac{\partial H_{t}(\bm{\theta})}{\partial\theta_{\ell}} H_{t}^{-1}(\bm{\theta}) H_{t}^{1/2}(\bm{\theta}_{0})\right\|_{F}^{2} \left\|\bm{\eta}_{t} \bm{\eta}_{t}^{\prime}\right\|_{2}^{2} \;\; \text{is finite a.s.}. 
			\end{align*}
			Thus (a1) holds. 
			For (a2), by Lemmas \ref{lemma moments of Htinverse product derivatives of Ht}($\romannumeral1$)--($\romannumeral2$) together with the facts $\tr(AB) = \tr(BA) \leq \|A\|_{F} \|B\|_{F}$ and $\|A\|_{F} \leq \sqrt{c_{1}} \|A\|_{2}$ for any matrices $A$ and $B$, it can be shown that 
			$$
				E \sup_{\bm{\theta} \in \Theta} \left|\tr\left(D_{t}^{-1}(\bm{\delta}) \frac{\partial D_{t}(\bm{\delta})}{\partial\delta_{\ell}}\right)\right|^{2} 
				\leq \sqrt{m} E \sup_{\bm{\theta} \in \Theta} \left\|D_{t}^{-1}(\bm{\delta}) \frac{\partial D_{t}(\bm{\delta})}{\partial\delta_{\ell}}\right\|_{F}^{2} 
				\leq m E \sup_{\bm{\theta} \in \Theta} \left\|D_{t}^{-1}(\bm{\delta}) \frac{\partial D_{t}(\bm{\delta})}{\partial\delta_{\ell}}\right\|_{2}^{2} 
				< \infty, 
			$$
			and 
			$$
				E \sup_{\bm{\theta} \in \Theta} \left|\tr\left(R_{t}^{-1}(\bm{\theta}) \frac{\partial R_{t}(\bm{\theta})}{\partial\theta_{\ell}}\right)\right|^{2} 
				\leq \sqrt{m} E \sup_{\bm{\theta} \in \Theta} \left\|R_{t}^{-1}(\bm{\theta}) \frac{\partial R_{t}(\bm{\theta})}{\partial\theta_{\ell}}\right\|_{F}^{2} 
				\leq m E \sup_{\bm{\theta} \in \Theta} \left\|R_{t}^{-1}(\bm{\theta}) \frac{\partial R_{t}(\bm{\theta})}{\partial\theta_{\ell}}\right\|_{2}^{2} 
				< \infty. 
			$$
			Hence (a2) holds. 
			As a result, ($\romannumeral1$) holds by (a1)--(a2). 

			Next we show ($\romannumeral2$). 
			By \eqref{eq second derivative of lt}, it suffice to show that 
			\begin{enumerate}
				\item[(b1)] $E \left|\mathbf{y}_{t}^{\prime} \left(2 H_{t}^{-1}(\bm{\theta}_{0}) \frac{\partial H_{t}(\bm{\theta}_{0})}{\partial\theta_{k}} H_{t}^{-1}(\bm{\theta}_{0}) \frac{\partial H_{t}(\bm{\theta}_{0})}{\partial\theta_{\ell}} H_{t}^{-1}(\bm{\theta}_{0}) - H_{t}^{-1}(\bm{\theta}_{0}) \frac{\partial^{2} H_{t}(\bm{\theta}_{0})}{\partial\theta_{k} \partial\theta_{\ell}} H_{t}^{-1}(\bm{\theta}_{0})\right) \mathbf{y}_{t}\right| < \infty$ and $\sup_{\bm{\theta} \in \Theta} \left|\mathbf{y}_{t}^{\prime} \left(2 H_{t}^{-1}(\bm{\theta}) \frac{\partial H_{t}(\bm{\theta})}{\partial\theta_{k}} H_{t}^{-1}(\bm{\theta}) \frac{\partial H_{t}(\bm{\theta})}{\partial\theta_{\ell}} H_{t}^{-1}(\bm{\theta}) - H_{t}^{-1}(\bm{\theta}) \frac{\partial^{2} H_{t}(\bm{\theta})}{\partial\theta_{k} \partial\theta_{\ell}} H_{t}^{-1}(\bm{\theta})\right) \mathbf{y}_{t}\right|$ is finite a.s.; 
				\item[(b2)] $E \sup_{\bm{\theta} \in \Theta} \left|\tr\left(- H_{t}^{-1}(\bm{\theta}) \frac{\partial H_{t}(\bm{\theta})}{\partial\theta_{k}} H_{t}^{-1}(\bm{\theta}) \frac{\partial H_{t}(\bm{\theta})}{\partial\theta_{\ell}} + H_{t}^{-1}(\bm{\theta}) \frac{\partial^{2} H_{t}(\bm{\theta})}{\partial\theta_{k} \partial\theta_{\ell}}\right)\right| < \infty$. 
			\end{enumerate}
			For (b1), similarly it holds that 
			\begin{align*}
				&\left|\mathbf{y}_{t}^{\prime} \left(2 H_{t}^{-1}(\bm{\theta}) \frac{\partial H_{t}(\bm{\theta})}{\partial\theta_{k}} H_{t}^{-1}(\bm{\theta}) \frac{\partial H_{t}(\bm{\theta})}{\partial\theta_{\ell}} H_{t}^{-1}(\bm{\theta}) - H_{t}^{-1}(\bm{\theta}) \frac{\partial^{2} H_{t}(\bm{\theta})}{\partial\theta_{k} \partial\theta_{\ell}} H_{t}^{-1}(\bm{\theta})\right) \mathbf{y}_{t}\right| \\
				\leq& 2 \left|\tr\left(H_{t}^{1/2}(\bm{\theta}_{0}) H_{t}^{-1}(\bm{\theta}) \frac{\partial H_{t}(\bm{\theta})}{\partial\theta_{k}} H_{t}^{-1}(\bm{\theta}) \frac{\partial H_{t}(\bm{\theta})}{\partial\theta_{\ell}} H_{t}^{-1}(\bm{\theta}) H_{t}^{1/2}(\bm{\theta}_{0}) \bm{\eta}_{t} \bm{\eta}_{t}^{\prime}\right)\right| \\
				&+ \left|\tr\left(H_{t}^{1/2}(\bm{\theta}_{0}) H_{t}^{-1}(\bm{\theta}) \frac{\partial^{2} H_{t}(\bm{\theta})}{\partial\theta_{k} \partial\theta_{\ell}} H_{t}^{-1}(\bm{\theta}) H_{t}^{1/2}(\bm{\theta}_{0}) \bm{\eta}_{t} \bm{\eta}_{t}^{\prime}\right)\right| \\
				\leq& 2 \left\|H_{t}^{1/2}(\bm{\theta}_{0}) H_{t}^{-1}(\bm{\theta}) \frac{\partial H_{t}(\bm{\theta})}{\partial\theta_{k}} H_{t}^{-1}(\bm{\theta}) \frac{\partial H_{t}(\bm{\theta})}{\partial\theta_{\ell}} H_{t}^{-1}(\bm{\theta}) H_{t}^{1/2}(\bm{\theta}_{0})\right\|_{F} \left\|\bm{\eta}_{t} \bm{\eta}_{t}^{\prime}\right\|_{F} \\
				&+ \left\|H_{t}^{1/2}(\bm{\theta}_{0}) H_{t}^{-1}(\bm{\theta}) \frac{\partial^{2} H_{t}(\bm{\theta})}{\partial\theta_{k} \partial\theta_{\ell}} H_{t}^{-1}(\bm{\theta}) H_{t}^{1/2}(\bm{\theta}_{0})\right\|_{F} \left\|\bm{\eta}_{t} \bm{\eta}_{t}^{\prime}\right\|_{F}.  
			\end{align*}
			Then with analogous arguments in the proof of (a1), we can show that (b1) holds. 
			Moreover, (b2) can be verified similarly as (a2). 
			As a result, ($\romannumeral2$) holds by (b1)--(b2). 

			($\romannumeral3$) can be established with similar arguments in the proof of ($\romannumeral1$) and ($\romannumeral2$). 
		\end{proof}

		\begin{proof}[Proof of Lemma \ref{lemma positivity of Sigma*}] 
			By \eqref{eq second derivative of lt}, we have that 
			\begin{align*}
				E\left(\frac{\partial^{2}\ell_{t}(\bm{\theta}_{0})}{\partial\theta_{k} \partial\theta_{\ell}} \mid \mathcal{F}_{t-1}\right) 
				=& \frac{1}{2} \tr\left(H_{t}^{-1}(\bm{\theta}_{0}) \frac{\partial H_{t}(\bm{\theta}_{0})}{\partial\theta_{k}} H_{t}^{-1}(\bm{\theta}_{0}) \frac{\partial H_{t}(\bm{\theta}_{0})}{\partial\theta_{\ell}}\right) \\
				=& \frac{1}{2} \tr\left(H_{t}^{-1/2}(\bm{\theta}_{0}) \frac{\partial H_{t}(\bm{\theta}_{0})}{\partial\theta_{k}} H_{t}^{-1/2}(\bm{\theta}_{0}) H_{t}^{-1/2}(\bm{\theta}_{0}) \frac{\partial H_{t}(\bm{\theta}_{0})}{\partial\theta_{\ell}} H_{t}^{-1/2}(\bm{\theta}_{0})\right). 
			\end{align*}
			Let $\dot{\bm{h}}_{\ell} = \ovec({\partial H_{t}(\bm{\theta}_{0})}/{\partial\theta_{\ell}})$, 
			$\overline{\bm{h}}_{\ell} = \ovec(H_{t}^{-1/2}(\bm{\theta}_{0}) ({\partial H_{t}(\bm{\theta}_{0})}/{\partial\theta_{\ell}}) H_{t}^{-1/2}(\bm{\theta}_{0}))$, 
			and $\mathbb{H} = H_{t}^{-1/2}(\bm{\theta}_{0}) \otimes H_{t}^{-1/2}(\bm{\theta}_{0})$. 
			Then using the facts $\tr(A^{\prime} B) = (\ovec(A))^{\prime} \ovec(B)$ and $\ovec(A B C) = (C^{\prime} \otimes A) \ovec(B)$ for any matrices $A$, $B$ and $C$, 
			it holds that  
			\begin{align*}
				E\left(\frac{\partial^{2}\ell_{t}(\bm{\theta}_{0})}{\partial\theta_{k} \partial\theta_{\ell}} \mid \mathcal{F}_{t-1}\right) = \frac{1}{2} \overline{\bm{h}}_{k}^{\prime} \overline{\bm{h}}_{\ell} 
				\;\;\;\; \text{and} \;\;\;\; 
				\overline{\bm{h}}_{\ell} = \mathbb{H} \dot{\bm{h}}_{\ell}. 
			\end{align*}
			Denote $d = m + (r + 2s) (1 + m^2) + 2 + m(m-1)/2$ as the dimension of $\bm{\theta}$. 
			Let $\overline{\mathbb{H}} = (\overline{\bm{h}}_{1}, \ldots, \overline{\bm{h}}_{d})$ 
			and $\dot{\mathbb{H}} = (\dot{\bm{h}}_{1}, \ldots, \dot{\bm{h}}_{d})$. 
			Then we can obtain that $\overline{\mathbb{H}} = \mathbb{H} \dot{\mathbb{H}}$, and 
			\begin{align*}
				E\left(\frac{\partial^{2}\ell_{t}(\bm{\theta}_{0})}{\partial\bm{\theta} \partial\bm{\theta}} \mid \mathcal{F}_{t-1}\right) 
				= \frac{1}{2} \overline{\mathbb{H}}^{\prime} \overline{\mathbb{H}} 
				= \frac{1}{2} \dot{\mathbb{H}}^{\prime} \mathbb{H}^{\prime} \mathbb{H} \dot{\mathbb{H}} 
				= \frac{1}{2} \dot{\mathbb{H}}^{\prime} \mathbb{H}^{2} \dot{\mathbb{H}}. 
			\end{align*}
			And it follows that 
			\begin{align*}
				\Sigma_{*} = E\left(\frac{\partial^{2}\ell_{t}(\bm{\theta}_{0})}{\partial\bm{\theta} \partial\bm{\theta}}\right) 
				= \frac{1}{2} E\left(\dot{\mathbb{H}}^{\prime} \mathbb{H}^{2} \dot{\mathbb{H}}\right). 
			\end{align*}

			Suppose that $\Sigma_{*}$ is singular. 
			Then there exists a nonzero constant vector $\bm{c} \in \mathbb{R}^{d}$ such that $\bm{c}^{\prime} \Sigma_{*} \bm{c} = \frac{1}{2} E(\bm{c}^{\prime} \dot{\mathbb{H}}^{\prime} \mathbb{H}^{2} \dot{\mathbb{H}} \bm{c}) = 0$. 
			Note that $\dot{\mathbb{H}}^{\prime} \mathbb{H}^{2} \dot{\mathbb{H}}$ is positive semidefinite a.s., thus we have $\bm{c}^{\prime} \dot{\mathbb{H}}^{\prime} \mathbb{H}^{2} \dot{\mathbb{H}} \bm{c} = 0$ a.s.. 
			This together with that $\mathbb{H}$ is positive definite a.s., implies that 
			\begin{align} \label{eq dotmathbbH c equal to 0}
				\dot{\mathbb{H}} \bm{c} = 0 \;\; \text{a.s.}. 
			\end{align}
			Recall that $\bm{\theta} = (\bm{\delta}^{\prime}, \bm{\beta}^{\prime})^{\prime}$. 
			Denote $\bm{c} = (\bm{c}_{1}^{\prime}, \bm{c}_{2}^{\prime})^{\prime}$ with $\bm{c}_{1} = (c_{11}, \ldots, c_{1 d_{\delta}})^{\prime}$ and $\bm{c}_{2} = (c_{21}, \ldots, c_{2 d_{\beta}})^{\prime}$, 
			where $d_{\delta}$ and $d_{\beta}$ are the dimensions of $\bm{\delta}$ and $\bm{\beta}$, respectively. 
			Firstly, for $1 \leq i \leq m$, the $[(i-1)m + i]$-th equation in \eqref{eq dotmathbbH c equal to 0} is 
			\begin{align*}
				0 = \sum_{\ell=1}^{d_{\delta}} c_{1 \ell} \frac{\partial h_{ii,t}(\bm{\delta_{0}})}{\partial\delta_{\ell}} = \sum_{\ell=1}^{d_{\delta}} c_{1 \ell} h_{ii,t}(\bm{\delta_{0}}) \frac{\partial\ln h_{ii,t}(\bm{\delta_{0}})}{\partial\delta_{\ell}} \;\; \text{a.s.}, 
			\end{align*}
			which implies that $D_{t}^{2}(\bm{\delta}_{0}) \sum_{\ell=1}^{d_{\delta}} c_{1 \ell} {\partial\ln\bm{h}_{t}(\bm{\delta}_{0})}/{\partial\delta_{\ell}} = 0$ a.s., and then $\sum_{\ell=1}^{d_{\delta}} c_{1 \ell} {\partial\ln\bm{h}_{t}(\bm{\delta}_{0})}/{\partial\delta_{\ell}} = 0$ a.s.. 
			Note that by \eqref{eq first derivative of lnhtunderline} and Lemma \ref{lemma for identification}, $\{{\partial\ln\bm{h}_{t}(\bm{\delta}_{0})}/{\partial\delta_{\ell}}, 1 \leq \ell \leq d_{\delta}\}$ are linearly independent. 
			Thus we have $\bm{c}_{1} = 0$. 
			Secondly, by \eqref{eq dotmathbbH c equal to 0}, $\bm{c}_{1} = 0$ and $H_{t}(\bm{\theta}_{0}) = D_{t}(\bm{\delta}_{0}) R_{t}(\bm{\theta}_{0}) D_{t}(\bm{\delta}_{0})$, it can be obtained that 
			\begin{align*}
				0 = \sum_{\ell=1}^{d_{\beta}} c_{2 \ell} \frac{\partial \ovec(H_{t}(\bm{\theta}_{0}))}{\partial\beta_{\ell}} 
				=& \sum_{\ell=1}^{d_{\beta}} c_{2 \ell} \frac{\partial \left[\left(D_{t}(\bm{\delta}_{0}) \otimes D_{t}(\bm{\delta}_{0})\right) \ovec(R_{t}(\bm{\theta}_{0}))\right]}{\partial\beta_{\ell}} \\
				=& \left(D_{t}(\bm{\delta}_{0}) \otimes D_{t}(\bm{\delta}_{0})\right) \sum_{\ell=1}^{d_{\beta}} c_{2 \ell} \frac{\partial\ovec(R_{t}(\bm{\theta}_{0}))}{\partial\beta_{\ell}} \;\; \text{a.s.}, 
			\end{align*}
			and then $\sum_{\ell=1}^{d_{\beta}} c_{2 \ell} {\partial\ovec(R_{t}(\bm{\theta}_{0}))}/{\partial\beta_{\ell}} = 0$ a.s.. 
			Since $\{{\partial\ovec(R_{t}(\bm{\theta}_{0}))}/{\partial\beta_{\ell}}, 1 \leq \ell \leq d_{\beta}\}$ are linearly independent by \eqref{eq first derivative of Rt}, we can conclude that $\bm{c}_{2} = 0$. 
			Above all, we have that $\bm{c} = 0$, which is in contradiction with that $\bm{c}$ is nonzero. 
			As a result, $\Sigma_{*}$ is non-singular. 
		\end{proof}


	\section{Extension to a general multivariate GARCH model} \label{section SGARCH(q,r*,s*)}
		For general orders $p$ and $q$, denote 
		$$
			\bm{h}_{t}^{*} = 
			\left(
			\begin{matrix}
				\bm{h}_{t} \\ \bm{h}_{t-1} \\ \vdots \\\bm{h}_{t-p+2} \\ \bm{h}_{t-p+1} 
			\end{matrix}
			\right), \;\; 
			B_{*} = 
			\left(
			\begin{matrix}
				B_{1} & B_{2} & \cdots & B_{p-1} & B_{p} \\
				I_{m} & 0_{m} & \cdots & 0_{m} & 0_{m} \\
				\vdots & \vdots & & \vdots & \vdots \\
				0_{m} & 0_{m} & \cdots & 0_{m} & 0_{m} \\
				0_{m} & 0_{m} & \cdots & I_{m} & 0_{m} 
			\end{matrix}
			\right)
			\;\; \text{and} \;\; 
			\underline{\bm{c}}_{t} = 
			\left(
			\begin{matrix}
				\bm{\omega} + \sum_{i=1}^{q} A_{i} \mathbf{y}^{\odot 2}_{t-i} \\
				\bm{0}_{m} \\ \vdots \\ \bm{0}_{m} \\ \bm{0}_{m} 
			\end{matrix}
			\right). 
		$$
		Supposing $\rho(B_{*}) < 1$, 
		model \eqref{eq volatility in DCC} can be rewritten as follows, 
		$$
			\bm{h}_{t}^{*} = \sum_{i=0}^{\infty} B_{*}^{i} \underline{\bm{c}}_{t-i}. 
		$$
		Then the multivariate ARCH($\infty$) form for model \eqref{eq volatility in DCC} is 
		$$
			\bm{h}_{t} = \sum_{i=0}^{\infty} \mathcal{I} B_{*}^{i} \mathcal{I}^{\prime} \bm{\omega} + \sum_{i=1}^{\infty} \sum_{j=1}^{\min\{i,q\}} \mathcal{I} B_{*}^{i-j} \mathcal{I}^{\prime} A_{j} \underline{\bm{y}}_{t-i}, 
		$$
		where $\mathcal{I} = (I_{m}, 0_{m \times m(p-1)})$ is an $m \times mp$ known matrix. 
		We assume that $B_{*}$ is diagonalizable with $r$ nonzero real eigenvalues and $s$ conjugate pairs of nonzero complex eigenvalues. 
		Then $B_{*}$ can be decomposed by $B_{*} = P_{*} J_{*} P_{*}^{-1}$, 
		where 
		$P_{*}$ is an $mp \times mp$ invertible matrix, 
		and $J_{*} = \Diag\{\lambda_{1}, \ldots, \lambda_{r}, C_{1}, \ldots, C_{s}, \bm{0}_{mp-r-2s}\}$ is an $mp \times mp$ real block diagonal matrix. 
		Hence it holds that 
		\begin{align*}
			\bm{h}_{t} 
			&= \sum_{i=0}^{\infty} \mathcal{I} B_{*}^{i} \mathcal{I}^{\prime} \bm{\omega} 
			+ \sum_{i=1}^{q-1} \sum_{j=1}^{i} \mathcal{I} B_{*}^{i-j} \mathcal{I}^{\prime} A_{j} \mathbf{y}^{\odot 2}_{t-i} 
			+ \sum_{i=q}^{\infty} \sum_{j=1}^{q} \mathcal{I} B_{*}^{i-j} \mathcal{I}^{\prime} A_{j} \mathbf{y}^{\odot 2}_{t-i} \\
			&= \sum_{i=0}^{\infty} \mathcal{I} B_{*}^{i} \mathcal{I}^{\prime} \bm{\omega} 
			+ \sum_{i=1}^{q-1} \sum_{j=1}^{i} \mathcal{I} B_{*}^{i-j} \mathcal{I}^{\prime} A_{j} \mathbf{y}^{\odot 2}_{t-i} 
			+ \sum_{i=q}^{\infty} \mathcal{I} B_{*}^{i-q} \sum_{j=1}^{q} B_{*}^{q-j} \mathcal{I}^{\prime} A_{j} \mathbf{y}^{\odot 2}_{t-i} \\
			&= \sum_{i=0}^{\infty} \mathcal{I} B_{*}^{i} \mathcal{I}^{\prime} \bm{\omega} 
			+ \sum_{i=1}^{q-1} \sum_{j=1}^{i} \mathcal{I} B_{*}^{i-j} \mathcal{I}^{\prime} A_{j} \mathbf{y}^{\odot 2}_{t-i} 
			+ \sum_{i=q}^{\infty} \mathcal{I} P_{*} J_{*}^{i-q} P_{*}^{-1} \sum_{j=1}^{q} B_{*}^{q-j} \mathcal{I}^{\prime} A_{j} \mathbf{y}^{\odot 2}_{t-i} \\
			&= \sum_{i=0}^{\infty} \mathcal{I} B_{*}^{i} \mathcal{I}^{\prime} \bm{\omega} 
			+ \sum_{i=1}^{q-1} \Phi^{*}_{i} \mathbf{y}^{\odot 2}_{t-i} 
			+ \sum_{i=q}^{\infty} \underline{B} J_{*}^{i-q} \underline{A} \mathbf{y}^{\odot 2}_{t-i}, 
		\end{align*}
		where $\Phi^{*}_{i} = \sum_{j=1}^{i} \mathcal{I} B_{*}^{i-j} \mathcal{I}^{\prime} A_{j}$, $\underline{B} = \mathcal{I} P_{*}$ and $\underline{A} = P_{*}^{-1} \sum_{j=1}^{q} B_{*}^{q-j} \mathcal{I}^{\prime} A_{j}$. 
		Consequently, with the similar arguments in Section \ref{section connection between DCC and SGARCH}, 
		we propose a general multivariate GARCH model as described in Remark \ref{remark general SGARCH}: 
		\begin{align}
			&\mathbf{y}_{t} = H_{t}^{1/2} \bm{\eta}_{t}, \;\;
			H_{t} = D_{t} R_{t} D_{t}, \;\;
			R_{t} = (1 - \beta_{1} - \beta_{2}) \underline{R} + \beta_{1} \Psi_{t-1} + \beta_{2} R_{t-1}, \notag\\
			&\ln\bm{h}_{t} 
			= \underline{\bm{\omega}} 
			+ \sum_{i=1}^{\infty} \Phi_{i} \ln\mathbf{y}^{\odot 2}_{t-i}, \label{model 2 Dt SGARCH(q,r*,s*)}
		\end{align}
		where $\mathbf{y}_{t}$, $\bm{\eta}_{t}$, $H_{t}$, $R_{t}$, $D_{t}$, $\beta_{1}$, $\beta_{2}$, $\underline{R}$, $\Psi_{t}$, $\ln\bm{h}_{t}$, $\underline{\bm{\omega}}$ and $\ln\mathbf{y}^{\odot 2}_{t}$ are defined as in models \eqref{model Rt SGARCH(r,s)}--\eqref{model Dt SGARCH(r,s)}. 
		Here, 
		the $m \times m$ coefficient matrix $\Phi_{i}$ is defined as follows, 
		\begin{align*}
			\Phi_{i} = & 
			\sum_{k=1}^{q-1} I(i=k) G_{k} 
			+ \sum_{k=1}^{r} I(i\geq q) \lambda_{k}^{i-q} G_{0,k} \\ 
			&+ \sum_{k=1}^{s} I(i\geq q) \gamma_{k}^{i-q} 
			\left[\cos((i-q) \varphi_{k}) G_{1,k} 
			+ \sin((i-q) \varphi_{k}) G_{2,k}\right], 
		\end{align*}
		where $r$ and $s$ are integers that satisfy $r+2s \leq mp$, 
		$G_{k}$'s are $m \times m$ parameter matrices, 
		and parameters $\lambda_{k}$'s, $\gamma_{k}$'s, $\varphi_{k}$'s, $G_{0,k}$'s, $G_{1,k}$'s and $G_{2,k}$'s are defined as in \eqref{model Phii in Dt}.
		It is obvious that if $p=q=1$, then model \eqref{model 2 Dt SGARCH(q,r*,s*)} will degenerate into model \eqref{model Dt SGARCH(r,s)}. 

\end{document}